\newcommand{\pred}[1]{\ensuremath{\mathsf{#1}}}
\definecolor{almostblack}{rgb}{0.6, 0.0, 0.0}
\definecolor{almostblackk}{rgb}{0.3, 0, 0.0}
\mathchardef\mhyphen="2D
\DeclarePairedDelimiter{\ceil}{\lceil}{\rceil}
\newtheorem{theorem}{Theorem}[section]
\newtheorem{claim}{Claim}[section]
\newtheorem{subclaim}{SubClaim}[section]
\newtheorem{definition}{Definition}[section]
\newtheorem{corollary}{Corollary}[section]
\newtheorem{lemma}{Lemma}[section]
\newcommand{\defref}[1]{Definition~\protect\ref{def:#1}}
\newlength{\protowidth}
\definecolor{blue(ncs)}{rgb}{0.0, 0.53, 0.74}
\newcommand{\Com}{\mathsf{Com}}
\newcommand{\sdotfill}{\textcolor[rgb]{0.2,0.2,0.2}{\dotfill}} 
\newcommand{\namedref}[2]{\hyperref[#2]{#1~\ref*{#2}}\xspace}
\def\poly{\ensuremath{\mathsf{poly}}\xspace}
\def\negl{\ensuremath{\mathsf{negl}}\xspace}
\newcommand{\sR}{\ensuremath{\mathsf{R}}}
\newcommand{\xor}[0]{\ensuremath{\oplus}\xspace}
  \newcommand{\cA}{\ensuremath{{\mathcal A}}\xspace}
  \newcommand{\cB}{\ensuremath{{\mathcal B}}\xspace}
  \newcommand{\cD}{\ensuremath{{\mathcal D}}\xspace}
  \newcommand{\cE}{\ensuremath{{\mathcal E}}\xspace}
  \newcommand{\cG}{\ensuremath{{\mathcal G}}\xspace}
  \newcommand{\cL}{\ensuremath{{\mathcal L}}\xspace}
  \newcommand{\cR}{\ensuremath{{\mathcal R}}\xspace}
  \newcommand{\cT}{\ensuremath{{\mathcal T}}\xspace}
  \newcommand{\cU}{\ensuremath{{\mathcal U}}\xspace}
  \newcommand{\cX}{\ensuremath{{\mathcal X}}\xspace}
  \newcommand{\bbA}{\ensuremath{{\mathbb A}}\xspace}
  \newcommand{\bbC}{\ensuremath{{\mathbb C}}\xspace}
  \newcommand{\bbE}{\ensuremath{{\mathbb E}}\xspace}
  \newcommand{\bbF}{\ensuremath{{\mathbb F}}\xspace}
  \newcommand{\bbG}{\ensuremath{{\mathbb G}}\xspace}
  \newcommand{\bbI}{\ensuremath{{\mathbb I}}\xspace}
  \newcommand{\bbK}{\ensuremath{{\mathbb K}}\xspace}
  \newcommand{\bbM}{\ensuremath{{\mathbb M
  }}\xspace}
  \newcommand{\bbN}{\ensuremath{{\mathbb N}}\xspace}
  \newcommand{\bbS}{\ensuremath{{\mathbb S}}\xspace}
  \newcommand{\supp}[0]{\ensuremath{\mathsf{Supp}}\xspace}
  \renewcommand{\Pr}[0]{\mathrm{Pr}\xspace}
\newcommand{\kabir}[1]{{{\textcolor{red}{}}}}
\newcommand{\dakshita}[1]{{{\textcolor{blue}{}}}}
\newcommand{\bin}{\{0,1\}}
\newcommand{\sample}{\leftarrow}
\definecolor{lg}{gray}{0.89}
\DeclareMathOperator*{\Prr}{Pr}
\newcommand{\KeyGen}{\ensuremath{\pred{KeyGen}}\xspace}
\newcommand{\StateGen}{\ensuremath{\pred{StateGen}}\xspace}
\newcommand{\Ver}{\ensuremath{\pred{Ver}}\xspace}
\newcommand{\TD}{\ensuremath{\pred{TD}}\xspace}
\newcommand{\SD}{\ensuremath{\pred{SD}}\xspace}
\newcommand{\n}{\ensuremath{n}\xspace}
\newcommand{\x}{\ensuremath{k}\xspace}
\newcommand{\EFI}{\ensuremath{\pred{EFI}}\xspace}
\newcommand{\q}{q}
\newcommand{\sA}{\ensuremath{\mathsf{A}}\xspace}
\newcommand{\sB}{\ensuremath{\mathsf{B}}\xspace}
\newcommand{\sC}{\ensuremath{\mathsf{C}}\xspace}
\newcommand{\sS}{\ensuremath{\mathsf{S}}\xspace}
\newcommand{\sV}{\ensuremath{\mathsf{V}}\xspace}
\newcommand{\sX}{\ensuremath{\mathsf{X}}\xspace}
\newcommand{\sY}{\ensuremath{\mathsf{Y}}\xspace}
\newcommand{\Supp}{\ensuremath{\mathsf{Supp}}\xspace}
\newcommand{\tr}{\mathsf{Tr}}
\newcommand{\wgen}{\ensuremath{\mathsf{G_0}}\xspace}
\newcommand{\simwgen}{\ensuremath{\mathsf{G_1}}\xspace}
\newcommand{\sgen}{\overline{\mathsf{G}}_0}
\newcommand{\simsgen}{\overline{\mathsf{G}}_1}
\newcommand{\EFIGen}{\mathsf{EFIGen}}
\newcommand{\Hs}{\ensuremath{\mathsf{H}}}
\newcommand{\Imbalanced}{\text{Imbalanced}\xspace}
\newcommand{\imbalanced}{\text{imbalanced}\xspace}
\newcommand{\prm}{s}
\newcommand{\prmlimit}{s^*}
\newcommand{\Samp}{\mathsf{Samp}}
\newcommand{\shadow}{\mathsf{ShadowGen}}
\newcommand{\len}{L}
\newcommand{\Enc}{\mathsf{Enc}}
\newcommand{\Dec}{\mathsf{Dec}}
\title{Commitments from Quantum One-Wayness}
\author{Dakshita Khurana\thanks{UIUC. \texttt{\{dakshita,ktomer2\}@illinois.edu}}
\and
Kabir Tomer\footnotemark[1]
}
\date{}
\begin{document}

\maketitle

\thispagestyle{empty}
\begin{abstract}
One-way functions are central to classical cryptography. They are necessary for the existence of non-trivial classical cryptosystems, and also sufficient to realize meaningful primitives including commitments, pseudorandom generators and digital signatures.
At the same time, a mounting body of evidence suggests that assumptions {\em even weaker} than one-way functions may suffice for many cryptographic tasks of interest in a quantum world, including bit commitments and secure multi-party computation. 

This work studies one-way state generators  [Morimae-Yamakawa, CRYPTO 2022], a natural quantum relaxation of one-way functions. Given a secret key, a one-way state generator outputs a hard to invert quantum state. 
A fundamental question is whether this type of {\em quantum} one-wayness suffices to realize quantum cryptography.
We obtain an affirmative answer to this question, by proving that one-way state generators with pure state outputs imply quantum bit commitments and secure multiparty computation.

Along the way, we use efficient shadow tomography [Huang et. al., Nature Physics 2020] to build an intermediate primitive with classical outputs, which we call a (quantum) one-way puzzle. 
Our main technical contribution is a proof that one-way puzzles imply quantum bit commitments. This proof develops new techniques for pseudoentropy generation [Hastad et. al., SICOMP 1999] from arbitrary distributions, which may be of independent interest.
\end{abstract}

\newpage

\thispagestyle{empty}
\tableofcontents
\newpage

\pagenumbering{arabic}
\section{Introduction}

A one-way function is a classically efficiently computable function that is hard to invert.
This is a fundamental hardness assumption, necessary for the existence of much of modern classical cryptography~\cite{STOC:LubRac86,FOCS:ImpLub89,STOC:ImpLevLub89}. The classical crypto-complexity class ``minicrypt'' contains primitives like bit commitments, pseudorandom generators, pseudorandom functions and symmetric encryption, that are all equivalent to the existence of one-way functions. 
On the other hand, there are tasks like key exchange and secure multi-party computation that classically require stronger, more structured assumptions~\cite{C:ImpRud88}.

The relationship between computational hardness and cryptography appears to be drastically different in a quantum world.
Here, the seminal works of Wiesner~\cite{Wiesner83} and Bennett and Brassard~\cite{BenBra84} first demonstrated the possibility of unconditional quantum key distribution (QKD) by exploiting the properties of quantum information.
Unfortunately, it was also shown that other useful cryptographic primitives like bit commitments and secure computation {\em cannot} exist unconditionally~\cite{LoChau97,Mayers97}, and must necessarily rely on computational hardness, even in a quantum world.
However, our understanding of computational hardness in a quantum world is still in its infancy. 
For instance, it was only recently understood~\cite{C:BCKM21b,EC:GLSV21} that one-way functions suffice to enable secure multi-party computation in a quantum world, a task that is believed to be impossible classically.\\ 

\vspace{-0.9mm}
\noindent{\bf Sources of Hardness in a Quantum World.}
Despite being necessary for classical cryptography, one-way functions may not be necessary for computational quantum cryptography.  

Two recent concurrent works~\cite{C:AnaQiaYue22,C:MorYam22} demonstrated that 
many cryptographic primitives including quantum bit commitments, (one-time secure) digital signatures, and multi-party secure computation can also be based on the existence of {\em pseudorandom state generators (PRSGs)}, which were 
introduced in~\cite{TCC:JiLiuSon18}.

Given a secret key, a PRSG efficiently generates a quantum state, several copies of which are computationally indistinguishable from equally many copies of a Haar random state.
There is some evidence that points to PRSGs being a weaker assumption than one-way functions.
Specifically, PRSGs can exist even if BQP = QMA (relative to a quantum oracle)~\cite{Kretschmer21} or if P = NP (relative to a classical oracle)~\cite{KQST}.
This indicates that PRSGs, and all the cryptographic primitives that they imply, can exist even if all quantum-secure (classical) cryptographic primitives, including one-way functions, are broken.

Can we base quantum cryptography on assumptions that are potentially even weaker than the existence of PRSGs?
As pointed out in~\cite{morimaeOneWaynessQuantumCryptography2022}, PRSGs and bit commitments are ``decision-type'' primitives that rely on the hardness of distinguishing pseudorandom states from truly (Haar) random ones.
On the other hand, there is a natural, simpler ``search-type'' assumption that significantly relaxes the pseudorandomness guarantee of a PRSG to one-wayness. 
\\

\vspace{-0.9mm}
\noindent {\bf A one-way state generator (OWSG)}~\cite{C:MorYam22} is a quantum algorithm that given a secret key, generates a hard-to-invert quantum state. 
This is a natural quantum analogue of a one-way function, and appears to be weaker, as a definition, than most known quantum cryptographic primitives. It is in particular known to be implied by several quantum cryptographic primitives including quantum signature and encryption schemes, as well as quantum money\footnote{See, for example,~\cite{morimaeOneWaynessQuantumCryptography2022}, the figure at \url{https://sattath.github.io/qcrypto-graph/} and references therein.}.

Given 
that one-way functions enable a variety of classical cryptosystems,
it is natural to ask whether one-way state generators play a similar role in quantum cryptography. Namely,
\begin{center}
    {\em Can we obtain quantum cryptosystems including bit commitments and MPC\\ only assuming the existence of one-way state generators?}
\end{center}
Our main theorem answers this question in the affirmative in the setting where OWSG outputs are pure states. 

\vspace{-5.5mm}

\textcolor{almostblackk}{
\begin{theorem}(Informal)
\label{thm:owsg}
    One-way state generators with pure state outputs imply quantum bit commitments.
\end{theorem}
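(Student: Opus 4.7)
The plan is to prove \thmref{owsg} in two modular steps via an intermediate primitive that has classical outputs, which is much easier to manipulate than one whose outputs are quantum. First I would introduce a \emph{(quantum) one-way puzzle}: a quantum algorithm that samples a classical pair (key, puzzle) together with an efficient verifier, such that given only the puzzle it is computationally hard to find any accepting key. The ultimate target is a canonical quantum bit commitment, which by the now-standard equivalence with EFI pairs reduces to constructing two efficiently preparable quantum states that are statistically far yet computationally indistinguishable.

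The first step is to convert an OWSG with pure outputs into a one-way puzzle via shadow tomography. Given a key $k$, the OWSG produces polynomially many copies of $|\psi_k\rangle$; the efficient shadow tomography procedure of Huang et al.\ then yields a classical string from which one can estimate overlaps $|\langle \psi_k | \phi \rangle|^2$ with any efficiently preparable $|\phi\rangle$. Taking this classical shadow to be the puzzle, and defining verification of a candidate $k'$ by running the OWSG on $k'$ and estimating the fidelity with the shadow, the one-wayness of the puzzle follows from that of the OWSG: any inverter for the puzzle would let one produce a state close to $|\psi_k\rangle$ from only the shadow, contradicting OWSG security. The purity assumption is used crucially here, since for mixed-state outputs shadow tomography does not translate into a classical representation that can be verified against the original key distribution.

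The second and main step is to build a quantum bit commitment from a one-way puzzle. The natural template is the Hastad--Impagliazzo--Levin--Luby (HILL) route: convert one-wayness into computational pseudoentropy, then extract an object suitable for commitments. Concretely, I would try to argue that conditioned on the puzzle, the key has strictly more real min-entropy than it has HILL (computational) min-entropy from the adversary's perspective, and then apply universal hashing to produce an EFI pair of the form $(\text{puzzle}, h, h(k))$ versus $(\text{puzzle}, h, u)$ with $u$ uniform; the former is computationally indistinguishable from the latter by the hardness of the puzzle, while the gap in entropies makes the two statistically far after choosing output lengths appropriately.

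The main obstacle, as the abstract flags, is that the classical HILL framework crucially uses that the one-way primitive is evaluated on \emph{uniform} inputs, whereas the key distribution of a one-way puzzle is determined by an arbitrary (possibly quantum-conditioned) sampler and may be highly non-flat. Overcoming this requires a new pseudoentropy generation procedure that works for arbitrary input distributions: one plausible route is to smooth or flatten the key distribution --- for instance by hashing the key to many output lengths in parallel and carefully conditioning on the resulting values --- so that a computable gap between real and computational min-entropy appears and can be extracted. Once such a gap is established, translating it into an EFI pair, and hence into a quantum bit commitment, should follow by standard quantum commitment machinery.
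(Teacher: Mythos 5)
Your two-step decomposition (pure OWSG $\to$ one-way puzzle via shadow tomography, then one-way puzzle $\to$ commitment via a HILL-style pseudoentropy argument) matches the paper's structure exactly, and you correctly identify the central obstacle --- that the key distribution conditioned on a puzzle is not flat, so HILL does not apply off the shelf. However, there are a few genuine gaps.

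First, a concrete error in step one: you state the one-way puzzle comes with an \emph{efficient} verifier, whereas the paper's notion crucially allows \emph{inefficient} verification. This is not a cosmetic choice: a one-way puzzle with efficient verification would be breakable given a $\mathsf{QMA}$ oracle, yet pure OWSGs are believed to survive such oracles (Kretschmer), so an OWSG-to-puzzle implication with efficient verification would contradict that evidence. The paper's verifier $\cL(s)$ is defined as an inefficient set-membership test, and the rest of the construction is careful to live with that inefficiency (e.g.\ the Goldreich--Levin inverter is coupled with a Markov argument against the inefficient $\Ver$). Your sketch of ``running the OWSG on $k'$ and estimating the fidelity with the shadow'' sounds efficient, which should raise a flag.

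Second, and more centrally, your proposed fix for the non-flatness --- ``hashing the key to many output lengths in parallel and carefully conditioning'' --- is not a working argument, and does not correspond to what the paper does. The actual idea is to fix, for each puzzle $s$, an almost-flat ``good slice'' $\mathbb{G}_s$ of preimage keys (those with sample entropy in $[j_s,\,j_s+1]$, dense by pigeonhole), and to change the simulated distribution $\simwgen$ \emph{only when} $k\in\mathbb{G}_s$ and $i=i_s := j_s + 600\log n$. The $600\log n$ padding is essential: it forces every other preimage of $(h,h(k)_{i_s})$ to have individually tiny probability, which is what the entropy-gap proof needs in order to rule out the counterexample (the paper's ``Example 1'') where the bias from keys outside $\mathbb{G}_s$ exactly cancels the bias from keys inside. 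The argument then splits into two cases depending on whether the total leftover mass is small or large, and uses LHL in the large case. None of this is captured by ``hashing to many lengths in parallel.'' Relatedly, your direction of the entropy comparison is reversed: the simulated distribution must have \emph{more} Shannon entropy than the real one (computational/HILL entropy exceeding real entropy), so that a suitably truncated hash is statistically far from uniform yet computationally close.

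Third, ``translating into an EFI pair should follow by standard quantum commitment machinery'' understates the remaining work. Because the correct truncation length $\widehat{h}(n)+n$ is not efficiently computable, the hashing step only yields an \emph{imbalanced} EFI that depends on non-uniform advice. The paper then needs product amplification (to turn a Shannon-entropy gap into a min-vs-max entropy gap), a flavor swap (to get the complementary binding/hiding direction), a combiner that commits with both flavors so that binding always holds, and finally a secret-sharing over all polynomially many candidate advice values to remove the non-uniformity. These steps are not optional and are not standard in prior work.
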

}

By combining with prior work that demonstrates conversions between various types of commitments~\cite{C:AnaQiaYue22,BCQ22}
and builds secure multi-party computation from commitments~\cite{EC:GLSV21,C:BCKM21b,C:AnaQiaYue22}, we also obtain the following corollary.

\vspace{-5.5mm}

\textcolor{almostblackk}{
\begin{corollary}(Informal)
\label{cor:owsgsc}
    One-way state generators with pure state outputs imply secure multi-party computation for all quantum functionalities.
\end{corollary}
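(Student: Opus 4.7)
The plan is to chain \thmref{owsg} with a sequence of known reductions, so that the proof of \corref{owsgsc} reduces to invoking prior work in the right order. First, we instantiate \thmref{owsg} to obtain a quantum bit commitment scheme from any one-way state generator with pure state outputs. This produces some canonical flavor of quantum bit commitment, but the MPC construction we wish to apply downstream may require a different flavor (for instance, statistically binding and computationally hiding, or the opposite). To bridge this gap, we invoke the equivalences between notions of quantum bit commitments established in \cite{C:AnaQiaYue22,BCQ22}, which allow converting among the various natural flavors (honest-binding, sum-binding, canonical, and their ``flavor swap'') with only polynomial overhead while preserving the indistinguishability/binding guarantees.

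Having obtained a suitable quantum bit commitment, we then plug it into the constructions of \cite{EC:GLSV21,C:BCKM21b,C:AnaQiaYue22}. These works show that quantum bit commitments in an appropriate flavor suffice to build a simulation-secure quantum oblivious transfer protocol, and that quantum OT in turn enables simulation-secure multi-party computation for arbitrary quantum functionalities against malicious quantum adversaries. Composing \thmref{owsg}, the commitment-flavor equivalences, and the OT and MPC compilers yields MPC for all quantum functionalities from OWSGs with pure outputs.

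I expect the only real bookkeeping obstacle to be tracking precisely which flavor of quantum bit commitment is produced by \thmref{owsg} and verifying that, after applying \cite{C:AnaQiaYue22,BCQ22}, it matches the flavor consumed as a black box by \cite{EC:GLSV21,C:BCKM21b,C:AnaQiaYue22}. Once this match is established, the corollary follows by straightforward black-box composition and does not require any new technical ingredients beyond \thmref{owsg} itself.
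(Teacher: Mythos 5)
Your proposal is correct and follows essentially the same route as the paper: the paper proves Corollary~\ref{cor:owsgsc2} by noting that the computationally hiding, computationally collapse-binding commitments produced by the main theorem imply EFI pairs~\cite{BCQ22}, which in turn imply secure computation for all quantum functionalities via~\cite{C:BCKM21b,EC:GLSV21,C:AnaQiaYue22,BCQ22}. Your phrasing in terms of flavor conversion followed by OT and then MPC is just a slightly more explicit description of the same chain of black-box reductions, so there is no substantive difference.
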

}

We note that OWSGs were initially defined in~\cite{C:MorYam22} to only output pure states; but this definition was later generalized in~\cite{morimaeOneWaynessQuantumCryptography2022} to also allow mixed states. 
OWSG with pure state outputs were also studied in~\cite{CX22}, who also showed equivalences between variants (weak, distributional) of OWSGs.
Outputs of random quantum circuits yield natural candidates for pure OWSG that do not rely on classical hardness%
; in fact the output states can even be conjectured to be pseudorandom~\cite{C:AnaQiaYue22}. 
However, only relying on one-wayness introduces the possibility of building cryptography from other natural candidates: for instance, the (pre-measurement) states generated by BosonSampling experiments are not indistinguishable from Haar random~\cite{AA14}, but can plausibly be one-way.

Pure OWSGs are also implied by various cryptographic primitives such as digital signatures with pure verification keys and quantum money with pure banknotes~\cite{morimaeOneWaynessQuantumCryptography2022}. 
This, combined with our theorem, shows that these other primitives also imply quantum bit commitments. In some sense, this establishes commitments as the leading candidate for a minimal/necessary assumption in quantum cryptography.\\

\noindent{\bf One-Way Puzzles.}
Enroute to our main theorem, we 
use efficient shadow tomography~\cite{HKP20} to 
prove that OWSG imply an intermediate cryptographic primitive with entirely classical outputs, that we call a {\em one-way puzzle}.
We find this implication from a OWSG with arbitrary quantum outputs to a simple, cryptographically useful primitive with classical outputs, noteworthy.

\vspace{-5.5mm}

\textcolor{almostblackk}{
\begin{theorem}(Informal)
\label{thm:s}
    One-way state generators with pure state outputs imply one-way puzzles.
\end{theorem}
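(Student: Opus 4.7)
The plan is to turn the OWSG into a one-way puzzle by replacing the hard-to-invert quantum state $|\psi_k\rangle$ with its \emph{classical shadow}, using the HKP20 protocol as the bridge. Intuitively, since the OWSG adversary is already allowed $T=\poly(\secpar)$ copies of $|\psi_k\rangle$ and may perform any QPT operation on them, a classical shadow computed from those copies cannot be ``more informative'' than the copies themselves; yet it is a purely classical object that can be posted as a puzzle. The key extra ingredient we need from pure-state outputs is that fidelity $|\langle\psi_{k'}|\psi_k\rangle|^2$ is a rank-$1$ observable, which is precisely the regime in which the Clifford-based shadow protocol achieves dimension-free sample complexity.

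Concretely, the sampling algorithm $\Samp(1^\secpar)$ draws $k\getsr\{0,1\}^{\secpar}$, invokes $\Gen(k)$ to prepare $T=\poly(\secpar)$ copies of $|\psi_k\rangle$, runs HKP20 on them to obtain a classical shadow $s$, and outputs the puzzle/secret pair $(s,k)$. The verifier $\Ver(s,k')$ runs $\Gen(k')$ to obtain a fresh copy of $|\psi_{k'}\rangle$, uses $s$ to produce a classical estimate $\hat{f}$ of $\tr\!\bigl(|\psi_{k'}\rangle\langle\psi_{k'}|\cdot|\psi_k\rangle\langle\psi_k|\bigr)=|\langle\psi_{k'}|\psi_k\rangle|^2$, and accepts iff $\hat{f}\ge 1-\eps$ for a suitable constant $\eps$. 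Correctness is immediate: for $k'=k$ the true fidelity is $1$, and by choosing $T$ large enough the shadow estimate deviates by more than $\eps/2$ with only negligible probability.

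For security, suppose a QPT adversary $\cA$ on input $s$ outputs $k'$ that makes $\Ver$ accept with noticeable probability. I build a QPT OWSG inverter $\cB$ as follows: on input $T$ copies of $|\psi_k\rangle$, $\cB$ internally executes the (efficient) shadow-generation procedure on those copies to obtain $s$, then runs $\cA(s)$ to produce $k'$ and outputs it. Whenever $\Ver(s,k')$ accepts, the HKP20 estimation guarantee implies $|\langle\psi_{k'}|\psi_k\rangle|^2\ge 1-O(\eps)$ except with negligible probability, which is exactly the inversion criterion for pure-state OWSG. Hence $\cA$'s puzzle-inversion advantage lifts directly to an OWSG-inversion advantage, contradicting one-wayness of $\Gen$.

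The main obstacle I anticipate is ensuring the classical shadow $s$ produced from a single $\Samp$ execution is good enough to support an accurate fidelity estimate against \emph{every} $k'$ the adversary might adaptively output, while keeping $T$ polynomial. This is where the pure-state assumption pays off: the shadow-norm of any rank-$1$ projector $|\psi_{k'}\rangle\langle\psi_{k'}|$ under Clifford shadows is $O(1)$, so a union bound over the $2^{\poly(\secpar)}$ possible keys only costs a $\poly(\secpar)$ factor in $T$. A secondary but technical concern is aligning the ``noticeable fidelity'' threshold of the (weak) OWSG security definition with the constant $1-\eps$ threshold used by $\Ver$; I expect to handle this either by tuning $\eps$ against the concrete OWSG advantage function, or by first invoking a weak-to-strong amplification for pure-state OWSG (e.g.\ via the equivalences in \cite{CX22}) so that the inversion threshold can be taken arbitrarily close to $1$.
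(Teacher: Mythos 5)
Your approach is essentially the paper's: sample a key $k$, run HKP20 shadow tomography on $T = O(n)$ copies of $\ket{\psi_k}$, publish the classical shadow $s$ as the puzzle with solution $k$, and let the (inefficient) verifier accept $k'$ whenever the shadow-based estimate of $|\langle\psi_{k'}|\psi_k\rangle|^2$ exceeds $1-\eps$. Your observation that the rank-one observables $\ket{\psi_{k'}}\bra{\psi_{k'}}$ have $O(1)$ shadow norm, so a union bound over all $2^n$ keys costs only a $\poly(n)$ factor in $T$, is exactly how the paper gets away with $\delta = 2^{-2n}$ and $T = O(n)$ in Theorem~\ref{thm:hkp}.

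There is, however, one concrete gap in your security reduction. You assert that the conclusion $|\langle\psi_{k'}|\psi_k\rangle|^2 \geq 1-O(\eps)$ ``is exactly the inversion criterion for pure-state OWSG.'' Under \defref{owsg} it is not: the OWSG carries an \emph{arbitrary} QPT verification algorithm $\Ver$, and the one-wayness game is won iff $\Ver(k', \ket{\psi_k}) = \top$, not iff the fidelity is high. High fidelity lets you transfer acceptance probability from $\ket{\psi_{k'}}$ to $\ket{\psi_k}$, but only if you already know that $\Ver(k', \ket{\psi_{k'}})$ accepts with probability bounded away from zero --- and the adversarially chosen $k'$ need not satisfy this (OWSG correctness only bounds it on average over $\KeyGen$-sampled keys). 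The paper closes the gap by adding a second condition to the puzzle verifier: $k'$ must also lie in $\bbC := \{k : \Pr[\Ver(k,\ket{\psi_k}) = \top] \geq 1 - 1/100\}$. Correctness survives because, by Markov on the OWSG correctness bound, an honestly sampled $k$ lies in $\bbC$ with overwhelming probability; and once the adversary is forced into $\bbC$, fidelity $\geq 4/5$ yields $\Pr[\Ver(k',\ket{\psi_k}) = \top] \geq 1 - 1/100 - 1/\sqrt{5} > 0$, which gives the contradiction. The weak-to-strong amplification from~\cite{CX22} that you point to does not address this (it boosts a small inversion advantage; it does not normalize the OWSG's $\Ver$ to a fidelity test), so as written the last step of your reduction does not go through against the definition used here.
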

}

\noindent {\bf %
A (quantum) {one-way puzzle}} is a pair of algorithms $(\mathsf{Samp},\mathsf{Ver})$
where $\mathsf{Samp}$ is quantum polynomial time and outputs a pair of classical strings -- a key and puzzle $(k,s)$ -- satisfying $\Ver(k, s) = 1$. The security guarantee is that given a ``puzzle'' $s$, it is (quantum) computationally infeasible to find a key $k$ such that $\Ver(k,s) = 1$, except with negligible probability.

Unlike prior definitions of one-way puzzles in the literature, we do not require the verification ($\mathsf{Ver}$) algorithm to be efficient. As we will see later, only asking for inefficient verification turns out to be {\em necessary} for our implication from OWSG.
Indeed, if verification were efficient, then a $\mathsf{QMA}$ oracle would be capable of breaking one-way puzzles, but such an oracle is unlikely to break OWSG~\cite{Kretschmer21}.
Somewhat surprisingly, we show that inefficiently verifiable one-way puzzles are also {\em sufficient} to build quantum bit commitments.

The reader may have observed that one-way puzzles generalize one-way functions to allow joint, randomized sampling of keys and outputs.
In a classical world, this generalization is unnecessary: one-way puzzles are equivalent to one-way functions. 
One direction of the implication is straightforward, since one-way functions imply one-way puzzles (almost) immediately by definition.
In the other direction, a one-way function can be obtained from a classical one-way puzzle by ``pulling out'' the (uniform) randomness $r$ used by $\Samp$. The one-way function $f$ on input $r$ samples $(k,s) \leftarrow \mathsf{Samp}(1^n,r)$ and outputs $f(r) = s$. It is easy to see that one-wayness of the puzzle implies one-wayness of $f$.

However, the conversion above is no longer applicable when $\mathsf{Samp}$ is quantum, because there may be no equivalent deterministic, efficient function that on input uniform randomness, outputs $(k,s)$ distributed according to the output of $\mathsf{Samp}$.
Nevertheless, enroute to proving our main result, we show:

\vspace{-5.5mm}

\textcolor{almostblackk}{
\begin{theorem}(Informal)
\label{thm:owp}
    One-way puzzles imply quantum bit commitments.
\end{theorem}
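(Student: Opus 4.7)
The plan is to reduce the theorem to constructing EFI pairs, which are equivalent to quantum bit commitments by~\cite{BCQ22}, and then to build such EFI pairs via a quantum, distribution-aware analogue of the H{\aa}stad--Impagliazzo--Levin--Luby pseudoentropy framework. The starting observation is that the quantum Goldreich--Levin theorem still applies to one-way puzzles: given $(k,s) \leftarrow \Samp(1^n)$ and uniform $r \in \bin^{|k|}$, no \QPT can predict $\langle k,r\rangle$ from $(s,r)$ with noticeable advantage. The reduction uses any such predictor to recover the specific $k$ that $\Samp$ sampled; this $k$ satisfies $\Ver(k,s)=1$ by construction, so one-wayness is broken without needing an efficient verifier---this is precisely the point where the inefficiency of $\Ver$ is not an obstacle.

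Building on the hardcore bit, I would consider the candidate pseudoentropy generator
\[ D_0 \;=\; \bigl(s,\; r_1,\ldots,r_m,\; \langle k, r_1\rangle,\ldots,\langle k, r_m\rangle\bigr) \]
and its uniform counterpart $D_1 = (s, r_1,\ldots,r_m, u_1,\ldots,u_m)$ with $u_i$ uniform bits. By a simultaneous-hardcore-bits argument bootstrapped from single-bit Goldreich--Levin, $D_0$ and $D_1$ are computationally indistinguishable, so $D_0$ has pseudo-Shannon-entropy at least $H(s) + m|k| + m - \negl(n)$. Its real Shannon entropy, however, is at most $H(s) + m|k| + H(k\mid s) \le H(s) + m|k| + |k|$, since the hardcore bits are deterministic in $(k, r_1, \ldots, r_m)$ and $H(k\mid s) \le |k|$. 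Choosing $m = 2|k|$ yields a pseudoentropy gap of $\Omega(|k|)$. From here I would apply a HILL-style pipeline: many parallel copies to concentrate Shannon pseudoentropy near min-pseudoentropy, followed by universal hashing to a length strictly between the real min-entropy and the pseudo min-entropy, so that the hashed output is both computationally indistinguishable from uniform (by leftover hashing plus pseudoentropy) and statistically bounded away from it (by a counting argument on the hash input). The pair (hashed output, uniform string) is the desired EFI.

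The main obstacle, and the heart of the paper's stated technical contribution, is executing this pipeline when the joint distribution $(k,s)$ is \emph{arbitrary}, rather than the image of uniform randomness through an efficient function as in classical HILL. In particular, $H(s)$ can be much smaller than $|s|$ and is not known to the reduction, so one cannot directly choose the hash output length or use off-the-shelf Shannon-to-min-entropy conversions; parallel repetition of an arbitrary distribution need not concentrate in the usual way, and extraction without a known min-entropy lower bound is delicate. I expect that resolving this requires a distribution-aware notion of pseudoentropy---plausibly next-bit or smooth min-pseudoentropy---together with entropy-agnostic extraction techniques in the spirit of the Haitner--Reingold--Vadhan line on pseudoentropy from arbitrary distributions, all lifted to \QPT adversaries that may be entangled across queries. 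A secondary subtlety is making the simultaneous-hardcore-bits hybrid quantum-secure: each intermediate step needs to simulate hardcore bits that themselves depend on $k$, which must be handled via a coherent reduction rather than a naive guess-and-check.
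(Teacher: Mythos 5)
Your plan correctly identifies the target (EFI pairs, hence commitments), correctly observes that the Goldreich--Levin reduction needs no efficient verifier because it recovers the sampled key itself, and correctly diagnoses the non-flatness of $K_s$ as the central obstacle. But the step you build the gap on is broken.

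The claim that the $m$ bits $\langle k,r_1\rangle,\ldots,\langle k,r_m\rangle$ are simultaneously pseudorandom given $(s,r_1,\ldots,r_m)$ for $m = 2|k|$ has no proof, and the natural hybrid argument fails. In the hybrid where you try to show $\langle k,r_j\rangle$ is pseudorandom, the distinguisher must be fed the remaining genuine bits $\langle k,r_{j+1}\rangle,\ldots,\langle k,r_m\rangle$, which the GL inverter cannot compute without already knowing $k$; guessing them loses $2^{m-j}$, which is exponential once $m = \omega(\log n)$. Generic Goldreich--Levin yields only $O(\log n)$ simultaneous inner-product hardcore bits. So the claimed pseudoentropy gap of $\Omega(|k|)$ never materializes, and the rest of your pipeline has nothing to start from. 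This is exactly why HILL (and this paper) do not use many parallel inner products: they use a single GL bit together with a \emph{pairwise-independent hash of the preimage} truncated to a data-dependent length, and the entire technical content is in how to choose that length.

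The paper's route is genuinely different at this point. Their weak PEG is $\wgen = (s, h, i, r, h(k)_i, \langle k, r\rangle)$ with a single hardcore bit, and the simulated $\simwgen$ randomizes that bit only when $i = i_s$ \emph{and} $k$ lies in a ``good slice'' $\mathbb{G}_s$ of preimages that, by a pigeonhole over sample entropies, is dense in $K_s$ and nearly flat. The index $i_s$ is set $\approx 600\log n$ past $\log|\mathbb{G}_s|$, so that all preimages of $(s, h(k)_{i_s})$ outside $\mathbb{G}_s$ are individually tiny; a case analysis (either their total mass is small, or by the leftover hash lemma their IP bit is already nearly uniform) yields an inverse-polynomial Shannon entropy gap. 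Computational indistinguishability is a single-bit GL argument because the hash leaks only $O(\log n)$ bits past the smooth min-entropy of $\mathbb{G}_s$, which the inverter can guess. Only then do they amplify by parallel repetition (concentrating Shannon to min/max entropy), hash to an advice-dependent length (getting an \emph{imbalanced} EFI), and eliminate the non-uniform advice via a flavor-swap plus a secret-sharing combiner over the polynomially many candidate thresholds. Your sketch gestures at ``distribution-aware pseudoentropy'' and HRV-style extraction for the same non-uniformity problem, which is a reasonable instinct, but it is not a resolution, and the underlying $\Omega(|k|)$-gap starting point it is meant to salvage is not available.
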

}

Theorem \ref{thm:owp} is the most technically involved part of this work.
In a nutshell, existing techniques for building commitments from classical one-way primitives (e.g., ~\cite{HILL}) crucially only apply when the preimage distribution of every image of the function is flat (i.e., uniform over all preimages). This work develops a method to generate pseudorandomness from one-way puzzles with arbitrary preimage distributions, which we believe to be of independent interest. \\

\noindent{\bf Local/Hybrid Quantum Cryptography and One-Way Puzzles.}
As an aside, we observe that one-way puzzles are
also implied by quantum cryptography with classical communication.
In fact there is a large body of work that aims to understand the computational hardness yielding quantum cryptography with classical communication, including protocols for quantum advantage~\cite{Mahadev,MYadv,MYadv2}, quantum commitments with classical communication~\cite{TCCnew}, and even black-box separations for key exchange~\cite{C:ACCFLM22}. 
Classical communication protocols are desirable as they can be used over the current infrastructure (e.g., the Internet). In this model, sometimes called the ``local'' or ``hybrid'' or quantum-computation classical-communication (QCCC) model~\cite{C:ACCFLM22}, all the quantum computation is done locally by parties who exchange only classical messages.

We observe that natural cryptographic primitives such as public-key encryption and signatures in the QCCC 
model imply one-way puzzles.
For example, given a public-key encryption scheme, a one-way puzzle
 can be defined as follows.
The one-way puzzle sampler will output a puzzle consisting of a public key along with an encryption of a random message, and the corresponding solution will be the (plaintext) message.  
It is easy to see that an adversary that breaks one-wayness of the resulting puzzle can be used to break CPA security of the encryption scheme. 
In fact, one can obtain a one-way puzzle even given any public key encryption with classical public and secret keys, but quantum ciphertexts.
In Appendix \ref{app:implications}, we formalize these ideas and also show how similar ideas prove that one-way puzzles are implied by digital signatures, natural bit commitments and symmetric encryption schemes in the QCCC model\footnote{One may ask whether computational cryptographic primitives in the QCCC model also imply one-way functions. But it is unclear if this is true; and at the very least this is challenging to prove, for the same reason as above -- namely, we cannot explicitly pull out the sampling randomness from an arbitrary quantum algorithm.}.

Finally, we note that~\cite{TCC:KNY23} recently discussed a related but stronger primitive -- {\em hard quantum planted problems for NP languages} -- which is implied by cryptography with publicly verifiable deletion. A hard quantum planted problem for a language is specified by a QPT sampler that samples an instance-witness pair $(x, w)$ for the language in a way that no adversary can find a witness for $x$ with non-negligible probability. 
These are like one-way puzzles except that they admit efficient, deterministic verification.
By definition, hard quantum planted problems imply one-way puzzles (and therefore by our work, imply quantum bit commitments).\\

\noindent{\bf Conclusion and Future Directions.}
Prior to this work, bit commitments were known to be implied by pseudorandom state generators~\cite{C:AnaQiaYue22,C:MorYam22} via a construction that roughly parallels the classical setting~\cite{Naor89}. 
They were also known~\cite{morimaeOneWaynessQuantumCryptography2022} from a restricted type of OWSG; namely one with injective, orthogonal outputs. 
However, as we discuss in the next section, building commitments from general-purpose OWSG requires methods that are quite different from known classical techniques, and which may be broadly applicable beyond this work.

We also hope that the one-way puzzle abstraction will enable a better understanding of quantum bit commitments.
For example, some existing attempts to understand the complexity of quantum commitments~\cite{Kretschmer21} build oracles relative to which complexity classes collapse, but pseudorandom states exist (and thus, one-way puzzles exist). Directly establishing the existence of one-way puzzles relative to these oracles may be easier, and may enable even more general oracle separations. One-way puzzles may also help better understand the relationship between quantum cryptography and quantum notions of Kolmogorov complexity.

Finally, we discuss some open questions related to this work. An obvious one is whether our results extend to {\em mixed state} OWSG. One avenue towards proving this would be to build one-way puzzles from mixed-state OWSG, perhaps via better tomography. 
In addition, answering the following questions will shed some more light on the complexity of quantum cryptography.
\begin{enumerate}
 \item Can quantum bit commitments with {\em classical communication} be based on the existence of OWSG or one-way puzzles? 
 This is plausible because one-way puzzle outputs are classical after all. Moreover, many other intermediate primitives that we build in this work also have entirely classical outputs. 
    \item 
    En route to building commitments, this work constructs pseudo-{\em entropy} generators from one-way state generators. 
    Can other pseudorandom primitives, such as pseudorandom quantum states be obtained from OWSG or one-way puzzles? 
    Techniques in this work may serve as a useful starting point towards addressing this question.
    
    \item Do quantum bit commitments imply one-way puzzles? 
    If not, 
    is there a separation?
    It is easy to observe that one-way puzzles can be broken given (quantum) access to an oracle for a related boolean function $f$. Is this also true for every quantum bit commitment?
    This question appears to be connected with the unitary synthesis problem~\cite{aark,ab}, for which a recent work~\cite{LMW} gave a general one-query lower bound. 
    \item Is there a quantum analogue to the classical implication from one-way puzzles to one-way functions? In other words, does the existence of one-way puzzles with hardness over arbitrary distributions imply one-way primitives with hardness over uniform inputs?

\end{enumerate}

\section{Technical Overview}

We begin this overview by outlining a well-known construction of classical commitments from any injective one-way function. This construction relies on hardcore predicates: roughly, a hardcore predicate for a one-way function $f$ is a bit that is easy to compute given a preimage $\x$ but hard to compute given $f(\x)$. The Goldreich-Levin theorem~\cite{GoldreichLevin} shows that the bit $\langle \x, r \rangle$ is hard-core for the function $f(\x)||r$. When $f$ is injective, the hardcore bit is uniquely determined for every element in the image, and gives rise to a simple commitment scheme, as follows. 

A commitment to bit $b$ is $f(\x), r, \langle \x, r\rangle \oplus b$ for randomly sampled $\x$ and $r$. This commitment is binding because of the injectivity of $f$, and computationally hides the bit $b$ due to $\langle \x, r \rangle$ being hardcore. 
This construction does not work when $f$ is not injective. In this case, for an image $y$, there may exist two preimages $\x_1, \x_2 \in \{f^{-1}(y)\}$ such that $\langle \x_1, r \rangle \neq \langle \x_2, r \rangle$, which will allow the committer to break binding.

The celebrated work of Hastad et. al.~\cite{HILL} showed how to overcome the binding issue, and base {\em classical} commitments on {\em general}  (not necessarily injective) one-way functions.
We outline (some relevant parts of) their technique next.\\

\noindent{\bf Pairwise independent hashing reduces the number of preimages.}
The starting point of the technique in~\cite{HILL} is to append to the image $f(\x)$ a {\em pairwise-independent} hash $h(\x)$, thereby reducing the total number of preimages of $f(\x),h(\x)$. This makes $f(\x),h(\x)$ behave somewhat like an injective function for carefully chosen output sizes of $h(\x)$.

In more detail, let $N_{\x}$ denote the number of preimages of $f(\x)$.
When the output size $h(\x)$ is set to (slightly larger than) $\mathsf{log} N_{\x}$, then~\cite{HILL} (roughly) show that:
\begin{itemize}
\item $h({\x})$ is computationally indistinguishable from uniform given $f(\x)$, and 
\item $h({\x})$ is statistically (somewhat) distinguishable from uniform given $f(\x)$.
\end{itemize}
The fact that $f(\x), h({\x})$ {\em appears} to a computationally bounded adversary to have more entropy than it actually does is formalized 
by building an object called a {\em weak pseudoentropy generator (WPEG)}~\cite{HILL,hill-revisited}. We will now describe this object in some more detail.

\subsection{Weak Pseudoentropy Generators (WPEG)}
A distribution $\wgen$ 
is a weak pseudoentropy generator (WPEG) if there exists another, possibly inefficient {\em simulated} distribution $\simwgen$ whose output is computationally indistinguishable from, and yet 
has more Shannon entropy than $\wgen$.

For a one-way function $f$ and pairwise independent hash $h$, we can consider distributions
\begin{align*}
    \wgen(1^n) &:= ~~~~f(\x), h, i, h({\x})_i \text{~~~~~~~~~~~and~~~~}\\
    \simwgen(1^n) &:= 
    \left\{\begin{array}{ll}
     f(\x), h, i, h({\x})_{i-1}, u_1 & \text{ if }i = \ceil{\log N_{\x}}+1\\
     f(\x), h, i, h({\x})_i & \text{ otherwise}
    \end{array}
    \right.
\end{align*}
where $\x, h$ are sampled uniformly in $\{0,1\}^n$, $i \leftarrow [n]$, $N_{\x}$ denotes the number of preimages of $f(\x)$, $h(\x)_i$ denotes $h(\x)$ truncated to the first $i$ bits, and $u_1$ denotes a uniformly random bit.

Prior works~\cite{HILL,hill-revisited} show that the distributions $\wgen$ and $\simwgen$ are computationally indistinguishable, but $\simwgen$ has more entropy than $\wgen$. 
This can be understood as follows.
\begin{itemize}
\item {\bf Entropy Gap.} 
Roughly, the pairwise independence of $h$ implies that 
with probability at least $\frac{1}{2}$, $f(\x),h({\x})_{\ceil{\log N_{\x}}}$ has a single preimage, i.e., $x$. Thus, with probability at least $\frac{1}{2}$ the last bit in $\wgen$ is a deterministic function of the remaining bits, and has less entropy than the corresponding (uniform) bit in $\simwgen$. 

\item {\bf Computational Indistinguishability.} 
By the Leftover Hash Lemma, for $x$ sampled from any distribution $\cX$ with min-entropy $\ell$, the first $\ell - 2c \log n$ bits of $h({x})$ are $\frac{1}{n^c}$ {\em statistically} close to uniform, even given $h$. 
By setting $\cX$ to be the (uniform) distribution over preimages of $f(\x)$, this implies that the first $\ell - O(\log N_{\x})$ bits of $h(\x)$ are statistically close to uniform given $h$,  for $\ell = {\log N_\x}$. Then applying the Goldreich-Levin theorem while guessing the last $O(\log n)$ bits of $h(\x)$ converts a distinguisher between $\wgen$ and $\simwgen$ to an inverter for $f$\footnote{This step requires the hash to be a specific inner-product based function which is compatible with the Goldreich-Levin technique.}.

\end{itemize}
Next, we discuss barriers in extending these ideas to quantum one-way state generators.\\

\noindent{\bf \underline{A Preliminary Approach that Does Not Work.\\}}
\vspace{0.2mm}

A natural first approach to building commitments from OWSG could be to replace the classical string $f(\x)$ in the distributions above, with the quantum state $\ket{\psi_\x}$ output by the OWSG. 

Then the two WPEG distributions $\wgen$ and $\simwgen$ are replaced by the following mixed states. 
$$\rho_0(1^n):= \sum_{\x, h, i} \ket{\psi_\x}\bra{\psi_\x}, h, i, h({\x})_i$$ 
and
\begin{align*}
    \rho_1(1^n):= 
    \left\{\begin{array}{ll}
     \sum_{\x,h,i,u} \ket{\psi_\x}\bra{\psi_\x}, h, i, h({\x})_{i-1}, u_1 & \text{ if }i = \ceil{\log N_{\x}}+1\\
     \sum_{\x,h,i,u} \ket{\psi_\x}\bra{\psi_\x}, h, i, h({\x})_i & \text{ otherwise}
    \end{array}
    \right.
\end{align*}
Unfortunately, classical arguments demonstrating statistical entropy gap and computational indistinguishability do not extend to the mixed states above, because of barriers that we describe next.\\

\noindent{\bf \underline{Barrier 1: Non-orthogonality of Outputs, or, What is a Preimage Anyway?\\}}
\vspace{0.2mm}

The mixed state $\rho_1$ above is well-defined only when $N_{\x}$ is.
In the classical setting, $N_{\x}$ denotes the number of pre-images of $f(\x)$.
But it is unclear how to define ``preimages'' of a quantum state under a OWSG. For two keys $x$ and $x'$, the corresponding OWSG output states $\ket{\psi_{x}}$ and $\ket{\psi_{x'}}$ could have arbitrary overlap. What overlaps qualify $x'$ to be a pre-image of $\ket{\psi_{x}}$?
One could consider fixing some inverse polynomial function (say $\frac{1}{n}$) and say that $x'$ is a pre-image of $\ket{\psi_{x}}$ whenever $\langle{\psi_x}|{\psi_{x'}}\rangle \geq \frac{1}{n}$.
Unfortunately, setting an arbitrary threshold does not accurately capture the adversary's uncertainty about $k$, given $\ket{\psi_k}$.
In fact, such an approach is fundamentally doomed for the following reason.

It is possible to build one-way state generators that are {\em unconditionally} statistically uninvertible given only a single copy of the output state $\ket{\psi_\x}$. 
A simple example is the following construction based on Weisner encodings/BB84 states. 
On input classical key $\x = (\theta,x)$ where $\theta, x \leftarrow \{0,1\}^n$, the OWSG outputs pure state $\ket{x}_\theta$. This OWSG is {\em statistically} single-copy secure, because $\ket{x}_\theta$ hides the string $\theta$ (over the randomness of $x$). 

Since quantum bit commitments cannot be secure against unbounded adversaries, this would rule out any possible constructions of commitments (including the one above) that rely only on the existence of single-copy (pure) OWSG. 
Instead, we will crucially
rely on multi-copy security of the OWSG to obtain an intermediate primitive where for every pair of keys $(k_1, k_2)$, their images are either orthogonal or parallel.\\

\noindent{\bf \underline%
{Resolving Barrier 1: From Quantum to Classical Outputs via Shadow Tomography.}}
\vspace{0.2mm}

Shadow tomography, introduced in~\cite{Aar?}, allows one to estimate a large number of observables by obtaining classical information from relatively few copies of an unknown quantum state. 
In more detail, shadow tomography is a procedure that applied to $t = \poly(n, \frac{1}{\epsilon})$ copies of an unknown state $\ket{\psi}$ yields a {\em classical} string, the shadow $S$. 
Given $S$, it is possible to simultaneously estimate $\langle \psi|O_j|\psi \rangle$ upto $\epsilon$ error for an exponentially large number of observables $\{O_j\}_{j \in [2^n]}$.

Applying shadow tomography to a OWSG with pure outputs yields (at least) a statistical inverter for the OWSG. Given $t = \poly(n)$ copies of some state $\ket{\psi_\x}$, an inverter can use shadow tomography on $\ket{\psi_\x}^{\otimes t}$ to (inefficiently) find a $\x'$ such that  $\langle{\psi_\x}|{\psi_{\x'}}\rangle > 1 - \frac{1}{n}$. 

Given a OWSG output state $\ket{\psi_\x}$, it may even be tempting to define its ``preimages'' as 
the set of possible keys $\x'$ returned by this statistical inverter, and try to apply arguments similar to the classical argument above.
Unfortunately this approach breaks down too. 
The statistical inverter given $\ket{\psi_\x}^{\otimes t}$ only finds a key $\x'$ where $\ket{\psi_{\x'}}$ has nontrivial {\em single-copy} overlap with $\ket{\psi_\x}$.
It is possible that for such $\x'$, $\langle{\psi_{\x}}|{\psi_{\x'}}\rangle ^{\otimes t}$ is close to $0$. Thus $\x'$ is {\em not even close to being} a preimage of $\ket{\psi_\x}^{\otimes t}$, at least for the purposes of arguing computational indistinguishability.

Thus instead of trying to define preimages of quantum states, we will crucially use the fact that certain shadow tomography methods~\cite{HKP20} have efficiently computable classical shadows. We now outline how this fact turns out to be useful.\\

\noindent{\bf Our Main Insight} is the following:
On input key $\x$, instead of having $\rho_0$ (and $\rho_1$) contain one or more copies of the OWSG state $\ket{\psi_{\x}}$, they will only contain a classical shadow $S_\x$ of $\ket{\psi_{\x}}$. 
Whenever these classical shadows can be efficiently computed, the WPEG distribution $\rho_0$ remains efficiently sampleable, and even becomes entirely classical!

While OWSG are defined to be secure given an arbitrary (unbounded) polynomial copies of $\ket{\psi_k}$, computing the shadow $S_k$ will require only a fixed linear number of copies. Indeed, our proof shows that commitments are implied by a weaker variant of OWSG, where security only holds given a fixed linear number of copies of $\ket{\psi_k}$.

We point out that the shadow $S_\x$ is a {\em randomized} (i.e., not deterministic) function of the key $\x$.
Moreover, given a shadow $S_\x$ obtained from $\ket{\psi_\x}$, it is computationally infeasible to find {\em any} key $\x'$ such that $\ket{\psi_\x}$ and $\ket{\psi_{\x'}}$ have non-negligible overlap, as otherwise this would break the OWSG.
Indeed, this means that 
the (randomized) {\em classical} map $\x \rightarrow S_\x$ is efficiently computable but computationally uninvertible, assuming OWSG security. 
However, given a shadow $S$ and a candidate key $\x$, it is not possible to efficiently verify whether $S$ was generated as a shadow of $\ket{\psi_\x}$. Indeed, as discussed before, the resulting primitive necessitates inefficient verification. 
In Section \ref{sec:owsgowp} we 
formalize this approach to build a {\bf one-way puzzle from any pure-state OWSG}. 

This allows us to reduce our problem to building commitments from one-way puzzles. The latter may at first appear to be easy, given the HILL technique.
But the quantum nature of one-way puzzles leads to a major technical barrier, that we describe next.\\

\noindent{\underline{\bf Barrier 2: No Flatness in a Quantum World.}}
\vspace{0.2mm}

Recall that a one-way puzzle sampler outputs classical $(k,s)$ pairs which satisfy the following: (1) $\mathsf{Ver}(k,s) = 1$ and (2) given $s$, it is computationally hard to find a preimage $k$ such that $\mathsf{Ver}(k,s) = 1$.
Here, observe that the distribution on preimage keys $k$ induced by fixing a puzzle output $s$ is not a ``flat'' distribution, i.e., it {\em does not necessarily assign equal probability mass to each preimage key}. 
Why does this matter?

For the following discussion, given any puzzle output string $s$, we let $K_s$ denote the distribution on keys induced by $s$, $\ell_s$ denote the min-entropy of $K_s$ and $N_s = |\mathsf{Supp}(K_s)|$.
Since $K_s$ is an arbitrary distribution, it can always be the case that $\ell_s \ll \ceil{\log N_s}$.

The construction of weak PEGs from one-way functions, discussed at the beginning of the overview, may seem to extend naturally to one-way puzzles as follows.
Consider distributions
\begin{align*}
\wgen(1^n) &:= ~~~~s, h, i, h({\x})_i ~~~~~~~~~~~\text{and} \\
    \simwgen(1^n) &:= 
    \left\{\begin{array}{ll}
     s, h, i, h({\x})_{i-1}, u_1 & \text{ if }i = \ceil{\log N_{s}}+1\\
     s, h, i, h({\x})_i & \text{ otherwise}
    \end{array}
    \right.
\end{align*}
where $(\x, s) \leftarrow \mathsf{OWPuzzle}.\mathsf{Samp}(1^n)$, $h \leftarrow \{0,1\}^n$, $i \leftarrow [n]$.

These distributions do differ in entropy, but they may not be computationally indistinguishable.
The leftover hash lemma (LHL) would imply that for any $s$, the first $\ell_s - 2c \log n$ bits of $h(k)$ are $\frac{1}{n^c}$-statistically close to uniform given $s$, where $\ell_s$ is the min-entropy of $K_s$. 
Any subsequent bits may leak information about the preimage $\x$. 
However, we note that $\wgen$ and $\simwgen$ differ on the $i^{th}$ bit of $h(k)$ for $i = \ceil{\log N_{s}}+1$ and $\ceil{\log N_s} \gg \ell_s$. But it is possible that all remaining bits of $h(k)$, i.e., $h(k)_{\ceil{\log N_{s}}}$, computationally leak the entire key $k$. This would make the distributions $\wgen$ and $\simwgen$ easily computationally distinguishable.

The argument above describes why modifying $\mathsf{G}_1$ on $i =  \ceil{\log N_{s}}+1$ doesn't work. What if we instead modified $\mathsf{G}_1$ on $i = \ell_s + 1$ instead, where $\ell_s$ is the min-entropy in $K_s$. That is, consider changing $\simwgen$ to the following distribution 
\begin{align*}
    \simwgen'(1^n):= 
    \left\{\begin{array}{ll}
     s, h, i, h({\x})_{i-1}, u_1 & \text{ if }\textcolor{black}{i = \ell_s+1}\\
     s, h, i, h({\x})_i & \text{ otherwise}
    \end{array}
    \right.
\end{align*}
In this case, the distributions $\wgen$ and $\simwgen$ become computationally indistinguishable, but the 
last bit in $\wgen$ corresponding to $i = \ell_s + 1$ could also be statistically close to uniform. As a result, 
$\wgen$ and $\simwgen'$ could end up being (almost) identical, with no entropy gap at all!

This problem does not arise in the classical setting, because flat preimages can be assumed without loss of generality 
by ``pulling out'' the (uniform) randomness from any classical algorithm. Letting $r$ denote the randomness used to sample $\x$, one can always define a (one-way) function that uses its uniform input $r$ to sample $\x$ and finally outputs $y = f(\x)$. This ensures that $\ell_s = N_s$ above, enabling simultaneous arguments for both computational indistinguishability and statistical entropy gap.
Unfortunately, this type of flattening is no longer possible when the sampler is a {\em quantum circuit}, because the randomness comes from a quantum process and we do not know how to explicitly pull it out.

At this point, it is natural to wonder whether there is some index $i$ for every key $k$ such that changing the $i^{th}$ bit of $h(k)$ in $\mathsf{G}_1$ yields a distribution that is computationally indistinguishable from $\mathsf{G}_0$ but has a statistical entropy gap. For example, perhaps one could consider modifying $\mathsf{G}_1$ at the first $i$ for which the statistical distance between $\mathsf{G}_0$ and $\mathsf{G}_1$ {\em jumps} from a value that is negligible at $i-1$ to a value that is not negligible at $i$. But there exist distributions for which there never is a clear cut ``jump''; for example, if the statistical distance between $\mathsf{G}_0$ and $\mathsf{G}_1$ increases proportionally to $2^{-{(n-i)}}$.


To overcome this issue, we will further modify $\mathsf{G}_1$. Our starting idea will be to fix for every puzzle $s$, a ``good set'' $\mathbb{G}_s$ of preimage keys which is almost flat. We set the distribution $\simwgen$ to differ from $\wgen$ {\em only when the key $k$ that is output by $\mathsf{Samp}$ belongs to the set $\mathbb{G}_s$}. Making this approach work requires several additional ideas, and we provide a detailed overview of these below. \\

\noindent{\bf \underline{Resolving Barrier 2: Pseudoentropy or ``HILL'' for Quantumly Sampled Distributions.}}
\vspace{0.2mm}

Our goal is to prove that the distribution
$$\wgen(1^n):= s, h, i, h({\x})_i$$ 
is a weak pseudoentropy generator, where $(\x, s) \leftarrow \mathsf{Samp}(1^n)$, $h \leftarrow \{0,1\}^n$ and $i \leftarrow [n]$.
Recall that this means we must demonstrate the existence of a different distribution which is computationally close to but has more entropy than $\wgen$.

We already outlined why setting $\simwgen$ identically to the case of one-way functions creates issues with proving either computational indistinguishability or statistical entropy gap.
Instead, we observe that 
for every puzzle $s$ and corresponding preimage distribution $K_s$, 
there is a ``good'' subset $\mathbb{G}_s$ of preimages $\x$ such that\footnote{We use numbers like $\frac{1}{n}$ below for simplicity. In our main section, we use slightly different fractions than the ones depicted here, for various technical reasons.}:
\begin{enumerate}
    \item The set $\mathbb{G}_s$ is dense enough in $K_s$, that is, for every $s$,
    \[\Pr_{x \leftarrow K_s}[x \in \mathbb{G}_s] \geq \frac{1}{n} \text{, and}\]
    \item For every pair of preimages $(x_1, x_2) \in \mathbb{G}_s$,  
\[0.5{\Pr_{K_s}[x_2]} \leq \Pr_{K_s}[x_1] \leq 2\Pr_{K_s}[x_2]\]
\end{enumerate}
The observation above essentially follows from a pigeonhole argument over the preimages of $s$.

We can now consider a {\em different} simulated distribution 
$\simwgen$ as follows.
\begin{align*}
    \simwgen(1^n):= 
    \left\{\begin{array}{ll}
     s, h, i, h({\x})_{i-1}, u_1 & \textcolor{red}{\text{ if }{i = \log|\mathbb{G}_s|+1}\text{ and }k \in \mathbb{G}_s}\\
     s, h, i, h({\x})_i & \text{ otherwise}
    \end{array}
    \right.
\end{align*}
where $(\x, s) \leftarrow \mathsf{Samp}(1^n)$, $h \leftarrow \{0,1\}^n$, $i \leftarrow [n]$, and $\mathbb{G}_s$ is the good slice of preimages defined above.
Computational indistinguishability between $\wgen$ and $\simwgen$ follows by noting that any distinguishing advantage can only exist when $\x \in \mathbb{G}_s$. Because we are reducing to a search problem, it is still possible to apply the Leftover Hash Lemma and the Goldreich-Levin theorem to convert any distinguisher into an inverter for the one-way puzzle.

Moreover, conditioned on $\x \in \mathbb{G}_s$, $\simwgen$ obviously has more entropy than $\wgen$ (and when $\x \not \in \mathbb{G}_s$, the two distributions are identical)\footnote{It may appear that we are close: we seem to have a pair of distributions that are statistically far but computationally close. Unfortunately, to obtain a commitment, we also need these distributions to be efficiently sampleable, which is something we will address in a later subsection. At this point, sampling from $\simwgen$ requires knowing $i_s$ and $\mathbb{G}_s$ which are not necessarily efficiently computable functions of $(k,s)$. But for now, we only aim to prove that $\wgen$ is a weak PEG, for which we only need $\wgen$ to be efficiently sampleable, and to prove that $\simwgen$ has {\em more entropy} than $\wgen$.}.
This unfortunately {\em does  not} imply that $\simwgen$ has more entropy than $\wgen$ overall.
The reason can best be explained with the following toy examples.\\ 

\noindent{\bf Example 1:}
There is a (hidden) event $B$  that occurs with probability $\frac{2}{3}$, and distributions $(A_0, A_1)$ such that
\begin{itemize}
\item Distribution $A_0$ outputs $0$ when $B$ occurs, and $1$ when $B$ doesn't occur.
\item Distribution $A_1$ outputs a uniform bit when $B$ occurs, and $1$ when $B$ doesn't.
\end{itemize}
$A_1$ has more entropy than $A_0$ conditioned on $B$, and the distributions are identical when $B$ doesn't occur -- but the overall entropy in $A_0$ is equal to that of $A_1$! 
Similarly, while $\simwgen$ has more entropy than $\wgen$ when $k \in \mathbb{G}_s$, and the distributions are identical otherwise, the overall entropy in $\wgen$ could end up being equal to that in $\simwgen$. 

Consider, however, the following example where both distributions $A_0$ and $A_1$ are uniform when $B$ doesn't occur. That is,\\

\noindent{\bf Example 2:}
There is a (hidden) event $B$ that occurs with probability $\frac{2}{3}$, and distributions $(A_0, A_1)$ such that
\begin{itemize}
\item Distribution $A_0$ outputs $0$ when $B$ occurs, and a uniform bit when $B$ doesn't occur.
\item Distribution $A_1$ outputs a uniform bit when $B$ occurs, and a uniform bit when $B$ doesn't.
\end{itemize}
$A_1$ has more entropy than $A_0$ conditioned on $B$. Moreover, since $A_0$ and $A_1$ are {\em uniform} when $B$ doesn't occur, then $A_1$ having higher entropy than $A_0$ conditioned on $B$ does imply that $A_1$ has higher entropy overall.

We could hope to apply a similar argument to $\wgen$ and $\simwgen$ if somehow it were the case that for $k \not \in \mathbb{G}_s$, the last bit of $\wgen$ (and also $\simwgen$) is close to uniform given the remaining bits $s, h, i, h(k)_{i-1}$. 
But why would this even be the case?\\

\noindent{\bf Establishing an Entropy Gap.}
Inspired by the insight above, we will modify $\simwgen$ to provably obtain an entropy gap.
For every $s$, the two distributions are identical when $i \neq \log |\mathbb{G}_s|+1$, therefore, we only focus on the case where $i = \log |\mathbb{G}_s|+1$.

In this case, ideally we want the bias in the $i^{th}$ bit of $h(k)$ when $k \not \in \mathbb{G}_s$ to not cancel out the bias that arises when $k \in \mathbb{G}_s$, in the distribution $\wgen$.
This would hold if the $i^{th}$ bit of $h(k)$ when $k \not \in \mathbb{G}_s$ were uniform (even given the remaining bits output by $\wgen$).
But we do not know if this is the case, or even what the distribution of keys when $k \not \in \mathbb{G}_s$ looks like.

To resolve this, let us first try to ensure that for {\em most} $k \in \mathbb{G}_s$, all preimages (besides $k$) of $(s, h(k)_{i-1})$ have {\em extremely} low sampling probability in $K_s$. (Recall that $K_s$ is the distribution induced on preimages of $s$.) 
We show that this is achieved by changing $\simwgen$ as follows:
\begin{align*}
    \simwgen(1^n):= 
    \left\{\begin{array}{ll}
     s, h, i, h({\x})_{i-1}, u_1 & \textcolor{red}{\text{ if }{i = (\log|\mathbb{G}_s|+600 \log n)}\text{ and }k \in \mathbb{G}_s}\\
     s, h, i, h({\x})_i & \text{ otherwise}
    \end{array}
    \right.
\end{align*}

In the modified distribution $\simwgen$, pairwise independence of $h$ ensures that for most $k \in \mathbb{G}_s$, all preimages (besides $k$) of $(s, h(k)_{i-1})$ are sampled with probability less than $\frac{1}{n^{600}}$ in $K_s$, for $i = (\log|\mathbb{G}_s|+600 \log n)$. 
For this overview, we assume that this holds for {\em all} $k \in \mathbb{G}_s$\footnote{In the technical sections, we further modify $\simwgen$ to account for the fact that for a few choices of $h$ and a few $k \in \mathbb{G}_s$, there are multiple preimages of $(h, h(k))$ that are each sampled with probability much higher than $\frac{1}{n^{600}}$.}.

We will now consider the following two cases.
\begin{enumerate}
    \item 
{\bf Preimages (besides $k$) of $(s, h(k)_{i-1})$ are sampled with total probability $\leq \frac{1}{n}$ in $K_s$.}

In this case, since the unique $k \in \mathbb{G}_s$ has probability mass at least $1 - \frac{1}{n}$, 
the bias in the $i^{th}$ bit of $h(k)$ from keys outside $\mathbb{G}_s$ barely stacks up against the bias that arises from $k \in \mathbb{G}_s$.
\item {\bf Preimages (besides $k$) of $s, h(k)_{i-1}$ are sampled with total probability $> \frac{1}{n}$ in $K_s$.} 

In this case, the individual probability mass of every preimage (besides $k$) is very low, but their total probability mass is high.
This means that the overall distribution of pre-images (besides $k$) of $(s, h(k)_{i-1})$ necessarily has high entropy. Then by the Leftover Hash Lemma, the $i^{th}$ bit of $h(k)$ when $k \not \in \mathbb{G}_s$ will be close to uniform, which is what we desired.
\end{enumerate} 

In both cases, we conclude that the {\em overall} Shannon entropy in $\simwgen$ is larger than that in $\wgen$ by a (fixed) inverse polynomial value: which means that $\wgen$ is indeed a WPEG. 
We prove this formally in Section \ref{sec:slicing}.

Furthermore, the proof of computational indistinguishability between $\wgen$ and $\simwgen$ goes through as before, with the leftover hash lemma guaranteeing that all but the last $O(\log n)$ bits of $h(k)$ are statistically indistinguishable from uniform. With this guarantee, the Goldreich-Levin inverter simply needs to guess the last $O(\log n)$ bits of $h(k)$, which it can do with inverse polynomial probability.

At this point, we have a WPEG. However, because $\simwgen$ cannot be efficiently sampled, we cannot use it directly to build commitments. Indeed, obtaining a full-fledged commitment requires additional steps, which we outline next.

\subsection{Pseudoentropy Generators (PEG)}
Our next step follows a similar technique as~\cite{HILL} to
(1) amplify the entropy gap between real and simulated distributions and (2) bring the min-entropy of the real distribution close to its Shannon entropy, all while maintaining computational indistinguishability.

    This is done by taking a product distribution of the outputs of the weak PEG.
    In more detail, we sample $q(n) = \poly(n)$ (for a large enough polynomial $\poly(\cdot))$ random keys $\x_1, \ldots, \x_q$ along with $q(n)$ independent $(h,i)$ values. We use these to generate $q(n)$ samples from distribution $\wgen$, and we append these samples together as our PEG output. 
    This also has the effect of ``concentrating'' the entropy to an expected value independent of the choice of $\x$ (whereas in weak PEG this entropy would necessarily depend on $\x$ via $i_s$).
    In the PEG, we have that for every choice of security parameter $n$, there is a {\em single} value $\widehat{h}_n$ that corresponds to the Shannon-entropy in the output of the PEG, and this value is smaller than the min-entropy in the corresponding simulated product distribution by $n^c$, for some $c>1$. 
\subsection{Imbalanced EFI Pairs}
    An EFI  is a pair of {\em efficiently sampleable} distributions that are statistically far but computationally indistinguishable. Such distributions have been shown~\cite{BCQ22,AC:Yan22} to be equivalent to quantum bit commitments.
     
    Let us consider hashing the output of our PEG to approximately $\widehat{h}_n + n$ bits. That is, the size of hash outcome is larger 
        than the actual Shannon entropy in the PEG output, making the resulting distribution statistically  distinguishable from uniform. At the same, since the PEG outputs {\em are computationally indistinguishable} from a distribution with (much) more than $\widehat{h}_n + n$ bits of min-entropy, the resulting hash output is still {\em be computationally indistinguishable} from uniform. 
        It may now seem like we have an EFI pair: consider distributions 
        \begin{itemize}
        \item $h(\mathsf{PEG}(n))$ where
        the output of $h$ is truncated to $\widehat{h}(n) + n$ bits
        \item $U_{\widehat{h}(n) + n}$ which
 is the uniform distribution over $\widehat{h}(n) + n$ bits        \end{itemize}
        While these distributions are computationally close but statistically far, they cannot be sampled efficiently without non-uniform advice, i.e., the value $\widehat{h}(n)$ for every $n$.

        In fact, observe that truncating the hash output to any less than $\widehat{h}_n + n$ bits would still preserve computational indistinguishability, and truncating to any more would still ensure statistical distance, but the two can {\em simultaneously} be guaranteed only when truncating to exactly $\widehat{h}(n) + n$ bits. This is why we call the resulting object an {\em imbalanced} EFI.  

        Due to the equivalence between EFI and commitment, we can equivalently claim to have statistically binding, computationally hiding quantum bit commitments~\cite{AC:Yan22,BCQ22} -- albeit dependent on non-uniform advice $z(n)$. When $z(n) \leq \widehat{h}(n) + n$, the commitments are hiding, and when $z(n) \geq \widehat{h}(n) + n$, these commitments are binding. We call this an {\em imbalanced commitment scheme}.
        The next few steps discuss how to remove this imbalanced drawback by appropriately combining variants of these commitments\footnote{This upcoming part diverges from techniques in~\cite{HILL} which build uniform pseudorandom generators by appropriately stretching the output of a nonuniform PRG. These techniques break down in our setting because there is no clear way to run a puzzle on its own output, and thus to achieve significant ``stretch'' in a puzzle-based PRG-type object.}.

\subsection{Always binding, Non-uniform hiding Commitments}
        In the next step, we rely on prior work in {\em flavor conversion} of quantum commitments~\cite{HMY22} to convert our statistically hiding, computationally binding EFI pairs/commitments to commitments with the reverse property: namely, where for
        $z(n) \geq \widehat{h}(n) + n$, the commitments are hiding, and when $z(n) \leq \widehat{h}(n) + n$, these commitments are binding.
        Next, given these two types of complementary commitments, we {\em combine them} by using both to commit to the {\em same} bit $b$: note that for every choice of advice $z(n)$ (i.e. length to which we truncate the hash outcome), at least one of the two commitments is necessarily binding. This allows us to show that the resulting combined commitment is always binding (for every choice of $z(n)$) and hiding whenever $z(n) = \widehat{h}(n) + n$. We call this a non-uniform hiding commitment.
    
\subsection{Standard (Uniform) Commitments}
    Finally, we observe that for each $n$, the number of possible $\widehat{h}(n)$ values is bounded by a fixed polynomial $t(n)$. Thus, we can repeat the above construction for every possible value of $\widehat{h}(n)$, obtaining a sequence of commitments where for every $n$, at least one commitment in the sequence is hiding (and all are binding). By secret sharing the committed bit between various commitments, we can show that the overall commitment scheme satisfies both hiding and binding. Thus, we have removed dependence on the advice string $z(n)$, yielding a uniform construction of commitments.\\

\noindent{\bf Quick Detour: An Alternative Template.} We briefly note an alternative technique~\cite{C:MorYam22} using quantum information to sidestep the use of the hardcore bit. 
While this was developed to build commitments from a strong ``injective'' variant of OWSG, for simplicity, we describe it as applied to injective one-way functions. Very roughly (and ignoring some garbage registers), a commitment to $0$ is $\sum_\x \ket{\x}_\mathsf{C} \ket{\x, f(\x)}_\mathsf{D}$ and a commitment to $1$ is $\sum_\x \ket{\x}_\mathsf{C} \ket{0, f(\x)}_\mathsf{D}$; where $\mathsf{C}$ is the commit register and $\mathsf{D}$ the decommit register. Statistical hiding follows because tracing out the $\mathsf{D}$ register leaves identical mixtures on $\mathsf{C}$ in both cases. Computational binding follows by the hardness of finding $\x$ given $f(\x)$ for a random $\x$. Our methods, including hashing preimages to appropriate lengths and slicing, will also similarly apply to this template. We do not find any one of these templates to be simpler than the other, but we focus on the Goldreich-Levin template because it yields interesting intermediate primitives with entirely classical outputs.\\

\noindent{\bf \underline{Roadmap.}}
In the following section, we recall useful definitions and facts about quantum cryptography and various notions of entropy. In Section \ref{sec:owsgowp}, we show that OWSG with pure states imply one-way puzzles.
Our construction of commitments from one-way puzzles is detailed in Sections \ref{sec:slicing}-\ref{sec:com}.
In Section \ref{sec:slicing}, we show how one-way puzzles imply a weak pseudoentropy generator (WPEG). Then, in Section \ref{sec:product} we describe how a product distribution of quantum weak PEGs gives rise to a pseudoentropy generator (PEG).
Next in Section \ref{sec:imbefi} we obtain imbalanced EFI by appropriately hashing the output of the PEG. Finally, in Section \ref{sec:com} we apply flavor swap and other combiners to the imbalanced EFI to obtain a uniform construction of quantum bit commitments.

In Appendix \ref{app:implications}, we provide evidence that one-way puzzles are a necessary assumption for quantum cryptography with classical communication.

\section{Preliminaries}
In this section, we discuss some notation and preliminary information, including definitions, that will be useful in the rest of the exposition.

\subsection{Notation and Conventions}

We write $\negl(\cdot)$ to denote any \emph{negligible} function, which is a function $f$ such that for every constant $c \in \mathbb{N}$ there exists $N \in \mathbb{N}$ such that for all $n > N$, $f(n) < n^{-c}$.
For any $k$, we will denote by $U_k$ the uniform distribution supported on $k$ bits.
We will use $\mathsf{SD}(A,B)$ to denote the statistical distance between (classical) distributions $A$ and $B$.\\

\noindent{\bf Quantum conventions.} A register $\sX$ is a named Hilbert space $\bbC^{2^n}$. A pure state on register $\sX$ is a unit vector $\ket{\psi} \in \bbC^{2^n}$, and we say that $\ket{\psi}$ consists of $n$ qubits. A mixed state on register $\sX$ is described by a density matrix $\rho \in \bbC^{2^n \times 2^n}$, which is a positive semi-definite Hermitian operator with trace 1. 

A \emph{quantum operation} $F$ is a completely-positive trace-preserving (CPTP) map from a register $\sX$ to a register $\sY$, which in general may have different dimensions. That is, on input a density matrix $\rho$, the operation $F$ produces $F(\rho) = \tau$ a mixed state on register $\sY$.
A \emph{unitary} $U: \sX \to \sX$ is a special case of a quantum operation that satisfies $U^\dagger U = U U^\dagger = \bbI^{\sX}$, where $\bbI^{\sX}$ is the identity matrix on register $\sX$. A \emph{projector} $\Pi$ is a Hermitian operator such that $\Pi^2 = \Pi$, and a \emph{projective measurement} is a collection of projectors $\{\Pi_i\}_i$ such that $\sum_i \Pi_i = \bbI$.


The security of all our constructions holds against adversaries that receive non-uniform quantum advice. More specifically, we refer to security against all quantum polynomial-sized adversaries in our definitions. By this we mean every family of (polynomial-sized) quantum circuits $\{C_n\}_{n\in \bbN}$ along with a family of states $\{\rho_n\}_{n\in\bbN}$ such that for every $n\in\bbN$, $C_n$ obtains ancilla registers set to $\rho_n$.
We rely on a non-uniform reduction in Section \ref{sec:product}, which obtains (classical) non-uniform advice for every value of the security parameter $\n$. 

Finally, we often define security of primitives by bounding an adversary's advantage in a search/distinguishing game by $\negl(\n)$. Depending on the order of quantifiers, this is typically considered to mean one of two things -- (1) there exists a negligible function $\mu(\cdot)$ such that the probability that any quantum polynomial-sized adversary $\cA$ wins a search/distinguishing game is at most $\mu(n)$ (for large enough $n$), and (2) for every quantum polynomial-sized adversary $\cA$, there is a negligible function $\mu_\cA(\cdot)$ such that the probability that $\cA$ wins a search/distinguishing game is at most $\mu_\cA(n)$. The two definitions/interpretations are equivalent (proved in~\cite{Bellarenote} for classical circuits but the proof also applies to quantum circuits), and we use both interchangeably.

\subsection{Quantum Cryptographic Primitives}
\begin{definition}[$t(n)$-Copy Secure One-Way State Generators]~\cite{C:MorYam22}
\label{def:owsg}
A one-way state generator (OWSG) is a set of QPT algorithms $(\KeyGen, \StateGen, \Ver)$ where:
\begin{itemize}
    \item $\KeyGen(1^\n)$: On input the security parameter $\n$, output a classical key string $\x \in \bin^n$.
    \item $\StateGen(\x)$: On input key $\x \in \{0,1\}^n$, output an $m$-qubit quantum state $\ket{\psi_\x}$.
    \item $\Ver(\x, \ket{\psi})$ : On input key $\x \in \{0,1\}^n$ and $m$-qubit quantum state $\ket{\psi}$, output $\top$ or $\bot$.
\end{itemize}
These algorithms satisfy the following properties.
\begin{itemize}
\item \textbf{Correctness.}
For every $n \in \mathbb{N}$,
\begin{align*}
    \Prr_{\substack{\x \leftarrow \KeyGen(1^\n)\\
    \ket{\psi_\x} \leftarrow \StateGen(\x)}}\left [ 
    \top\leftarrow\Ver(\x, \ket{\psi_\x})
    \right ] 
    \geq 1-\negl(\n)
\end{align*}
    
\item \textbf{$t(\n)$-Copy Security.} 
For every quantum polynomial-sized adversary $\cA = \{\cA_\n\}_{\n \in \mathbb{N}}$
and $\n \in \mathbb{N}$,
\begin{align*}
    \Prr_{\substack{\x \leftarrow \KeyGen(1^\n)\\
    \ket{\psi_\x} \leftarrow \StateGen(\x)}}
    \left [ 
    \top\leftarrow\Ver\left(\cA_n((\ket{\psi_\x})^{\otimes t(\n)}), \ket{\psi_\x}\right)
    \right ] 
    \leq\negl(\n)
    \end{align*}

\end{itemize}
\end{definition}
This definition was later generalized in~\cite{morimaeOneWaynessQuantumCryptography2022} to allow $\mathsf{StateGen}$ outputs to be mixed states.
Furthermore, existing definitions of OWSG ~\cite{C:MorYam22} require $t(n)$-copy security for every (a-priori unbounded) polynomial $t(\cdot)$. In this work, we only need to consider $cn$-copy security for a large enough, a-priori fixed, constant $c$. We will show that even this weaker variant implies commitments, thus obtaining a stronger result.

\begin{definition}[One-way Puzzles]
\label{def:owp}
A one-way puzzle is a pair of sampling and verification algorithms $(\mathsf{Samp},\mathsf{Ver})$
with the following syntax. 
\begin{itemize}
\item $\mathsf{Samp}(1^n) \rightarrow (k,s)$, is a QPT algorithm that outputs a pair of classical strings $(k,s)$.
We refer to $s$ as the puzzle and $k$ as its key. Without loss of generality we may assume that $k\in\bin^n$.
\item $\mathsf{Ver}(k, s) \rightarrow \top$ or $\bot$,
is an unbounded algorithm that on input any pair of classical strings $(k,s)$ halts and outputs either $\top$ or $\bot$.
\end{itemize}
These satisfy the following properties.
\begin{itemize}
\item {\bf Correctness.} Outputs of the sampler pass verification with overwhelming probability, i.e., 
$$\Prr_{(k, s) \leftarrow \mathsf{Samp}(1^n)} [\Ver(k,s) = \top] = 1 - \negl(n)$$ 
\item {\bf Security.}
Given $s$, it is (quantum) computationally infeasible to find $k$ satisfying $\Ver(k,s) = \top$, i.e., for every quantum polynomial-sized adversary $\cA$,
 $$\Prr_{(k,s) \leftarrow \Samp(1^n)}[\Ver(\mathcal{A}(s), s) = \top] = \negl(n)$$
\end{itemize}
Note that since puzzles are efficiently sampleable, there exists a polynomial $p(\cdot)$ such that $|s| \leq p(n)$.
\end{definition}

\begin{definition}[EFI pairs]~\cite{BCQ22} 
\label{def:efi}
An EFI pair is a QPT algorithm $\EFI(1^\n,b)\rightarrow\rho_b$ that on input $b \in \bin$ and the security parameter $\n$, outputs a (potentially mixed) quantum state $\rho_b$ such that the following hold:

\begin{enumerate}
    \item {\bf Computational Indistinguishability.} 
    There exists a negligible function $\mu(\cdot)$ such that for every quantum polynomial-sized adversary $\cA$, for large enough $\n \in \bbN$, 
    \[
    \left| \Pr[1\leftarrow \cA(\EFI(1^\n,0))] - \Pr[1\leftarrow \cA(\EFIGen(1^\n,1))]\right| \leq \mu(\n)
    \]
    \item {\bf Statistical Binding.}
    There exists a negligible function $\delta(\cdot)$ such that for large enough $\n \in \bbN$, 
    \[\TD(\EFIGen(1^\n, 0), \EFIGen(1^\n, 1)) \geq 1 - \delta(\n)\]
\end{enumerate}
\end{definition}

\subsection{Entropy and Randomness Extractors}
We will use $\Hs(X)$ to denote the Shannon entropy, $\Hs_{\min}(X)$ to denote the min-entropy and $\Hs_{\max}(X)$ to denote the max-entropy of distribution $X$.
For an arbitrary classical random variable $X$, the sample entropy~\cite{hill-revisited} of any $x \in \mathsf{Supp}(X)$ is defined as
\[\Hs_X(x) := \log(1/\Pr[X=x]).\] 
The min-entropy of distribution $X$ is then
$$\Hs_{\min}(X) := \underset{x \in \mathsf{Supp}(X)}{\min}\Hs_X(x)$$
and the max-entropy is
$$\Hs_{\max}(X) := \underset{x \in \mathsf{Supp}(X)}{\max}\Hs_X(x)$$
For discrete random variables $X$ and $Y$, we define conditional entropy of $X$ given $Y$ as
\[
	\Hs(X|Y) := \sum_{y\in\supp(Y)} \Pr[Y=y]\cdot H(X|Y=y)
\]
We also note the chain rule for conditional entropy
\[
	\Hs(X,Y) = \Hs(X|Y) + \Hs(Y)
\]

\begin{definition}(Smooth Min-Entropy)
 Let $X$ be a discrete random variables. The $\epsilon$-Smooth Min Entropy of $X$ is defined as:
 \[
   \Hs^\epsilon_{\min}(X) := \mathsf{sup}_{X'} \Hs_{\min}(X')
\]
where $X'$ is an arbitrary random variable at most $\epsilon$ statistically far from $X$.
\end{definition}

We also have the following well-known lemmas and theorems about entropy and randomness extraction.

\begin{lemma}[Leftover Hash Lemma] 
\label{thm:LHLforSmoothEntropy}Fix $\epsilon, \epsilon' > 0$. Let $X$ be a discrete random variable distributed over $\bin^n$, and $\Hs^\epsilon_{\min}(X) \geq k$. Let $H$ be a universal hash family with output length $k - 2\log(1/\epsilon')$. Then
    \[
    \SD \Big( (H(X), H), (U, H) \Big) \leq \epsilon+\epsilon'
    \] where 
    $U$ is uniformly distributed in  $\bin^{k - 2\log(1/\epsilon')}$. 
    \end{lemma}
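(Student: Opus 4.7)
The plan is to reduce this smooth-min-entropy version of the Leftover Hash Lemma to the standard (non-smooth) Leftover Hash Lemma, using the definition of $\Hs^\epsilon_{\min}$ together with the triangle inequality for statistical distance.

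First, I would unpack the hypothesis $\Hs^\epsilon_{\min}(X) \geq k$. By definition of smooth min-entropy, there exists a random variable $X'$ supported on $\bin^n$ with $\Hs_{\min}(X') \geq k$ and $\SD(X, X') \leq \epsilon$. I then apply the standard Leftover Hash Lemma to $X'$: since $H$ is a universal hash family with output length $k - 2\log(1/\epsilon')$ and $X'$ has (true) min-entropy at least $k$, we obtain
\[
\SD\Big((H(X'), H), (U, H)\Big) \leq \epsilon'.
\]

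Second, I would bound $\SD((H(X), H), (H(X'), H))$. Statistical distance is non-increasing under (possibly randomized) processing, and the map that takes $x$ to $(H(x), H)$ for an independently sampled $H$ is just such a process applied to $X$ vs.\ $X'$. Therefore
\[
\SD\Big((H(X), H), (H(X'), H)\Big) \leq \SD(X, X') \leq \epsilon.
\]

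Finally, I would combine the two bounds via the triangle inequality for statistical distance:
\[
\SD\Big((H(X), H), (U, H)\Big) \leq \SD\Big((H(X), H), (H(X'), H)\Big) + \SD\Big((H(X'), H), (U, H)\Big) \leq \epsilon + \epsilon',
\]
which is the desired conclusion. No step here is genuinely difficult: the only mildly subtle point is invoking the data-processing inequality with $H$ sampled jointly on both sides (so that the same hash function is used for $X$ and $X'$), which ensures that the reduction in statistical distance really transfers through the hash. The standard Leftover Hash Lemma for (non-smooth) min-entropy is assumed as a black box.
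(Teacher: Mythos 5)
Your proof is correct and is the standard reduction of the smooth-min-entropy Leftover Hash Lemma to the plain (non-smooth) version: extract a nearby $X'$ with genuine min-entropy at least $k$ from the definition of $\Hs^\epsilon_{\min}$, apply the ordinary Leftover Hash Lemma to $X'$, and transfer the bound back to $X$ via data-processing for statistical distance together with the triangle inequality. The paper states this lemma in its preliminaries as a well-known fact and does not supply a proof, so there is no in-paper argument to compare against; yours is the expected one.

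The only technicality you gloss over is that $\Hs^\epsilon_{\min}(X)$ is defined as a supremum, so the existence of an $X'$ with $\SD(X,X')\leq\epsilon$ and $\Hs_{\min}(X')\geq k$ is not literally immediate from the definition. This is harmless here: the $\epsilon$-ball around $X$ is a compact subset of the finite-dimensional simplex of distributions on $\bin^n$, and $\Hs_{\min}$ is continuous there, so the supremum is attained; alternatively, pick $X'_\delta$ with $\Hs_{\min}(X'_\delta)\geq k-\delta$, run the argument with the slightly degraded bound, and let $\delta\to 0$.
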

\begin{theorem}[Entropy Concentration in Product Distributions]~\cite{Renner08}
\label{thm:conc}
Let $X$ be a discrete random variable taking values in a universe $\cU$, let $t \in \bbN$ and let $\epsilon > 0$.  Then,
    \begin{gather*}
   \Hs^\epsilon_{\min}(X^t) \geq  t\cdot \Hs(X) - \sqrt{2t\cdot \log(1/\epsilon)}\cdot\log(3 + |\cU|) 
    \end{gather*}
\end{theorem}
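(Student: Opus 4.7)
The plan is to reduce the theorem to a concentration of measure statement for the sample entropy of the product distribution. Observe that for $x^t = (x_1,\ldots,x_t) \in \cU^t$, the sample entropy decomposes as
\[
\Hs_{X^t}(x^t) \;=\; \sum_{i=1}^t \log\!\frac{1}{\Pr[X=x_i]} \;=\; \sum_{i=1}^t Z_i,
\]
where the $Z_i$ are i.i.d.\ copies of $Z := \log(1/\Pr[X=x])$ with $\mathbb{E}[Z] = \Hs(X)$. The $\epsilon$-smoothing in $\Hs^{\epsilon}_{\min}$ allows us to discard a set of $X^t$-mass at most $\epsilon$. So I would construct an $X'$ obtained from $X^t$ by zeroing out probability on a low-probability ``atypical'' set $T$ (and treating the result as a subnormalized distribution, which is within the $\epsilon$ ball of $X^t$). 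Every $x^t \notin T$ would satisfy $\Pr[X^t = x^t] \le 2^{-(t\Hs(X) - \Delta)}$ for a fluctuation $\Delta$ to be determined, yielding $\Hs_{\min}(X') \ge t\Hs(X) - \Delta$.

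The second step is to choose $T$ via a concentration inequality and show $\Pr[X^t \in T] \le \epsilon$ with $\Delta = \sqrt{2t\log(1/\epsilon)}\cdot \log(3+|\cU|)$. The natural tool is Hoeffding's inequality applied to $\sum_i Z_i$: if each $Z_i$ were supported in an interval of length $L$, then
\[
\Pr\!\left[\sum_i Z_i < t\Hs(X) - \lambda\right] \;\le\; \exp\!\bigl(-2\lambda^2 /(tL^2)\bigr),
\]
and setting the right-hand side equal to $\epsilon$ gives exactly $\lambda = L\sqrt{t\log(1/\epsilon)/2}$. Matching against the theorem statement, we want the effective $L$ to be $\log(3+|\cU|)$, which is essentially $\Hs_{\max}(X) + O(1)$.

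The main obstacle is that $Z = \log(1/\Pr[X=x])$ is \emph{not} bounded in general by $\log|\cU|$: atoms of extremely small probability give arbitrarily large sample entropy. The standard workaround, which I would carry out, is a truncation plus smoothing: partition $\cU$ into ``heavy'' atoms (with probability $\geq 2^{-c}$ for an appropriate threshold $c$ linked to $|\cU|$ and $\epsilon$) and ``light'' atoms (the rest). The heavy atoms give $Z_i \le c$, allowing Hoeffding on a truncated variable $\widetilde Z_i := \min(Z_i, c)$. Meanwhile, the total mass of sequences $x^t$ containing any light atom can be bounded by a union bound, and this mass can be absorbed into the $\epsilon$-smoothing together with the Hoeffding tail. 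Choosing $c$ of order $\log(3+|\cU|)$ optimizes the two sources of loss and matches the stated bound (the ``$3+$'' is the usual padding to keep the $\log$ bounded away from $0$ when $|\cU|$ is very small). Replacing Hoeffding by Azuma-type arguments for martingales would yield a similar bound and is the route taken in Renner's thesis.

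Finally, setting $\Delta := \sqrt{2t\log(1/\epsilon)}\cdot \log(3+|\cU|)$ and defining $T$ as the event $\{\sum_i Z_i < t\Hs(X) - \Delta\} \cup \{x^t \text{ contains a light atom}\}$, the conditional distribution of $X^t$ on $\cU^t \setminus T$ gives the desired $X'$. Since $\Pr_{X^t}[T] \le \epsilon$ (up to absorbing constants into the exponent) and every $x^t \notin T$ has $-\log\Pr[X^t = x^t] \ge t\Hs(X) - \Delta$, we conclude $\Hs^{\epsilon}_{\min}(X^t) \ge t\Hs(X) - \Delta$, which is the claimed bound.
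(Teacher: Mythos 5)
The paper imports this theorem from Renner's thesis \cite{Renner08} and gives no proof, so there is no internal argument to compare against. Your high-level plan---decompose the sample entropy $\Hs_{X^t}(x^t)$ as the i.i.d.\ sum $\sum_i Z_i$, apply a concentration bound, and absorb the low-probability tail into the $\epsilon$-smoothing---is the right shape for this result, and you correctly identify the unboundedness of $Z=\log(1/\Pr[X=x])$ as the crux.

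The gap is in the truncation-plus-union-bound step. For the union bound over light atoms to give $\Pr_{X^t}\bigl[\exists i: \Pr[X=x_i]<2^{-c}\bigr] \le t\,|\cU|\,2^{-c} \le \epsilon/2$, you are forced to take $c \ge \log(2t|\cU|/\epsilon)$, so the truncation level must grow with both $t$ and $\log(1/\epsilon)$. The same happens with the mean shift $\mathbb{E}\bigl[(Z-c)^+\bigr]$: its total contribution $t\cdot \mathbb{E}\bigl[(Z-c)^+\bigr]$ to the entropy deficit must be kept below $O\bigl(\sqrt{t\log(1/\epsilon)}\bigr)$, which again forces $c \gtrsim \tfrac12\log t$. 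Plugging $L=c$ as the Hoeffding range then yields a deficit of order $\bigl(\log|\cU| + \log t + \log(1/\epsilon)\bigr)\sqrt{t\log(1/\epsilon)}$, which is worse than the claimed $\log(3+|\cU|)\sqrt{2t\log(1/\epsilon)}$ by a $\log t$ factor. So as written the argument proves a weaker statement.

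The standard way to avoid this is to bound the moment generating function of $Z$ directly rather than its range after truncation. One has $\mathbb{E}[2^{-sZ}] = \sum_x \Pr[X=x]^{1+s} = 2^{-s H_{1+s}(X)}$, where $H_{1+s}$ is the R\'enyi entropy of order $1+s$; a Chernoff bound gives $\Pr[\sum_i Z_i < K]\le 2^{s(K - tH_{1+s}(X))}$. Combining this with the inequality $H_{1+s}(X)\ge \Hs(X) - \tfrac{s}{2}\log^2(3+|\cU|)$ for $s\in(0,1]$ (a uniform bound on the second derivative of $s\mapsto -\log\sum_x \Pr[X=x]^{1+s}$) and optimizing $s=\sqrt{2\log(1/\epsilon)/t}\,/\log(3+|\cU|)$ yields precisely the stated deficit $\sqrt{2t\log(1/\epsilon)}\cdot\log(3+|\cU|)$. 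The exponential tilt $\Pr[X=x]^{1+s}$ suppresses low-probability atoms automatically with no truncation and no union bound over coordinates, which is exactly the saving that a range-based Hoeffding argument cannot provide.
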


Finally, we will use the following version of the Goldreich-Levin theorem (in the presence of quantum advice).
\begin{theorem}[Goldreich-Levin with Quantum Advice]~\cite{AC}
\label{thm:gl}
    There exists a QPT algorithm $\cE$ such that if
    for any QPT algorithm $\cD$, any $\epsilon > 0$, any mixed state $\rho$, 
    and any $a \in \{0,1\}^n$,
     $$\Prr_{x \leftarrow \{0,1\}^n}[\cD(\rho, x) = \langle a, x \rangle] \geq \frac{1}{2} + \epsilon$$
    then,
    $$\Prr[\cE^\cD(\rho) = a] \geq 4\epsilon^2$$
\end{theorem}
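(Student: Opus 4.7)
The plan is to instantiate $\cE$ as a one-query Bernstein--Vazirani (BV) algorithm that uses $\cD$ as a coherent phase oracle. The classical Goldreich--Levin template would invoke $\cD$ polynomially many times on correlated inputs to stitch together the bits of $a$, but with a quantum advice state $\rho$ this is forbidden, since $\rho$ cannot be copied. Instead, I would exploit BV's ability to extract an $n$-bit inner-product key in a \emph{single} query to the oracle.

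Concretely, $\cE$ would first dilate $\cD$: by Stinespring, there is a unitary $U$ acting on an input register $\sX$, an advice register $\sA$ initialized to $\rho$ and an ancilla register $\sB$ initialized to $\ket{0}$, such that measuring a designated output qubit of $U(\ket{x}_\sX \otimes \rho \otimes \ket{0}_\sB)$ reproduces the distribution of $\cD(\rho, x)$. Then $\cE$ would (i) prepare $(1/\sqrt{2^n})\sum_x \ket{x}_\sX \otimes \ket{-}_\sY \otimes \rho_\sA \otimes \ket{0}_\sB$, (ii) apply $U$ controlled on $\sX$ and CNOT the output qubit of $U$ into $\sY$ so as to invoke phase kick-back, (iii) apply $H^{\otimes n}$ on $\sX$ and measure in the computational basis, returning the outcome $y$.

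For the analysis, after step (ii) the global state has the form $(1/\sqrt{2^n})\sum_x \ket{x}_\sX \ket{-}_\sY \ket{\chi_x}_{\sA\sB}$, where $\ket{\chi_x} = \sum_b (-1)^b \alpha_{x,b}\ket{b}\ket{\eta_{x,b}}$ is a unit vector encoding the coherent action of $\cD$ on $x$, with $|\alpha_{x,b}|^2 = \Pr[\cD(\rho,x) = b]$. A direct computation shows
\[
\Pr[\cE = a] \;=\; \frac{1}{4^n}\left\|\sum_x (-1)^{\langle a, x\rangle} \ket{\chi_x}\right\|^2.
\]
When $\cD$ is deterministic this collapses to $\big|\mathbb{E}_x[(-1)^{\cD(\rho,x)\oplus \langle a,x\rangle}]\big|^2 = (2\mathbb{E}_x[\Pr[\cD=\langle a,x\rangle]] - 1)^2 \geq 4\epsilon^2$. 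In the probabilistic case, I would bound the expression from below by projecting onto the ``correct-output'' component of each $\ket{\chi_x}$, i.e., writing $\ket{\chi_x} = \beta_x \ket{\text{good}_x} + \sqrt{1-|\beta_x|^2}\ket{\text{bad}_x}$ where $|\beta_x|$ aligns with $(-1)^{\langle a,x\rangle}$ and $|\beta_x|^2 - |1-\beta_x|^2 = 2\Pr[\cD = \langle a,x\rangle]-1$, and then invoking Jensen/Cauchy--Schwarz on the resulting amplitudes.

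I expect the main obstacle to be controlling destructive interference among the garbage components $\ket{\eta_{x,b}}$: for deterministic oracles the BV analysis is immediate, but here different branches of the superposition leave behind entangled ancilla that could a priori cancel out the useful signal. Resolving this requires deferring every measurement inside $\cD$ until the very end, working entirely with the purification, and verifying that the signed amplitudes $\mathbb{E}_x[(-1)^{\langle a,x\rangle}(2|\alpha_{x,\langle a,x\rangle}|^2-1)]$ lower bound the norm above by $2\epsilon$ in modulus, giving $4\epsilon^2$ after squaring. This is the Adcock--Cleve-style calculation that underlies the cited theorem.
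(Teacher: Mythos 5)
Your construction omits the crucial uncomputation step, and without it the claimed lower bound is false. After your step~(ii) the state is
$\frac{1}{\sqrt{2^n}}\sum_x \ket{x}_\sX\ket{-}_\sY\ket{\chi_x}_{\sA\sB}$
with $\ket{\chi_x}$ still carrying the garbage $\ket{\eta_{x,b}}$, and after the Hadamard the success probability is indeed
$\frac{1}{4^n}\bigl\|\sum_x(-1)^{\langle a,x\rangle}\ket{\chi_x}\bigr\|^2$.
But there is no useful lower bound on that norm in general, because the $\ket{\chi_x}$ can be pairwise (near-)orthogonal. Concretely, take $\cD$ deterministic with $\cD(\rho,x)=\langle a,x\rangle$ (so $\epsilon=\tfrac12$), but whose dilation $U$ writes a copy of $x$ into a scratch register before emitting the answer. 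Then $\ket{\chi_x}=(-1)^{\langle a,x\rangle}\ket{\langle a,x\rangle}\ket{x}$, the sum $\sum_x(-1)^{\langle a,x\rangle}\ket{\chi_x}=\sum_x\ket{\langle a,x\rangle}\ket{x}$ has squared norm $2^n$, and your algorithm outputs $a$ with probability $2^{-n}$ rather than the required $4\epsilon^2=1$. Your remark about ``deferring every measurement inside $\cD$'' and ``working entirely with the purification'' does not cure this: the state is already purified, and the issue is that the purification entangles the index register with garbage that varies with $x$.

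The theorem cited as~\cite{AC} (Adcock--Cleve) explicitly uses \emph{one query to the predictor and one to its inverse}. The fix is: after CNOT'ing the output qubit into $\ket{-}_\sY$, apply $U^\dagger$. This replaces $\ket{\chi_x}$ by the unit vector $\ket{\phi_x}:=U_x^\dagger(\Pi_0-\Pi_1)U_x\ket{\psi_0}$, where $\Pi_b$ projects onto output $b$ and $\ket{\psi_0}$ is the initial ancilla state (the purification of $\rho$ tensored with $\ket{0}$). The garbage is now erased, so one may cleanly project onto the fixed vector $\ket{\psi_0}$:
$\langle\psi_0|\phi_x\rangle=\Pr[\cD(\rho,x)=0]-\Pr[\cD(\rho,x)=1]$,
hence $(-1)^{\langle a,x\rangle}\langle\psi_0|\phi_x\rangle=2p_x-1$ with $p_x:=\Pr[\cD(\rho,x)=\langle a,x\rangle]$. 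This gives
$\bigl\|\sum_x(-1)^{\langle a,x\rangle}\ket{\phi_x}\bigr\|\ge\bigl|\sum_x(2p_x-1)\bigr|=2^n(2\E_x[p_x]-1)\ge 2^{n+1}\epsilon$,
so the success probability is at least $4\epsilon^2$. (Note also your displayed lower-bound quantity $\E_x[(-1)^{\langle a,x\rangle}(2|\alpha_{x,\langle a,x\rangle}|^2-1)]$ carries a spurious $(-1)^{\langle a,x\rangle}$; as shown, those signs cancel and the correct expression is simply $\E_x[2p_x-1]$.) The paper itself gives no proof of this imported theorem, so your task is precisely to supply the Adcock--Cleve argument; without the $U^\dagger$ call your version does not.
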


\section{Pure OWSG imply One-Way Puzzles}
\label{sec:owsgowp}

In this section, we show how to use shadow tomography to build one-way puzzles (with inefficient verification) from any OWSG with pure state outputs.

We will rely on the following theorem on shadow tomography from $\cite{HKP20}$.
\begin{theorem}~\cite{HKP20} (Rephrased, following~\cite{Yuennotes})
\label{thm:hkp}
Fix any $\epsilon, \delta > 0$. There exists a polynomial $p(\cdot)$ and QPT algorithm $\shadow$ that, given $T = O(\log(1/\delta)/\epsilon^2)$ copies of an unknown state $\ket{\psi}$ generates a classical string (called the ``shadow'') $S$ of size $p(n)$ with the following property: 

For some $t\in\bbN$, let $\{M_i\}_{i\in[t]}$ be a set of observables such that $\tr(M_i^2) \leq 1$. Then there exists an ``estimator'' function $E$ such that:
\[
    \Prr_{S\leftarrow\shadow(\ket{\psi^{\otimes T}}}
    \Big[\forall i \in [t],  \big|E(S, M_i)- \bra{\psi}M_i\ket{\psi}\big| \leq \epsilon \Big] \geq 1 - t\delta 
\]
\end{theorem}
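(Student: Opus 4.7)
The plan is to construct the shadow via the standard classical-shadows procedure built from random Clifford unitaries. Given the $T$ copies of the unknown pure state $\ket{\psi}$, for each copy $j \in [T]$ I would independently sample a uniformly random Clifford $U_j$ on $n$ qubits, apply $U_j$ to that copy, measure in the computational basis to obtain $b_j \in \{0,1\}^n$, and store the pair $(U_j, b_j)$. The shadow $S$ is the tuple of all such pairs; since each Clifford admits a $\poly(n)$-size tableau description, $S$ is a classical string of length $p(n) = \poly(n, T)$. Generation of $S$ is clearly QPT. The key structural fact I would invoke is that the Clifford group forms a unitary $3$-design, so that inverting the induced measurement channel yields the explicit unbiased ``snapshot'' $\hat{\rho}_j := (2^n+1)\, U_j^\dagger \ket{b_j}\bra{b_j} U_j - \mathbb{I}$, satisfying $\mathbb{E}[\hat{\rho}_j] = \ket{\psi}\bra{\psi}$ over the randomness of $U_j$ and $b_j$.

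For the estimator $E$, I would use median-of-means on the per-copy estimates $\hat{o}_j(M) := \mathsf{Tr}(M\, \hat{\rho}_j)$. Specifically, partition the $T$ indices into $K = \Theta(\log(1/\delta))$ blocks of size $L = \Theta(1/\eps^2)$, take the empirical mean $\bar{o}_k(M)$ inside each block, and define $E(S, M)$ to be the median of $\{\bar{o}_k(M)\}_{k \in [K]}$. Each $\hat{o}_j(M)$ is unbiased for $\bra{\psi} M \ket{\psi}$, and the central variance bound for Clifford shadows yields $\mathrm{Var}[\hat{o}_j(M)] \le 3\, \mathsf{Tr}(M^2) \le 3$ under the hypothesis $\mathsf{Tr}(M^2) \le 1$. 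By Chebyshev, each block mean is within $\eps$ of the target with constant probability (say $\ge 2/3$); a standard Chernoff bound over the $K$ blocks then shows the median deviates from $\bra{\psi} M \ket{\psi}$ by more than $\eps$ with probability at most $\delta$. A union bound over the $t$ observables $\{M_i\}_{i \in [t]}$ (using the same shadow $S$ for all of them) gives the claimed $1 - t\delta$ simultaneous accuracy, with total sample complexity $T = KL = O(\log(1/\delta)/\eps^2)$.

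The main obstacle is the variance bound $\mathbb{E}_{U,b}[\hat{o}_j(M)^2] \le 3\, \mathsf{Tr}(M^2)$, which is the heart of the HKP analysis. This requires computing second moments of the inverse-channel expression under a random Clifford, which reduces via the $3$-design property to a Weingarten-style calculation over the Haar measure. Once this technical lemma (the ``shadow norm'' bound for the Clifford ensemble applied to trace-bounded observables) is established, everything else is standard: median-of-means concentration is textbook, unbiasedness follows immediately from channel inversion, and the union bound is trivial because a single shadow $S$ supports all queries. A minor point to verify is that the estimator $E(S, M_i)$ needs to know only the observable $M_i$ and $S$, and in particular requires no further quantum operations, so the statement of the theorem is met as written.
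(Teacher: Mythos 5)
The paper does not actually prove this theorem — it is stated as an imported result, cited to~\cite{HKP20} and~\cite{Yuennotes}, and used as a black box in the proof of Theorem~\ref{thm:five-two}. Your proposal is a faithful reconstruction of the standard Huang--Kueng--Preskill classical-shadows argument: random Clifford data acquisition, inversion of the measurement channel to get the unbiased snapshot $\hat{\rho}_j = (2^n+1)\,U_j^\dagger\ket{b_j}\!\bra{b_j}U_j - \mathbb{I}$, a median-of-means estimator over $K=\Theta(\log(1/\delta))$ blocks of size $L=\Theta(1/\epsilon^2)$, and a union bound over the $t$ observables. You correctly pin the entire technical weight on the shadow-norm variance bound $\mathrm{Var}[\hat{o}_j(M)] \le 3\,\mathsf{Tr}(M_0^2) \le 3\,\mathsf{Tr}(M^2)$ (where $M_0$ is the traceless part, and $\mathsf{Tr}(M_0^2)\le\mathsf{Tr}(M^2)$), which is the Weingarten / $3$-design computation at the heart of HKP; you flag it rather than derive it, which is reasonable given that the paper itself treats the whole theorem as a citation. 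Two small points worth making explicit: (i) a uniformly random Clifford on $n$ qubits can be sampled and applied in $\mathsf{poly}(n)$ time, so $\mathsf{ShadowGen}$ is genuinely QPT and $p(n)=\mathsf{poly}(n,T)$ as claimed; and (ii) the theorem statement does not require the estimator $E$ to be efficient — indeed the paper's downstream algorithm $\mathcal{L}$ is explicitly allowed to be inefficient — so your observation that $E$ needs no further quantum operations suffices. Modulo the cited variance lemma, the argument is sound and matches the approach of the references the paper points to.
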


We now proceed to state and prove our main theorem for this section.

\begin{theorem}
\label{thm:five-two}
    There exists a constant $c>0$ such that any $cn$-copy secure one-way state generator with pure state outputs (Definition \ref{def:owsg}) implies a one-way puzzle (Definition \ref{def:owp}).
\end{theorem}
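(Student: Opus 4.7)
The plan is to use efficient shadow tomography (Theorem \ref{thm:hkp}) to compress the quantum output of the OWSG into a fixed-size classical string that serves as the puzzle. Concretely, define $\Samp(1^n)$ to sample $k \leftarrow \KeyGen(1^n)$, generate $T = cn$ copies of $\ket{\psi_k} \leftarrow \StateGen(k)$, compute a classical shadow $S \leftarrow \shadow(\ket{\psi_k}^{\otimes T})$, and output $(k, S)$. The (inefficient) puzzle verifier $\Ver(k, S)$ reconstructs a classical description of $\ket{\psi_k}$ from $k$ and outputs $\top$ iff the shadow estimator satisfies $E(S, \ket{\psi_k}\bra{\psi_k}) \geq 1 - \epsilon$. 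Set $\epsilon$ to a small constant (say $1/100$) and $\delta = 2^{-2n}$, so that the bound $T = O(\log(1/\delta)/\epsilon^2) = O(n)$ from Theorem \ref{thm:hkp} fits within $cn$ copies for an appropriate constant $c$.

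Correctness is immediate: applying Theorem \ref{thm:hkp} to the single observable $M = \ket{\psi_k}\bra{\psi_k}$ (which satisfies $\tr(M^2) = 1$), with probability $1 - \delta = 1 - \negl(n)$ the estimate $E(S, M)$ is within $\epsilon$ of $\bra{\psi_k} M \ket{\psi_k} = 1$, so the verifier accepts. For security, suppose some QPT adversary $\cA$ finds $k'$ with $\Ver(k', S) = \top$ with non-negligible probability $\alpha(n)$. The reduction $\cB$ receives $\ket{\psi_k}^{\otimes cn}$, computes $S \leftarrow \shadow(\ket{\psi_k}^{\otimes cn})$, runs $k' \leftarrow \cA(S)$, and outputs $k'$ as its candidate OWSG inverse.

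The main obstacle is that the observable $M_{k'} = \ket{\psi_{k'}}\bra{\psi_{k'}}$ we wish to bound depends adaptively on $\cA$'s output, which itself depends on the shadow $S$; thus Theorem \ref{thm:hkp} cannot be applied directly to a single, adaptively chosen observable. The fix is to invoke Theorem \ref{thm:hkp} once on the a-priori fixed, key-independent family $\{M_{k''}\}_{k'' \in \bin^n}$ of $2^n$ rank-one projectors (each satisfying $\tr(M_{k''}^2) = 1$). Since $\delta = 2^{-2n}$, the theorem's probabilistic bound together with a union bound yields that with probability at least $1 - 2^n \cdot 2^{-2n} = 1 - 2^{-n}$, every single estimate $E(S, M_{k''})$ is simultaneously $\epsilon$-accurate. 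This exponential union bound is precisely what forces the sub-exponential choice of $\delta$, and in turn justifies the linear-in-$n$ copy complexity that reduces to $cn$-copy OWSG security.

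Conditioned on this uniform-accuracy event and on $\cA$'s success, $E(S, M_{k'}) \geq 1 - \epsilon$ implies $|\langle \psi_k | \psi_{k'} \rangle|^2 \geq 1 - 2\epsilon$, so the trace distance between the pure states $\ket{\psi_k}$ and $\ket{\psi_{k'}}$ is at most $\sqrt{2\epsilon}$. The operational interpretation of trace distance, combined with OWSG correctness, yields
\[
\Pr[\Ver_{\mathsf{OWSG}}(k', \ket{\psi_k}) = \top] \geq \Pr[\Ver_{\mathsf{OWSG}}(k', \ket{\psi_{k'}}) = \top] - \sqrt{2\epsilon} \geq 1 - \sqrt{2\epsilon} - \negl(n),
\]
which is a constant bounded away from zero for sufficiently small $\epsilon$. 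Thus $\cB$ breaks $cn$-copy OWSG security with probability $\Omega(\alpha(n))$, contradicting the hypothesis and completing the proof.
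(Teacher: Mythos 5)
Your overall approach — using classical shadows to compress the OWSG output into a classical puzzle, and invoking an exponential union bound over all candidate observables $\{M_{k''}\}$ to make the estimator simultaneously accurate against the adaptively-chosen key — is exactly the approach the paper takes, with the same parameter choices for $\delta$ and the same $T = O(n)$ copy count. However, your security argument has a genuine gap in the final step.

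You bound $\Pr[\Ver_{\mathsf{OWSG}}(k', \ket{\psi_k}) = \top] \geq \Pr[\Ver_{\mathsf{OWSG}}(k', \ket{\psi_{k'}}) = \top] - \sqrt{2\epsilon} \geq 1 - \sqrt{2\epsilon} - \negl(n)$, justifying the second inequality by ``OWSG correctness.'' But OWSG correctness (Definition \ref{def:owsg}) is only an average-case guarantee over $k \leftarrow \KeyGen(1^n)$; it says nothing at all about $\Pr[\Ver_{\mathsf{OWSG}}(k', \ket{\psi_{k'}}) = \top]$ for an adversarially produced $k'$. The OWSG verifier could legitimately reject $(k', \ket{\psi_{k'}})$ with probability $1$ for some exotic $k'$ (say, $k'$ outside the support of $\KeyGen$, or a low-probability $k'$ in the support) while still satisfying the definition. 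Since your puzzle verifier accepts any $k'$ whose shadow-estimated overlap is large, an adversary is free to output such a pathological $k'$; the puzzle game is won, your reduction outputs $k'$, but the OWSG $\Ver(k', \ket{\psi_k})$ rejects — no contradiction with $cn$-copy security follows.

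The paper closes this hole by baking the needed condition into the puzzle verifier: it defines the set $\bbC$ of keys with per-key OWSG correctness at least $1 - 1/100$, and has $\mathsf{Puzz}.\Ver(k,s)$ accept only if $k \in \cL(s)$ \emph{and} $k \in \bbC$. This extra check costs nothing in puzzle correctness (a Markov argument on the average-case OWSG correctness shows $k \leftarrow \KeyGen(1^n)$ lies in $\bbC$ with probability $1 - \negl(n)$), and in the security proof it forces the adversary's $k'$ to lie in $\bbC$, so that $\Pr[\Ver_{\mathsf{OWSG}}(k', \ket{\psi_{k'}}) = \top] \geq 1 - 1/100$ is now justified, and the trace-distance argument goes through. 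You should add this check to your $\Ver$ and supply the corresponding Markov step in the correctness argument; with that fix, your proof is correct and essentially the paper's.
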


\begin{proof} (of Theorem \ref{thm:five-two})
Let $(\KeyGen, \StateGen)$ be a one-way state generator (OWSG) with pure state outputs and let $\ket{\psi_k}$ represent the output of $\StateGen(k)$.

To build a puzzle from this OWSG, we will apply shadow tomography to the output states of the OWSG. 
In fact, the one-way puzzle will simply sample a OWSG key $k$, compute $\ket{\psi_k} \leftarrow \StateGen(k)$, and finally compute $s_k$ as a classical shadow of $\ket{\psi_k}$. It will output $s_k$ as the puzzle, with solution $k$. In what follows, we formalize this construction and define an (inefficient) verification algorithm for the one-way puzzle.\\

\noindent{\bf Defining preimage keys of a classical shadow.}
First, it will be useful to define an (inefficient) algorithm $\cL$ that obtains a classical shadow and outputs a list of keys, roughly corresponding to possible preimages of the shadow.

Set $\epsilon = 1/10$ and for $n \in \bbN$, set $\delta=\delta(n) = 2^{-2n}$. 
For all $k \in \Supp(\KeyGen(1^n))$, define $M_k := \ket{\psi_k}\bra{\psi_k}$. 
Note that these satisfy $\tr(M_k^2) = 1$.
Let $\shadow$ and $E$ be algorithms as defined by Theorem \ref{thm:hkp} applied to $\delta, \epsilon$ and $\{M_k\}_{k \in \Supp(\KeyGen(1^n))}$. Let $T = T(n) =O(n)$ be the required number of copies, and let $t = t(n) := \left|\{M_k\}_{k \in \Supp(\KeyGen(1^n))}\right| \leq \left| \Supp(\KeyGen(1^n)) \right| \leq 2^n$.

Define the (inefficient) deterministic algorithm $\cL$ that takes a shadow $s$ as input and outputs a list of keys such that the estimated overlap of the shadow with each key in the list is at least $1-\epsilon$. That is, 
\[
    \cL(s) = \Bigg\{ k: \Big( k \in \Supp(\KeyGen(1^n)) \Big) \bigwedge \Big( E(s, M_k) \geq 1-\epsilon \Big) \Bigg\}
\]

The following claim about the algorithm $\cL$ 
states that for any key $k$, with high probability over sampling a corresponding shadow $s_k$ of $\ket{\psi_k}$,
(1) the key $k$ appears in $\cL(s_k)$  and 
(2) for all $j \in \cL(s_k)$, the (pure) states $\ket{\psi_k}$ and $\ket{\psi_j}$ have high overlap.
The proof of this claim follows from the correctness of shadow tomography (Theorem \ref{thm:hkp}).

\begin{claim}
\label{clm:shadowProperties}
For large enough $n \in \bbN$, for all $k \in \Supp(\KeyGen(1^n))$: 
 \begin{enumerate}
     \item $\Prr_{s_k \leftarrow\shadow(\ket{\psi_k}^{\otimes T})}[k \in \cL(s_k)] \geq 1 - 2^{-n}$
    \item $\Prr_{s_k \leftarrow\shadow(\ket{\psi_k}^{\otimes T})}[\forall j \in \cL(s_k), |\langle \psi_k | \psi_j \rangle|^2  \geq 1-2\epsilon] \geq 1 - 2^{-n}$
 \end{enumerate}
\end{claim}
\begin{proof}
For any large enough $n \in \bbN$ and any $k \in \Supp(\mathsf{KeyGen}(1^n))$, applying Theorem \ref{thm:hkp} on $\delta, t, \epsilon$ set as above, we have:
    \begin{equation}
    \label{eq:abc}
        \Prr_{s_k \leftarrow\shadow(\ket{\psi_k}^{\otimes T})}[\forall j \in \Supp(\KeyGen(1^n)), \left|E(s_k, M_j)- |\langle \psi_k| \psi_j \rangle|^2 \right| \leq \epsilon] \geq 1 - t\delta\geq 1 - 2^{-n}
    \end{equation}
    Setting $j=k$, we have:
    \[
        \Prr_{s_k \leftarrow\shadow(\ket{\psi_k}^{\otimes T})}[E(s_k, M_k) \geq 1 - \epsilon] \geq 1 - 2^{-n}
    \]
    By definition of $\cL$, this implies 
    \[
    \Prr_{s_k \leftarrow\shadow(\ket{\psi_k}^{\otimes T})}[k \in \cL(s_k)] \geq 1 - 2^{-n}
    \]
    which is the first part of the claim. 
    
    Again, fix any $k \in \Supp(\mathsf{KeyGen}(1^n))$. If we restrict $j$ to $\cL(s_k)$, then by equation (\ref{eq:abc}), we have
    \[
        \Prr_{s_k \leftarrow\shadow(\ket{\psi_k}^{\otimes T})}[\forall j \in \cL(s_k), \left|E(s_k, M_j)- |\langle \psi_k| \psi_j \rangle|^2 \right| \leq \epsilon] \geq 1 - 2^{-n}
    \]
    But $j \in \cL(s_k) \iff E(s_k,M_j) \geq 1-\epsilon$. Substituting in the above equation gives:
    \[
        \Prr_{s_k \leftarrow\shadow(\ket{\psi_k}^{\otimes T})}[\forall j \in \cL(s_k), |\langle \psi_k| \psi_j \rangle|^2 \geq 1 - 2\epsilon] \geq 1 - 2^{-n}
    \]
    which is the second part of the claim.
\end{proof}
\noindent
Before describing our puzzle, 
we will define the set $\bbC$ of keys that have low correctness error, as follows:
\[
\bbC := \left\{k \in \bin^n \text{ such that } \Prr[\Ver(k, \ket{\phi_k}) = \top] \geq 1- 1/100\right\}
\]
Observe that with overwhelming probability,  the OWSG $\mathsf{KeyGen}$ algorithm outputs keys in the set $\bbC$ (otherwise, this would contradict correctness of the one-way puzzle).
Looking ahead, our puzzle verification algorithm will reject keys that are not in $\bbC$. We can now formally describe the puzzle.\\

\noindent {\bf Constructing the One-Way Puzzle.}
Define a one-way puzzle as follows.
\begin{itemize}
    \item $\mathsf{Puzz}.\Samp(1^n):$
    \begin{itemize}
        \item Sample $k \leftarrow \KeyGen(1^n)$.
        \item Compute $s \leftarrow \shadow(\ket{\psi_k}^{\otimes T})$
        \item Return $(k,s)$
    \end{itemize}
    \item $\mathsf{Puzz}.\Ver(k,s):$
    \begin{itemize}
        \item If $k\in \cL(s)$ and $k \in \bbC$, return $\top$
        \item Else return $\bot$
    \end{itemize}
\end{itemize} 
\begin{claim}
    \label{OWSGtoOWP}
    $(\mathsf{Puzz}.\Samp, \mathsf{Puzz}.\Ver)$ satisfies Definition \ref{def:owp}.
\end{claim}
{\bf Correctness.} By Claim \ref{clm:shadowProperties} part 1, for large enough $n \in \bbN$ and all $k \in \Supp(\KeyGen(1^n))$ 
\[
\Prr_{s\leftarrow\shadow(\ket{\psi_k}^{\otimes T})}[k \in \cL(s)] \geq 1 - 2^{-n}
\]
Since the OWSG must have negligible correctness error, a Markov argument applied to Definition \ref{def:owsg} shows that 
\[
\Prr_{k \leftarrow \KeyGen(1^n)}\left[k \notin \bbC\right] \leq \negl(n)
\]
Putting these together,
\[
\Prr_{\substack{k \leftarrow \KeyGen(1^n)\\s\leftarrow\shadow(\ket{\psi_k}^{\otimes T})}}\left[k \in \cL(s) \text{ and }k \in \bbC\right] \geq 1-2^{-n} - \negl(n)
\]
which by the definition of $\mathsf{Puzz}.\Ver$ implies
\[
    \Prr_{(k,s)\leftarrow\mathsf{Puzz}.\Samp(1^n)}[\mathsf{\top \leftarrow Puzz}.\Ver(k,s)] \geq 1 - 2^{-n} - \negl(n) = 1 - \negl(n)
\]
{\bf Security.} We prove one-wayness by contradiction. Suppose there exists a quantum polynomial-sized adversary $\cA$ that breaks the one-wayness of the puzzle, i.e. there exists a polynomial $q(\cdot)$ 
such that for infinitely many $n\in\bbN$,
\[
    \Prr_{(k,s)\leftarrow\mathsf{Puzz}.\Samp(1^n)}[\top \leftarrow \mathsf{Puzz}.\Ver(\cA(s),s)] \geq 1/q(n)
\]
We build a reduction that breaks the one-wayness of the OWSG. 
First, by the definition of $\mathsf{Puzz}.\Ver$ and $\mathsf{Puzz}.\Samp$, for infinitely many $n\in\bbN$,
\begin{equation}
\label{eq:t}
    \Prr_{\substack{k \leftarrow \KeyGen(1^n)\\s\leftarrow\shadow(\ket{\psi_k}^{\otimes T})}}[\cA(s) \in \left(\cL(s)\cap \bbC\right)] \geq 1/q(n)
\end{equation}
By Claim \ref{clm:shadowProperties}, for all $n \in \bbN$ and all $k \in \Supp(\KeyGen(1^n))$, 
\begin{equation}
\label{eq:s}
\Prr_{s \leftarrow \shadow(\ket{\psi_k}^{\otimes T})}[\forall k' \in \cL(s), |\langle \psi_k | \psi_{k'} \rangle|^2  \geq 4/5] \geq 1 - 2^{-n}
\end{equation}
For any events $A$ and $B$, $\Pr[A\wedge B]\geq \Pr[A] - \Pr[\neg B]$. Therefore, from equations (\ref{eq:t}) and (\ref{eq:s}), for infinitely many $n\in\bbN$,
\begin{equation}
    \Prr_{\substack{k \leftarrow \KeyGen(1^n)\\s\leftarrow\shadow(\ket{\psi_k}^{\otimes T})}}\Big[ \big(\cA(s) \in (\cL(s) \cap \bbC) \big) \wedge \big(\forall k' \in \cL(s), |\langle \psi_k | \psi_{k'} \rangle|^2  \geq 4/5 \big)\Big] \geq 1/q(n) - 2^{-n}
\end{equation}
which can be simplified to say that for infinitely many $n\in\bbN$,
\[
    \Prr_{\substack{k \leftarrow \KeyGen(1^n)\\s\leftarrow\shadow(\ket{\psi_k}^{\otimes T})\\ k' \leftarrow \cA(s)}}\Big[\big(k'\in \bbC\big) \wedge \big(|\langle \psi_k | \psi_{k'} \rangle|^2  \geq 4/5\big)\Big] \geq 1/q(n) - 2^{-n} 
\]
If $|\langle \psi_k | \psi_{k'} \rangle|^2  \geq 4/5$ then the success probabilities of $\Ver(k', \ket{\psi_{k'}})$ and $\Ver(k', \ket{\psi_{k}})$ can differ by at most $\frac{1}{\sqrt{5}}$. Since for all $k'$ in $\bbC$,  $\Ver(k', \ket{\psi_{k'}})$ succeeds with probability atleast $1-\frac{1}{100}$, for infinitely many $n\in\bbN$,
\[
    \Prr_{\substack{k \leftarrow \KeyGen(1^n)\\s\leftarrow\shadow(\ket{\psi_k}^{\otimes T})\\ k' \leftarrow \cA(s)}}[ \top \leftarrow\Ver(k', \ket{\psi_k})] \geq (1-1/100-1/\sqrt{5})\cdot(1/q(n) - 2^{-n}) >\frac{1}{2q(n)}
\]
Then, letting $\cB$ be the algorithm that on input $\ket{\psi_k}^{\otimes T}$ outputs $\cA(\shadow(\ket{\psi_k}^{\otimes T}))$, we have that for infinitely many $n\in\bbN$,
\[
    \Prr_{\substack{k \leftarrow \KeyGen(1^n),\\k' \leftarrow \cB(\ket{\psi_k}^{\otimes T})}}[\top \leftarrow\Ver(k', \ket{\psi_k})] > \frac{3}{5q(n)}
\]
Since $\shadow$ and $\cA$ are quantum polynomial-sized circuits, this contradicts $O(n)$-copy security of the OWSG.
\end{proof}

\section{One-way Puzzles imply Quantum Weak PEGs}
\label{sec:slicing}

Here, we show that (inefficiently verifiable) one-way puzzles imply quantum weak pseudoentropy generators, defined below.

\begin{definition}[Quantum Weak Pseudoentropy Generator]
\label{def:qwpeg}
    A Quantum Weak Pseudoentropy Generator consists of an ensemble of distributions $\{\wgen(n), \simwgen(n)\}_{n\in\bbN}$ over classical strings :
    \begin{itemize}
        \item \textbf{Efficiency.} There exists a QPT algorithm $\cG$ where for all $n \in \bbN$, $\cG(1^n)$ returns a sample from $\wgen(n)$.
      	\item \textbf{Bounded Length}. There exists a polynomial $p(\cdot)$ such that for all $n \in \bbN$, for all $z_0 \in \Supp(\wgen(n))$, for all $z_1 \in \Supp(\simwgen(n))$, $|z_0|=| z_1| \leq p(n)$.
        \item \textbf{(Shannon) Entropy Gap}. There exists an explicit constant $c>0$ such that for all sufficiently large $n \in \bbN$,
        \[
            \Hs(\simwgen(n)) - \Hs(\wgen(n)) \geq \frac{1}{n^c} 
        \]
        \item \textbf{Indistinguishability.} There exists a negligible function $\mu$ such that for all quantum polynomial-sized adversaries $\cA$, for all large enough $n \in \bbN$:
        \[
            \left|\Prr_{z\leftarrow \wgen(n)}[\cA(z) = 1] - \Prr_{z\leftarrow \simwgen(n)}[\cA(z) = 1]\right| \leq \mu(n)
        \]
    \end{itemize}
\end{definition}

\begin{theorem}
\label{thm:OWPtoQWPEG}
    One-way puzzles (Definition \ref{def:owp}) imply quantum weak pseudoentropy generators (Definition \ref{def:qwpeg}).
\end{theorem}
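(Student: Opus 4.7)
The plan is to instantiate the construction already sketched in the technical overview. Given a one-way puzzle $(\Samp, \Ver)$, define
\[
\wgen(n) := \bigl(s,\, h,\, i,\, h(k)_i\bigr)
\]
where $(k,s) \leftarrow \Samp(1^n)$, $h$ is a pairwise independent hash with seed sampled uniformly, $i \leftarrow [n]$, and $h(k)_i$ denotes $h(k)$ truncated to $i$ bits. For each puzzle $s$, let $K_s$ denote the induced conditional distribution on keys. The first step is to extract, by a pigeonhole argument over the probability masses in $K_s$, a ``good slice'' $\mathbb{G}_s \subseteq \Supp(K_s)$ such that $\Pr_{k \leftarrow K_s}[k \in \mathbb{G}_s] \geq 1/n$ and such that any two $k_1, k_2 \in \mathbb{G}_s$ satisfy $\Pr_{K_s}[k_1] \leq 2\Pr_{K_s}[k_2]$ (grouping keys into buckets by dyadic probability ranges and picking the densest bucket). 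I then define $\simwgen(n)$ to agree with $\wgen(n)$ except that when $k \in \mathbb{G}_s$ and $i = \lceil \log|\mathbb{G}_s|\rceil + 600\log n$, the last bit $h(k)_i$ is replaced with a fresh uniform bit $u_1$.

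For \emph{computational indistinguishability}, I would reduce to inverting the puzzle. The two distributions differ only on the event $E$ that $k \in \mathbb{G}_s$ and $i$ hits its special value. Conditioned on $E$, inside $\mathbb{G}_s$ the distribution of $k$ is (up to a factor of 2) flat with min-entropy close to $\log |\mathbb{G}_s|$. By the Leftover Hash Lemma applied to the first $\lceil\log |\mathbb{G}_s|\rceil$ bits of $h(k)$, the prefix $h(k)_{i-1}$ is statistically close to uniform given $s,h$; the remaining $O(\log n)$ bits (including the target bit) can be handled by the Goldreich–Levin theorem with quantum advice (Theorem~\ref{thm:gl}), guessing the extra bits with inverse polynomial probability. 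Any distinguisher between $\wgen$ and $\simwgen$ is therefore converted into an algorithm that, given $s$, recovers $k$ with non-negligible probability, contradicting one-wayness of the puzzle.

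For the \emph{Shannon entropy gap}, I would first note that conditioned on $E$ the simulated bit is uniform while the real bit may be biased, giving a gap of at least $\Omega(1/n)$ in that slice. The main obstacle, as the overview warns via the toy examples, is that this gap could be washed out by opposite bias on the complementary event. I handle this by case analysis on the total $K_s$-mass of the other preimages of $(s, h(k)_{i-1})$ besides $k$. Pairwise independence of $h$ combined with the fact that $|\mathbb{G}_s| \cdot n^{600}/2^i$ is exponentially small ensures that, with overwhelming probability over $h$ and $k \in \mathbb{G}_s$, every other preimage of $(s, h(k)_{i-1})$ has individual $K_s$-mass at most $1/n^{600}$. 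In Case 1, where the total competing mass is at most $1/n$, the honest $k$ dominates and the $i$-th bit in $\wgen$ is noticeably biased; in Case 2, where the total competing mass exceeds $1/n$, that mass is spread over very many keys, so $K_s$ restricted to those keys has high min-entropy, and LHL forces the bit in $\wgen$ to be statistically close to uniform — hence it cannot cancel the bias coming from $\mathbb{G}_s$. Averaging over $i$ (only one of $n$ indices triggers the change) gives an overall $1/\poly(n)$ Shannon-entropy gap, establishing the weak PEG properties; efficiency and bounded length of $\wgen$ are immediate from $\Samp$ being QPT and producing classical strings of polynomial length.
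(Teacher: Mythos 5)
Your high-level plan is close to the paper's: the good-slice $\mathbb{G}_s$ via pigeonhole (Claim~\ref{clm:flatSet}) and the two-case analysis on the total $K_s$-mass of competing preimages (this is Lemma~\ref{lem:entropyGapCore}) are the right moves. However, there is a genuine gap in the entropy-gap argument. Your $\simwgen$ replaces the target bit whenever $k\in\mathbb{G}_s$ and $i$ hits its special index, but this event depends on the \emph{hidden} key $k$, not on the publicly observable tuple $(s,h,i,h(k)_{i-1})$; Shannon entropy does not decompose across hidden events, which is exactly the point of the overview's Example~1. The paper therefore further restricts the modification in $\simwgen$ to a carefully chosen \emph{public} set $\mathbb{F}_s$ of hash/output pairs $(h,y)$ (Claim~\ref{clm:hashSlice}) satisfying two properties: (i) there is a unique $k^*$ in $\mathbb{A}_s$ (hence in $\mathbb{G}_s$) hashing to $y$, and every other preimage of $y$ has $K_s$-mass at most $2/n^{600}$; and (ii) the conditional $\mathbb{G}_s$-mass given $h(K_s)_{i_s}=y$ is at least $1/6n$. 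Outside $\mathbb{F}_s$ the two distributions are kept \emph{identical}, which is precisely the hypothesis of Claim~\ref{clm:publicSlicing} that lets the per-slice gaps be summed. In your version, on bad $(h,y)$ the modification still fires whenever $k\in\mathbb{G}_s$, and the per-slice gap can be \emph{negative}: injecting a fresh uniform bit in place of the deterministic $h(k^*)_i$ can move the mixture further from balance if the non-$\mathbb{G}_s$ preimage mass was already pulling the bit the other way. Moreover you only argue property (i) holds ``with overwhelming probability'' and say nothing about what happens on the remaining slices, and you do not address property (ii) at all, even though it only holds for an $O(1/n)$ fraction of $(h,y)$ and is exactly what makes $k^*$ dominant in the conditional distribution so that your Case~1/Case~2 dichotomy delivers an inverse-polynomial gap. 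Without baking the $\mathbb{F}_s$ restriction into $\simwgen$, the signs of the remaining per-slice gaps are uncontrolled and the argument does not close.

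Two smaller issues. First, your $\wgen$ treats $h(k)_i$ itself as the hardcore bit, but Goldreich--Levin (Theorem~\ref{thm:gl}) needs the predicted bit to be $\langle k,r\rangle$ for a uniform $r$ independent of the auxiliary input; the paper appends an extra pair $(r,\langle k,r\rangle)$ to $\wgen$ precisely to avoid having to argue this, and if you want to avoid that you must fix a specific matrix-based hash and verify the needed independence. Second, your LHL step is mis-stated: $\Hs_{\min}$ of $K_s$ restricted to $\mathbb{G}_s$ is about $j_s-\Theta(\log n)$, which is strictly \emph{less} than $\lceil\log|\mathbb{G}_s|\rceil$, so $h(k)_{i-1}$ is not close to uniform; only a prefix about $\Theta(\log n)$ bits shorter is, and the remaining $\Theta(\log n)$ bits must be guessed (as the paper does in SubClaim~\ref{clm:bv}). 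Finally, $i$ should range over $[3n]$ rather than $[n]$, since $i_s$ can exceed $n$.
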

The rest of this section is devoted to the proof of this theorem.
For the following discussion, fix some sufficiently large $n \in \bbN$. In what follows, we will sometimes drop explicit parameterization on $n$ when it is clear from the context.

\subsection{Defining \texorpdfstring{$\wgen$}{G0} and \texorpdfstring{$\simwgen$}{G1}}
In order to build QWPEG, we first need to define the two distributions, $\wgen$ and $\simwgen$. \\

\noindent{\bf The Distribution $\wgen$.}
The distribution $\wgen$ is defined as:
    \[
        \wgen(n):= s, h, i, r, h(k)_i, \langle k, r \rangle
    \]
where 
$(k, s) \leftarrow \mathsf{Samp}(1^n)$, 
$h \in \bin^\ell$ is a (uniform) seed for a pairwise independent hash function mapping $\bin^{n}$ bits to $\bin^{3n}$ bits,
$i \leftarrow [3n]$, and $r \leftarrow \bin^n$.
Moreover, $h(k)_i$ denotes the output of $h(k)$ truncated to the first $i$ bits.\\

\noindent{\bf The Distribution $\simwgen$.} The distribution $\simwgen$ is defined as:
\begin{align*}
    \simwgen(n):= 
    \left\{\begin{array}{ll}
     s, h, i, r, h(k)_i, u & \text{ if $i = i_{s}, k \in \mathbb{G}_{s}, \left(h, h(k)_i\right) \in \mathbb{F}_{s}$}\\ 
     s, h, i, r, h(k)_i, \langle k, r \rangle & \text{ otherwise}
    \end{array}
    \right.
\end{align*}
where
$(k, s) \leftarrow \mathsf{Samp}(1^n)$, 
$h \in \bin^\ell$ is a (uniform) seed for a pairwise independent hash function mapping $n$ bits to $3n$ bits,
$i \leftarrow [3n]$, $u$ is a uniform bit, and $r \leftarrow \bin^n$. 

It remains to define $i_s$ and the sets $\mathbb{G}_s, \mathbb{F}_s$, which we do next. 
First, we define some notation.
\begin{itemize}
\item Let $\mathbb{K} = \Supp(K)$, where $K$ is the marginal distribution on keys induced by $\Samp(1^n)$. 
\item Let $\mathbb{S} = \Supp(S)$, where $S$ is the marginal distribution on puzzles induced by $\Samp(1^n)$. 
\item For any $s \in \mathbb{S}$, denote by $K_s$ the distribution on keys output by $\mathsf{Samp}$ conditioned on the puzzle being $s$, and let $\bbK_s=\Supp(K_s)$.
\item For any $k \in \mathbb{K}$, denote by $S_k$ the distribution on puzzles output by $\mathsf{Samp}$ conditioned on the key being $k$, and let $\bbS_k=\Supp(S_k)$.
\end{itemize}

\noindent{\bf Defining $\mathbb{G}_s$ and $i_s$.}
We claim the existence of a dense enough set $\mathbb{G}_s$ of preimage keys of every puzzle $s$, where all keys in $\mathbb{G}_s$ have (roughly) the same sample entropy in distribution $K_s$. We call this sample entropy $j_s$, and we will later use it to define $i_s$.

\begin{claim}
    \label{clm:flatSet}
    For all $n \in \mathbb{N}$ and all $s \in \bbS$, there exists a set $\mathbb{G}_s \subseteq \bbK_s$ 
    and $j_s \in [0,2n-1]$
    such that
    \begin{enumerate}
        \item $\Prr_{k\leftarrow K_s}[k \in \mathbb{G}_s] \geq 1/3n$
        \item 
        For all $k \in \mathbb{G}_s$, we have $j_s \leq \Hs_{K_s}(k) \leq j_s + 1$.
    \end{enumerate}
\end{claim}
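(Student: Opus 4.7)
The plan is a clean pigeonhole argument over the sample-entropy levels of keys in $\bbK_s$. Since keys are strings in $\bin^n$, we have $|\bbK_s| \leq 2^n$, and I will partition $\bbK_s$ according to unit-width intervals of sample entropy $\Hs_{K_s}(k) = \log(1/\Pr[K_s = k])$.

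First, I would define the buckets $B_j := \{k \in \bbK_s : j \leq \Hs_{K_s}(k) < j+1\}$ for every integer $j \geq 0$. The upper tail of these buckets, corresponding to keys of very low probability, carries negligible total mass: every key $k$ with $\Hs_{K_s}(k) \geq 2n$ satisfies $\Pr[K_s = k] \leq 2^{-2n}$, and since $|\bbK_s| \leq 2^n$, summing over all such keys yields
\[
\sum_{j \geq 2n} \Pr_{k\leftarrow K_s}[k \in B_j] \;\leq\; 2^n \cdot 2^{-2n} \;=\; 2^{-n}.
\]
Consequently, the remaining $2n$ buckets $B_0, B_1, \ldots, B_{2n-1}$ together absorb at least $1 - 2^{-n}$ of the probability mass.

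Next I would apply pigeonhole: at least one index $j_s \in \{0, 1, \ldots, 2n-1\}$ must satisfy
\[
\Pr_{k \leftarrow K_s}[k \in B_{j_s}] \;\geq\; \frac{1 - 2^{-n}}{2n} \;\geq\; \frac{1}{3n},
\]
where the second inequality holds for all sufficiently large $n$ (for the handful of small $n$ where it fails, one can trivially pad or handle separately, since the downstream WPEG guarantees are asymptotic). Setting $\mathbb{G}_s := B_{j_s}$ immediately yields property 1. Property 2 is built into the definition of $B_{j_s}$: every $k \in \mathbb{G}_s$ has $j_s \leq \Hs_{K_s}(k) < j_s + 1 \leq j_s + 1$, as required.

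There is no real obstacle here — the argument is a two-line pigeonhole once the upper tail is truncated. The only subtle point is ensuring the index $j_s$ lies in $[0, 2n-1]$, which is precisely what the truncation of the high-entropy tail (using $|\bbK_s| \leq 2^n$) buys us. This claim is purely structural and makes no reference to one-wayness; the one-wayness of the puzzle will only be invoked later, when analyzing indistinguishability of $\wgen$ and $\simwgen$.
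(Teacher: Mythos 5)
Your proof is correct and follows essentially the same route as the paper's: bound the mass on keys with sample entropy $\geq 2n$ by $2^{-n}$ using $|\bbK_s| \leq 2^n$, bucket the rest into $2n$ unit-width entropy intervals, and apply pigeonhole to find a bucket of mass $\geq (1-2^{-n})/2n \geq 1/3n$. The paper implicitly assumes $n\geq 2$ for the final inequality just as you do, so your caveat about small $n$ is immaterial.
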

\begin{proof}
    We first show that most of the probability mass in $K_s$ is on keys with sample entropies less than $2n$. We fix some arbitrary $s \in \bbS$ for the following discussion.
\begin{align*}
    \Prr_{k \leftarrow K_s}[\Hs_{K_s}(k) \geq 2n] &= \sum_{k: \Hs_{K_s}(k) \geq 2n}\Pr_{K_s}[k]\\
    &= \sum_{k: \Pr_{K_s}(k) \leq 2^{-2n}}\Pr_{K_s}[k]\\
    &\leq \sum_{k: \Pr_{K_s}(k) \leq 2^{-2n}}2^{-2n}\\
    &\leq 2^n \cdot 2^{-2n} \leq 2^{-n}
\end{align*}
Now, we can divide these keys into sets with similiar sample entropies. For $j \in [0,2n-1]$, let $\mathbb{C}_j:= \{k \in K_s: j \leq \Hs_{K_s}(k) \leq j+1\}$. We now show that with overwhelming probability, $k$ sampled from $K_s$ will belong to some $\mathbb{C}_j$. 
\begin{align*}
    \Prr_{k\leftarrow K_s}[\exists j \in [0, 2n-1] \text{ s.t } k\in \mathbb{C}_j] &= \Prr_{k\leftarrow K_s}[0 \leq \Hs_{K_s}(k) \leq 2n]\\
    &\geq 1 - 2^{-n}
\end{align*}
Then the Pigeonhole Principle says there must exist some $j_s \in [0,2n-1]$ such that
\[
    \Prr_{k\leftarrow K_s}[k \in \mathbb{C}_{j_s}] \geq \frac{1-2^{-n}}{2n} \geq 1/3n
\]
Set $\mathbb{G}_s := \mathbb{C}_{j_s}$.  This completes the proof.
\end{proof}

For every $s \in \bbS$, fix $\mathbb{G}_s, j_s$ to be an arbitrary set and index satisfying the above claim. Set $i_s := j_s + 600 \log (n)$. Note that since $j_s \leq 2n$, $i_s \in [3n]$ for sufficiently large $n$.
Furthermore, let $$\mathbb{A}_s := \{k \in \bbK_s: \Hs_{K_s}(k) \leq j_s + 500\log(n)\}$$
We will now proceed to defining $\mathbb{F}_s$.\\

\noindent{\bf Defining $\mathbb{F}_s$.}
We now claim the existence of a dense enough set $\mathbb{F}_s$ of $h, y$ values such that (1) the preimage distribution of $h, y$ in $K_s$ (i.e. $K_s$ conditioned on $h(K_s)_{i_s} = y$) has reasonably high probability mass in $\mathbb{G}_s$ and (2) there is at most one key $k \in \mathbb{A}_s$ that hashes (under $h$) to $y$. Formally, we have the following claim.

\begin{claim}
\label{clm:hashSlice}
    For all $n\in\bbN$ and all $s\in \bbS$, there exists a set $\mathbb{F}_s$ such that
    \begin{enumerate}
        \item $\Prr_{\substack{h\sample\bin^\ell\\ k \leftarrow K_s}}[h, h(k)_{i_s} \in \mathbb{F}_s] \geq 1/6n - 1/n^{100}$
        \item For all $h,y \in \mathbb{F}_s$, the following holds:\\
        $\Pr_{k \leftarrow K_s}[k \in \mathbb{G}_s | h(k)_{i_s} = y] \geq 1/6n$ and there exists atmost one $k\in \mathbb{A}_s$ such that $h(k)_{i_s} = y$
    \end{enumerate}
\end{claim}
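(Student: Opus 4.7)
The plan is to take $\mathbb{F}_s$ to be exactly the set of pairs for which both stated properties of part 2 hold, and then argue that the total probability mass of $(h,h(k)_{i_s})$ landing outside this set is small. Concretely, define
\[
\mathbb{F}_s \;:=\; \Bigl\{(h,y) \;:\; \Pr_{k \leftarrow K_s}\bigl[k \in \mathbb{G}_s \mid h(k)_{i_s}=y\bigr] \geq \tfrac{1}{6n} \ \text{ and }\ \bigl|\{k \in \mathbb{A}_s : h(k)_{i_s}=y\}\bigr|\leq 1\Bigr\}.
\]
Part 2 is then automatic. For part 1, I would lower bound $\Pr_{h,k}[(h,h(k)_{i_s})\in\mathbb{F}_s]$ by $\Pr[k\in\mathbb{G}_s]$ (which is at least $1/3n$ by Claim \ref{clm:flatSet}) minus the mass on $\{k\in\mathbb{G}_s\}$ that violates either of the two defining conditions of $\mathbb{F}_s$.

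The first bad event is a \emph{collision} bad event: some distinct $k'\in\mathbb{A}_s$ satisfies $h(k)_{i_s}=h(k')_{i_s}$. Here I would use that every $k\in\mathbb{A}_s$ has $\Pr_{K_s}[k]\geq 2^{-j_s}/n^{500}$, hence $|\mathbb{A}_s|\leq n^{500}\cdot 2^{j_s}$, combined with pairwise independence of $h$, which gives $\Pr_h[h(k)_{i_s}=h(k')_{i_s}]=2^{-i_s}=2^{-j_s}/n^{600}$. A union bound over the other keys in $\mathbb{A}_s$ yields, for every fixed $k\in\mathbb{A}_s$,
\[
\Pr_h\bigl[\exists k'\in\mathbb{A}_s\setminus\{k\}:\ h(k)_{i_s}=h(k')_{i_s}\bigr] \;\leq\; \frac{n^{500}\cdot 2^{j_s}}{n^{600}\cdot 2^{j_s}} \;=\; \frac{1}{n^{100}}.
\]
Since $\mathbb{G}_s\subseteq\mathbb{A}_s$, averaging over $k\leftarrow K_s$ gives $\Pr_{h,k}[k\in\mathbb{G}_s\wedge\text{collision}]\leq 1/n^{100}$.

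The second bad event is a \emph{density} bad event: the conditional probability of landing in $\mathbb{G}_s$ given the prefix hash is less than $1/6n$. Writing $W(h,y):=\Pr_{k'\leftarrow K_s}[h(k')_{i_s}=y]$ and $W_{\mathbb{G}}(h,y):=\Pr_{k'\leftarrow K_s}[k'\in\mathbb{G}_s\wedge h(k')_{i_s}=y]$, the probability of this bad event jointly with $k\in\mathbb{G}_s$ is
\[
\mathbb{E}_h \sum_{y} W_{\mathbb{G}}(h,y)\cdot\mathbbm{1}\bigl[W_{\mathbb{G}}(h,y) < W(h,y)/(6n)\bigr] \;\leq\; \mathbb{E}_h \sum_y \frac{W(h,y)}{6n} \;=\; \frac{1}{6n},
\]
since $\sum_y W(h,y)=1$ for each $h$. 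Combining the two bounds with $\Pr[k\in\mathbb{G}_s]\geq 1/3n$ yields
\[
\Pr_{h,k}[(h,h(k)_{i_s})\in\mathbb{F}_s] \;\geq\; \tfrac{1}{3n}-\tfrac{1}{6n}-\tfrac{1}{n^{100}} \;=\; \tfrac{1}{6n}-\tfrac{1}{n^{100}},
\]
which is what is required.

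The main obstacle, and the reason the parameters $i_s=j_s+600\log n$ and $\mathbb{A}_s$-threshold $j_s+500\log n$ were chosen the way they were, is to make the collision bound in the first bad event polynomially small with the desired slack of $n^{100}$: the gap of $100\log n$ between the $\mathbb{A}_s$-threshold and $i_s$ is exactly what buys us the $1/n^{100}$ factor. Everything else is essentially a Markov-style argument on the conditional mass of $\mathbb{G}_s$ under the hash and a clean application of pairwise independence.
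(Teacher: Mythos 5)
Your proof is correct, and it is structurally very close to the paper's: you define $\mathbb{F}_s$ in exactly the same way (the set of $(h,y)$ that directly satisfy the two conditions of part~2, so that part~2 holds by fiat), and your collision bound is identical to the paper's first subclaim (bound $|\mathbb{A}_s| \leq n^{500}\cdot 2^{j_s}$ from the sample-entropy threshold, pairwise independence gives collision probability $2^{-i_s}$, and a union bound over $\mathbb{A}_s\setminus\{k\}$ yields $n^{-100}$). Where you deviate is in the density estimate. The paper proves its second subclaim by first computing a conditional probability over independently drawn $(k,h,k')$ using pairwise independence, obtaining $\Prr_{k,h,k'}[k'\in\mathbb{G}_s\mid h(k')_{i_s}=h(k)_{i_s}]\geq 1/3n$, and then extracting $\Prr_{h,k}\bigl[\Prr_{k'}[k'\in\mathbb{G}_s\mid h(k')_{i_s}=h(k)_{i_s}]\geq 1/6n\bigr]\geq 1/6n$ via a reverse-Markov step; it then combines this with the collision subclaim using $\Pr[A\wedge B]\geq\Pr[A]-\Pr[\neg B]$. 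You instead restrict to $k\in\mathbb{G}_s$ (whose mass is $\geq 1/3n$ by Claim~\ref{clm:flatSet}) and bound the density-bad mass directly: writing $W_{\mathbb{G}}(h,y)=\Prr_{k'}[k'\in\mathbb{G}_s\wedge h(k')_{i_s}=y]$ and $W(h,y)=\Prr_{k'}[h(k')_{i_s}=y]$, the summand $W_{\mathbb{G}}(h,y)\cdot\mathbbm{1}[W_{\mathbb{G}}<W/6n]\leq W(h,y)/6n$ sums to $1/6n$. The arithmetic $1/3n-1/6n-1/n^{100}=1/6n-1/n^{100}$ reproduces the paper's bound. Your version of the density step is arguably cleaner and more self-contained---it sidesteps the paper's computation of the conditional expectation over independent $(k,h,k')$, which quietly ignores the $k=k'$ contribution when applying pairwise independence. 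So: same overall strategy, same $\mathbb{F}_s$, same collision argument, but a different and somewhat more direct averaging argument for the density part, landing at the same constant.
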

\begin{proof}
    We prove the claim in two parts. The following claim shows that with high probability over the randomness of sampling $(h, k)$, there exists atmost one $k'\in \mathbb{A}_s$ such that $h(k')_{i_s} = h(k)_{i_s}$.
    \begin{subclaim}
        For all $s \in \bbS$,
        \[
        \Prr_{\substack{h\sample\bin^\ell\\ k \leftarrow K_s}}[\exists k' \in \mathbb{A}_s \text{ s.t. } k' \neq k \text{ and } h(k')_{i_s} = h(k)_{i_s}]\leq 1/n^{100}
        \]
    \end{subclaim}
    \begin{proof}
        We first show a bound on the size of $\mathbb{A}_s$. By definition of $\mathbb{A}_s$, for all $k \in \mathbb{A}_s$, we have that
        $\Hs_{K_s}(k) \leq j_s + 500\log n$.
        This implies
        \begin{align*}
        &\Prr_{K_s}[k] \geq \frac{1}{2^{j_s + 500\log n}}\\
            \implies &\Prr_{k' \leftarrow K_s}[k' \in \mathbb{A}_s] \geq \frac{|\mathbb{A}_s|}{2^{j_s + 500\log n}}\\
            \implies &|\mathbb{A}_s| \leq 2^{j_s + 500\log n}
        \end{align*}
    By the properties of universal hash functions, for all $s \in \bbS$, for all $k, k'$ such that $k \neq k'$
    \[
        \Prr_{h\sample\bin^\ell}[h(k')_{i_s} = h(k)_{i_s}] = \frac{1}{2^{i_s}}
    \]
    Applying a union bound to the set $\mathbb{A}_s \setminus \{k\}$
    \[
        \Prr_{h\sample\bin^\ell}[\exists k' \in \mathbb{A}_s \setminus \{k\} \text{ s.t. } h(k')_{i_s} = h(k)_{i_s}] \leq \frac{|\mathbb{A}_s|}{2^{i_s}}
    \]
    Recall that $i_s$ is defined as $j_s + 600\log n$ and $|\mathbb{A}_s| \leq 2^{j_s + 500\log n}$. Substituting these above gives
    \[
        \Prr_{h\sample\bin^\ell}[\exists k' \in \mathbb{A}_s \setminus \{k\} \text{ s.t. } h(k')_{i_s} = h(k)_{i_s}] \leq \frac{1}{n^{100}}
    \]
    which proves this subclaim.
    \end{proof}
    In the next subclaim we show that for some noticeable fraction of $(h, h(k)_{i_s})$, the induced preimage distribution on keys has noticeable probability mass in $\mathbb{G}_s$.
    \begin{subclaim}
        For all $s \in \bbS$,
        \[
            \Prr_{\substack{k \leftarrow K_s\\ h \sample \bin^\ell}}\left[\Prr_{k' \leftarrow K_s}[k' \in \mathbb{G}_s| h(k')_{i_s} = h(k)_{i_s}] \geq 1/6n\right] \geq 1/6n
        \]
    \end{subclaim}
    \begin{proof}
        Consider the following inequality: For all $s \in \bbS$,
        \[
            \Prr_{\substack{k \leftarrow K_s\\ h \sample \bin^\ell \\k' \leftarrow K_s }}[k' \in \mathbb{G}_s| h(k')_{i_s} = h(k)_{i_s}] \geq 1/3n
        \]
        It suffices to prove the above inequality, since the statement of our subclaim follows from it by a Markov argument. Note that
        \[
            \Prr
            [
                k' \in \mathbb{G}_s| h(k')_{i_s} = h(k)_{i_s}
            ] 
            =
            \frac{
                \Prr
                [
                    \left(k' \in \mathbb{G}_s\right) \wedge \left(h(k')_{i_s} = h(k)_{i_s}\right)
                ]
            }{
                \Prr
                [
                    h(k')_{i_s} = h(k)_{i_s}
                ] 
            }
        \]
        where the probabilities are over $k \leftarrow K_s, h \sample \bin^\ell, k' \leftarrow K_s$.
        
         By applying the properties of universal hashing to the RHS,
        \begin{align*}
            \Prr
            [
                k' \in \mathbb{G}_s| h(k')_{i_s} = h(k)_{i_s}
            ] 
            &=
            \Prr_{
                k' \leftarrow K_s 
            }
            [
                k' \in \mathbb{G}_s
            ]
            \cdot
            \frac{
                1/2^{i_s}
            }{
                1/2^{i_s}
            }\\
            &=
            \Prr_{
                k' \leftarrow K_s 
            }
            [
                k' \in \mathbb{G}_s
            ]
        \end{align*}
        By Claim \ref{clm:flatSet} for all $s \in \bbS$,
        \[
        \Prr_{k \leftarrow K_s}[k \in \mathbb{G}_s]\geq 1/3n
        \]
        which gives
        \[
            \Prr[k' \in \mathbb{G}_s| h(k')_{i_s} = h(k)_{i_s}] \geq 1/3n
        \]
    \end{proof}
We now combine these subclaims to define $\mathbb{F}_s$. For any events $A$ and $B$, $\Pr[A \wedge B] \geq \Pr[A] - \Pr[\neg B]$. Applying this to the above two subclaims, for all $s \in \bbS$,
\begin{align*} 
    \Prr_{\substack{k \leftarrow K_s\\ h \sample \bin^\ell}}\left[\left(\Prr_{k' \leftarrow K_s}[k' \in \mathbb{G}_s| h(k')_{i_s} = h(k)_{i_s}] \geq 1/6n\right) \wedge\left(\forall k' \in \mathbb{A}_s \setminus \{k\}, h(k')_{i_s} \neq h(k)_{i_s}\right)  \right] \\ \geq 1/6n -  1/n^{100}
\end{align*}
which implies
\[   
    \Prr_{\substack{k \leftarrow K_s\\ h \sample \bin^\ell}}\left[
        \begin{array}{c}
             \Prr_{k' \leftarrow K_s}[k' \in \mathbb{G}_s| h(k')_{i_s} = h(k)_{i_s}] \geq 1/6n\\
             \wedge \left(\exists \text{ at most one } k' \in \mathbb{A}_s \text{ s.t. } h(k')_{i_s} = h(k)_{i_s}\right)
        \end{array}
    \right] \geq 1/6n -  1/n^{100}
\]
Define $\mathbb{F}_s$ as follows:
\[
    \mathbb{F}_s := \left\{h,y \left|
        \begin{array}{c}
             h\in\bin^\ell \wedge y \in \bin^{i_s} \wedge\\
             \Prr_{k \leftarrow K_s}[k \in \mathbb{G}_s| h(k)_{i_s} = y] \geq 1/6n\\
             \wedge \left(\exists \text{ at most one } k \in \mathbb{A}_s \text{ s.t. } h(k)_{i_s} = y\right)
        \end{array}\right.
    \right\}
\]
which implies 
\[
    \Prr_{\substack{k \leftarrow K_s\\ h \sample \bin^\ell}}[h,h(k)_{i_s}\in \mathbb{F}_s]\geq 1/6n -  1/n^{100}
\]
which concludes the proof of our claim.
\end{proof}
This completes a description of $\wgen$ and $\simwgen$.
It is easy to see that $\wgen(n)$ is efficiently sampleable and that $\wgen$ and $\simwgen$ are polynomially bounded in length, i.e. there exists some polynomial $p(\cdot) $ such that for all $z_0 \in \Supp(\wgen(n))$ and for all $z_1 \in \Supp(\simwgen(n))$, $|z_0|=| z_1| \leq p(n)$.
 Next, we prove that $\simwgen(n)$ has inverse-polynomially higher Shannon entropy  than $\wgen(n)$.

\subsection{Establishing an Entropy Gap}

\begin{lemma}
\label{lem:egmain}
For all sufficiently large $n \in \bbN$,
    \[
        \Hs(\simwgen(n)) - \Hs(\wgen(n)) \geq 1/n^5
    \]
\end{lemma}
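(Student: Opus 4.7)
The plan is to decompose the entropy gap via the chain rule, isolate a \emph{good} prefix event on which the modification triggers, and quantify the per-good-prefix entropy gap via a clean variance identity followed by a Leftover-Hash-Lemma-based error control. Because $\wgen(n)$ and $\simwgen(n)$ share the marginal on their first five components $X := (s, h, i, r, h(k)_i)$,
\[
    \Hs(\simwgen(n)) - \Hs(\wgen(n)) = \E_{X}\big[\,\Hs(L_1 \mid X) - \Hs(L_0 \mid X)\,\big],
\]
where $L_0 = \langle k, r\rangle$ and $L_1$ equals $\langle k, r\rangle$ unless the modification triggers (in which case $L_1$ is an independent uniform bit). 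I call $x = (s,h,i,r,y)$ \emph{good} when $i = i_s$ and $(h,y) \in \mathbb{F}_s$; for every other $x$ the conditional distributions of $L_0$ and $L_1$ given $X = x$ coincide, so the integrand vanishes. For good $x$, Claim~\ref{clm:hashSlice} gives a unique $k^\star \in \mathbb{G}_s$ with $h(k^\star)_{i_s} = y$, and $\alpha := \Pr_{K_{s,y}}[k = k^\star] \geq 1/(6n)$. Moreover every other preimage $k \neq k^\star$ lies outside $\mathbb{A}_s$, so $\Pr_{K_{s,y}}[k] \leq 2/n^{500}$ (combining $\Hs_{K_s}(k) > j_s + 500\log n$ with $\Pr_{K_s}[k^\star] \geq 2^{-(j_s+1)}$).

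For good $x$, $L_b \mid X=x$ is Bernoulli with bias $P_b(r)$, where
\[
    P_0(r) = \alpha\, b(r) + (1-\alpha)\gamma(r), \qquad P_1(r) = \tfrac{\alpha}{2} + (1-\alpha)\gamma(r),
\]
$b(r) := \mathbf{1}\{\langle k^\star, r\rangle = 0\}$, and $\gamma(r) := \Pr[\langle k,r\rangle = 0 \mid X = x,\, k \neq k^\star]$. I will first establish the variance identity
\[
    \E_r[(P_0 - \tfrac12)^2] - \E_r[(P_1 - \tfrac12)^2] = \tfrac{\alpha^2}{4},
\]
which follows by expanding both inner products in the Fourier basis on $\bin^n$ and using $k - k^\star \neq 0$ for every $k$ in the support of $\gamma$ to obtain $\E_r[(b - \tfrac12)(\gamma - \tfrac12)] = 0$. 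Combining with the estimates $\tfrac{2\delta^2}{\ln 2} \leq 1 - H_2(\tfrac12 + \delta) \leq 4\delta^2$ (valid on $|\delta| \leq \tfrac12$) then converts the variance gap into
\[
    \E_r\big[H_2(P_1) - H_2(P_0)\big] \;\geq\; \tfrac{\alpha^2}{2\ln 2} \;-\; 4(1-\alpha)^2\, \E_r[(\gamma - \tfrac12)^2].
\]

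The main obstacle is controlling the error term $(1-\alpha)^2\,\E_r[(\gamma - \tfrac12)^2]$. When $1-\alpha \geq n^{-100}$, the restricted distribution of $K_{s,y}$ on $\{k \neq k^\star\}$ has individual weights at most $2/(n^{500}(1-\alpha)) \leq 2/n^{400}$ and hence min-entropy $\geq 400\log n - 1$; applying Lemma~\ref{thm:LHLforSmoothEntropy} to the universal family $\{r \mapsto \langle k, r\rangle\}$ yields $\E_r|\gamma - \tfrac12| \leq n^{-200}$, so the error term is at most $2 n^{-200}$, dominated by $\alpha^2/(2\ln 2) \geq 1/(72\ln 2 \cdot n^2)$. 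When instead $1-\alpha < n^{-100}$, the Leftover Hash step is unnecessary because $(1-\alpha)^2 \leq n^{-200}$ already swamps the trivial bound $\E_r[(\gamma - \tfrac12)^2] \leq 1/4$. Either way, each good $x$ contributes an expected gap of $\Omega(\alpha^2) \geq \Omega(1/n^2)$, and combining with $\Pr[\text{good}] \geq (1/3n)(1/6n - n^{-100})$ (from Claim~\ref{clm:hashSlice} and the uniform choice of $i \in [3n]$) produces a total entropy gap of $\Omega(1/n^4) \geq 1/n^5$ for sufficiently large $n$. One degenerate subcase requiring separate bookkeeping is $k^\star = 0$, for which $b \equiv 1$ breaks the Fourier symmetry; but in that case $L_0$ is deterministic on the event $k = k^\star$ while $L_1$ remains uniform, so the gap is only larger.
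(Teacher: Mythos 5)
Your proposal is correct, and it reaches the same decomposition as the paper's Claim~\ref{clm:publicSlicing} (isolating the contribution of the ``good'' event $i = i_s$, $(h, h(k)_{i_s}) \in \mathbb{F}_s$) and the same per-good-prefix structure as Claim~\ref{clm:constrainedEntropyGap}. Where you genuinely diverge is in the inner analytic engine. The paper's Lemma~\ref{lem:entropyGapCore} (Appendix~\ref{app:fullproof}) argues by bounding the biases $B^0_{r}$ and $B^1_{r}$ pointwise in $r$, which forces a case split on whether $\Pr[x = x^*]$ is close to $1$ (when the residual distribution on $x \neq x^*$ is too light for the LHL-over-$r$ step to apply directly) versus not. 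Your variance identity $\E_r[(P_0 - \tfrac12)^2] - \E_r[(P_1 - \tfrac12)^2] = \alpha^2/4$, obtained from the Fourier orthogonality $\E_r[(-1)^{\langle k^\star \oplus k, r\rangle}] = 0$ for $k \neq k^\star$, replaces those pointwise bounds by an exact second-moment identity, after which the quadratic bounds on $1 - H_2(\tfrac12 + \delta)$ convert the moment gap into an entropy gap. This buys a cleaner case split (on whether $1 - \alpha \geq n^{-100}$, which governs when the LHL is even needed rather than when it applies) and makes the role of the LHL purely as an error-control term. Two small remarks: (i) you need $1 - H_2(\tfrac12 + \delta) \leq 4\delta^2$ on the full range $|\delta| \leq \tfrac12$, which is a hair more than Claim~\ref{clm:biasedCoin} as stated proves (it restricts to bias $d \leq 1/2$, i.e., $|\delta| \leq 1/4$), though the extension is routine since both sides equal $1$ at $\delta = \tfrac12$; (ii) the ``degenerate subcase $k^\star = 0$'' in fact needs no separate bookkeeping — $\E_r[(b - \tfrac12)^2] = \tfrac14$ still holds since $(b - \tfrac12)^2 \equiv \tfrac14$, and the cross term still vanishes because every $k$ in the support of $\gamma$ is nonzero — so your flagged worry is overcautious rather than a gap.
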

The rest of this subsection is dedicated to a proof of the above lemma.  We will make use of some intermediate lemmas and claims, that we state below.

First, we state a 
standalone lemma described below, the proof of which is in Appendix \ref{app:fullproof}.
        Intuitively, the lemma defines a special distribution $X$ over $\{0,1\}^n$ that samples a single element $x^*$ with much higher probability than all other elements $x \neq x^*$. It then finds a lower bound on the difference in Shannon entropy between distributions $A_0$ and $A_1$, where $A_0 = R, \langle X, R \rangle$ and $A_1$ is identical to $A_0$ except replacing $\langle X, R \rangle$ with a uniform random bit when $X = x^*$ (and where $R$ is the uniform distribution on $\{0,1\}^n$). The proof makes use of the leftover hash lemma to show that the two distributions are close to uniform (and hence have high entropy) conditioned on $X \neq x^*$. Thus, a difference in entropies arises from the case of $X = x^*$.

                \begin{lemma}
        \label{lem:entropyGapCore}
            For sufficiently large $n \in \bbN$, let $X$ be a distribution over $\bin^n$ such that there exists $x^*\in \bin^n$ s.t.
            \begin{enumerate}
                \item $\Pr_{x\leftarrow X}[x=x^*] \geq 1/6n$
                \item $\forall x' \neq x^*$, $\Pr_{x \leftarrow X}[x=x']\leq 2/n^{600}$ 
            \end{enumerate}
            Let $R$ be uniformly distributed on $\bin^n$, and $U_1$ denote the uniform distribution over $\bin$. Define $\alpha_0$ and $\alpha_1$ as follows.
            \begin{gather*}
                \alpha_0(x,r) := \langle x, r \rangle\\
                \alpha_1(x,r) := \left\{
                \begin{array}{cl}
                    U_1 &\text{ if $x = x^*$}\\
                    \langle x, r \rangle &\text{ otherwise}
                \end{array}
                \right.
            \end{gather*}
            For $b\in \bin$, define distribution $A_b := R, \alpha_b(X,R)$. Then
            \[
            \Hs(A_1) - \Hs(A_0) \geq 1/100n^2
            \]
        \end{lemma}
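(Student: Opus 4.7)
The plan is to apply the chain rule $\Hs(A_b) = \Hs(R) + \mathbb{E}_r[h(p_b(r))] = n + \mathbb{E}_r[h(p_b(r))]$, where $h$ is the binary entropy and $p_b(r) := \Pr[\alpha_b(X,R)=1 \mid R=r]$, and then bound $\Hs(A_0)$ and $\Hs(A_1)$ separately. Setting $b_r := \langle x^*, r\rangle$, $q_r := \Pr[\langle X, r\rangle = 1 \mid X \neq x^*]$, and $p^* := \Pr[X = x^*] \geq 1/(6n)$, a direct computation yields
\[
    p_0(r) = p^* b_r + (1-p^*) q_r, \qquad p_1(r) = \tfrac{p^*}{2} + (1-p^*) q_r.
\]

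The key tool is the Leftover Hash Lemma applied to the conditional distribution $X \mid X \neq x^*$, which has min-entropy at least $600 \log n - O(1)$ because each atom has probability at most $2/n^{600}$ while the conditioning event has probability at least $5/6$. Since $\langle \cdot, R\rangle$ with uniform $R \in \bin^n$ is a pairwise-independent hash with one-bit output, Lemma \ref{thm:LHLforSmoothEntropy} (instantiated with $\epsilon = 0$ and $\epsilon' = O(n^{-300})$) gives $\mathbb{E}_r[|q_r - 1/2|] = \mathsf{SD}\bigl((R, \langle X, R\rangle \mid X \neq x^*), (R, U_1)\bigr) \leq O(n^{-300})$. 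Two consequences follow. First, $A_1$ is a convex combination of $(R, U_1)$ (with weight $p^*$) and $(R, \langle X, R\rangle \mid X \neq x^*)$ (with weight $1-p^*$), so $\mathsf{SD}(A_1, U_{n+1}) \leq O(n^{-300})$, and continuity of Shannon entropy yields $\Hs(A_1) \geq n + 1 - O(n^{-299} \log n)$. Second, by Markov, the ``bad'' set $R_B := \{r : |q_r - 1/2| > n^{-100}\}$ has mass at most $O(n^{-200})$.

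For $r \notin R_B$, I would observe that
\[
    \bigl|p_0(r) - \tfrac{1}{2}\bigr| \;\geq\; \tfrac{p^*}{2} - (1-p^*) n^{-100} \;\geq\; \tfrac{1}{13n}
\]
for $n$ sufficiently large, since $p^*(b_r - 1/2) = \pm p^*/2$ and $(1-p^*)|q_r - 1/2| \leq n^{-100}$. A Taylor expansion of $h$ about $1/2$ using the uniform bound $h''(x) \leq -4/\ln 2$ on $(0,1)$ yields $h(1/2 + \eta) \leq 1 - 2\eta^2/\ln 2$, so $h(p_0(r)) \leq 1 - 2/(169\, n^2 \ln 2)$. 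Combining with the trivial bound $h(p_0(r)) \leq 1$ for $r \in R_B$ and taking expectations, $\Hs(A_0) \leq n + 1 - (1 - O(n^{-200})) \cdot 2/(169\, n^2 \ln 2) \leq n + 1 - 1/(99\, n^2)$ for large $n$. Subtracting the two entropy bounds gives $\Hs(A_1) - \Hs(A_0) \geq 1/(100\, n^2)$.

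The main anticipated obstacle is the coupling between $b_r$ and $q_r$, which rules out a direct pairing argument $(r, r \oplus e)$ with $\langle x^*, e\rangle = 1$ that would average out $b_r$ and exploit strict concavity of $h$ via $h((a+b)/2) - \tfrac{1}{2}(h(a)+h(b)) \geq (a-b)^2/(2\ln 2)$. The two-sided argument above sidesteps this issue entirely: both of the required bounds ($A_1$ close to uniform, $A_0$ quantitatively far from uniform) follow from the single fact that $q_r$ is close to $1/2$ on average. The case $x^* = 0$ is handled identically, since it only fixes $b_r \equiv 0$ and does not affect the inequality $|p_0(r) - 1/2| \geq p^*/2 - (1-p^*) n^{-100}$.
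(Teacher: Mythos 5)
Your argument has a genuine gap at the step where you invoke the Leftover Hash Lemma. You assert that the conditioning event $\{X\neq x^*\}$ has probability at least $5/6$, but the hypotheses give only a \emph{lower} bound $\Pr[X=x^*]\geq 1/(6n)$; nothing rules out $\Pr[X=x^*]$ being close to $1$, say $1-1/n^{700}$. In that regime a single atom $x'\neq x^*$ can carry essentially all the conditional mass, so $\Hs_{\min}(X\mid X\neq x^*)$ can be $O(1)$ rather than roughly $600\log n$, and both consequences you draw from the LHL (the $O(n^{-300})$ bound on $\mathsf{SD}(A_1,U_{n+1})$ and the Markov bound on $R_B$) fail. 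The trivial mixture bound $\mathsf{SD}(A_1,U_{n+1})\leq 1-p^*$ does not rescue the first: already at $p^*=1-1/n$ it yields only $\Hs(A_1)\geq n+1-(1-p^*)(n+1)-h(1-p^*)\approx n-O((\log n)/n)$ via entropy continuity, which sits below the trivial lower bound $\Hs(A_0)\geq n$ and is far weaker than the $n+1-O(n^{-2})$ you need.

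The paper closes this with a case split that your write-up is missing. When $\Pr[X=x^*]>1-1/n$, it bounds the conditional bias crudely by $|q_r-\tfrac12|\leq\tfrac12$ for every $r$, so $|p_0(r)-\tfrac12|\geq p^*/2-(1-p^*)/2\geq \tfrac12-1/n$ and $|p_1(r)-\tfrac12|=(1-p^*)|q_r-\tfrac12|\leq 1/(2n)$, giving a constant entropy gap via the Taylor/bias bound with no extraction argument at all (note that here one must lower-bound $\Hs(A_1)$ by going through $p_1(r)$ pointwise, not through $\mathsf{SD}(A_1,U_{n+1})$, for the reason above). When $\Pr[X=x^*]\leq 1-1/n$, the conditioning event has mass at least $1/n$, so $\Hs_{\min}(X\mid X\neq x^*)\geq 598\log n$, and your LHL argument goes through as written. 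Apart from the missing case, your route coincides with the paper's: the chain-rule reduction to $\mathbb{E}_r[h(p_b(r))]$ and the quadratic bound $h(\tfrac12+\eta)\leq 1-2\eta^2/\ln 2$ are the paper's bias-to-entropy claim rewritten, and your inequality $|p_0(r)-\tfrac12|\geq p^*/2-(1-p^*)|q_r-\tfrac12|$ is exactly the paper's relation among $B^0_r$, $B^*_r$, and $\Pr[X=x^*]$.
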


To prove lemma \ref{lem:egmain}, we will also do the following. For each fixing of $s \in \bbS$, we will establish an entropy gap between the distributions $\wgen$ and $\simwgen$ when conditioned on sampling $(k, h, i)$ such that $i = i_s$ and $h, h(k)_{i_s} \in \mathbb{F}_s$.
    To do this, we will use the following claim, which helps establish a difference in entropy in the last bit of $\wgen$ and $\simwgen$ for every fixing of $s \in \bbS$ and $(h, y) \in \mathbb{F}_s$. 
    \begin{claim}
    \label{clm:constrainedEntropyGap}
        Fix any $s \in \mathbb{S}$ and any $(h,y) \in \mathbb{F}_{s}$. Let $R$ be uniformly distributed on $\bin^n$.

        For any $r\in\bin^n$, define distributions $P_r,Q_r$ as follows:
        First sample $k \leftarrow K_s$ conditioned on $h(k)_{i_s} = y$. Set
        \begin{equation*}
        P_r = 
    \left\{\begin{array}{cl}
        U_1 & \text{ if $k \in \mathbb{G}_{s}$}\\
        \langle k, r \rangle &\text{ otherwise}
    \end{array}
    \right.
\end{equation*}
        \begin{center}
            and
        \end{center}
        \begin{equation*}
        Q_r = 
        \begin{array}{cl}
        \langle k, r \rangle
    \end{array}
    \end{equation*}
    Then,
        \[
            \Hs(R, P_R) - \Hs(R, Q_R) \geq 1/100n^2
        \]
    \end{claim}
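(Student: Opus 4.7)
The plan is to reduce this claim to Lemma~\ref{lem:entropyGapCore} by setting up the distribution $X$ in that lemma to be the distribution on $k$ described in the claim (namely, $K_s$ conditioned on $h(k)_{i_s} = y$), and identifying a distinguished element $x^* = k^*$ that carries enough probability mass while every other element carries only polynomially small mass.

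\textbf{Identifying $k^*$.} First I would observe that $\mathbb{G}_s \subseteq \mathbb{A}_s$: every $k \in \mathbb{G}_s$ satisfies $\Hs_{K_s}(k) \leq j_s + 1 \leq j_s + 500\log n$. Since $(h,y) \in \mathbb{F}_s$, condition~1 of Claim~\ref{clm:hashSlice} gives $\Pr_{k \leftarrow K_s}[k \in \mathbb{G}_s \mid h(k)_{i_s} = y] \geq 1/6n > 0$, so at least one $k \in \mathbb{G}_s$ hashes to $y$ under $h$ (truncated to $i_s$ bits). Condition~2 of Claim~\ref{clm:hashSlice}, together with $\mathbb{G}_s \subseteq \mathbb{A}_s$, implies at most one such $k$ exists. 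Thus there is a unique $k^* \in \mathbb{G}_s$ with $h(k^*)_{i_s} = y$, and $\Pr_X[k^*] \geq 1/6n$, verifying the first hypothesis of Lemma~\ref{lem:entropyGapCore}.

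\textbf{Upper-bounding the other probabilities.} For the second hypothesis, I need to show every other element in $\Supp(X)$ has probability at most $2/n^{600}$ (or the corresponding polynomial specified by the lemma). Any $k' \neq k^*$ with $h(k')_{i_s} = y$ must lie outside $\mathbb{A}_s$ (by uniqueness), so $\Pr_{K_s}[k'] < 2^{-(j_s + 500\log n)}$. To pass from $\Pr_{K_s}$ to $\Pr_X$, I would use that $k^* \in \mathbb{G}_s$ forces $\Pr_{K_s}[h(K_s)_{i_s} = y] \geq \Pr_{K_s}[k^*] \geq 2^{-(j_s+1)}$. Dividing yields $\Pr_X[k'] \leq 2/n^{500}$, and a slight increase in the constants defining $i_s$ and $\mathbb{A}_s$ (which propagates through Claim~\ref{clm:hashSlice} with no loss) makes this match the $2/n^{600}$ bound used by Lemma~\ref{lem:entropyGapCore}.

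\textbf{Applying the core entropy-gap lemma.} Once both hypotheses are verified, Lemma~\ref{lem:entropyGapCore} directly yields $\Hs(R, \alpha_1(X,R)) - \Hs(R, \alpha_0(X,R)) \geq 1/(100 n^2)$. Matching the notation, $\alpha_0(X,R) = \langle X, R \rangle = Q_R$ by definition, and $\alpha_1(X,R)$ replaces $\langle X, R \rangle$ by a uniform bit exactly when $X = x^* = k^*$; since $k^*$ is the unique element of $\mathbb{G}_s$ in $\Supp(X)$, the event ``$X \in \mathbb{G}_s$'' coincides with ``$X = k^*$'', so $\alpha_1(X,R) = P_R$. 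The claim follows. The main obstacle in this step is precisely the one resolved by Lemma~\ref{lem:entropyGapCore}: translating the probabilistic structure of $X$ into a Shannon-entropy gap on the inner product, which requires the leftover hash lemma to argue that both $P_R$ and $Q_R$ are essentially uniform conditioned on $X \neq k^*$, so that the entropy gap on $X = k^*$ is not washed out.
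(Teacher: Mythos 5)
Your proposal is correct and follows essentially the same structure as the paper's proof: identify the unique $k^* \in \mathbb{G}_s$ hashing to $y$ (via $\mathbb{G}_s \subseteq \mathbb{A}_s$ and Claim~\ref{clm:hashSlice}), show every other preimage of $y$ has tiny conditional mass, and invoke Lemma~\ref{lem:entropyGapCore}. One small remark in your favor: your careful bookkeeping produces $\Pr_X[k'] \leq 2/n^{500}$, whereas the paper's proof asserts $\Hs_{K_s}(k') \geq j_s + 600\log n$ for $k' \notin \mathbb{A}_s$, which is inconsistent with the stated definition $\mathbb{A}_s = \{k : \Hs_{K_s}(k) \leq j_s + 500\log n\}$ (that only gives $> j_s + 500\log n$); your flag that the constants need to be nudged to match the $2/n^{600}$ hypothesis of Lemma~\ref{lem:entropyGapCore} is exactly right, and harmless since that lemma's proof has ample slack.
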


Before proceeding, we provide a complete proof of Claim \ref{clm:constrainedEntropyGap}.
\begin{proof} (of Claim \ref{clm:constrainedEntropyGap})
        We first require some claims about the preimage distribution on keys in $\mathbb{K}_s$ such that $h(k)_{i_s}=y$. 
        These will set us up to use Lemma \ref{lem:entropyGapCore}.
        
        First, we show that there is a unique key $k^* \in \mathbb{G}_s$ that hashes to $y$.
        \begin{subclaim}
            There exists a unique $k^* \in \mathbb{G}_s$ such that $h(k^*)_{i_s} = y$.
        \end{subclaim}
        \begin{proof}
            By definition of $\mathbb{F}_s$,
            $$\Prr_{k\leftarrow K_s}[k\in \mathbb{G}_s | h(k)_{i_s} = y] \geq 1/6n$$
            Thus, there must exist at least one $k^* \in \mathbb{G}_s$ such that $h(k^*)_{i_s} = y$.  
            
            By definition, $\mathbb{G}_s \subseteq \mathbb{A}_s$. By Claim \ref{clm:hashSlice}, there exists atmost one $k \in \mathbb{A}_s$ such that $h(k)_{i_s} = y$. Therefore, $k^*$ is unique.
        \end{proof}
        Next we show that since all other preimage keys are outside of $\mathbb{A}_s$, they have significantly lower sampling probabilities than $k^*$. 
        \begin{subclaim}
            For all $k'\in \bin^n$ such that $k' \neq k^*$
            \[
                \Prr_{k\leftarrow K_s}[k=k'|h(k)_{i_s}=y] \leq 2/n^{600}
            \]
        \end{subclaim}
        \begin{proof}
            By Claim \ref{clm:hashSlice}, there is atmost one $k\in \mathbb{A}_s$ such that $h(k)_{i_s} = y$. Note that $k^*\in \mathbb{G}_s \subseteq \mathbb{A}_s$. 
            
            Fix any $k' \neq k^*$ such that $h(k')_{i_s} = y$ (the subclaim follows trivially if $h(k')_{i_s} \neq y)$. It must therefore be the case that $k' \notin \mathbb{A}_s$. By definition of $\mathbb{A}_s$,
            $$\Hs_{K_s}(k') \geq j_s + 600\log n$$
            or equivalently,
 $$\Prr_{K_s}[k'] \leq \frac{1}{n^{600}\cdot2^{j_s}}$$

            We aim to bound $\Prr_{k\leftarrow K_s}[k=k'|h(k)_{i_s}=y]$. We do so by noting that the ratio of probabilities of $k^*$ and $k'$ is unaffected by conditioning on $h(k)_{i_s} = y$.
            \begin{gather*}
                \Prr_{k\leftarrow K_s}[k = k'|h(k)_{i_s}=y] = \frac{\Prr_{k\leftarrow K_s}[k = k']}{\Prr_{k\leftarrow K_s}[h(k)_{i_s}=y]}\\
                \Prr_{k\leftarrow K_s}[k = k^*|h(k)_{i_s}=y] = \frac{\Prr_{k\leftarrow K_s}[k = k^*]}{\Prr_{k\leftarrow K_s}[h(k)_{i_s}=y]}
            \end{gather*}
            which implies $$\frac{\Prr_{k\leftarrow K_s}[k = k'|h(k)_{i_s}=y]}{\Prr_{k\leftarrow K_s}[k = k^*|h(k)_{i_s}=y]} = \frac{\Prr_{k\leftarrow K_s}[k = k']}{\Prr_{k\leftarrow K_s}[k = k^*]}
            $$
            Since $k^*\in\mathbb{G}_s$, by Claim \ref{clm:flatSet}
            \begin{gather*}
                \Hs_{K_s}(k^*) \leq j_s + 1
                \implies \Prr_{K_s}[k^*] \geq 1/{2^{j_s+1}}
            \end{gather*}
        Together with the previous bound on the probability of $k'$ this implies
        \[
        \frac{\Prr_{k\leftarrow K_s}[k = k'|h(k)_{i_s}=y]}{\Prr_{k\leftarrow K_s}[k = k^*|h(k)_{i_s}=y]} = \frac{\Prr_{k\leftarrow K_s}[k = k']}{\Prr_{k\leftarrow K_s}[k = k^*]} \leq \frac{2^{j_s+1}}{n^{600}\cdot2^{j_s}} \leq \frac{2}{n^{600}}
        \]
        Since probabilities are upper bounded by one, this proves the subclaim.
        \end{proof}

        We may now apply Lemma \ref{lem:entropyGapCore} with $X := \left(K_s|h(K_s)_{i_s}=y\right)$ and $x^*:= k^*$. By Claim \ref{clm:hashSlice}
        \[
        \Prr_{k\leftarrow K_s}[k\in \mathbb{G}_s | h(k)_{i_s} = y] \geq 1/6n 
        \]
         Since $k^*$ is the only inverse in $\mathbb{G}_s$, this implies
        \[
         \Prr_{k\leftarrow K_s}[k=k^* | h(k)_{i_s} = y] \geq 1/6n
        \]
        Moreover, for all $k' \in \bin^n$ such that $k' \neq k^*$, $\Prr_{k\leftarrow K_s}[k=k'|h(k)_{i_s}=y] \leq 2/n^{600}$. Thus, Lemma \ref{lem:entropyGapCore} implies that 
        \[
            \Hs(R, P_R) - \Hs(R, Q_R) \geq 1/100n^2
        \]
        which proves Claim \ref{clm:constrainedEntropyGap}.
        \end{proof}
The next claim relates the entropy gap between arbitrary joint distributions $(A, B_0)$ and $(A, B_1)$ to their entropy gap conditioned on an event.
\begin{claim}
\label{clm:publicSlicing}
    Let $A$, $B_0$, and $B_1$ be random  variables and let $\bbA := \Supp(A)$. Additionally, let $\bbA^*\subseteq \bbA$ be a set with the following properties:
    \begin{enumerate}
        \item There exists some $d\geq 0$ such that for all $a \in \bbA^*, \Hs(B_1|A=a) -  \Hs(B_0|A=a) \geq d$.
        \item For all $a\in \bbA \setminus \bbA^*, \Hs(B_1|A=a) = \Hs(B_0|A=a)$
    \end{enumerate}
    Then the following holds
    \[
    \Hs(A,B_1) - \Hs(A, B_0) \geq d\cdot \Prr[A \in \bbA^*]
    \]
\end{claim}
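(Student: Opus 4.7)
The plan is to apply the chain rule for Shannon entropy in a direct way. By the chain rule,
\[
\Hs(A, B_b) = \Hs(A) + \Hs(B_b \mid A)
\]
for each $b \in \{0,1\}$. Subtracting the $b=0$ equation from the $b=1$ equation makes the $\Hs(A)$ terms cancel, reducing the task to showing
\[
\Hs(B_1 \mid A) - \Hs(B_0 \mid A) \geq d \cdot \Pr[A \in \bbA^*].
\]

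Next, I would expand both conditional entropies using their definitions as averages over $a \in \bbA$:
\[
\Hs(B_b \mid A) = \sum_{a \in \bbA} \Pr[A = a] \cdot \Hs(B_b \mid A = a).
\]
Taking the difference and combining sums term by term yields
\[
\Hs(B_1 \mid A) - \Hs(B_0 \mid A) = \sum_{a \in \bbA} \Pr[A=a] \cdot \bigl(\Hs(B_1 \mid A=a) - \Hs(B_0 \mid A=a)\bigr).
\]
I would then split this sum along the partition $\bbA = \bbA^* \cup (\bbA \setminus \bbA^*)$. By hypothesis (2), every term indexed by $a \in \bbA \setminus \bbA^*$ vanishes, so only the sum over $a \in \bbA^*$ survives. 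By hypothesis (1), each surviving summand is at least $\Pr[A=a] \cdot d$, so the overall sum is at least $d \cdot \sum_{a \in \bbA^*} \Pr[A=a] = d \cdot \Pr[A \in \bbA^*]$.

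There is no real obstacle here: the claim is essentially a restatement of the chain rule combined with a partition of the conditioning variable. The only care needed is to ensure the sign of the difference is preserved (which it is, since $d \geq 0$ and probabilities are nonnegative), and to explicitly note that the $\Hs(A)$ terms on each side cancel so that no additional assumption on $A$ is needed. This will give the desired inequality $\Hs(A, B_1) - \Hs(A, B_0) \geq d \cdot \Pr[A \in \bbA^*]$ in a couple of lines.
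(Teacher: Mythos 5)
Your proof is correct and follows essentially the same route as the paper's: apply the chain rule so the $\Hs(A)$ terms cancel, expand the conditional entropy as an average over $a \in \bbA$, split along $\bbA^*$ versus $\bbA \setminus \bbA^*$, and invoke the two hypotheses to eliminate the latter and lower-bound the former. No gaps.
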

\begin{proof}
    By the chain rule, for $b\in \bin$
    \[
    \Hs(A,B_b) = \Hs(B_b|A) + \Hs(A)
    \]
    Applying this to the difference in entropies
    \begin{align*}
    \Hs(A,B_1) - \Hs(A, B_0) &= \Hs(B_1|A) + \Hs(A) - \Hs(B_0|A) -\Hs(A)\\
    &= \Hs(B_1|A)- \Hs(B_0|A)\\
    &= \sum_{a\in \bbA}\Prr_A[a]\Big(\Hs(B_1|A=a)- \Hs(B_0|A=a)\Big)
    \end{align*}
We may now split the sum into terms where $a\in \bbA^*$ and terms where $a\in \bbA\setminus\bbA^*$, and apply the properties of $\bbA^*$. 
\begin{align*}
    \Hs(A,B_1) - \Hs(A, B_0)
    =& \sum_{a\in \bbA^*}\Prr_A[a]\Big(\Hs(B_1|A=a)- \Hs(B_0|A=a)\Big)\\
    &+ \sum_{a\in \bbA \setminus \bbA^*}\Prr_A[a]\Big(\Hs(B_1|A=a)- \Hs(B_0|A=a)\Big)\\
    =& \sum_{a\in \bbA^*}\Prr_A[a]\Big(\Hs(B_1|A=a)- \Hs(B_0|A=a)\Big)\\
    \geq& \sum_{a\in \bbA^*}\Prr_A[a] \cdot d\\
    =&d\cdot\Prr[A\in \bbA^*]
    \end{align*}
    which concludes the proof.
\end{proof}
We are now ready to prove the main lemma of this subsection.
\begin{proof} (of Lemma \ref{lem:egmain})
Let $H$ be uniformly distributed over $\bin^\ell$, $I$ be uniformly distributed over $[3n]$, and $R$ be uniformly distributed over $\bin^n$. 
Recall that 
$$\wgen = (S_K, H, I, R, H(K)_I, Q_{K, R})$$
and
$$\simwgen = (S_K, H, I, R, H(K)_I, P_{K, R})$$
where $$Q_{K,R} := \langle K, R \rangle$$
and \begin{equation*}
        P_{K, R} = 
    \left\{\begin{array}{cl}
        U_1 & \text{ if $K \in \mathbb{G}_{S_K}$
        and $I = i_{S_K}$ and $h,h(K)_I \in \mathbb{F}_{S_K}$}\\
        \langle K, R \rangle &\text{ otherwise}
    \end{array}
    \right.
\end{equation*}
First, we note that for any $s\in \bbS$ and any $(h,y)\in \bbF_s$, the distribution $(R, P_{K,R}|S_K=s, H=h, H(K)_I = y, I=i_s)$ is identical to the distribution $(R,P_R)$ as defined in Claim \ref{clm:constrainedEntropyGap}. Similarly, $(R, Q_{K,R}|S_K=s, H=h, H(K)_I = y, I=i_s)$ is identical to the distribution $(R,Q_R)$ as defined in Claim \ref{clm:constrainedEntropyGap}. Applying the claim therefore gives
\begin{multline*}
    \Hs(R, P_{K,R}|S_K=s, H=h, H(K)_I = y, I=i_s) \\- \Hs(R, Q_{K,R}|S_K=s, H=h, H(K)_I = y, I=i_s) \geq 1/100n^2
\end{multline*}
Additionally, note that if $H,H(K)_I \notin \bbF_S$ or $I\neq i_S$, the distributions are identical. We now apply Claim $\ref{clm:publicSlicing}$. Setting $A$ to be $(S_K, H, H(K)_I, I)$, $B_1$ to $(R, P_{K,R})$, $B_0$ to $(R, Q_{K,R})$, and  $\bbA^*$ to the set $\{(s, h, y, i_s):s\in \bbS \wedge (h,y) \in \bbF_s\}$, we get
\begin{multline*}
    \Hs(S_K, H, I,R, H(K)_I, P_{K,R}) - \Hs(S_K, H, I,R, H(K)_I, Q_{K,R}) \geq \frac{\Prr[I = i_S \wedge (H, H(K)_I) \in \bbF_S]}{100n^2}
\end{multline*}
which may be written as
\begin{align*}
    \Hs(\wgen)- \Hs(\simwgen) &\geq \Prr[I = i_S \wedge (H, H(K)_I) \in \bbF_S]/100n^2
\end{align*}
Since $I$ is uniform over $[3n]$ and $i_S\in [3n]$,
\begin{align*}
    \Hs(\wgen)- \Hs(\simwgen) &\geq \Prr[(H, H(K)_{i_S}) \in \bbF_S]/300n^3
\end{align*}
By Claim \ref{clm:hashSlice}, for every $s\in \bbS$, $\Prr[H, H(K)_{i_s} \in \mathbb{F}_s] \geq 1/6n - 1/n^{100}$. Therefore for large enough $n \in \bbN$
\begin{align*}
    \Hs(\wgen)- \Hs(\simwgen) &\geq (1/6n - 1/n^{100})\cdot(1/300n^3)\\
    &\geq (1/7n)\cdot (1/300n^3)\\
        &\geq \frac{1}{2100n^4} > \frac{1}{n^5}
\end{align*}
which concludes the proof.
\end{proof}

\subsection{Establishing Computational Indistinguishability}

\begin{lemma}
There exists a negligible function $\mu$ such that for every quantum polynomial sized adversary $\cA$, for all $n \in \bbN$:
        \[
            \left|\Prr_{z\leftarrow \wgen(n)}[\cA(z) = 1] - \Prr_{z\leftarrow \simwgen(n)}[\cA(z) = 1]\right| \leq \mu(n)
        \]
\end{lemma}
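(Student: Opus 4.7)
The plan is to reduce distinguishing $\wgen$ from $\simwgen$ to inverting the one-way puzzle, by combining the Leftover Hash Lemma (\lemref{LHLforSmoothEntropy}) with Goldreich--Levin (\thmref{gl}) in the style of HILL. I will assume for contradiction that a QPT $\cA$ distinguishes $\wgen$ from $\simwgen$ with advantage $\epsilon = 1/\poly(n)$, and ultimately construct a QPT inverter that on input $s \leftarrow \Samp(1^n)$ outputs a $k$ with $\Ver(k,s)=\top$ with $1/\poly(n)$ probability, contradicting \defref{owp}.

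The first step is to localize where the two distributions differ. Since $\wgen$ and $\simwgen$ agree outside the event $E := \{i = i_s \wedge k \in \mathbb{G}_s \wedge (h, h(k)_i) \in \mathbb{F}_s\}$, and on $E$ they differ only in the last coordinate ($\langle k, r\rangle$ vs.\ a fresh uniform bit $u$), we have $\epsilon = \Pr[E] \cdot \epsilon_E$ where $\epsilon_E$ is the conditional distinguishing advantage on $E$. By \clmref{flatSet} and \clmref{hashSlice} together with the uniform choice $i \leftarrow [3n]$, $\Pr[E] \geq 1/\poly(n)$, so $\epsilon_E \geq \epsilon$ is still $1/\poly(n)$. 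A standard distinguisher-to-predictor conversion yields a QPT $\cA'$ that, on input $(s, h, i_s, r, h(k)_{i_s})$ drawn from $\wgen \mid E$, predicts $\langle k, r\rangle$ with bias at least $\epsilon_E/2 = 1/\poly(n)$.

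The second step is to eliminate the side information $h(k)_{i_s}$ so that Goldreich--Levin applies. Write $h(k)_{i_s} = \alpha \,\|\, \beta$, where $\alpha$ holds the first $j_s - 2c\log n$ bits and $\beta$ the remaining $\leq 600\log n + 2c\log n = O(\log n)$ bits. The distribution $K_s \mid k \in \mathbb{G}_s$ has min-entropy at least $j_s - \log(3n)$ (each $k \in \mathbb{G}_s$ has mass $\leq 2^{-j_s}$ by \clmref{flatSet}, and conditioning on the $\geq 1/(3n)$-mass event $\mathbb{G}_s$ loses only $\log(3n)$ bits of min-entropy). By \lemref{LHLforSmoothEntropy} applied to this distribution with a pairwise independent $h$, $(h,\alpha)$ is $1/n^c$-close to $(h, U_{|\alpha|})$; the further conditioning on $(h,h(k)_{i_s}) \in \mathbb{F}_s$ costs only a polynomial factor in statistical slack because the conditioning event has $1/\poly(n)$ probability.

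The third step is the reduction itself. The inverter $\cI$, on input a puzzle $s$, samples $h$ and $r$ uniformly, guesses $\beta \in \{0,1\}^{O(\log n)}$ and $\alpha' \in \{0,1\}^{|\alpha|}$ uniformly, and runs the Goldreich--Levin extractor of \thmref{gl} against the predictor $r \mapsto \cA'(s, h, i_s, r, \alpha' \,\|\, \beta)$; whenever this extractor outputs a string, $\cI$ returns it as its candidate key. With probability $1/\poly(n)$ the guess $\beta$ matches the true value and the uniform $\alpha'$ is statistically close to the real $\alpha$, so the internal predictor retains bias $1/\poly(n)$. Goldreich--Levin then outputs the actual key $k$ with probability $\Omega(1/\poly(n))$, and because $k$ is a genuine $\Samp$-preimage of $s$, it passes $\Ver(k,s)$ with overwhelming probability, contradicting one-way puzzle security. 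The step I expect to be most delicate is the second: carefully tracking that the LHL-based replacement of $\alpha$ by uniform is valid even after conditioning on $E$, which couples $h$, $k$, and $s$ through membership in $\mathbb{F}_s$ in a non-product way; all remaining work is bookkeeping of $1/\poly(n)$ losses in the HILL/Goldreich--Levin template.
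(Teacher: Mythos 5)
Your high-level route --- localize the difference to the event $E$, convert the distinguisher into a predictor, eliminate the hash side information via the leftover hash lemma, then extract with Goldreich--Levin --- is the same one the paper takes, and the bookkeeping of $\Pr[E]\geq 1/\poly(n)$ and of the $O(\log n)$ bits to be guessed is in the right place. The gap, however, is in your third step, not in the LHL bookkeeping you flag as most delicate.

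You argue that after replacing $\alpha$ by an independent uniform $\alpha'$ (and guessing $\beta$), ``the internal predictor retains bias $1/\poly(n)$'' and so ``Goldreich--Levin then outputs the actual key $k$.'' This does not follow. Theorem~\ref{thm:gl} extracts a \emph{fixed} target $a$ from a predictor that is biased towards $\langle a,\cdot\rangle$. In the real execution, conditioning on $(h,h(k)_{i_s})\in\mathbb{F}_s$ pins down a unique $k^*\in\mathbb{G}_s$, so there is a well-defined target and GL outputs $k^*$ with noticeable probability. But the leftover hash lemma only gives closeness of the marginal $(h,\alpha)$ to $(h,U)$; it says nothing about the joint distribution $(h,\alpha,\beta,k)$ versus $(h,U,\beta,k)$ --- in the former, $\alpha$ is a deterministic function of $(h,k)$, in the latter it is independent of $k$, so they are far as joint distributions. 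Once you resample $\alpha$, the predictor $r\mapsto\cA'(s,h,i_s,r,\alpha'\|\beta)$ has no guaranteed bias towards $\langle k,\cdot\rangle$ for \emph{any} particular $k$, and GL can emit an arbitrary string.

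The paper resolves this by reversing the order and by changing what the final check is. It applies GL \emph{first}, on the real $(s,h,i_s,h(k)_{i_s})$, to obtain a circuit $\cB$ with $\Pr[\cB(s,h,i_s,h(k)_{i_s})=k \mid k\in\mathbb{G}_s]\geq\varepsilon'$. It then defines the $\{0,1\}$-valued statistical test
\[
\cT(s,h,y):=\bigl[\;\Pr[\top\leftarrow\Ver(\cB(s,h,i_s,y),s)]\geq\varepsilon'/4\;\bigr],
\]
a bounded function of $(s,h,y)$ alone; crucially, $\cT$ asks whether $\cB$'s output \emph{inverts} $s$, not whether it equals the specific sampled $k$. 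Because $\cT$ (more precisely the variant $\cT'$ that existentially quantifies over the last $\ell^*_s$ bits) is a statistical test on $(s,h,y)$, the LHL legitimately transfers its acceptance probability when the first $i_s-\ell^*_s$ hash bits are replaced by uniform. Your proposal implicitly equates ``the output equals $k$'' with ``the output passes $\Ver(\cdot,s)$''; these coincide on the real hash but come apart once $\alpha$ is resampled, and only the second formulation is a statistical test the LHL can track. A smaller issue: your inverter feeds $i_s$ to $\cA'$, but $i_s$ is not efficiently computable from $s$ (it depends on the preimage distribution $K_s$); the paper samples $i\leftarrow[3n]$ uniformly and absorbs the resulting $1/(3n)$ loss.
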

\begin{proof}
    We prove this lemma by contradiction. 
    Suppose there exists some quantum polynomial-sized $\cA$ and an inverse polynomial function $\varepsilon$ such that for infinitely many $n \in \mathbb{N}$,
    \[
        \left|\Prr_{z\leftarrow \wgen(n)}[\cA(z) = 1] - \Prr_{z\leftarrow \simwgen(n)}[\cA(z) = 1]\right| \geq \varepsilon(n)
    \]
    For every $s$, $\wgen$ and $\simwgen$ are identical whenever $i \neq i_s$ or $k \not\in \mathbb{G}_{s}$. As a result, the case where this constraint is met must give rise to all the distinguishing advantage. By definition of $\wgen$ and $\simwgen$, this means that
    \begin{align}
        &\Big| \Prr%
         \left[\cA(s,h,i_s,r, h(k)_{i_s},\langle k, r\rangle) = 1\ \middle| \begin{array}{r}
         k \in \mathbb{G}_s 
        \end{array}\right] \Big. \nonumber \\
        & -
        \Prr%
        \Big. \left[\cA(s,h,i_s,r,h(k)_{i_s}, b) = 1\ \middle| \begin{array}{r}
         k \in \mathbb{G}_s 
        \end{array}\right] \Big| \nonumber \\
        & \geq \varepsilon(n)
    \end{align}
    where the probability is over $k,s \leftarrow (K,S_K), b\sample\bin, h\sample\bin^\ell, r\sample\bin^n$.

    Let $\cA'$ be a {\em predictor} that on input $(s,h,i,r,y)$ samples a uniform bit $b$ and outputs $b \oplus \cA(s,h,i,r,y,b)$. Then the previous equation implies:
    \begin{equation}
        \label{eq:aprime}
        \Prr \left[\cA'(s,h, i_s, r, h(k)_{i_s}) = \langle k,r \rangle\ \middle| \begin{array}{r}
         k \in \mathbb{G}_s 
        \end{array}\right] \geq 1/2 + \varepsilon(n)
    \end{equation}

Note that $\cA'$ is a quantum polynomial-sized circuit family that guesses $\langle k, r \rangle$ with noticeable probability, given $(s,h, i_s, h(k)_{i_s})$ as an auxiliary function of the key $k$, along with $r$, and when conditioned on $k \in \mathbb{G}_s$.
Therefore, by the Goldreich-Levin theorem (Theorem \ref{thm:gl}) applied to $\cA'$,
    there exists a quantum polynomial-sized circuit family $\cB$ and an inverse polynomial function $\varepsilon'$ such that:
    \begin{equation}
    \label{eq:lkl}
    \Prr \left[\cB(s,h, i_s,h(k)_{i_s}) = k\ \middle| \begin{array}{r}
         k \in \mathbb{G}_s 
        \end{array}\right] \geq \varepsilon'(n)
    \end{equation}

    Our goal is to prove the existence of a quantum polynomial-sized algorithm that outputs inverses $k$ given a puzzle $s$.
    In the above equation, however, $\cB$ requires $(h, i_s h(k)_{i_s})$ as input in addition to the puzzle $s$. The next claim proves that $\cB$ will output an inverse of the puzzle with non-negligible probability even when $i_s$ and $h(k)_{i_s}$ are replaced with uniform strings. 

\begin{claim}
\label{clm:ff}
There exists an inverse polynomial function $\varepsilon''$ such that for infinitely many $n \in \mathbb{N}$:
\[
\Prr\left[\top\leftarrow\Ver(\cB(s,h,i,u),s)\right] \geq \varepsilon''(n)
    \]
    where the probability is over 
    $s \leftarrow S,i \sample[3n], h\sample\bin^\ell,u\sample \bin^{i}$.
\end{claim}

\begin{proof}
To prove this claim, we will first define a statistical test using the input-output behavior of $\cB$.

\begin{subclaim}
\label{clm:abv}
Define test $\cT$ as follows.
    \[
        \cT(s,h,y) := \left\{\begin{array}{cl}
             1 & \text{if }\Prr[\top\leftarrow\Ver(\cB(s,h, i_s,y),s)] \geq \varepsilon'(n)/4 \\
             0 & \text{otherwise}
        \end{array}\right.
    \]
where the probability is over the randomness of $\cB$ and $\mathsf{Ver}$.
Then,
\[
    \Prr_{(k,s) \leftarrow (K,S), h \leftarrow \bin^\ell} \left[\cT(s,h,h(k)_{i_s})=1 \middle| 
         k \in \mathbb{G}_s\right] \geq \varepsilon'(n)/4
    \]
\end{subclaim}
\begin{proof} (of SubClaim \ref{clm:abv})
     For any events $A$ and $B$, $\Pr[A \wedge B] \geq \Pr[A] - \Pr[\neg B]$. Applying this to the correctness of the puzzle
    \begin{align*}
    &\Prr_{\substack{k,s \leftarrow (K,S_K)\\ h\sample\bin^\ell}}\left[\cB(s,h, i_s,h(k)_{i_s}) = k\ \wedge\ \top\leftarrow\Ver(k,s)\middle| 
         k \in \mathbb{G}_s\right]\\ &\geq \varepsilon'(n) - \Prr_{k,s \leftarrow (K,S_K)}[\bot\leftarrow\Ver(k,s)| k\in\mathbb{G}_s]\\
         &\geq \varepsilon'(n) - \frac{\Prr_{k,s \leftarrow (K,S_K)}[\bot\leftarrow\Ver(k,s) ]}{\Prr_{k,s \leftarrow (K,S_K)}[k\in\mathbb{G}_s]}\\
         &= \varepsilon'(n) - 3n \cdot \negl(n) > \varepsilon' (n)/2
    \end{align*}
    where the last step follows due to Claim \ref{clm:flatSet}, and due to the correctness of the one-way puzzle. 
    Thus,
    \[
    \Prr \left[\top\leftarrow\Ver(\cB(s,h, i_s,h(k)_{i_s}),s)\middle| 
         k \in \mathbb{G}_s\right] \geq \varepsilon'(n)/2
    \]
    and by a Markov argument
    \[
    \Prr \left[\Prr[\top\leftarrow\Ver(\cB(s,h, i_s,h(k)_{i_s}),s)] \geq \varepsilon'(n)/4 \middle| 
         k \in \mathbb{G}_s\right] \geq \varepsilon'(n)/4
    \]
    which proves this subclaim.
    \end{proof}
    
    Let $\ell^*:= 600\log n + \log 3n + 2\log(8/\varepsilon'(n))$.
    Fix any $s \in \bbS$, and define $\ell^*_s = \min(i_s, \ell^*)$.
    Let $K'_s$ denote the distribution of the random variable $(K_s|K_s \in \bbG_s)$.
    In what follows, we will bound the distance between $h(k)_{{i_s - \ell^*_s}}$ and uniform.
    \begin{subclaim}
    \label{clm:bv}
     For every $s \in \bbS$,
        \[
            \SD\Big(
            (H,H(K'_s)_{i_s - \ell^*_s}), (H,U_{i_s - \ell^*_s}) \Big) \leq \varepsilon'(n)/8
        \]
    \end{subclaim}
    \begin{proof}
        For all $s \in \bbS$, for all $k^* \in \mathbb{G}_s$,
        \begin{align*}
            \Prr_{k\leftarrow K'_s}[k = k^*] &= \Prr_{k\leftarrow K_s}[k= k^* | k\in\mathbb{G}_s]\\
            &= \Prr_{k\leftarrow K_s}[k=k^*]/\Prr_{k\leftarrow K_s}[k\in\mathbb{G}_s]
        \end{align*}
        By Claim \ref{clm:flatSet}, for all $s \in \bbS$, $\Prr_{k\leftarrow K_s}[k \in \mathbb{G}_s] \geq 1/3n$. 
        Additionally, for all $k^* \in \mathbb{G}_s$,
       $\Hs_{K_s}(k^*) \geq j_s
            \implies \Prr_{K_s}[k^*]\leq 2^{-j_s}
        $.

        Therefore for all $s \in \bbS$, for all $k^* \in \mathbb{G}_s$,
        \begin{align*}
            \Prr_{k\leftarrow K'_s}[k=k^*] &= \Prr_{k\leftarrow K_s}[k=k^*]/\Prr_{k\leftarrow K_s}[k\in\mathbb{G}_s]\\
            &\leq 2^{-j_s}\cdot 3n
        \end{align*}
         Thus for $s \in \bbS$, for $k^* \in \mathbb{G}_s$, $\Hs_{K'_s}(k^*) \geq j_s - \log3n$. In other words, $\Hs_{\min}(K'_s) \geq j_s - \log3n$.

        By the Leftover Hash Lemma (Theorem \ref{thm:LHLforSmoothEntropy}),
        \begin{align*}
        \SD\Big(
            (H,H(K'_s)_{i_s - \ell^*_s}), (H,U_{i_s - \ell^*_s}) \Big)  
            &\leq 2^{- 0.5( \Hs_{\min}(K'_s) - i_s + \ell^*_s)}\\
        &= {2^{-0.5(j_s - \log3n - i_s + 600\log n + \log 3n + 2\log(8/\varepsilon'(n)))}}\\
        &= {2^{-\log(8/\varepsilon'(n))}} = {\varepsilon'(n)}/8
        \end{align*}
        which proves the claim.
    \end{proof}
\noindent Next, relaxing the probability bound from SubClaim \ref{clm:abv} gives us
    \[
    \Prr \left[\exists \gamma \in \bin^{\ell^*_s} \text{ s.t }\cT(s,h,(h(k)_{i_s - \ell^*_s}\Vert\gamma))=1 \middle| 
         k \in \mathbb{G}_s\right] \geq \varepsilon'(n)/4
    \]
    Defining $\cT'$ as:
    \[
        \cT'(s,h,y) := \left\{\begin{array}{cl}
             1 & \text{if }\exists \gamma \in \bin^{\ell^*_s} \text{ s.t }\cT(s,h,(y\Vert\gamma))=1 \\
             0 & \text{otherwise}
        \end{array}\right.
    \]
    we have
    \[
    \Prr_{(k,s) \leftarrow (K,S_K), h \leftarrow \bin^\ell} \left[\cT'_{\ell^*}(s,h,h(k)_{i_s - \ell^*_s})=1 \middle| 
         k \in \mathbb{G}_s\right] \geq \varepsilon'(n)/4
    \]
    Since no statistical test can distinguish between two distributions with advantage better than their statistical distance, replacing the hash with a uniform string in the equation above, and invoking SubClaim \ref{clm:bv} gives us
    \[
        \Prr_{\substack{k,s \leftarrow (K,S_K)\\ h\sample\bin^\ell,u\sample \bin^{i_s - \ell^*_s}}}\left[\cT'(s,h,u)=1 \middle| k \in \mathbb{G}_s\right] \geq \frac{\varepsilon'(n)}{4} - \frac{\varepsilon'(n)}{8} = \frac{\varepsilon'(n)}{8}
    \]
    which by definition of $\cT'$ implies
    \[
    \Prr_{\substack{k,s \leftarrow (K,S_K)\\ h\sample\bin^\ell, u\sample \bin^{i_s - \ell^*_s}}}\left[\exists \gamma \in \bin^{\ell^*_s} \text{ s.t }\cT(s,h,u\Vert\gamma))=1 \middle| 
         k \in \mathbb{G}_s\right] \geq \frac{\varepsilon'(n)}{8}
    \]
    If we sample $\gamma$ randomly as well, we obtain
    \[
    \Prr_{\substack{k,s \leftarrow (K,S_K)\\ h\sample\bin^\ell, u\sample \bin^{i_s}}}\left[\cT(s,h,u))=1 \middle| 
         k \in \mathbb{G}_s\right] \geq \frac{\varepsilon'(n)}{8} \cdot 2^{-\ell^*}
    \]
    We may now relax the constraint that $k\in \mathbb{G}_s$ as follows:
    \[
    \Prr_{\substack{k,s \leftarrow (K,S_K)\\ h\sample\bin^\ell,u\sample \bin^{i_s}}}\left[\cT(s,h,u))=1 \right] \geq \frac{\varepsilon'(n)}{8} \cdot 2^{-\ell^*} \cdot \Prr_{k,s\leftarrow K, S_K}[k\in \mathbb{G}_s] \geq \frac{\varepsilon'(n)}{24n} \cdot 2^{-\ell^*}
    \]
    where the last inequality follows from Claim \ref{clm:flatSet}.

    Substituting $\ell^*:= 600\log n + \log 3n + 2\log(8/\varepsilon'(n))$, we obtain
    \[
        \Prr_{\substack{k,s \leftarrow (K,S_K)\\ h\sample\bin^\ell, u\sample \bin^{i_s}}}\left[\cT(s,h,u))=1 \right] \geq \frac{(\varepsilon'(n))^3}{24n \cdot n^{600} \cdot 3n \cdot 64} \geq \frac{(\varepsilon'(n))^3}{4608n^{602}} := \epsilon(n)
    \]
    By the definition of $\cT$, this means
    \[
        \Prr_{\substack{k,s \leftarrow (K,S_K)\\ h\sample\bin^\ell,u\sample \bin^{i_s}}}\left[\Prr[\top\leftarrow\Ver(\cB(s,h,i_s,u),s)] \geq \varepsilon'(n)/4 \right] \geq \epsilon(n)
    \]
    which after undoing the Markov argument implies
    \[
        \Prr_{\substack{k,s \leftarrow (K,S_K)\\ h\sample\bin^\ell,u\sample \bin^{i_s}}}\left[\top\leftarrow\Ver(\cB(s,h,i_s,u),s)\right] \geq \frac{\varepsilon'(n) \epsilon(n)}{4}
    \]
    When $i$ is sampled uniformly from $[3n]$, $i=i_s$ occurs with probability $1/3n$. Therefore
    \[
        \Prr_{\substack{k,s \leftarrow (K,S_K)\\ h\sample\bin^\ell\\i\leftarrow[3n], u\sample \bin^{i}}}\left[\top\leftarrow\Ver(\cB(s,h,i,u),s)\right] \geq \frac{\varepsilon'(n) \epsilon(n)}{12n} := \varepsilon''(n)
    \]
    which completes the proof of the claim.
    \end{proof}
\noindent    To complete the proof of the lemma, let $\cB'$ denote an algorithm that on input $s$ does the following.
    \begin{itemize}
        \item Sample $h \sample \bin^\ell$
        \item Sample $i \sample [3n]$
        \item Sample $u \sample \bin^{i}$
        \item Output $\cB(s,h,i,u)$
    \end{itemize}
    Then Claim \ref{clm:ff} implies
    \[
        \Prr_{(k,s)\leftarrow\Samp(1^n)}\left[ \top \leftarrow\Ver(\cB'(s),s)\right] \geq \varepsilon''(n)
    \]
    which contradicts security of the puzzle, as desired.
\end{proof}
\section{Quantum Weak PEGs imply Quantum PEGs}
\label{sec:product}
In this section, we show that a parallel repetition of quantum weak PEGs yields a strong pseudoentropy property, which we formalize into a quantum PEG, defined below.

\begin{definition}[Quantum Pseudoentropy Generator]
\label{def:qpeg}
    A Quantum Pseudoentropy Generator consists of an ensemble of distributions $\{\sgen(n), \simsgen(n)\}_{n\in\bbN}$ over classical strings such that:
    \begin{itemize}
     	\item \textbf{Bounded Length}. There exists a polynomial $p(\cdot)$ such that for all $n \in \bbN$, for all $z_0 \in \Supp(\sgen(n))$, for all $z_1 \in \Supp(\simsgen(n))$, $|z_0|=| z_1| \leq p(n)$.
        \item \textbf{Efficiency.} There exists some QPT algorithm that for all $n \in \bbN$, on input $1^n$, returns a sample from $\sgen(n)$.
        \item \textbf{Indistinguishability.} There exists a negligible function $\mu$ such that for all quantum polynomial-sized adversaries $\cA$, for all large enough $n \in \bbN$,
        \[
            \left|\Prr_{z\leftarrow \sgen(n)}[\cA(z) = 1] - \Prr_{z\leftarrow \simsgen(n)}[\cA(z) = 1]\right| \leq \mu(n)
        \]
        \item \textbf{Entropy Gap}. Here, we work with min and max entropies, as opposed to Shannon entropy. We require the min-entropy of $\simsgen$ to be higher than the max-entropy of $\sgen$. Formally, there is some explicit constant $c>0$ and some negligible function $\epsilon$ such that for all sufficiently large $n \in \bbN$,
        \[
            \Hs^{\epsilon(n)}_{\min}(\simsgen(n)) - \Hs^{\epsilon(n)}_{\max}(\sgen(n)) \geq n^c
        \]
    \end{itemize}
\end{definition}

\begin{theorem}
\label{thm:QWPEGtoQPEG}
    Quantum weak pseudoentropy generators  (Definition \ref{def:qwpeg}) imply quantum pseudoentropy generators (Definition \ref{def:qpeg}).
\end{theorem}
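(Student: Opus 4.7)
The plan is to construct the quantum PEG by $q(n)$-fold parallel repetition of the quantum weak PEG, following the classical HILL template but using smooth entropy concentration to translate the Shannon entropy gap into a gap between smooth max-entropy and smooth min-entropy. Concretely, let $\{\wgen(n),\simwgen(n)\}$ be a QWPEG with length bound $p(n)$ and Shannon entropy gap $1/n^c$ as guaranteed by Definition~\ref{def:qwpeg}, and set $q(n) := n^{2c+3}\cdot p(n)^2$ (a more lax polynomial is fine). Define
\[
\sgen(n) \;:=\; \wgen(n)^{\otimes q(n)}, \qquad \simsgen(n) \;:=\; \simwgen(n)^{\otimes q(n)},
\]
where each tensor factor is sampled independently. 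Bounded length and efficiency of $\sgen$ are immediate from those of $\wgen$, so only indistinguishability and the entropy gap need proof.

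For indistinguishability I would use the standard hybrid argument. For $i \in \{0,1,\dots,q(n)\}$, let $H_i(n)$ denote the distribution whose first $i$ components are independent samples from $\simwgen(n)$ and whose remaining $q(n)-i$ components are independent samples from $\wgen(n)$. Thus $H_0 = \sgen$ and $H_{q(n)} = \simsgen$. A standard averaging/triangle inequality argument shows that any quantum polynomial-sized distinguisher between $\sgen$ and $\simsgen$ with advantage $\delta(n)$ gives a distinguisher between $\wgen(n)$ and $\simwgen(n)$ with advantage $\delta(n)/q(n)$; since $\wgen,\simwgen$ are computationally indistinguishable and $q(n)$ is polynomial, $\delta(n)$ must be negligible. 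The one point of care is that the reduction samples the ``other'' $q(n)-1$ coordinates from $\wgen$ or $\simwgen$, which is fine since $\wgen$ is efficiently sampleable; for $\simwgen$ we only need sampleability of the marginals to feed the distinguisher, which is allowed because the hybrid reduction can receive one ``real'' coordinate as input and use efficient samplers for the rest (and indistinguishability of $\wgen$ vs.\ $\simwgen$ is the only thing needed in that hybrid step).

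The core of the argument is the entropy gap, which I would establish using Theorem~\ref{thm:conc} together with a matching smooth max-entropy upper bound (the standard dual statement, provable by the same Azuma/AEP-style argument as Theorem~\ref{thm:conc}): for a discrete random variable $X$ over universe $\cU$ and any $\epsilon > 0$,
\[
\Hs^{\epsilon}_{\max}(X^{t}) \;\leq\; t\cdot \Hs(X) \;+\; \sqrt{2t\cdot \log(1/\epsilon)}\cdot \log(3+|\cU|).
\]
Applying the lower bound of Theorem~\ref{thm:conc} to $\simwgen(n)^{\otimes q(n)}$ and the upper bound above to $\wgen(n)^{\otimes q(n)}$ with $\epsilon(n) := 2^{-n}$ and $|\cU| \leq 2^{p(n)}$, we obtain
\begin{align*}
\Hs^{\epsilon}_{\min}(\simsgen(n)) - \Hs^{\epsilon}_{\max}(\sgen(n))
&\geq q(n)\bigl(\Hs(\simwgen(n))-\Hs(\wgen(n))\bigr) - 2\sqrt{2q(n)\cdot n}\cdot (p(n)+2) \\
&\geq \frac{q(n)}{n^{c}} - O\!\bigl(\sqrt{q(n)\cdot n}\cdot p(n)\bigr).
\end{align*}
With $q(n)=n^{2c+3}p(n)^2$, the first term is $n^{c+3}p(n)^2$ while the second is $O(n^{c+2}p(n)^2)$, so the gap is at least $n^{c+2}$ for large enough $n$, which gives the required polynomial separation between smooth min-entropy of $\simsgen$ and smooth max-entropy of $\sgen$.

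The main obstacle I anticipate is bookkeeping around the smooth max-entropy upper bound for product distributions, since Theorem~\ref{thm:conc} as stated only gives the lower bound on smooth min-entropy; I would either cite the corresponding standard dual bound (it follows from the same martingale argument applied to the likelihood ratio, after truncating a low-probability ``atypical'' set of size at most $\epsilon$) or prove it directly as a short lemma. A minor secondary point is ensuring that the hybrid reduction is efficient despite $\simwgen$ not being efficiently sampleable: the point is that in the $i$-th hybrid step the reduction only needs to sample $q(n)-1$ coordinates from one of the two distributions and plug a single received sample into coordinate $i$; indistinguishability is preserved coordinate-by-coordinate because each hybrid differs in a single coordinate whose distribution equals either $\wgen(n)$ or $\simwgen(n)$, and the distinguisher for that pair that is built by the reduction is polynomial-sized as long as $\wgen$ alone is efficient, which suffices.
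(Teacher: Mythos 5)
Your construction, hybrid argument, and use of smooth-entropy concentration follow the same route as the paper, and your handling of the entropy gap is careful and essentially correct (you are right that one needs the dual smooth max-entropy bound alongside Theorem~\ref{thm:conc}; the paper applies this without comment). However, there is a genuine gap in your indistinguishability argument, precisely at the point you describe as a ``minor secondary point.'' In the $j$-th hybrid step the reduction receives a challenge sample (drawn from either $\wgen(n)$ or $\simwgen(n)$), places it at coordinate $j$, and must then populate coordinates $1,\dots,j-1$ with independent samples from $\simwgen(n)$ and coordinates $j+1,\dots,q(n)$ with samples from $\wgen(n)$. Since $\simwgen$ need not be efficiently sampleable, the reduction cannot produce those first $j-1$ coordinates, so the claim that ``$\wgen$ alone is efficient, which suffices'' does not hold; some of the other coordinates must come from $\simwgen$, and efficiency of $\wgen$ gives you no handle on them.

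The paper resolves this with a non-uniform reduction. A Markov/averaging argument shows that for each sufficiently large $n$ there is a fixed index $j$ and \emph{fixed} strings $z^*_1,\dots,z^*_{j-1},z^*_{j+1},\dots,z^*_{q(n)} \in \Supp(\wgen(n))\cup\Supp(\simwgen(n))$ for which $\cA$ retains advantage at least $\epsilon/2q(n)$ at coordinate $j$; the tuple $(j, z^*_1,\dots,z^*_{j-1},z^*_{j+1},\dots,z^*_{q(n)})$ is hard-coded as classical non-uniform advice, and the reduction need only splice its single challenge into position $j$ and run $\cA$. This is exactly why the paper flags in the Preliminaries that the reduction in Section~\ref{sec:product} is non-uniform. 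Your argument needs the same fix; as written, the indistinguishability step does not go through.
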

Let $\{\wgen(n), \simwgen(n)\}_{n\in\bbN}$ be a Quantum Weak Pseudoentropy Generator with entropy gap greater than $1/n^c$ for some constant $c>0$. By Definition \ref{def:qwpeg}, there exists a polynomial $\len(\cdot)$ such that for all $z_0 \in \Supp(\wgen(n))$ and $ z_1 \in \Supp(\simwgen(n))$, $|z_0|=| z_1| \leq \len(n)$. 
Let $q(n) := n^{c+3}\len(n)^2$.

We construct a pseudoentropy generator $\{\sgen(n), \simsgen(n)\}_{n\in\bbN}$ by simply generating $q(n)$ independent samples of $\wgen(n)$ and $\simwgen(n)$ respectively.
That is,
\begin{itemize}
    \item $\sgen(n):= \wgen(n)^{q(n)}$
        \item $\simsgen(n):= \simwgen(n)^{q(n)}$
\end{itemize}
\begin{claim}
    The distribution ensembles $\{\sgen(n)\}_{n\in\bbN}$ and $\{\simsgen(n)\}_{n\in\bbN}$ satisfy Definition \ref{def:qpeg}.
\end{claim}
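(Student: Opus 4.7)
Plan: The four properties of Definition \ref{def:qpeg} split into two routine checks (bounded length, efficiency) and two substantive ones (indistinguishability, entropy gap). I would verify them in that order.

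Bounded length is immediate: any sample from $\sgen(n)$ or $\simsgen(n)$ is the concatenation of $q(n)$ samples of length at most $\len(n)$, so the total length is bounded by $q(n)\cdot \len(n)$, a fixed polynomial in $n$. Efficiency of $\sgen$ follows since the QPT sampler for $\wgen$ guaranteed by Definition \ref{def:qwpeg} can simply be invoked $q(n)$ times on fresh randomness.

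For indistinguishability, I would run a standard hybrid argument. Define $q(n)+1$ hybrids $H_0,\ldots,H_{q(n)}$, where $H_j$ consists of $j$ independent samples from $\simwgen(n)$ concatenated with $q(n)-j$ independent samples from $\wgen(n)$; then $H_0=\sgen(n)$ and $H_{q(n)}=\simsgen(n)$. Any quantum polynomial-sized distinguisher between $\sgen$ and $\simsgen$ with advantage $\delta(n)$ yields, by averaging over $j$, a distinguisher between some adjacent pair $(H_j,H_{j+1})$ with advantage $\delta(n)/q(n)$. Since $q(n)$ is polynomial, non-negligible $\delta(n)$ would contradict the indistinguishability guarantee of Definition \ref{def:qwpeg}. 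The reduction simulates the non-challenge coordinates using the efficient sampler for $\wgen$ together with non-uniform (classical) advice consisting of samples of $\simwgen(n)$ for the left coordinates; such advice is permitted by the adversary model discussed in the preliminaries and explicitly flagged by the authors for this section.

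The main obstacle is the entropy gap: we must amplify the small ($1/n^c$) \emph{Shannon}-entropy gap of the WPEG into a large ($n^{c'}$) gap between the \emph{smooth min}-entropy of $\simsgen$ and the \emph{smooth max}-entropy of $\sgen$. The plan exploits additivity of Shannon entropy and concentration for product distributions. By additivity, $\Hs(\simsgen)-\Hs(\sgen)=q(n)\cdot(\Hs(\simwgen)-\Hs(\wgen))\ge q(n)/n^c$. To lower bound $\Hs^\epsilon_{\min}(\simsgen)$, apply Theorem \ref{thm:conc} with a negligible $\epsilon(n)$ (e.g.\ $\epsilon(n)=2^{-\sqrt{n}}$), yielding
\[
\Hs^{\epsilon}_{\min}(\simsgen)\;\ge\; q(n)\,\Hs(\simwgen)\;-\;O\!\bigl(\sqrt{q(n)\log(1/\epsilon(n))}\cdot \len(n)\bigr).
\]
For the matching upper bound on $\Hs^\epsilon_{\max}(\sgen)$, I would observe that the surprise random variable $\Hs_{\wgen}(z)=\log(1/\Pr[\wgen=z])$ is bounded in $[0,\len(n)]$ on the support; summing $q(n)$ i.i.d.\ copies and applying Hoeffding shows that the typical set $\{\,(z_1,\ldots,z_{q(n)}):\sum_i \Hs_{\wgen}(z_i)\le q(n)\Hs(\wgen)+\lambda\,\}$ has probability at least $1-e^{-2\lambda^2/(q(n)\len(n)^2)}$, so choosing $\lambda=O(\sqrt{q(n)\log(1/\epsilon(n))}\cdot\len(n))$ forces at most $\epsilon(n)$ probability outside this set and gives
\[
\Hs^{\epsilon}_{\max}(\sgen)\;\le\; q(n)\,\Hs(\wgen)\;+\;O\!\bigl(\sqrt{q(n)\log(1/\epsilon(n))}\cdot \len(n)\bigr).
\]
Subtracting, and plugging in the chosen $q(n)$ polynomially larger than $n^{2c}\cdot\len(n)^2\cdot\log(1/\epsilon(n))$, the signal $q(n)/n^c$ dominates the $O(\sqrt{q\log(1/\epsilon)}\cdot \len)$ concentration error and leaves a polynomial gap $\Hs^\epsilon_{\min}(\simsgen)-\Hs^\epsilon_{\max}(\sgen)\ge n^{c'}$ for some explicit $c'>0$. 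The most delicate step is the max-entropy bound, since Theorem \ref{thm:conc} is stated only for smooth min-entropy; but as sketched, the bounded-surprise Hoeffding argument handles it in a parallel fashion.
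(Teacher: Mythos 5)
Your overall structure mirrors the paper's proof exactly: bounded length and efficiency are checked trivially, indistinguishability via a hybrid argument with non-uniform advice (the paper's reduction likewise hands the simulator fixed samples $z_1^*,\ldots,z_{j-1}^*,z_{j+1}^*,\ldots,z_q^*$ as advice), and the entropy gap via additivity of Shannon entropy plus smooth-entropy concentration.

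There is, however, a genuine gap in your max-entropy argument. You write that ``the surprise random variable $\Hs_{\wgen}(z)=\log(1/\Pr[\wgen=z])$ is bounded in $[0,\len(n)]$ on the support,'' and then invoke Hoeffding with this range bound. That is false in general: $\wgen$ being supported on strings of length at most $\len(n)$ bounds the \emph{alphabet size} by $2^{\len(n)}$, but individual atom probabilities can still be far smaller than $2^{-\len(n)}$ (for a quantum sampler, the measurement probabilities need not be bounded below by $2^{-\poly}$), so the surprise can blow up to $\omega(\len(n))$ and Hoeffding with that range does not apply. This is exactly the subtlety that $\cite{Renner08}$'s AEP handles: the concentration constant in Theorem~\ref{thm:conc} is $\log(3+|\cU|)$, i.e.\ it is a function of the \emph{alphabet size}, and the proof uses a truncation of the low-probability tail (which is what the smoothing parameter $\epsilon$ pays for) rather than a uniform range bound on the surprise. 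The paper sidesteps the issue by treating Theorem~\ref{thm:conc} as also giving the mirror upper bound on $\Hs^\epsilon_{\max}(\sgen)$; as stated in the preliminaries the theorem only asserts the min-entropy lower bound, but the underlying result of Renner does yield both directions with the same $\log(3+|\cU|)$ constant. Your fix should therefore either cite that dual statement or patch the Hoeffding step with a truncation: first remove from $\wgen$ the atoms whose surprise exceeds $\len(n)+\log(q(n)/\epsilon)$ (this removes at most $\epsilon/q(n)$ probability per coordinate, so at most $\epsilon$ over the product), and only then apply Hoeffding to the bounded remainder. With that repair your argument is correct and matches the paper.
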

Note that for all $z_0 \in \Supp(\sgen(n))$ and $z_1 \in \Supp(\simsgen(n))$, $|z_0|=| z_1| \leq q(n)\cdot \len(n)$ and $\sgen(n)$ is efficiently sampleable since $\wgen(n)$ is efficiently sampleable. Only indistinguishability and the entropy gap remain to be shown.\\

\noindent\textbf{Computational Indistinguishability.} There exists a negligible function $\mu_2$ such that for all quantum polynomial-sized adversaries $\cA$, $n \in \mathbb{N}$:
$$\left|\Prr_{z \leftarrow \sgen(n)}\left[1\leftarrow\cA(z)\right] - \Prr_{z \leftarrow \simsgen(n)}\left[1\leftarrow\cA(z)\right]\right| \leq \mu_2(n)$$   
\begin{proof}
    We prove this by contradiction. Suppose there exists an adversary $\cA$ that distinguishes the two distributions with non-negligible advantage $\epsilon$. We use this adversary to build a (non-uniform) reduction to computational indistinguishability of $\wgen$ and $\simwgen$, contradicting the security of the QWPEG. We proceed through a series of hybrids $H_0, H_1, \ldots, H_{q(n)}$. In hybrid $H_j$, the adversary is run on inputs sampled from $D_j := \wgen(n)^{j}\times \simwgen(n)^{q(n)-j}$. By definition, $D_0 = \sgen$ and $D_{q(n)} - \simsgen$ Suppose the adversary distinguishes adjacent hybrids $H_{j-1}$ and $H_{j}$ with advantage $\epsilon_j$. Note that by assumption, the adversary distinguishes $H_0$ and $H_q$ with advantage $\epsilon$ which implies that $\epsilon \leq \sum_j \epsilon_j$. Therefore, for each sufficiently large $n$, there exists a $j$ such that $\epsilon_j \geq \epsilon/q(n)$, i.e:
    \begin{align*}
    	\left|\Prr_{z \leftarrow D_{j-1}}\left[1\leftarrow\cA( z)\right] - \Prr_{z \leftarrow D_j}\left[1\leftarrow\cA( z)\right]\right| \geq \epsilon/q
    \end{align*}
    
    This adversary therefore distinguishes between samples from $\wgen$ and $\simwgen$ at the $j$th position, given appropriate samples at the other positions. While samples can be computed from $\wgen$ efficiently, computing $\simwgen$ is not necessarily efficient. To resolve this we show that for each sufficiently large $n$ there must exist fixed samples for which the adversary distinguishes in the $j$th position, and which the reduction may receive as non-uniform advice. If we parse $z$ as $z_1,\ldots,z_{q(n)}$, by a Markov argument:
    \begin{align*}
    	\Prr_{\substack{z^*_1, \ldots, z^*_{q(n)}} \leftarrow D_j}\left[
    	 \left|\begin{aligned}
    		&\Prr_{z_j \leftarrow \wgen(n)}\left[1\leftarrow\cA(z^*_1, \ldots,  z^*_{j-1}, z_j, z^*_{j+1}, \ldots, z^*_{q(n)})\right]\\
    		&-\Prr_{z_j \leftarrow \simwgen(n)}\left[1\leftarrow\cA(z^*_1, \ldots,  z^*_{j-1}, z_j, z^*_{j+1}, \ldots, z^*_{q(n)})\right]
    	\end{aligned}\right| \geq \epsilon/{2q}\right] \geq \epsilon/{2q}
        \end{align*}
Therefore, for each sufficiently large $n$ there exist $z^*_1, \ldots, z^*_{j-1}, z^*_{j+1}, \ldots z^*_q \in \Supp(\wgen(n)) \cup \Supp(\simwgen(n))$ such that:
 \begin{align*}
	\left|\begin{aligned}
    		&\Prr_{z_j \leftarrow \wgen(n)}\left[1\leftarrow\cA(z^*_1, \ldots,  z^*_{j-1}, z_j, z^*_{j+1}, \ldots, z^*_{q(n)})\right]\\
    		&-\Prr_{z_j \leftarrow \simwgen(n)}\left[1\leftarrow\cA(z^*_1, \ldots,  z^*_{j-1}, z_j, z^*_{j+1}, \ldots, z^*_{q(n)})\right]
    	\end{aligned}\right| \geq \epsilon/{2q}
\end{align*}
 We may now build a reduction $R$ that receives as input $z$ and takes as non-uniform advice $\tau$ defined as:
\[
\tau := j, z^*_1, \ldots, z^*_{j-1}, z^*_{j+1}, \ldots z^*_q
\]
The reduction $R$ on input $z, \tau$ outputs $\cA(z^*_1, \ldots,  z^*_{j-1}, z, z^*_{j+1}, \ldots, z^*_{q(n)})$
and achieves a distinguishing advantage of atleast $\epsilon/2q$, contradicting the security of QWPEG.
\end{proof}
\noindent\textbf{Entropy Gap.} There exists a constant $c'$ such that for $\epsilon(n) = 1/2^{n}$ and sufficiently large $n$:
     $$ \Hs_{\min}^{\epsilon} \left(\simsgen(n)\right) - \Hs^\epsilon_{\max}\left(\sgen(n)\right) \geq n^{c'}$$
\begin{proof}
    Setting $t = q(n), \epsilon = 2^{-n}$ and $\cU = 2^{|\simwgen|} \leq 2^{\len(n)}$, we may apply Theorem \ref{thm:conc} to obtain:
    \[
    \Hs^{\epsilon}_{\min}(\simsgen(n)) \geq  q(n)\cdot \Hs\left(\simwgen(n)\right) - O(\sqrt{qn\len^2})
    \]
Since $\Hs(\simwgen(n)) - \Hs(\wgen(n)) \geq 1/n^c$
\[
    \Hs^{\epsilon}_{\min}(\simsgen(n)) \geq  q(n)\cdot \Hs\left(\wgen(n)\right) + q(n)/n^c - O(\sqrt{qn\len^2})
    \]
    Similarly setting $t = q(n), \epsilon = 2^{-n}$ and $\cU = 2^{|\wgen|} \leq 2^{\len(n)}$, we may apply Theorem \ref{thm:conc} to obtain:
    \[
    \Hs^\epsilon_{\max}(\sgen(n)) \leq  q(n)\cdot \Hs\left(\wgen(n)\right) + O(\sqrt{qn\len^2})
    \]
Taking the difference gives us
\begin{gather*}
    \Hs^{\epsilon}_{\min}(\simsgen(n)) -  \Hs^{\epsilon}_{\max}\left(\sgen(n)\right) \geq q(n)/n^c - O(\sqrt{qn\len^2})
\end{gather*}
Substituting $q(n) = n^{2c+3}\len(n)^2$
\begin{align*}
    \Hs^{\epsilon}_{\min}(\simsgen(n)) -  \Hs^{\epsilon}_{\max}\left(\sgen(n)\right) &\geq n^{c+3}\len(n)^2 - O(n^{c+2}\len(n)^2)\\
    &\geq n^{c+3}\len(n)^2/2
\end{align*}
which implies there exists some constant $c'>0$ such that for sufficiently large $n$
\[
    \Hs^{\epsilon}_{\min}(\simsgen(n)) -  \Hs^{\epsilon}_{\max}\left(\sgen(n)\right)\geq n^{c'}
\]
which concludes the proof.
\end{proof}

\section{Quantum PEGs Imply \Imbalanced \EFI}
\label{sec:imbefi}
In this section, we prove that quantum pseudoentropy generators imply a (non-uniform) variant of EFI, that we define below. 
This definition modifies the standard definition (\defref{efi}) of EFI to allow the algorithm to depend on a nonuniform parameter $s$, and requires the existence of a function $s^*(n)$ such that
computational indistinguishability (resp. statistical distance) holds when $s \leq s^*(n)$ (resp. $s \geq s^*(n)$).

\begin{definition}[$s^*$-\Imbalanced EFI]
\label{def:nuefi}
Let $s^*(\cdot)$ denote a function.
An $s^*$-non-uniform $\EFI$ is a QPT algorithm $\EFI_{\prm}(1^\n,b)\rightarrow \rho_b$ that obtains classical parameter-dependent advice string $\prm$, and on input $b \in \bin$ and security parameter $\n$, outputs a (potentially mixed) quantum state such that:
\begin{enumerate}
\item {\bf Computational Indistinguishability.} There exists a negligible function $\mu(\cdot)$ such that for all quantum polynomial-sized circuits $\cA$,  for
large enough $n \in \mathbb{N}$ and every $\prm \leq \prmlimit(n)$,
\[
    \left| \Pr[1\leftarrow \cA(\EFI_s(1^\n, 0))] - \Pr[1\leftarrow \cA(\EFI_s(1^\n, 1))]\right| \leq \mu(\n)
\]
\item  {\bf Statistical Distance.} There exists a negligible function $\delta(\cdot)$ such 
for large enough $n \in \mathbb{N}$ and every $\prm \geq \prmlimit(n)$, 
    $$\TD(\EFI_{\prm}(1^n, 0), \EFI_{\prm}(1^n, 1)) \geq 1 - \delta(n)$$
\end{enumerate}
\end{definition}

\begin{theorem}
\label{thm:QPEGtoEFI}
There exists a function $s^*(\cdot)$ such that
\begin{itemize}
\item There exists a polynomial $p$ such that for all $n \in \bbN$, $|s^*(n)| \leq \log p(n)$ and
\item Quantum pseudoentropy generators (Definition \ref{def:qpeg}) imply $s^*$-\imbalanced \EFI (Definition \ref{def:nuefi}). 
\end{itemize}
\end{theorem}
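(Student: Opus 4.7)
The plan is to build the imbalanced EFI by hashing the QPEG output to a length governed by the advice string $s$, and then argue each side of the threshold separately using the entropy gap supplied by Definition~\ref{def:qpeg}. Concretely, let $\{\sgen(n), \simsgen(n)\}_{n \in \bbN}$ be a QPEG with the entropy gap $\Hs^\epsilon_{\min}(\simsgen(n)) - \Hs^\epsilon_{\max}(\sgen(n)) \geq n^c$ for some explicit $c>0$ and $\epsilon(n) = 2^{-n}$, and let $p(\cdot)$ bound the output length of $\sgen(n)$. Let $\cH = \{h : \{0,1\}^{p(n)} \to \{0,1\}^{p(n)}\}$ be a universal hash family with seed length $\ell = \ell(n)$. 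For an integer $s \in [0, p(n)]$, define
\[
    \EFI_{s}(1^n, 0) : \text{ sample } z \leftarrow \sgen(n),\ h \leftarrow \{0,1\}^{\ell};\text{ output } (h, h(z)|_s),
\]
\[
    \EFI_{s}(1^n, 1) : \text{ sample } h \leftarrow \{0,1\}^{\ell},\ u \leftarrow \{0,1\}^{s};\text{ output } (h, u),
\]
where $h(z)|_s$ denotes truncating $h(z)$ to its first $s$ bits. Both are QPT since $\sgen$ is efficiently sampleable. Since $s \leq p(n)$, we have $|s| \leq \lceil \log(p(n)+1) \rceil$, satisfying the size constraint. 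Set the threshold
\[
    \prmlimit(n) := \bigl\lceil \Hs^\epsilon_{\max}(\sgen(n)) + n \bigr\rceil.
\]

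\textbf{Statistical distance for $s \geq s^*(n)$.} Let $\widetilde{Z}$ be a random variable that is $\epsilon$-close to $\sgen(n)$ and has max-entropy $\Hs^\epsilon_{\max}(\sgen(n))$. Then $(h, h(\widetilde{Z})|_s)$ is supported on a set of size at most $2^\ell \cdot 2^{\Hs^\epsilon_{\max}(\sgen(n))}$, while the uniform distribution $(h, U_s)$ is supported on $2^\ell \cdot 2^s$ strings. Because $s - \Hs^\epsilon_{\max}(\sgen(n)) \geq n$, the fraction of the uniform support touched by $(h, h(\widetilde{Z})|_s)$ is at most $2^{-n}$, so the statistical distance between $(h, h(\widetilde{Z})|_s)$ and $(h, U_s)$ is at least $1 - 2^{-n}$. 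Adding back the $\epsilon = 2^{-n}$ smoothing yields statistical distance $\geq 1 - 2 \cdot 2^{-n} = 1 - \negl(n)$ between $\EFI_s(1^n,0)$ and $\EFI_s(1^n,1)$.

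\textbf{Computational indistinguishability for $s \leq s^*(n)$.} I would proceed via a hybrid. For any such $s$, replace the sample $z \leftarrow \sgen(n)$ inside $\EFI_{s}(1^n,0)$ by a sample $z' \leftarrow \simsgen(n)$; this step is computationally indistinguishable by the QPEG indistinguishability property. Next, observe that
\[
    s \leq \prmlimit(n) = \Hs^\epsilon_{\max}(\sgen(n)) + n + 1 \leq \Hs^\epsilon_{\min}(\simsgen(n)) - n^c + n + 1,
\]
which for large $n$ gives $s \leq \Hs^\epsilon_{\min}(\simsgen(n)) - 2\log(1/\epsilon') - 1$ for $\epsilon'(n) = 2^{-n/2}$. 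The Leftover Hash Lemma (Lemma~\ref{thm:LHLforSmoothEntropy}) then shows that $(h, h(\simsgen(n))|_s)$ is within statistical distance $\epsilon + \epsilon' = \negl(n)$ of $(h, U_s)$, which is exactly $\EFI_s(1^n, 1)$. Composing these two steps gives computational indistinguishability of $\EFI_s(1^n,0)$ and $\EFI_s(1^n,1)$ uniformly in $s \leq s^*(n)$, as required.

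\textbf{Main obstacle.} The essential difficulty is aligning the parameters so that a single universal threshold $s^*(n)$ simultaneously witnesses both properties; this is what the entropy gap $n^c$ (as opposed to a mere $1/\poly$ gap, as was the case for the WPEG) is purchased for. The $n^c$ slack in Definition~\ref{def:qpeg} is substantially larger than the $O(n)$ smoothing and LHL slack, so the two regimes cleanly overlap around $s^*(n)$. Beyond this, the only bookkeeping is to verify that the smoothed entropies are well-defined integers and that the hash family can be sampled efficiently for output length $s$, both of which are routine.
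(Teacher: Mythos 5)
Your construction, threshold, and the two-sided argument (LHL on the simulated side for computational indistinguishability; support-counting against the smooth max-entropy for statistical distance) match the paper's proof, so the approach is right. There is one parameter issue worth flagging. You set $s^*(n) = \lceil \Hs^\epsilon_{\max}(\sgen(n)) + n\rceil$ and in the LHL step use $\epsilon' = 2^{-n/2}$, so the entropy budget you burn is $2\log(1/\epsilon') = n$. Tracing the inequalities, you get $s \leq s^*(n) \leq \Hs^\epsilon_{\min}(\simsgen(n)) - n^c + n + 1$, and you need this to sit below $\Hs^\epsilon_{\min}(\simsgen(n)) - n$, i.e. you need $n^c \geq 2n + 2$. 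That forces $c > 1$, but Definition~\ref{def:qpeg} only promises \emph{some} explicit $c > 0$. Your concluding remark that ``the $n^c$ slack is substantially larger than the $O(n)$ smoothing and LHL slack'' is the precise place where this assumption slips in unexamined. The paper avoids this by making the offset itself scale with the gap: it takes the threshold offset $n^c/2$ and the LHL error parameter $2^{-n^c/4}$, so the budget required ($n^c/2$) and the budget available ($n^c$ minus the offset $n^c/2$) both scale as $\Theta(n^c)$ and cleanly balance for every $c > 0$. The fix to your proof is mechanical — replace the $+n$ offset with $+n^c/2$ and set $\epsilon' = 2^{-n^c/4}$ (making the same change in the statistical-distance calculation) — but as written the argument is only valid under the unstated assumption $c > 1$.
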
 
Let $\{\sgen(n), \simsgen(n)\}_{n\in\bbN}$ be a QPEG (\defref{qpeg}). As a result, there exists a constant $c$ and some negligible smoothing parameter $\epsilon$ such that the entropy gap is greater than $n^c$. Additionally, there exists a polynomial $\len(\cdot)$ such that the length of the generator output is upper bounded by $\len(n)$. 
Finally, let $\ell$ be the key length of a universal hash function from $\bin^{\len(n)}$ to $\bin^{\len(n)}$. Let $h$ be sampled uniformly from $\bin^\ell$. We define the EFI distributions as follows.
\begin{itemize}
    \item $\EFI_{\prm}(1^n, 0) := h, h(\sgen)_{\prm}$
    \item $\EFI_{\prm}(1^n, 1) := h, U_{\prm}$
\end{itemize}
We define $\prmlimit(n) :=  \Hs_{\max}(\sgen) + \frac{n^c}{2}$. Note that $\prmlimit(n) \leq p(n):=L(n) + \frac{n^c}{2}$.

We will now show that $\EFI_\prm$ is an $s^*$-\imbalanced \EFI (Definition \ref{def:nuefi}).
It is easy to see that $\EFI_{\prm}(1^n, 0)$ and $\EFI_{\prm}(1^n, 1)$ are efficiently sampleable for all $\prm, n \in \bbN$.
The proof of the theorem follows from the two lemmas below.
\begin{lemma} (Computational Indistinguishability.) There exists a negligible function $\mu(\cdot)$ such that for all quantum polynomial-sized adversaries $\cA$,  for
every $n \in \mathbb{N}$ and every $\prm \leq \prmlimit(n)$,
\[
    \left|\Prr_{z \leftarrow \EFI_{\prm}(1^n, 0)}[\cA(z) = 1] - \Prr_{z \leftarrow \EFI_{\prm}(1^n, 1)}[\cA(z) = 1]\right|\leq \mu(n)
\]
\end{lemma}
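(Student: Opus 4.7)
The plan is a standard hybrid argument combining computational indistinguishability of the QPEG with the Leftover Hash Lemma applied to the simulated distribution.

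First I would reduce to the simulated distribution. Consider the intermediate distribution $(h, h(\simsgen(n))_s)$, where $h$ is sampled uniformly from $\bin^\ell$. Since truncation and hashing are efficient classical operations, any quantum polynomial-sized distinguisher $\cA$ between $(h, h(\sgen)_s)$ and $(h, h(\simsgen)_s)$ directly yields a distinguisher $\cA'$ between $\sgen(n)$ and $\simsgen(n)$: on input $z$, sample $h \sample \bin^\ell$ and output $\cA(h, h(z)_s)$. By the computational indistinguishability guarantee of the QPEG (Definition \ref{def:qpeg}), this advantage is bounded by some negligible function $\mu_{\mathsf{peg}}(n)$.

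Next I would apply the Leftover Hash Lemma to bound the statistical distance between $(h, h(\simsgen)_s)$ and $(h, U_s)$. By the entropy gap property of the QPEG, there exists a negligible function $\epsilon(\cdot)$ such that
\[
\Hs^{\epsilon(n)}_{\min}(\simsgen(n)) \geq \Hs^{\epsilon(n)}_{\max}(\sgen(n)) + n^c = s^*(n) + \frac{n^c}{2}.
\]
Thus for every $s \leq s^*(n)$, we have $\Hs^{\epsilon(n)}_{\min}(\simsgen(n)) \geq s + n^c/2$. Setting $\epsilon' = 2^{-n^c/4}$ in Lemma \ref{thm:LHLforSmoothEntropy} (so that $2\log(1/\epsilon') = n^c/2$), since $h$ is a seed for a universal hash family, the output length $s$ satisfies $s \leq \Hs^{\epsilon(n)}_{\min}(\simsgen(n)) - 2\log(1/\epsilon')$. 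We conclude
\[
\SD\!\left( (h, h(\simsgen(n))_s),\ (h, U_s) \right) \leq \epsilon(n) + 2^{-n^c/4},
\]
which is negligible in $n$.

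Combining the two bounds by the triangle inequality, for every $s \leq s^*(n)$ and every quantum polynomial-sized $\cA$,
\[
\left| \Prr_{z \leftarrow \EFI_s(1^n,0)}[\cA(z)=1] - \Prr_{z \leftarrow \EFI_s(1^n,1)}[\cA(z)=1] \right| \leq \mu_{\mathsf{peg}}(n) + \epsilon(n) + 2^{-n^c/4} := \mu(n),
\]
which is negligible, as required. The whole argument is routine once the LHL parameters are set correctly; the only mild subtlety is ensuring the slack $n^c/2$ between $s^*(n)$ and $\Hs^{\epsilon}_{\min}(\simsgen)$ is large enough to absorb the $2\log(1/\epsilon')$ loss in the LHL, which is exactly why $s^*(n)$ was defined with a $n^c/2$ gap below the full entropy gap of $n^c$.
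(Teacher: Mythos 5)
Your proof is correct and follows essentially the same route as the paper: hybridize through $(h, h(\simsgen(n))_s)$, invoke the QPEG's computational indistinguishability to handle the first hop, and apply the Leftover Hash Lemma (with a $2^{-n^c/4}$ slack parameter absorbing half the $n^c$ gap) to the simulated distribution for the second hop, finishing by triangle inequality. One small caveat you share with the paper: you write $\Hs^{\epsilon}_{\max}(\sgen) + n^c = s^*(n) + n^c/2$, which presumes $s^*(n) = \Hs^{\epsilon}_{\max}(\sgen) + n^c/2$, whereas the paper's stated definition uses the \emph{unsmoothed} $\Hs_{\max}(\sgen)$; since $\Hs^{\epsilon}_{\max} \leq \Hs_{\max}$, this is the inequality direction one actually needs, and the paper's own proof invokes the resulting bound $\Hs^{\epsilon}_{\min}(\simsgen) \geq s^*(n) + n^c/2$ with the same implicit identification.
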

\begin{proof}
Fix some quantum polynomial-sized adversary $\cA$. Consider the distribution $\{h, h(\simsgen(n))_{\prm}\}$. Since $\sgen$ and $\simsgen$ are computationally indistinguishable, there exists some negligible function $\mu'(\cdot)$ such that 
\[
    \left|\Prr_{\substack{h \leftarrow \bin^\ell\\x\leftarrow \sgen(n)}}[\cA(h, h(x)_{\prm}) = 1] - \Prr_{\substack{h \leftarrow \bin^\ell\\x\leftarrow \simsgen(n)}}[\cA(h, h(x)_{\prm}) = 1]\right|\leq \mu(n)
\]
Additionally, since for sufficiently large $n$ $$\Hs^\epsilon_{\min}(\simsgen(n)) \geq \prmlimit(n) + n^c/2 \geq \prm + n^c/2$$ by the Leftover Hash Lemma (Theorem \ref{thm:LHLforSmoothEntropy}) $$\SD\left(\{h, h(\simsgen(n))_{\prm}\}, \{h, U_{\prm}\}\right) \leq \frac{1}{2^{n^c/4}} + \epsilon(n)$$
Putting both together, 
\[
    \left|\Prr_{\substack{h \leftarrow \bin^\ell\\x\leftarrow \sgen(n)}}[\cA(h, h(x)_{\prm}) = 1] - \Prr_{\substack{h \leftarrow \bin^\ell\\x\leftarrow \bin^\prm}}[\cA(h, U_\prm) = 1]\right|\leq \mu(n) + \frac{1}{2^{n^c/4}} + \epsilon(n)
\]
which is negligible.
\end{proof}

\begin{lemma} (Statistical Distance.) There exists a negligible function $\mu(\cdot)$ such 
for every $n \in \mathbb{N}$ and every $\prm \geq \prmlimit(n)$, 
    $$\TD(\EFI_{\prm}(1^n, 0), \EFI_{\prm}(1^n, 1)) \geq 1 - \mu(n)$$
\end{lemma}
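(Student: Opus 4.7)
The proof is a counting argument combined with the smooth max-entropy bound. The intuition is that when $s \geq s^*(n) = \Hs_{\max}(\sgen(n)) + n^c/2$, the support of $h(\sgen(n))_s$ (for any fixed hash key $h$) fits inside a negligibly-small fraction of $\bin^s$, whereas $U_s$ is spread uniformly across all of $\bin^s$. The set-membership test ``is the output in this small set?'' then witnesses statistical distance close to $1$.

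Concretely, I first invoke the definition of smooth max-entropy (the $\mathsf{inf}$ analogue of the smooth min-entropy definition given in the preliminaries) to obtain a distribution $X'_n$ that is $\epsilon(n)$-close to $\sgen(n)$ and satisfies $|\Supp(X'_n)| \leq 2^{\Hs^{\epsilon(n)}_{\max}(\sgen(n))} \leq 2^{\Hs_{\max}(\sgen(n))}$. For each $h \in \bin^\ell$, define the image set $T_h := \{h(x)_s : x \in \Supp(X'_n)\} \subseteq \bin^s$. By construction $|T_h| \leq |\Supp(X'_n)| \leq 2^{\Hs_{\max}(\sgen(n))}$, and so $\Pr[h(X'_n)_s \in T_h] = 1$, while
\[
    \Pr_{y \leftarrow U_s}[y \in T_h] \;\leq\; \frac{|T_h|}{2^s} \;\leq\; 2^{\Hs_{\max}(\sgen(n)) - s} \;\leq\; 2^{-n^c/2}
\]
whenever $s \geq s^*(n)$. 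Taking the (inefficient) distinguisher that accepts on $(h,y)$ iff $y \in T_h$ yields $\SD\bigl((h, h(X'_n)_s),\,(h, U_s)\bigr) \geq 1 - 2^{-n^c/2}$.

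To conclude, data processing for statistical distance applied to the (possibly randomized) map $(h,x) \mapsto (h, h(x)_s)$ gives
\[
    \SD\bigl((h, h(X'_n)_s),\,(h, h(\sgen(n))_s)\bigr) \;\leq\; \SD(X'_n, \sgen(n)) \;\leq\; \epsilon(n).
\]
The triangle inequality then yields
\[
    \TD\bigl(\EFI_s(1^n, 0),\,\EFI_s(1^n, 1)\bigr) \;\geq\; 1 - 2^{-n^c/2} - \epsilon(n) \;\geq\; 1 - \mu(n)
\]
for a suitable negligible $\mu$, for all sufficiently large $n$ and every $s \geq s^*(n)$, as required. There is no substantial obstacle here; the only minor subtlety is correctly passing the smoothing error $\epsilon(n)$ from the definition of $\Hs^\epsilon_{\max}$ into the final statistical distance via the triangle inequality.
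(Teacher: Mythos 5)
Your proof is correct and follows essentially the same route as the paper. The paper likewise passes to an $\epsilon(n)$-close $X'$ with $|\Supp(X')|\leq 2^{\Hs^{\epsilon}_{\max}(\sgen(n))}$, argues that (for each hash key $h$) the image $h(X')_s$ lies in a set of size at most $2^{s-n^c/2}$ so a uniform string avoids it except with probability $2^{-n^c/2}$, and then adds the smoothing error via the triangle inequality — you merely phrase the lower bound via an explicit set-membership distinguisher rather than summing $\bigl|\Pr[h(X')=z]-2^{-s}\bigr|$ over $z$, which is the same counting.
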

\begin{proof} 
Since the output of $\EFI$ is classical, trace distance is equivalent to statistical distance. We use the following sub-claim.
\begin{subclaim}
    Let $X$ be a random variable such that $\Hs^\epsilon_{\max}(X) \leq v$, and let $d$ be the keylength of a universal hash function from $\bin^{|X|}$ to $\bin^{v+t}$ for some $t\geq 0$. Let $h$ be sampled uniformly from $\bin^d$. Then statistical distance between the distributions 
    \[
        \{h, h(X)\} \text{ and } \{h, U_{v+t}\}
    \]
    is atleast $1 - 2^{-t} - \epsilon$.
\end{subclaim}
\begin{proof}
    Since $\Hs^\epsilon_{\max}(X) \leq v$, there must exist a random variable $X'$ such that $
    \SD(X,X') \leq \epsilon$ and $\Hs_{\max}(X') \leq v$. This means that for all $x \in X'$
    \begin{gather*}
        \Hs_{X'}(x) \leq v\\
        \implies \Prr_{X'}(x) \geq 1/2^{v}\\
        \implies |\Supp(X')| \leq 2^{v}
    \end{gather*}
    Now, 
    \[
    SD(\{h,h(X')\}, \{h, U_{v+t}\}) = \bbE_h[\SD(h(X'), U_{v+t})]
    \]
    For any $h$, $\SD(h(X'), U_{v+t})$ may be written as
    \begin{align*}
        \SD(h(X'), U_{v+t}) &= \sum_{z\in\bin^{v+t}} \left|\Prr_{x\leftarrow X'}[h(x) = z] - \frac{1}{2^{v+t}}\right|\\
        & \begin{multlined}
        =\sum_{z\in\Supp(h(X'))} \left|\Prr_{x\leftarrow X'}[h(x) = z] - \frac{1}{2^{s+n}}\right| \\+ \sum_{z\in \bin^{v+t}\setminus\Supp(h(X'))} \left|\Prr_{x\leftarrow X'}[h(x) = z] - \frac{1}{2^{v+t}}\right|
        \end{multlined}\\
        &\geq \sum_{z\in \bin^{v+t}\setminus\Supp(h(X'))} \left|\Prr_{x\leftarrow X'}[h(x) = z] - \frac{1}{2^{v+t}}\right|\\
        &\geq \sum_{z\in \bin^{s+n}\setminus\Supp(h(X'))} \frac{1}{2^{v+t}}\\
         &\geq \frac{2^{v+t} - \Supp(h(X'))}{2^{v+t}}\\
        &\geq \frac{2^{v+t} - \Supp(X')}{2^{v+t}}\\
        &\geq 1 - 2^{-t}
    \end{align*}
    Therefore 
    \[
        \SD(\{h,h(X')\}, \{h, U_{v+t}\}) \geq 1 - 2^{-t}
    \]
    By the triangle inequality
    \[
        \SD(\{h,h(X)\}, \{h, U_{v+t}\}) \geq 1 - 2^{-t} - \epsilon
    \]
    which concludes the proof of the subclaim.
\end{proof}
Note $\prm-n^c/2 \geq \prmlimit(n)-n^c/2 \geq \Hs^\epsilon_{\max}(\sgen(n))$. Therefore, setting $\sgen(n)$ as $X$, $\prm(n) - n^{c}/2$ as $v$, and $n^c/2$ as $t$ in the above claim, the statistical distance between $\{h,h(\sgen(n))_{\prm}\}$ and $\{h, U_{\prm}\}$ is atleast $1 - \epsilon(n) - 2^{-n^c/2}$ which concludes the proof of the claim.
    \end{proof}
\begin{corollary}
    \label{cor:nuefi}
    Assume the existence of pure one-way state generators according \defref{owsg}. Then there exists a function $\prmlimit(n)$ such that 
    \begin{itemize}
        \item There exists a polynomial $p$ such that for all $n \in \bbN$, $|s^*(n)| \leq \log p(n)$ and
        \item There exists a $s^*$-\imbalanced \EFI (Definition \ref{def:nuefi}). 
\end{itemize}
\end{corollary}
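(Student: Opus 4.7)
The plan is to invoke the chain of implications built up in Sections~\ref{sec:owsgowp} through~\ref{sec:imbefi} as a sequence of black-box reductions. Starting from a pure one-way state generator as per Definition~\ref{def:owsg}, I would first apply Theorem~\ref{thm:five-two} to obtain a one-way puzzle as in Definition~\ref{def:owp}. Next, Theorem~\ref{thm:OWPtoQWPEG} converts this one-way puzzle into a quantum weak pseudoentropy generator (Definition~\ref{def:qwpeg}). Then Theorem~\ref{thm:QWPEGtoQPEG}, which amplifies the weak entropy gap via parallel repetition and converts Shannon-entropy bounds into smooth min/max-entropy bounds via Theorem~\ref{thm:conc}, produces a quantum pseudoentropy generator as in Definition~\ref{def:qpeg}. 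Finally, I would apply Theorem~\ref{thm:QPEGtoEFI} to the resulting QPEG to obtain the desired $s^*$-imbalanced EFI according to Definition~\ref{def:nuefi}.

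For the bound on $|s^*(n)|$, I would simply note that Theorem~\ref{thm:QPEGtoEFI} already delivers a function $s^*$ together with a polynomial $p$ satisfying $|s^*(n)| \leq \log p(n)$ for all $n \in \bbN$. Hence inheriting this $s^*$ and this $p$ from the last link of the chain immediately establishes both bullet points of the corollary.

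There is no substantive obstacle here, since each of the four invoked theorems is stated and proved (or will be proved) elsewhere in the paper. The only thing to verify is that the objects produced by each step satisfy the precise input hypotheses of the next step, which follows immediately from the matching definitions: Theorem~\ref{thm:five-two} outputs a one-way puzzle satisfying Definition~\ref{def:owp} verbatim, which is the exact input to Theorem~\ref{thm:OWPtoQWPEG}, and similarly for the remaining links. Thus the proof is essentially a one-line composition of existing results, and no new technical ideas are required.
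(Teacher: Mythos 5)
Your proposal matches the paper's proof exactly: the paper states that the corollary follows immediately from Theorems~\ref{thm:five-two}, \ref{thm:OWPtoQWPEG}, \ref{thm:QWPEGtoQPEG} and \ref{thm:QPEGtoEFI}, which is precisely the chain of black-box reductions you invoke, with the bound on $|s^*(n)|$ inherited from Theorem~\ref{thm:QPEGtoEFI} as you observe.
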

\begin{proof}
    This follows immediately from Theorems \ref{thm:five-two}, \ref{thm:OWPtoQWPEG}, \ref{thm:QWPEGtoQPEG} and \ref{thm:QPEGtoEFI}.
\end{proof}

\section{Imbalanced EFI Imply Commitments}
\label{sec:com}
In this section, we prove that imbalanced EFI imply non-uniform commitments. First, we state the definitions of quantum commitments that we work with.

\subsection{Defining Commitments}
\label{sec:com-def}

\subsubsection{Quantum Bit Commitments}
A non-interactive quantum bit commitment is defined by a commitment unitary $\mathsf{Com}$, and has the following syntax. 
\begin{itemize}
\item {\bf Commit Phase.} On input bit $b$, the committer initializes a single bit register $\mathsf{M}$ to $b$ and $\ell(n)$-qubit auxiliary register $\mathsf{W}$ to $\ket{0}^{\ell(n)}$. The committer then applies a unitary $\Com$ to registers $\mathsf{M}, \mathsf{W}$ and writes the resulting state on registers $(\mathsf{C}, \mathsf{D})$, where $\mathsf{C}$ denotes the ``commit'' register, that is sent to the receiver.
\item {\bf Decommit Phase.}
The committer decommits by sending the $\mathsf{D}$ register. The receiver applies $\Com^\dagger$ to
the pair $(\mathsf{C}, \mathsf{D})$, obtaining $(\mathsf{M}, \mathsf{W})$. 
It verifies the decommitment by checking that $\mathsf{W}$ is in the state $\ket{0}^{\ell(n)}$ via a projective measurement. If the measurement does not succeed, it outputs $\bot$. Otherwise, it outputs the classical bit obtained by measuring $\mathsf{M}$.
\end{itemize}
We will require these bit commitments to satisfy the following properties. \newline

\noindent {\bf Completeness.}
The (honest) receiver's output at the end of the decommit phase equals the (honest) committer's input.\\

\noindent {\bf Collapse Binding for Bit Commitments.}
For the case of commitments to classical bits, it will sometimes be convenient to work with collapse binding~\cite{C:Unruh14,STOC:GJMZ23}, which intuitively says that an adversary cannot detect whether or not their opened message was measured.
Formally, we consider the following experiment parameterized by a (malicious) committer $\cA$ and security parameter $n$. \newline

\noindent \underline{$\mathsf{Expmt}\text{-}\mathsf{Binding}_{\cA,n}^{\Com}$}:
\begin{itemize}
\item $\mathsf{\cA}$ (arbitrarily) prepares and outputs registers $(\mathsf{C}, \mathsf{D})$ to a challenger.
\item {\bf Validity Check.} The challenger applies $\mathsf{Com}^{\dagger}$ to $(\mathsf{C}, \mathsf{D})$ to obtain $(\mathsf{M}, \mathsf{W})$, then projects $\mathsf{W}$ onto $\ket{0}^{\ell(n)}$, i.e. measures $\{\ket{0^{\ell(n)}}\bra{0^{\ell(n)}}, \mathbb{I} - \ket{0^{\ell(n)}}\bra{0^{\ell(n)}}\}$ on $\mathsf{W}$. 
If the measurement rejects, the challenger sets $b' \leftarrow \{0,1\}$ and ends the experiment. If not, the experiment continues.
\item The challenger samples $b \leftarrow \{0,1\}$ and:
\begin{itemize}
    \item If $b = 0$, the challenger does nothing.
    \item If $b = 1$, the challenger measures $\mathsf{M}$ in the standard basis. 
\end{itemize}
\item 
Finally, the challenger applies $\mathsf{Com}$ to $(\mathsf{M}, \mathsf{W})$
    to obtain $(\mathsf{C}, \mathsf{D})$, and returns $\mathsf{D}$ to $\cA$. 
\item Set $b'$ to equal the bit output by $\cA$.
\end{itemize}
We define $\mathsf{Adv}_{\Com,\cA,n}^{\mathsf{Expmt}\text{-}\mathsf{Binding}}$ as the probability that $b' = b$ in $\mathsf{Expmt}\text{-}\mathsf{Binding}_{\cA,n}^{\Com}$.

\begin{definition}[Binding for Quantum Bit Commitments~\cite{C:Unruh14,STOC:GJMZ23}]
\label{def:binding-bitcom}
A non-interactive quantum bit commitment satisfies statistical (resp., computational) binding if for every unbounded (resp., quantum polynomial-sized) adversary $\cA = \{\cA_{n}\}_{n \in \mathbb{N}}$, there exists a negligible function $\nu(\cdot)$ such that 
$\mathsf{Adv}_{\Com,\cA,n}^{\mathsf{Expmt}\text{-}\mathsf{Binding}} \leq \frac{1}{2} + \nu(n)$.
\end{definition}

\noindent {\bf Hiding for Quantum Bit Commitments.}
We consider the following experiment parameterized by a (malicious) receiver $\cA$ and security parameter $n$. Intuitively, this says that an adversary cannot tell whether a commitment is to zero or one. \newline

\noindent \underline{$\mathsf{Expmt}\text{-}\mathsf{Hiding}_{\cA,n}^{\Com}$}:
\begin{itemize}

    \item The challenger samples $b \leftarrow \{0,1\}$, then sets register $\mathsf{M}$ to the bit $b$.
    It initializes $\mathsf{W}$ to $\ket{0^{\ell(n)}}$. It then applies $\mathsf{Com}$ to registers $(\mathsf{M}, \mathsf{W})$ writing the result on registers $(\mathsf{C}, \mathsf{D})$. It sends $\mathsf{C}$ to $\cA$.
    \item Denote the output of $\cA$ by $b'$.
\end{itemize}
We define $\mathsf{Adv}_{\Com,\cA,n}^{\mathsf{Expmt}\text{-}\mathsf{Hiding}}$ as the probability that $b' = b$ in $\mathsf{Expmt}\text{-}\mathsf{Hiding}_{\cA,n}^{\Com}$.

\begin{definition}[Hiding for Non-interactive Quantum Bit Commitments]
\label{def:hiding-com}
A non-interactive quantum bit commitment satisfies statistical (resp., computational) hiding if for every unbounded (resp., quantum polynomial-sized) adversary $\cA = \{\cA_{n}\}_{n \in \mathbb{N}}$, there exists a negligible function $\nu(\cdot)$ such that 
$\mathsf{Adv}_{\Com,\cA,n}^{\mathsf{Expmt}\text{-}\mathsf{Hiding}} \leq \frac{1}{2} + \nu(n)$.
\end{definition}

\subsubsection{Imbalanced Commitments}
We will also consider constructions of commitments parameterized by a classical security-parameter dependent advice string $\prm$, with the corresponding unitary denoted by $\mathsf{Com}_\prm$.
We will require these to satisfy completeness for all values of the advice string $\prm$.
Furthermore, for a function $s^*$, we define notions of left-$s^*$-binding, right-$s^*$-binding, left-$s^*$-hiding and right-$s^*$-hiding.

Intuitively, left (resp. right) $s^*$-binding implies that binding holds for all values of advice $s \leq s^*$ (resp. $s \geq s^*$); and similarly for hiding.

\begin{definition}[Imbalanced Binding for Quantum Bit Commitments]
\label{def:imb-binding}
Let $s^*(\cdot)$ denote a function.
A non-interactive quantum bit commitment satisfies {\bf left-}$s^*$-statistical (respectively, computational) collapse binding if for every unbounded (respectively, quantum polynomial-sized) adversary $\cA = \{\cA_{n}\}_{n \in \mathbb{N}}$, there exists a negligible function $\nu(\cdot)$ such that for every $n \in \bbN$ and every \underline{$s \leq s^*(n)$},
$\mathsf{Adv}_{\Com_s,\cA,n}^{\mathsf{Expmt}\text{-}\mathsf{Binding}} \leq \frac{1}{2} + \nu(n)$.

It satisfies {\bf right-}$s^*$-statistical (respectively, computational) collapse binding if for every unbounded (respectively, quantum polynomial-sized) adversary $\cA = \{\cA_{n}\}_{n \in \mathbb{N}}$, there exists a negligible function $\nu(\cdot)$ such that for every $n \in \bbN$ and every \underline{$s \geq s^*(n)$},
$\mathsf{Adv}_{\Com_s,\cA,n}^{\mathsf{Expmt}\text{-}\mathsf{Binding}} \leq \frac{1}{2} + \nu(n)$.
\end{definition}

\begin{definition}[Imbalanced Hiding for Quantum Bit Commitments]
\label{def:imb-hiding}
Let $s^*(\cdot)$ denote a function.
A non-interactive quantum bit commitment satisfies {\bf left-}$s^*$-statistical (respectively, computational) hiding if for every unbounded (respectively, quantum polynomial-sized) adversary $\cA = \{\cA_{n}\}_{n \in \mathbb{N}}$, there exists a negligible function $\nu(\cdot)$ such that for every $n \in \bbN$ and every \underline{$s \leq s^*(n)$},
$\mathsf{Adv}_{\Com_s,\cA,n}^{\mathsf{Expmt}\text{-}\mathsf{Hiding}} \leq \frac{1}{2} + \nu(n)$.

It satisfies {\bf right-}$s^*$-statistical (respectively, computational) hiding if for every unbounded (respectively, quantum polynomial-sized) adversary $\cA = \{\cA_{n}\}_{n \in \mathbb{N}}$, there exists a negligible function $\nu(\cdot)$ such that for every $n \in \bbN$ and every \underline{$s \geq s^*(n)$},
$\mathsf{Adv}_{\Com_s,\cA,n}^{\mathsf{Expmt}\text{-}\mathsf{Hiding}} \leq \frac{1}{2} + \nu(n)$.
\end{definition}

\subsection{Imbalanced EFI imply Imbalanced Commitments}

We import the following lemmas from prior work: while these were proven in the uniform setting, we observe that they also carry over to the imbalanced setting with similar proofs.
\begin{lemma}(Imported, rephrased) ~\cite{AC:Yan21,AC:Yan22}
\label{lem:yan}%
    For every function $s^*(\cdot)$, an $s^*$-\imbalanced EFI satisfying \defref{nuefi} implies an \imbalanced quantum bit commitment satisfying right-$s^*$-statistical collapse binding according to \defref{imb-binding} and left-$s^*$-computational hiding according to \defref{imb-hiding}.
\end{lemma}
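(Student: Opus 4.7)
The plan is to apply the EFI-to-commitment construction of Yan pointwise in the advice parameter $s$. Concretely, for each $s$, the $s^*$-\imbalanced EFI gives two efficiently sampleable mixed states $\EFI_s(1^n,0)$ and $\EFI_s(1^n,1)$. I will purify the sampling procedure: take the QPT circuit for $\EFI_s$, run it coherently on input $b$ with a register of ancillas initialized to $\ket{0^{\ell(n)}}$, and collect the output as a joint pure state across $(\mathsf{C},\mathsf{D})$, where $\mathsf{C}$ holds the actual output qubits of $\EFI_s$ and $\mathsf{D}$ holds the ancilla workspace. Call the resulting unitary $U_s$ and define $\Com_s := U_s$. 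Completeness is then immediate: the receiver applies $\Com_s^\dagger$ to $(\mathsf{C},\mathsf{D})$, which by unitarity restores the initial $(\mathsf{M},\mathsf{W}) = \ket{b}\ket{0^{\ell(n)}}$, so the projector onto $\ket{0^{\ell(n)}}$ on $\mathsf{W}$ accepts and measuring $\mathsf{M}$ yields $b$.

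For left-$s^*$-computational hiding (\defref{imb-hiding}), I would observe that in $\mathsf{Expmt}\text{-}\mathsf{Hiding}_{\cA,n}^{\Com_s}$ the adversary sees only register $\mathsf{C}$, whose reduced state is exactly $\EFI_s(1^n,b)$ by construction. Thus any advantage translates directly into a distinguisher between $\EFI_s(1^n,0)$ and $\EFI_s(1^n,1)$. For $s \leq s^*(n)$, \defref{nuefi} provides a negligible bound $\mu(n)$ on this distinguishing advantage, which immediately yields $\mathsf{Adv}_{\Com_s,\cA,n}^{\mathsf{Expmt}\text{-}\mathsf{Hiding}} \leq \tfrac{1}{2} + \mu(n)$.

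For right-$s^*$-statistical collapse binding (\defref{imb-binding}), I would use the Yan-style argument based on Uhlmann's theorem. Fix $s \geq s^*(n)$, for which $\TD(\EFI_s(1^n,0),\EFI_s(1^n,1)) \geq 1 - \delta(n)$. After $\cA$ outputs $(\mathsf{C},\mathsf{D})$ and the validity check succeeds on register $\mathsf{W}$ obtained via $\Com_s^\dagger$, the joint state on $(\mathsf{M},\mathsf{C},\mathsf{D})$ (up to normalization) lies in the image of the purification map, so its reduced state on $(\mathsf{M},\mathsf{C})$ is a convex combination of two branches whose $\mathsf{C}$-marginals are $(1-\delta(n))$-far in trace distance. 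The key step is that when the two branches of $\mathsf{M}$ correspond to $\mathsf{C}$-marginals that are nearly orthogonal, measuring $\mathsf{M}$ in the standard basis can be essentially simulated by a suitable unitary acting on $\mathsf{C}$ alone (by Uhlmann, a compensating unitary exists on $\mathsf{D}$ mapping the post-measurement $\mathsf{D}$ back to its pre-measurement form with fidelity $1 - \mathrm{negl}(n)$). Thus the view of $\cA$ after receiving $\mathsf{D}$ back is statistically close whether or not $\mathsf{M}$ was measured, bounding its advantage by a negligible function of $\delta(n)$.

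The main obstacle is the collapse-binding step: turning a statistical distance guarantee on the reduced density matrices into indistinguishability of the \emph{measured vs.\ unmeasured} experiment. This requires invoking Uhlmann's theorem carefully to argue that the committer cannot detect the $\mathsf{M}$-measurement once the $\mathsf{D}$-register is returned, and handling the conditioning on the validity check passing without blowing up the error. Everything else (purification, hiding from EFI indistinguishability, the pointwise reduction in $s$) is syntactic; the cited works~\cite{AC:Yan21,AC:Yan22} establish precisely this collapse-binding implication in the uniform setting, and the proof is transparent in $s$ since $s$ enters only through the underlying EFI states.
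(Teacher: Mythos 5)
Your construction and your hiding argument coincide exactly with the paper's: define $\Com_s$ as the purification of $\EFI_s$, observe that the $\mathsf{C}$-marginal is exactly $\EFI_s(1^n,b)$, and inherit hiding for $s \le s^*(n)$ directly from computational indistinguishability of the EFI. For binding, however, the paper does not attempt a self-contained Uhlmann argument for collapse binding. Instead it factors through a chain of cited implications: Uhlmann's theorem gives \emph{statistical honest binding} (when the traced-out states are $(1-\delta)$-far, no decommitment register can exist that opens a $b$-commitment to $1-b$ except with negligible fidelity), which by a result of Yan is equivalent to sum-binding, which in turn implies collapse binding by results of DS23 and GJMZ23. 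Your proposal tries to leapfrog this chain with a direct argument that "measuring $\mathsf{M}$ can be essentially simulated by a unitary on $\mathsf{C}$ alone." As stated this is imprecise — a standard-basis measurement is not a unitary — and it elides the real content, which is that near-orthogonality of the $\mathsf{C}$-marginals lets you find an Uhlmann isometry on $\mathsf{D}$ that nearly undoes the disturbance caused by the $\mathsf{M}$-measurement, so $\cA$'s view (on $\mathsf{D}$ plus its auxiliary register) is statistically close in both branches. That intermediate statement is exactly honest binding; promoting it to full collapse binding (against adversaries who might condition on the validity check, entangle with side information, etc.) is precisely what the cited sum-binding and collapse-binding lemmas package up. So your sketch is sound in spirit but compresses two separately nontrivial implications into one intuitive step; the paper's route is the same idea, but more honest about where the prior-work lemmas are doing the work. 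You do flag this as "the main obstacle" and defer to the citations, which is the right instinct — the takeaway is that a clean proof should state it passes through honest binding $\Rightarrow$ sum binding $\Rightarrow$ collapse binding rather than claiming Uhlmann alone closes the gap.
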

\begin{proof}(Sketch.)
Consider the purification $\EFI'_\prm$ of the procedure from \defref{nuefi}.
When $\prm \leq \prmlimit(n)$, the unitary $\Com_\prm(1^n, b):=\EFI'_\prm(1^n,b)$ satisfies computational hiding according to \defref{imb-binding} by inheriting the hiding from \defref{nuefi}. 

When $\prm \geq \prmlimit(n)$, then by Uhlmann's theorem there are no purification registers that can result in $\EFI'_\prm(1^n,b)$ and $\EFI'_\prm(1^n,b)$ having noticeable overlap. Thus the corresponding $\Com_\prm$ is statistical honest binding~\cite{AC:Yan21} for $\prm \geq \prmlimit(n)$, which is known \cite{AC:Yan22} to imply sum-binding, which in turn implies collapse-binding~\cite{DS23,STOC:GJMZ23} for the same parameters.
\end{proof}

The lemma below follows from the generic transformation in~\cite{HMY22,STOC:GJMZ23} (referred to as a ``flavour swap"), which can be applied to any commitment $\mathsf{Com}_s$ to obtain a commitment $\widetilde{\mathsf{Com}}_s$ that is hiding whenever $\mathsf{Com}_s$ is binding and binding whenever $\mathsf{Com}_s$ is hiding.

\begin{lemma}(Imported, rephrased) ~\cite{HMY22,STOC:GJMZ23}
\label{lem:hmy}%
    For every function $s^*(\cdot)$, 
    an \imbalanced quantum bit commitment satisfying right-$s^*$-statistical collapse binding according to \defref{imb-binding} and left-$s^*$-computational hiding according to \defref{imb-hiding} implies
    an \imbalanced quantum bit commitment satisfying right-$s^*$-statistical hiding according to \defref{imb-hiding} and left-$s^*$-computational collapse binding according to \defref{imb-binding}.
\end{lemma}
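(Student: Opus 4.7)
The plan is to apply the generic flavor-swap construction of \cite{HMY22,STOC:GJMZ23} in a black-box manner, parameter by parameter. That construction takes any non-interactive quantum bit commitment $\mathsf{Com}$ and produces a new non-interactive bit commitment $\widetilde{\mathsf{Com}}$ whose hiding and binding flavors are swapped: statistical (collapse) binding of $\mathsf{Com}$ becomes statistical hiding of $\widetilde{\mathsf{Com}}$, and computational hiding of $\mathsf{Com}$ becomes computational collapse binding of $\widetilde{\mathsf{Com}}$. Since the transformation only uses $\mathsf{Com}$ as a unitary oracle (together with its inverse) and makes no assumptions beyond those required for completeness, it is well-defined on any commitment unitary in isolation.

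First I would define $\widetilde{\mathsf{Com}}_s$ by literally instantiating the \cite{HMY22,STOC:GJMZ23} transformation with $\mathsf{Com}_s$ as the underlying commitment, for each value of the advice string $s$. Completeness for every $s$ follows from completeness of $\mathsf{Com}_s$ for every $s$, which is already part of our hypothesis on imbalanced commitments. Next I would verify that the security reductions of \cite{HMY22,STOC:GJMZ23} are \emph{local in $s$}: an adversary that breaks hiding (respectively, binding) of $\widetilde{\mathsf{Com}}_s$ for some particular advice $s$ is converted into an adversary that breaks binding (respectively, hiding) of $\mathsf{Com}_s$ at the \emph{same} $s$, with a black-box reduction that does not require any auxiliary information about $s$. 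Inspecting the reductions in \cite{HMY22,STOC:GJMZ23}, they clearly have this property since the transformation is purely syntactic.

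Given locality in $s$, the imbalanced conclusion is immediate. For every $s \geq s^*(n)$, right-$s^*$-statistical collapse binding of $\mathsf{Com}_s$ from \defref{imb-binding} plugs into the swap to yield statistical hiding of $\widetilde{\mathsf{Com}}_s$ for the same range of $s$, which is exactly right-$s^*$-statistical hiding according to \defref{imb-hiding}. Symmetrically, for every $s \leq s^*(n)$, left-$s^*$-computational hiding of $\mathsf{Com}_s$ yields computational collapse binding of $\widetilde{\mathsf{Com}}_s$, giving left-$s^*$-computational collapse binding. The negligible functions and the quantum polynomial-sized circuit bounds are preserved by the reductions up to polynomial factors, so the relevant advantage in $\mathsf{Expmt}\text{-}\mathsf{Hiding}$ and $\mathsf{Expmt}\text{-}\mathsf{Binding}$ remains bounded by $\tfrac{1}{2} + \negl(n)$.

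The main subtlety I expect to address is the choice of binding notion: \defref{imb-binding} is phrased in terms of collapse binding, not the weaker sum-binding or honest-binding notions sometimes used in flavor-swap arguments. This is exactly what \cite{STOC:GJMZ23} handles, by establishing the swap directly for collapse binding (and by the equivalences between collapse binding and the earlier notions proven there and in \cite{DS23}, as invoked in the previous lemma). Thus the only nontrivial bookkeeping is to confirm that the swap from \cite{STOC:GJMZ23} uses $\mathsf{Com}_s$ only as a black-box unitary at each fixed $s$, so that the imbalanced parameter structure is preserved verbatim; once that is checked, the lemma follows.
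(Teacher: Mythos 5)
Your proposal matches the paper's treatment exactly: the paper gives no detailed proof of this lemma either, but instead observes (in the sentence preceding it) that the flavor-swap construction of~\cite{HMY22,STOC:GJMZ23} can be applied to $\mathsf{Com}_s$ for each fixed advice string $s$, and that the resulting $\widetilde{\mathsf{Com}}_s$ inherits swapped guarantees at the same $s$. Your elaboration that the reduction is local in $s$ (so the imbalanced parameter ranges transfer verbatim) and your note on the collapse-binding bookkeeping are the right points to check, and they are consistent with how the paper invokes this result.
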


We obtain the following corollary by combining the results in previous sections with the two lemmas above.

\begin{corollary}
Assume the existence of pure one-way state generators according to Definition \ref{def:owsg}.
Then there exists a function $\prmlimit(\cdot)$ for which there exists
\begin{enumerate}
\item An \imbalanced quantum bit commitment satisfying right-$s^*$-statistical collapse binding according to \defref{imb-binding} and left-$s^*$-computational hiding according to \defref{imb-hiding}.
\item An \imbalanced quantum bit commitment satisfying left-$s^*$-computational collapse binding according to \defref{imb-binding} and right-$s^*$-statistical hiding according to \defref{imb-hiding}.
\end{enumerate}
Furthermore, there exists an explicit constant $c$ such $\prmlimit(\cdot)$ satisfies that for large enough $n \in \mathbb{N}$, $|\prmlimit(n)| \leq n^c$.
\end{corollary}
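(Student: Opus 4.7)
The plan is to establish both items by chaining together the previous constructions and then applying the two imported flavor-related lemmas, keeping the advice function $s^*(\cdot)$ fixed throughout so that binding/hiding thresholds line up.

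First, I would invoke the sequence of constructions already established to reduce from pure OWSGs to an imbalanced EFI. Explicitly, starting from a pure one-way state generator, \thmref{five-two} produces a one-way puzzle; \thmref{OWPtoQWPEG} promotes this to a quantum weak pseudoentropy generator; \thmref{QWPEGtoQPEG} amplifies this to a quantum pseudoentropy generator; and finally \thmref{QPEGtoEFI} yields an $s^*$-imbalanced EFI (\defref{nuefi}) for some function $s^*(\cdot)$ with the property that $|s^*(n)| \leq \log p(n)$ for a fixed polynomial $p$. This same chain appears in \corref{nuefi}, so no new argument is required at this stage, only bookkeeping that the function $s^*$ produced by \thmref{QPEGtoEFI} is the one we carry forward.

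Next, I would apply \lemref{yan} to the $s^*$-imbalanced EFI. This gives, for the same threshold function $s^*$, an imbalanced quantum bit commitment that is right-$s^*$-statistically collapse binding (\defref{imb-binding}) and left-$s^*$-computationally hiding (\defref{imb-hiding}); this is exactly item (1). Then, to obtain item (2), I would apply the flavor-swap transformation \lemref{hmy} to the commitment from item (1). Since that lemma preserves the threshold function while interchanging the sides on which binding and hiding are guaranteed, the resulting commitment is left-$s^*$-computationally collapse binding and right-$s^*$-statistically hiding, as required.

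Finally, I would verify the quantitative bound on $|s^*(n)|$. Since \thmref{QPEGtoEFI} guarantees $|s^*(n)| \leq \log p(n)$ for some polynomial $p$, we have $|s^*(n)| = O(\log n)$, and in particular there is an (explicit) constant $c$ such that $|s^*(n)| \leq n^c$ for all sufficiently large $n$; any $c \geq 1$ suffices. No step in this argument is a genuine obstacle: the work is entirely in pipelining previously proved statements, and the only thing to be careful about is not to rename or rescale $s^*$ when passing from one construction to the next, so that items (1) and (2) share the same threshold function, which is essential for the combiner step performed in the following subsection.
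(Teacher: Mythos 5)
Your proposal is correct and matches the paper's own proof, which is a one-line chain of Corollary~\ref{cor:nuefi}, Lemma~\ref{lem:yan}, and Lemma~\ref{lem:hmy}; you simply unfold Corollary~\ref{cor:nuefi} into its component theorems, which is fine. Your care in keeping the same threshold function $s^*$ across both items, and your check that $|s^*(n)| \leq \log p(n) = O(\log n) \leq n^c$ for large $n$, are exactly the right observations.
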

\begin{proof}
    This follows immediately from Corollary \ref{cor:nuefi}, Lemma \ref{lem:yan} and Lemma \ref{lem:hmy}.
\end{proof}

\subsection{Obtaining a Non-Uniform Hiding Commitment}
Next, we obtain a commitment that is non-uniformly hiding (i.e., hiding for a given choice of non-uniform advice $s^*$) by combining a sequence of imbalanced commitments.
\begin{definition}[$s^*$-Non-Uniform Hiding Commitment]
\label{def:nhcom}
Let $s^*(\cdot)$ denote a function.
An $s^*$-non-uniform hiding commitment is a quantum polynomial time algorithm $\Com_{\prm}(b, 1^\n)$ that obtains a classical parameter-dependent advice string $\prm$, and on input $b \in \bin$, outputs a (potentially mixed) quantum state such that:

\begin{enumerate}
\item {\bf Non-Uniform Computational Hiding.}
For every quantum polynomial-sized circuit $\cA = \{\cA_{n}\}_{n \in \mathbb{N}}$, there exists a negligible function $\nu(\cdot)$ such that for every $n \in \bbN$,
$\mathsf{Adv}_{\Com_{s^*(n)},\cA,n}^{\mathsf{Expmt}\text{-}\mathsf{QBinding}} \leq \frac{1}{2} + \nu(n)$.

\item {\bf (Always) Computational Binding.} 
There exists a negligible function $\mu(\cdot)$ such that for 
every quantum polynomial-sized circuit $\cA_s$,
all $n, s \in \mathbb{N}$,
$$\mathsf{Adv}_{\Com_s,\cA,n}^{\mathsf{Expmt}\text{-}\mathsf{Binding}} \leq \frac{1}{2} + \mu(n)$$
where 
$\mathsf{Adv}_{\Com_s,\cA,n}^{\mathsf{Expmt}\text{-}\mathsf{Binding}}$is the probability that $b' = b$ in $\mathsf{Expmt}\text{-}\mathsf{Binding}_{\cA,n}^{\Com_s}$, defined above.
\end{enumerate}
\end{definition}

\begin{theorem}
\label{thm:nhcom}
    Assume that there exists a function $\prmlimit(\cdot)$ for which there exists
\begin{enumerate}
\item An \imbalanced quantum bit commitment satisfying right-$s^*$-statistical collapse binding according to \defref{imb-binding} and left-$s^*$-computational hiding according to \defref{imb-hiding}.
\item An \imbalanced quantum bit commitment satisfying left-$s^*$-computational collapse binding according to \defref{imb-binding} and right-$s^*$-statistical hiding according to \defref{imb-hiding}.
\end{enumerate}
    Then there exists an $s^*$-non-uniform hiding commitment according to Definition \ref{def:nhcom}.
\end{theorem}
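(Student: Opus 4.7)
The plan is to define $\Com^{\mathsf{NH}}_s$ as the parallel composition of the two hypothesized imbalanced commitments, both committing to the same bit. Concretely, on input the message register $\mathsf{M}$ and auxiliary register $(\mathsf{W}_1, \mathsf{M}_2, \mathsf{W}_2)$, where $\mathsf{M}_2$ is a fresh $1$-qubit register initialized to $\ket{0}$, the unitary $\Com^{\mathsf{NH}}_s$ first applies a CNOT from $\mathsf{M}$ into $\mathsf{M}_2$ (duplicating the input bit), and then applies $\Com^{(1)}_s$ to $(\mathsf{M}, \mathsf{W}_1)$ and $\Com^{(2)}_s$ to $(\mathsf{M}_2, \mathsf{W}_2)$ in parallel, producing commit register $\mathsf{C} = (\mathsf{C}_1, \mathsf{C}_2)$ and decommit register $\mathsf{D} = (\mathsf{D}_1, \mathsf{D}_2)$. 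Completeness is immediate, and the honest verification (apply $\Com^{\mathsf{NH}\dagger}_s$, then project the full auxiliary register to $\ket{0}$) in particular enforces, via the reverse CNOT on $\mathsf{M}_2$, that the two recovered underlying message bits agree.

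For binding, the first observation is that for every $s \in \mathbb{N}$ at least one of the two underlying commitments is collapse binding: when $s \geq s^*(n)$, $\Com^{(1)}_s$ is right-$s^*$ statistically collapse binding, and when $s \leq s^*(n)$, $\Com^{(2)}_s$ is left-$s^*$ computationally collapse binding. I will lift binding of the appropriate underlying scheme to $\Com^{\mathsf{NH}}_s$ by routing through the equivalent (for bit commitments) sum-binding formulation, via the collapse/sum equivalences invoked in Lemma~\ref{lem:yan} through~\cite{DS23,STOC:GJMZ23}. Any pair of opening strategies $(\cA_0, \cA_1)$ for $\Com^{\mathsf{NH}}_s$ with success probabilities $(p_0^{\mathsf{NH}}, p_1^{\mathsf{NH}})$ induces opening strategies for the binding underlying scheme $\Com^{(\flat)}_s$ with success probabilities satisfying $p_b^{(\flat)} \geq p_b^{\mathsf{NH}}$ for each $b \in \{0,1\}$, because the combined verification requires $\Com^{(\flat)}_s$ to open to the same bit $b$. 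Hence
\[
p_0^{\mathsf{NH}} + p_1^{\mathsf{NH}} \;\leq\; p_0^{(\flat)} + p_1^{(\flat)} \;\leq\; 1 + \negl(n),
\]
which yields collapse binding of $\Com^{\mathsf{NH}}_s$ after applying the sum-to-collapse conversion. Crucially, this bound holds for every $s$, delivering the ``always binding'' clause of Definition~\ref{def:nhcom}.

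For computational hiding at the distinguished value $s = s^*(n)$, both underlying commitments are hiding: $\Com^{(1)}_{s^*(n)}$ is hiding by left-$s^*$ computational hiding and $\Com^{(2)}_{s^*(n)}$ is hiding by right-$s^*$ statistical hiding. A two-step hybrid argument then shows that $(\Com^{(1)}_{s^*(n)}(0), \Com^{(2)}_{s^*(n)}(0))$ is computationally indistinguishable from $(\Com^{(1)}_{s^*(n)}(1), \Com^{(2)}_{s^*(n)}(1))$: the first hybrid flips the bit inside the first coordinate using hiding of $\Com^{(1)}_{s^*(n)}$, and the second flips it inside the second coordinate using hiding of $\Com^{(2)}_{s^*(n)}$.

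The main technical obstacle is lifting binding to the combined commitment while respecting the collapse-binding formalism directly, because a naive single-commitment reduction would need to perform a consistency check across the two underlying message registers, one of which lives inside the external challenger's game and is therefore not accessible to the reduction. Working through sum binding sidesteps this register-access issue entirely, since sum binding refers only to opening probabilities, which decompose cleanly under the parallel composition above; the standard equivalences then return us to the collapse-binding definition required by Definition~\ref{def:nhcom}.
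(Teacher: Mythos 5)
Your construction (duplicate the bit via CNOT, then commit to each copy under $\mathsf{Com}_{1,s}$ and $\mathsf{Com}_{2,s}$ in parallel) and your two-step hybrid for hiding at $s = s^*(n)$ coincide with the paper's. Where you genuinely diverge is in the binding argument. The paper stays inside the collapse-binding game: its reduction picks $e\in\{1,2\}$ using $s^*(n)$ as non-uniform advice, forwards $(\mathsf{C}_e,\mathsf{D}_e)$ to the external challenger, and — this is the crux — shows via a chain of experiment rewrites that after the CNOT the joint state on $(\mathsf{M}_1,\mathsf{M}_2)$ has the form $\sum_i \alpha_i \ket{i}\ket{i}$, so the external challenger measuring only its own message register $\mathsf{M}_e$ in the standard basis produces exactly the same mixed state as measuring the combined message register. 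That observation is precisely what dissolves the register-access problem you flagged. You instead sidestep it by routing through sum binding: the acceptance-with-outcome-$b$ projector of the combined scheme is dominated, as a positive operator, by the corresponding projector of each inner scheme (the combined validity check is a stricter projection and after undoing the CNOT the inner message agrees with $\mathsf{M}$), so each opening probability $p^{\mathsf{NH}}_b$ is at most $p^{(\flat)}_b$ for \emph{both} components; sum binding of whichever component is binding for that $s$ then gives sum binding of $\mathsf{Com}^{\mathsf{NH}}_s$, and you convert back to collapse binding for canonical-form bit commitments using the same equivalences (\cite{AC:Yan22,DS23,STOC:GJMZ23}) the paper already invokes in Lemma~\ref{lem:yan}. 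Both arguments are sound. Yours is conceptually cleaner — the operator domination argument is elementary, and because $p^{\mathsf{NH}}_b \le \min(p^{(1)}_b,p^{(2)}_b)$ you do not even need the reduction itself to know $s^*(n)$ — while the paper's direct reduction avoids the polynomial losses of the sum$\leftrightarrow$collapse conversions and keeps the statement in the collapse-binding vocabulary used throughout Section~\ref{sec:com}. To make your sketch a full proof you would need to make the operator inequality $P^b_{\mathsf{NH}}\le P^b_\flat\otimes I$ explicit (it follows from $\ket{0}\!\bra{0}_{\mathsf{W}_1\mathsf{M}_2\mathsf{W}_2}\le \ket{0}\!\bra{0}_{\mathsf{W}_\flat}\otimes I$ together with unitary conjugation by $\mathsf{Com}^{\mathsf{NH}}_s$) and confirm that the combined scheme is again of canonical form so the imported equivalences apply — both of which are routine.
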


\begin{proof} (of Theorem \ref{thm:nhcom}).
Fix $\prmlimit$ as in the statement of the Theorem.

Let $\mathsf{Com}_{1,\prm}, \mathsf{Com}_{2,\prm}$ denote commitment unitaries for the two quantum bit commitments in bullets 1 and 2 of the theorem statement.\\

\noindent {\bf Construction.}
Define unitary $\widetilde{\Com}_{\prm}$ that given advice string $\prm$ commits to its input (classical) string twice simultaneously, via $\mathsf{Com}_{1,\prm}$ and $\mathsf{Com}_{2,\prm}$, as follows:
\begin{enumerate}
\item The input register is a $p = p(n)$ qubit register $\widetilde{\mathsf{M}}$, and there is a $p(n)+2\ell(n)$ qubit auxiliary register $\widetilde{\mathsf{W}}$, divided into sub-registers $\mathsf{W}_1, \mathsf{M}_2, \mathsf{W}_2$ of $\ell(n), p(n)$ and $\ell(n)$ qubits respectively. 
\item First apply a CNOT operation from the $\widetilde{\mathsf{M}}$ register onto the $\mathsf{M}_2$ sub-register. (This serves to copy the input string onto $\mathsf{M}_2$.)
\item Next, apply the $\Com_{1,\prm}$ unitary to registers $\widetilde{\mathsf{M}}, \mathsf{W}_1$ to obtain $(\mathsf{C}_1, \mathsf{D}_1)$ and the $\Com_{2,\prm}$ unitary to registers $\mathsf{M}_2, \mathsf{W}_2$ to obtain $(\mathsf{C}_2, \mathsf{D}_2)$. 
\item The output registers are $\mathsf{C} = (\mathsf{C}_1, \mathsf{C}_2), \mathsf{D} = (\mathsf{D}_1, \mathsf{D}_2)$. 
\end{enumerate}

\begin{lemma}
\label{lem:fivethirteen}
$\widetilde{\Com}_{\prm}$ satisfies computational binding according to Definition \ref{def:nhcom}.
\end{lemma}

\begin{proof}
    Suppose, towards a contradiction, that the statement of the claim is false. 
    Then, there exists a (malicious) committer $\cA_s$ and a polynomial $q(\cdot)$ such that in {$\mathsf{Expmt}\text{-}\mathsf{Binding}_{\cA,n,s}^{\widetilde{\Com}}$},
\begin{enumerate}
\item $\mathsf{\cA}$ (arbitrarily) prepares and outputs registers $(\widetilde{\mathsf{C}}, \widetilde{\mathsf{D}})$ to a challenger.
\item {\bf Validity Check.} The challenger applies $(\widetilde{\mathsf{Com}_s})^{\dagger}$ to $(\widetilde{\mathsf{C}}, \widetilde{\mathsf{D}})$ to obtain $(\widetilde{\mathsf{M}}, \widetilde{\mathsf{W}})$, then projects $\widetilde{\mathsf{W}}$ onto $\ket{0}^{p(n)+2\ell(n)}$, i.e. measures $\{\ket{0^{p(n)+2\ell(n)}}\bra{0^{p(n)+2\ell(n)}}, \mathbb{I} - \ket{0^{p(n)+2\ell(n)}}\bra{0^{p(n)+2\ell(n)}}\}$ on $\widetilde{\mathsf{W}}$. 
If the measurement rejects, the challenger returns uniform $b'$, and ends the experiment (this corresponds to the case where decommitment fails to verify). 

\item 
Otherwise, the challenger samples $b \leftarrow \{0,1\}$ and:
\begin{enumerate}
    \item If $b = 0$, the challenger does nothing.
    \item If $b = 1$, the challenger measures $\widetilde{\mathsf{M}}$ in the standard basis. 
\end{enumerate}
\item 
Finally, the challenger applies $\widetilde{\mathsf{Com}}_s$ to $(\widetilde{\mathsf{M}}, \widetilde{\mathsf{W}})$
    to obtain $(\widetilde{\mathsf{C}}, \widetilde{\mathsf{D}})$, and returns $\widetilde{\mathsf{D}}$ to $\cA_s$. 
\item Denote the output of $\cA_s$ by $b'$, then it holds that for infinitely many $n$, there exists $s$ such that
$$\Pr[b' = b] \geq \frac{1}{2} + \frac{1}{q(n)}$$
\end{enumerate}
Fix such $\cA_s$ and $q(\cdot)$.
We will build a reduction $\cB$ that uses $\cA_s$ to obtain a contradiction as follows. \newline

\underline{Reduction $\cB_s$}:
\begin{enumerate}
    \item $\cB_s(1^n)$ obtains $s^*(n)$ as non-uniform advice.
    If $s \geq s^*(n)$, it sets $e=1$, otherwise $e=2$.
    It initializes a challenger for $\Com_e$.
    \item It obtains registers $(\widetilde{\mathsf{C}}, \widetilde{\mathsf{D}})$ from $\cA_s$.
    \item {\bf Validity Check.} $\cB_s$ applies $(\widetilde{\Com}_s)^\dagger$ to $(\widetilde{\mathsf{C}}, \widetilde{\mathsf{D}})$ to obtain $(\widetilde{\mathsf{M}}, \widetilde{\mathsf{W}})$, then projects $\widetilde{\mathsf{W}}$ onto $\ket{0}^{p(n)+2\ell(n)}$, i.e. measures $\{\ket{0^{p(n)+2\ell(n)}}\bra{0^{p(n)+2\ell(n)}}, \mathbb{I} - \ket{0^{p(n)+2\ell(n)}}\bra{0^{p(n)+2\ell(n)}}\}$ on $\widetilde{\mathsf{W}}$. 
    If the measurement rejects, $\cB_s$ returns uniform $b' \leftarrow \{0,1\}$, and ends the game. 
    
    \item Otherwise, $\cB_s$ applies a CNOT operation from the $\widetilde{\mathsf{M}}$ register onto the $\mathsf{M}_2$ sub-register\footnote{Recall that $\widetilde{\mathsf{W}}$ is divided into sub-registers $\mathsf{W}_1, \mathsf{M}_2, \mathsf{W}_2$ of $\ell(n), p(n)$ and $\ell(n)$ qubits respectively}. 
    It then applies the $\Com_{1,s}$ unitary to registers $\widetilde{\mathsf{M}}, \mathsf{W}_1$ obtaining $\mathsf{C}_1, \mathsf{D}_1$, and the $\Com_{2,s}$ unitary to registers $\mathsf{M}_2, \mathsf{W}_2$ obtaining $\mathsf{C}_2, \mathsf{D}_2$. 

    \item $\cB_s$ finally sends $\mathsf{C}_e, \mathsf{D}_e$ to its external challenger for $\Com_{e,s}$, then
    obtains $\mathsf{D}_e$ from the external challenger, and returns $\widetilde{\mathsf{D}} = (\mathsf{D}_1, \mathsf{D}_2)$ to $\cA_s$.
    \item $\cB_s$ returns the output bit $b'$ of $\cA_s$ as its own output.
\end{enumerate}
Note that $\cB_{s,n}$ attempts to break $\Com_{1,s}$ if $\prm \geq \prmlimit(n)$, and tries to break $\Com_{2,s}$ otherwise. 
Then the statement of the lemma follows by the following claim (Claim~\ref{clm:maincomb}) together with the fact that $\cB$ only incurs a polynomial overhead above the size of $\cA_s$.
\begin{claim}
\label{clm:maincomb}
At least one of the following is true.
\begin{itemize}
\item Either are infinitely many $n \in \mathbb{N}$ for which there exists $s \geq s^*(n)$ such that
$$\mathsf{Adv}_{\Com_{1,s},\cB,n}^{\mathsf{Expmt}\text{-}\mathsf{Binding}} \geq  \frac{1}{q(n)}$$
\item Or there are infinitely many $n \in \mathbb{N}$ for which there exists $s \leq s^*(n)$ such that
$$\mathsf{Adv}_{\Com_{2,s},\cB,n}^{\mathsf{Expmt}\text{-}\mathsf{Binding}} \geq  \frac{1}{q(n)}$$
\end{itemize}
\end{claim}
\begin{proof}
    In what follows, we will show that $\cA$'s state at every step in an interaction with the challenger of $\widetilde{\Com}_s$ when the $\widetilde{\Com}_s$ challenger picks bit $b$, is identical to its state at every step in an interaction with $\cB$ when the $\Com_{e,s}$ challenger picks the same bit $b$. 
    
    Because $\cB$ mirrors the output of $\cA$, this will imply that
    \[\mathsf{Adv}_{\Com_{e,s},\cB,n}^{\mathsf{Expmt}\text{-}\mathsf{Binding}} =  \mathsf{Adv}_{\widetilde{\Com}_s,\cA,n}^{\mathsf{Expmt}\text{-}\mathsf{Binding}} \geq \frac{1}{q(n)}.\]
    Fix any $n, s \in \mathbb{N}$. 
    The following experiment describes an interaction of the reduction $\cB_s$ above and external challenger for $\Com_{e,s}$, with adversary $\cA_s$ against $\widetilde{\mathsf{Com}}_s$.
    \newline

\noindent \underline{$\mathsf{Expmt}_0$}:
    \begin{enumerate}
    \item {$\cB_s$ obtains registers $(\widetilde{\mathsf{C}}, \widetilde{\mathsf{D}})$ from $\cA_s$.}
    \item {{\bf Validity Check.} $\cB_s$ applies $(\widetilde{\Com}_s)^\dagger$ to $(\widetilde{\mathsf{C}}, \widetilde{\mathsf{D}})$ to obtain $(\widetilde{\mathsf{M}}, \widetilde{\mathsf{W}})$, then applies measurement $\{\ket{0^{2\ell(n)+p(n)}}\bra{0^{2\ell(n)+p(n)}}, \mathbb{I} - \ket{0^{2\ell(n)+p(n)}}\bra{0^{2\ell(n)+p(n)}}\}$ on $\widetilde{\mathsf{W}}$. 
    If the measurement rejects, $\cB_s$ returns uniform $b' \leftarrow \{0,1\}$, and ends the game. }
    
    {Otherwise, $\cB_s$ applies $\widetilde{\Com}_s$, which entails the following. First, apply a CNOT operation from the $\widetilde{\mathsf{M}}$ register onto the $\mathsf{M}_2$ sub-register. 
    Next, apply the $\Com_{1,s}$ unitary to registers $\widetilde{\mathsf{M}}, \mathsf{W}_1$ obtaining $\mathsf{C}_1, \mathsf{D}_1$, and the $\Com_{2,s}$ unitary to registers $\mathsf{M}_2, \mathsf{W}_2$ obtaining $\mathsf{C}_2, \mathsf{D}_2$.
    $\cB_s$ then sends $\mathsf{C}_e, \mathsf{D}_e$ to its external challenger.}

    \item {\bf  Challenger's Validity Check.} The external challenger obtains $(\mathsf{C}_e, \mathsf{D}_e)$. It applies $\Com_{e,s}^\dagger$ to $(\mathsf{C}_e, \mathsf{D}_e)$ to obtain $(\mathsf{M}_e, \mathsf{W}_e)$, then applies measurement $\{\ket{0^{\ell(n)}}\bra{0^{\ell(n)}}, \mathbb{I} - \ket{0^{\ell(n)}}\bra{0^{\ell(n)}}\}$ on $\mathsf{W}_e$. 
    If the measurement rejects, it returns uniform $b' \leftarrow \{0,1\}$, and ends the game.

    \item Otherwise, the challenger samples $b \leftarrow \{0,1\}$ and:
        \begin{enumerate}
            \item If $b = 0$, the challenger does nothing.
            \item If $b = 1$, the challenger measures ${\mathsf{M}}_e$ in the standard basis. 
        \end{enumerate}
    \item The challenger applies $\Com_{e,s}$ to $(\mathsf{M}_e, \mathsf{W}_e)$ to obtain $(\mathsf{C}_e, \mathsf{D}_e)$ and returns $\mathsf{D}_e$ to $b$.
    \item $\cB_s$ obtains $\mathsf{D}_e$ from the challenger, and returns $\widetilde{\mathsf{D}} = (\mathsf{D}_1, \mathsf{D}_2)$ to $\cA_s$.
    \item $\cB_s$ returns the output bit $b'$ of $\cA_s$.
    \end{enumerate}

    Observe that the unitary $\Com_{e,s}$ applied at the end of Step 2 in the experiment above is immediately reversed by the challenger at the beginning of Step $3$. Since $\Com_{e,s}^\dagger \Com_{e,s} = \mathbb{I}$ for any unitary $\Com_{e,s}$, we can remove both operations from the experiment. 
    At this point, observe that measurement 
    $\{\ket{0^{\ell(n)}}\bra{0^{\ell(n)}}, \mathbb{I} - \ket{0^{\ell(n)}}\bra{0^{\ell(n)}}\}$ on register $\mathsf{W}_e$ is performed twice in succession, and since $M M = M$ for any measurement $M$, we can remove one application of the measurement.
    Finally, also noting that operations on disjoint subsystems commute, it follows that the above experiment is equivalent to the following simpler experiment. 
    \newline

\noindent \underline{$\mathsf{Expmt}_1$}:
    \begin{enumerate}
    \item $\cB_s$ obtains registers $(\widetilde{\mathsf{C}}, \widetilde{\mathsf{D}})$ from $\cA_s$.
    \item {\bf $\cB_s$'s Validity Check.} Apply $(\widetilde{\Com}_s)^\dagger$ to $(\widetilde{\mathsf{C}}, \widetilde{\mathsf{D}})$ to obtain $(\widetilde{\mathsf{M}}, \widetilde{\mathsf{W}})$, then apply measurement $\{\ket{0^{\ell(n)}}\bra{0^{\ell(n)}}, \mathbb{I} - \ket{0^{\ell(n)}}\bra{0^{\ell(n)}}\}$ on $\widetilde{\mathsf{W}}$. 
    If the measurement rejects,  return uniform $b' \leftarrow \{0,1\}$, and end the game. 
    
    Otherwise, apply a CNOT operation from the $\widetilde{\mathsf{M}}$ (also denoted $\mathsf{M}_1$) register onto the $\mathsf{M}_2$ sub-register.

    \item Sample $b \leftarrow \{0,1\}$ and:
        \begin{enumerate}
            \item If $b = 0$, do nothing.
            \item If $b = 1$, measure ${\mathsf{M}}_e$ in the standard basis. 
        \end{enumerate}
    \item Apply $\Com_{1,s}$ to $(\mathsf{M}_1, \mathsf{W}_1)$ to obtain $(\mathsf{C}_1, \mathsf{D}_1)$, and $\Com_{2,s}$ to $(\mathsf{M}_{2}, \mathsf{W}_{2})$ obtaining $\mathsf{C}_{2}, \mathsf{D}_{2}$.
    \item Return $\widetilde{\mathsf{D}} = (\mathsf{D}_1, \mathsf{D}_2)$ to $\cA_s$.
    \item Set $b'$ to the output of $\cA_s$.
    \end{enumerate}
    As discussed above, $\mathsf{Expmt}_1$ performs identical operations as $\mathsf{Expmt}_0$, thus we have
    \begin{equation}
    \label{eq:a1}
    \Pr_{\mathsf{Expmt}_1}[b' = b] = \Pr_{\mathsf{Expmt}_0}[b' = b]
    \end{equation}
Finally, let $\mathsf{Expmt}_2$ be identical to $\mathsf{Expmt}_1$ except that $\mathsf{M}_1$ is measured in Step 3 of $\mathsf{Expmt}_2$, instead of $\mathsf{M}_e$. That is,
\newline

\noindent \underline{$\mathsf{Expmt}_2$}:
    \begin{enumerate}
    \item $\cB_s$ obtains registers $(\widetilde{\mathsf{C}}, \widetilde{\mathsf{D}})$ from $\cA_s$.
    \item {\bf $\cB_s$'s Validity Check.} Apply $(\widetilde{\Com}_s)^\dagger$ to $(\widetilde{\mathsf{C}}, \widetilde{\mathsf{D}})$ to obtain $(\widetilde{\mathsf{M}}, \widetilde{\mathsf{W}})$, then apply measurement $\{\ket{0^{\ell(n)}}\bra{0^{\ell(n)}}, \mathbb{I} - \ket{0^{\ell(n)}}\bra{0^{\ell(n)}}\}$ on $\widetilde{\mathsf{W}}$. 
    If the measurement rejects,  return uniform $b' \leftarrow \{0,1\}$, and end the game. 
    
    Otherwise, apply a CNOT operation from the $\widetilde{\mathsf{M}}$ (also denoted $\mathsf{M}_1$) register onto the $\mathsf{M}_2$ sub-register.

    \item Sample $b \leftarrow \{0,1\}$ and:
        \begin{enumerate}
            \item If $b = 0$, do nothing.
            \item If $b = 1$, measure ${\mathsf{M}}_1$ in the standard basis. 
        \end{enumerate}
    \item Apply $\Com_{1,s}$ to $(\mathsf{M}_1, \mathsf{W}_1)$ to obtain $(\mathsf{C}_1, \mathsf{D}_1)$ and $\Com_{2,s}$ to $\mathsf{M}_{2}, \mathsf{W}_{2}$ obtaining $\mathsf{C}_{2}, \mathsf{D}_{2}$.
    \item Return $\widetilde{\mathsf{D}} = (\mathsf{D}_1, \mathsf{D}_2)$ to $\cA_s$.
    \item Set $b'$ to the output of $\cA_s$.
    \end{enumerate}
Note that the CNOT operation (Step 2) on any state \[\sum_{i \in [0,2^{p(n)}-1]} (\alpha_i \ket{i})_{\mathsf{M}_1} \otimes \ket{0^{p}}_{\mathsf{M}_2}\] results in the state 
\[\sum_{i \in [0,2^{p(n)}-1]} \alpha_i \ket{i}_{\mathsf{M}_1} \ket{i}_{\mathsf{M}_2}.\]
This implies that measuring either one of the registers $\mathsf{M}_1$ or $\mathsf{M}_2$ results in an identical (mixed) state on the system. This implies that $\mathsf{Expmt}_1$ and $\mathsf{Expmt}_2$ are identical, and thus
\begin{equation}
    \label{eq:a3}
    \Pr_{\mathsf{Expmt}_2}[b' = b] = \Pr_{\mathsf{Expmt}_1}[b' = b]
\end{equation}

Next, note that the only difference between $\mathsf{Expmt}_2$ and $\mathsf{Expmt}\text{-}\mathsf{Binding}_{\cA,n}^{\widetilde{\Com}_s}$ occurs only in the case of $b = 1$. In this case in $\mathsf{Expmt}_2$, a CNOT is {\em first} applied from $\mathsf{M}_1$ onto $\mathsf{M}_2$ and then $\mathsf{M}_1$ is measured in the standard basis. Whereas in $\mathsf{Expmt}\text{-}\mathsf{Binding}_{\cA,n}^{\widetilde{\Com}_s}$, $\mathsf{M}_1$ is first measured in the standard basis and then a CNOT is applied from $\mathsf{M}_1$ onto $\mathsf{M}_2$. 
Since these operations commute, the experiments are identical. Thus, 
\begin{equation}
    \label{eq:a4}
    \Pr_{\mathsf{Expmt}_2}[b' = b] = \mathsf{Adv}_{\widetilde{\Com_s},\cA,n}^{\mathsf{Expmt}\text{-}\mathsf{Binding}} 
\end{equation}

Combining equations (\ref{eq:a1}), (\ref{eq:a3}) and (\ref{eq:a4}) with the fact that $\mathsf{Adv}_{\widetilde{\Com}_s,\cA,n}^{\mathsf{Expmt}\text{-}\mathsf{Binding}} \geq \frac{1}{q(n)}$ gives us that for $\cB_s$ defined above, 
\[\mathsf{Adv}_{\Com_{e,s},\cB,n}^{\mathsf{Expmt}\text{-}\mathsf{Binding}} = \mathsf{Adv}_{\widetilde{\Com}_s,\cA,n}^{\mathsf{Expmt}\text{-}\mathsf{Binding}} \geq \frac{1}{q(n)}.\]
By definition of $\cB_s$, this implies that at least one of the following is true.
\begin{itemize}
\item Either are infinitely many $n \in \mathbb{N}$ for which there exists $s \geq s^*(n)$ such that
\[\mathsf{Adv}_{\Com_{1,s},\cB,n}^{\mathsf{Expmt}\text{-}\mathsf{Binding}} = \mathsf{Adv}_{\widetilde{\Com}_s,\cA,n}^{\mathsf{Expmt}\text{-}\mathsf{Binding}} \geq \frac{1}{q(n)}.\]
\item Or there are infinitely many $n \in \mathbb{N}$ for which there exists $s \leq s^*(n)$ such that\[\mathsf{Adv}_{\Com_{2,s},\cB,n}^{\mathsf{Expmt}\text{-}\mathsf{Binding}} = \mathsf{Adv}_{\widetilde{\Com}_s,\cA,n}^{\mathsf{Expmt}\text{-}\mathsf{Binding}} \geq \frac{1}{q(n)}.\]
\end{itemize}
which completes the proof of the claim.
\end{proof}
\noindent This concludes the proof of Lemma $\ref{lem:fivethirteen}$.
\end{proof}
\noindent The construction given therefore satisfies Definition $\ref{def:nhcom}$, completing the proof of Theorem $\ref{thm:nhcom}$.
\end{proof}

\begin{lemma}
For non-uniform parameter $\prmlimit(n)$,
$\widetilde{\Com}$ satisfies computational hiding according to Definition \ref{def:nhcom}.
\end{lemma}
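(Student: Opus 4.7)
The plan is a standard two-step hybrid argument exploiting the complementary flavors of the two underlying imbalanced commitments at the specific parameter $s = s^*(n)$. At this value of the parameter, $\Com_{1,s^*(n)}$ is computationally hiding (by left-$s^*$-computational hiding, since $s^*(n) \le s^*(n)$) and $\Com_{2,s^*(n)}$ is statistically hiding (by right-$s^*$-statistical hiding, since $s^*(n) \ge s^*(n)$). The construction of $\widetilde{\Com}_{s^*(n)}$ on input bit $b$ (after the CNOT that copies $b$ into $\mathsf{M}_2$) produces commit register $\mathsf{C} = (\mathsf{C}_1, \mathsf{C}_2)$, where $\mathsf{C}_1$ is the commit register of $\Com_{1,s^*(n)}$ to $b$ and $\mathsf{C}_2$ is the commit register of $\Com_{2,s^*(n)}$ to $b$. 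Thus it suffices to show that $(\mathsf{C}_1^{(0)}, \mathsf{C}_2^{(0)})$ and $(\mathsf{C}_1^{(1)}, \mathsf{C}_2^{(1)})$ are computationally indistinguishable.

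I would proceed via the intermediate hybrid $(\mathsf{C}_1^{(0)}, \mathsf{C}_2^{(1)})$. First, the transition from $(\mathsf{C}_1^{(0)}, \mathsf{C}_2^{(0)})$ to $(\mathsf{C}_1^{(0)}, \mathsf{C}_2^{(1)})$ is statistically close: any distinguisher $\cA$ for these two distributions yields an adversary against the hiding of $\Com_{2,s^*(n)}$ by generating $\mathsf{C}_1^{(0)}$ on its own (honestly running $\Com_{1,s^*(n)}$ on the bit $0$), appending the external challenge register, and outputting $\cA$'s guess. Right-$s^*$-statistical hiding then bounds the distinguishing advantage by a negligible function. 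Second, the transition from $(\mathsf{C}_1^{(0)}, \mathsf{C}_2^{(1)})$ to $(\mathsf{C}_1^{(1)}, \mathsf{C}_2^{(1)})$ is computationally close by an entirely symmetric reduction to left-$s^*$-computational hiding of $\Com_{1,s^*(n)}$, where this time the reduction generates $\mathsf{C}_2^{(1)}$ honestly. Note that in both reductions the simulated register can indeed be generated efficiently since $\Com_{1,s}, \Com_{2,s}$ are QPT unitaries for every $s$.

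Summing the two negligible distinguishing advantages via the triangle inequality yields
\[
\mathsf{Adv}_{\widetilde{\Com}_{s^*(n)},\cA,n}^{\mathsf{Expmt}\text{-}\mathsf{Hiding}} \le \tfrac{1}{2} + \nu_1(n) + \nu_2(n),
\]
which is $\tfrac{1}{2} + \negl(n)$, as required by Definition~\ref{def:nhcom}. There is no substantive obstacle here; the only thing one must be careful about is that the sampling of the two commit registers in each hybrid is done independently on the same underlying bit (achieved by the CNOT into $\mathsf{M}_2$ before the two commitments are applied), so the marginal on each of $\mathsf{C}_1, \mathsf{C}_2$ is exactly that of an honest execution of the corresponding component commitment on the designated bit, which is precisely what the hiding reductions require.
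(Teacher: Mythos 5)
Your proof is correct and takes essentially the same approach as the paper: a two-step hybrid over the two component commitments, using computational hiding of $\Com_{1,s^*(n)}$ for one step and statistical hiding of $\Com_{2,s^*(n)}$ for the other (the paper's intermediate hybrid initializes $\mathsf{M}_2$ to $0$ rather than $\widetilde{\mathsf{M}}$, giving $(\Com_1(1),\Com_2(0))$ instead of your $(\Com_1(0),\Com_2(1))$, but this is an immaterial reordering). Your observation that the CNOT on a classical message bit leaves the two commit registers in a product state — so each marginal matches an honest single-commitment execution — is exactly the point that makes the reductions go through.
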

\begin{proof}
The proof follows by a straightforward hybrid argument.
Consider an intermediate unitary $\widehat{\Com}_s$ which, in place of applying a CNOT from $\widetilde{\mathsf{M}}$ onto $\mathsf{M}_2$ (and thereby committing to the same message twice), instead initializes $\mathsf{M}_2$ to $0$, thereby potentially committing to two different messages via the unitaries $\mathsf{Com}_{1,s}$ and $\mathsf{Com}_{2,s}$. 
By computational hiding of $\Com_{1,s}$ for $\prm = \prmlimit(n)$, 
there is a negligible function $\mu(\cdot)$ such that
the state of $\cA_s$ in $\mathsf{Expmt}\text{-}\mathsf{Hiding}_{\cA,n}^{\widetilde{\Com}_s}$ when the challenger picks $b = 1$ is computationally $\mu(n)$-close to its state when $\widehat{\Com}_{s}$ is applied instead of $\widetilde{\Com}_s$.
Furthermore, by the statistical hiding of $\Com_2$ for $\prm = \prmlimit(n)$, the state of $\cA_s$ in $\mathsf{Expmt}\text{-}\mathsf{Hiding}_{\cA,n}^{\widetilde{\Com}_s}$ when the challenger picks $b = 0$ is statistically $\mu'(n)$-close to its state  when $\widehat{\Com}_{s}$ is applied instead of $\widetilde{\Com}_s$.
This implies that for every polynomial-sized quantum circuit $\cA_s$ and every $n \in \mathbb{N}$,
\[
    \left|\Prr_{y \leftarrow \widetilde{\Com}_{s^*(n)}(1^n, 0)}[\cA(y) = 1] - \Prr_{y \leftarrow \widetilde{\Com}_{s^*(n)}(1^n, 1)}[\cA(y) = 1]\right|\leq \mu(n)+\mu'(n) = \mathsf{negl}(n)
\]
which completes the proof of the lemma.
\end{proof}

\subsection{Obtaining a (Uniform) Commitment}
In the previous section, we obtained a commitment that is binding for every choice of (non-uniform advice) $\prm$, but only hiding when the advice string matches the value $\prmlimit(n)$. We show that this implies a uniform construction of commitments, as long as $|\prmlimit(n)|$ is not too large. Formally, we prove the following theorem.

\begin{theorem}
    \label{thm:nhcomToCom}
    Assume that there exists a function $\prmlimit(\cdot)$ such that 
    (1) there exists a polynomial $p(\cdot)$ such that for all $n \in \mathbb{N}$, $|\prmlimit(n)| \leq (\log p(n))$, and (2) 
    there exists an $\prmlimit$-non-uniform hiding commitment according to Definition \ref{def:nhcom}.
    Then there exists a (standard) uniform commitment satisfying computational collapse binding according to Definition \ref{def:binding-bitcom} and computational hiding according to Definition \ref{def:hiding-com}.
\end{theorem}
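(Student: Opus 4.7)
The plan is to build the uniform commitment by enumerating over all possible advice values and secret-sharing the committed bit. Since $|s^*(n)| \leq \log p(n)$, the set $\mathcal{S}_n := \{s : |s| \leq \log p(n)\}$ has size at most $p(n)$, which is polynomial. Let $s_1, s_2, \ldots, s_{N}$ (with $N \leq p(n)$) enumerate $\mathcal{S}_n$, noting that the true value $s^*(n)$ lies in this set. To commit to a bit $b$, I would sample uniformly random bits $b_1, \ldots, b_{N-1} \leftarrow \{0,1\}$, set $b_N := b \oplus b_1 \oplus \cdots \oplus b_{N-1}$, and then in parallel commit to each $b_i$ using the non-uniform hiding commitment $\Com_{s_i}$. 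The overall commit register is the concatenation of the $N$ individual commit registers (and similarly for the decommit register); the combined commit unitary is the tensor product of the individual commit unitaries preceded by the XOR-sharing Clifford that maps $(b, r_1, \ldots, r_{N-1}) \mapsto (b_1, \ldots, b_N)$ acting on a fresh $\mathsf{M}$-register together with $N-1$ random-bit registers (which we purify by putting them in uniform superposition and treating them as part of $\mathsf{W}$).

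For hiding, I would argue as follows. Fix any QPT adversary $\cA$. The index $i^*$ such that $s_{i^*} = s^*(n)$ is well-defined. By Definition~\ref{def:nhcom} for $\Com_{s^*(n)}$, a commitment to $0$ under $\Com_{s_{i^*}}$ is computationally indistinguishable from a commitment to $1$. Because the XOR secret sharing has the property that the marginal distribution on $(b_1, \ldots, b_N)$ with any single $b_{i^*}$ removed is uniform and independent of $b$, a standard hybrid argument (switching $b_{i^*}$ from $b \oplus \bigoplus_{j \neq i^*} b_j$ to an independent uniform bit via the hiding of $\Com_{s_{i^*}}$) shows that the distribution of the combined commit register is computationally indistinguishable between $b = 0$ and $b = 1$. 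This step only requires a single invocation of the hiding of $\Com_{s_{i^*}}$, so negligibility is preserved.

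For computational collapse binding, I would reduce to the always-binding property of $\Com_s$ (which holds for every $s$). Suppose an adversary $\cA$ wins $\mathsf{Expmt}\text{-}\mathsf{Binding}$ against the combined commitment with advantage $1/q(n)$. Observe that $\mathsf{Com}^\dagger$ applied by the challenger is just the tensor product $\bigotimes_i \Com_{s_i}^\dagger$ composed with the inverse of the XOR-sharing Clifford; the validity check projects each $\mathsf{W}_i$ sub-register onto $\ket{0^\ell}$, and measuring $\mathsf{M}$ (when $b=1$) is equivalent to measuring any single one of the shared $b_i$ registers in the computational basis (up to the XOR structure). Using this, I would show via a hybrid argument over $i \in [N]$ that there must exist some $i \in [N]$ (depending on $n$) for which $\cA$ distinguishes ``measure $b_i$'' from ``do not measure $b_i$'' with advantage $\geq 1/(Nq(n))$; the corresponding reduction $\cB_i$ then runs $\cA$ internally, forwards its $(\mathsf{C}_i, \mathsf{D}_i)$ to the external challenger for $\Com_{s_i}$, and simulates the remaining $N-1$ commitments honestly. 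Since $N \leq p(n)$ is polynomial and the always-binding of $\Com_{s_i}$ holds for every $s_i$ simultaneously with the \emph{same} negligible function, this contradicts Definition~\ref{def:nhcom}.

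The most delicate step will be the binding reduction: one must carefully argue that the hybrid switching between measuring and not measuring $\mathsf{M}$ in the combined experiment corresponds, via the Clifford structure of the XOR sharing, to a sequence of hybrids where each adjacent pair differs only in whether a single $\mathsf{M}_i$ sub-register is measured; and that the reduction $\cB_i$'s simulation of the other $N-1$ commitments (in superposition, since the ``input'' $b_i$'s are themselves entangled via the XOR Clifford) exactly reproduces the challenger's view. This is essentially the same ``CNOT-and-commit'' analysis as in the proof of Lemma~\ref{lem:fivethirteen}, extended from $N=2$ to $N = \poly(n)$, and I would invoke the same commutation arguments (unitary inversions, idempotence of projective measurements, and commutativity of operations on disjoint subsystems) used there.
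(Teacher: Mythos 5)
Your proposal matches the paper's proof: the same XOR secret-sharing construction across all $t(n) = 2^{|s^*(n)|} \leq p(n)$ possible advice values, the same hiding argument (remove/randomize the $s^*(n)$-share so the remaining shares are independent of $b$, reducing to hiding of $\Com_{s^*}$), and the same binding argument (a hybrid over $i \in [t]$ that measures one additional share at a time, reducing a noticeable gap between adjacent hybrids to the always-binding of some $\Com_{s_i}$). The paper realizes the sharing with Hadamards on $M_1,\ldots,M_{t-1}$ followed by a parity unitary rather than ``CNOT-and-commit,'' and phrases the hiding step via an intermediate experiment that swaps $M_{s^*(n)}$ to $\ket{0}$, but these are cosmetic differences; you correctly identify the one point requiring care (that always-binding must hold with a single negligible function uniformly over $s$, so that the $n$-dependent choice of hybrid index is harmless), which the paper's Definition~\ref{def:nhcom} provides.
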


\begin{proof} 

Let $t = t(n) = 2^{|\prmlimit(n)|} \leq p(n)$, 
and let $\Com_1, \Com_2, \ldots, \Com_t$ denote the commitment unitaries for $m$ quantum string commitments, one corresponding to each possible value of $\prmlimit(n)$.\\
\noindent {\bf Construction.}
Define unitary $\widetilde{\Com}$ as follows:
\begin{enumerate}
\item The input is a single bit $b$, and there is a $t(n) \cdot \left(\ell(n) + 1 \right)$ qubit auxiliary register $\widetilde{\mathsf{W}}$, divided into sub-registers $\mathsf{M}_1, 
\mathsf{M}_2, \mathsf{M}_3, \ldots, \mathsf{M}_t,
\mathsf{W}_1, \mathsf{W}_2, \ldots, \mathsf{W}_t$ where the $\mathsf{M}_i$ registers are initialized to $\ket{0}$ and the $\mathsf{W}_i$ registers are initialized to $\ket{0^{\ell(n)}}$.
\item $\widetilde{\Com}$ applies the $H$ gate to each of the registers $\mathsf{M}_1, \ldots \mathsf{M}_{t-1}$. Next, it applies unitary $U$ 
that maps
\[ (m, x_1, x_2, \ldots, x_{t-1}, y) \mapsto (m, x_1, x_2, \ldots, x_{t-1}, \bigoplus_{i \in [t-1]}x_i \oplus y \oplus m)\]
to registers 
$\widetilde{\mathsf{M}}, \mathsf{M}_1, \ldots, \mathsf{M}_t$.
\item Next for $i \in [t]$, it applies the $\Com_i$ unitary to registers $\mathsf{M}_i, \mathsf{W}_i$ to obtain $(\mathsf{C}_i, \mathsf{D}_i)$. 
\item The output registers are $\mathsf{C} = (\mathsf{C}_1, \ldots, \mathsf{C}_t), \mathsf{D} = (\mathsf{D}_1, \ldots,  \mathsf{D}_t)$. 
\end{enumerate}

\begin{lemma}
$\widetilde{\Com}$ satisfies computational hiding according to Definition \ref{def:hiding-com}.
\end{lemma} 

\begin{proof}

Let $s^* = s^*(n) \in t[n]$ be such that $\Com_{s^*}$ satisfies statistical (resp., computational) hiding according to Definition \ref{def:hiding-com}.

Further, assume towards a contradiction that the lemma is not true. Then there exists an unbounded (resp., quantum polynomial-sized) adversary $\cA$ and a polynomial $q(\cdot)$ such that $\mathsf{Adv}_{\widetilde{\Com},\cA,n}^{\mathsf{Expmt}\text{-}\mathsf{Hiding}}$ (in Definition \ref{def:hiding-com}) is at least $\frac{1}{q(n)}$.
Consider the following intermediate experiment. \newline

\noindent
\underline{$\mathsf{Expmt}\text{-}\mathsf{intermediate}_{\cA,n}$}:
\begin{itemize}
    \item $\cA$ outputs a quantum state on register $\mathsf{M}$.
    \item The challenger samples $b \leftarrow \{0,1\}$, then swaps the contents of ${\mathsf{M}}$ with $\ket{0}$ if $b = 0$. It initializes $\widetilde{\mathsf{W}}$, divided into sub-registers $\mathsf{M}_1, \mathsf{M}_2, \mathsf{M}_3, \ldots, \mathsf{M}_t, \mathsf{W}_1, \mathsf{W}_2, \ldots, \mathsf{W}_t$ where for all $j \in [t]$, the $\mathsf{M}_j$ registers are initialized to $\ket{0}$ and the $\mathsf{W}_j$ registers are initialized to $\ket{0^{\ell(n)}}$.
    
    It then applies the $H$ gate to each of the registers $\mathsf{M}_1, \ldots \mathsf{M}_{t-1}$. Next, it 
    applies unitary $U$ mapping \[ (m, x_1, x_2, \ldots, x_{t-1}, y) \mapsto (m, x_1, x_2, \ldots, x_{t-1}, \bigoplus_{i \in [n-1]}x_i \oplus y \oplus m)\] to registers  $\mathsf{M}, \mathsf{M}_1, \ldots, \mathsf{M}_t$. 
    Next, it swaps out the value on register $\mathsf{M}_{s^*(n)}$ to $\ket{0}$, and finally
     for $j \in [t]$, it applies the $\Com_j$ unitary to registers $\mathsf{M}_j, \mathsf{W}_j$ to obtain $(\mathsf{C}_j, \mathsf{D}_j)$.

    It sends $\widetilde{\mathsf{C}} = (\mathsf{C}_1, \ldots, \mathsf{C}_t)$ to $\cA_s$.
    \item Denote the output of $\cA$ by $b'$.
\end{itemize}

    Note that the contents of registers $\mathsf{C}_1, \ldots, \mathsf{C}_t$ in this intermediate experiment are independent of the bit $b$ (due to swapping out $\mathsf{M}_j$ with $\ket{0}$), therefore for any $\cA$, $\Pr[b' = b] = \frac{1}{2}$ in the intermediate experiment.

    This implies that there exists a fixing of $b \in \{0,1\}$ such that $\cA$'s output in the intermediate experiment $\mathsf{Expmt}\text{-}\mathsf{Hiding}_{\cA,n}^{\widetilde{\Com}}$ with this fixing of $b$, is at least $\frac{1}{2q(n)}$-far from its output in $\mathsf{Expmt}\text{-}\mathsf{intermediate}_{\cA,n}$.
    Suppose this holds for $b = 1$ (the case of $b = 0$ follows similarly).
    We build a (non-uniform) reduction $\cB$ that interacts with an external challenger to break the purported hiding of $\mathsf{Com}_{s^*}$, as follows.
    \begin{enumerate}
    \item Obtain a quantum state on register $\mathsf{M}$ from $\cA$.
    \item  
    Prepare a $t(n) \cdot \left(\ell(n) +1 \right)$ qubit register $\widetilde{\mathsf{W}}$, divided into sub-registers $\mathsf{M}_1, 
\mathsf{M}_2, \mathsf{M}_3, \ldots, \mathsf{M}_t$, and $\mathsf{W}_1, \mathsf{W}_2, \ldots, \mathsf{W}_t$ where for all $j \in [t]$, the $\mathsf{M}_j$ registers are initialized to $\ket{0}$ and the $\mathsf{W}_j$ registers are initialized to $\ket{0^{\ell(n)}}$.
\item Apply the $H$ gate to each of the registers $\mathsf{M}_1, \ldots \mathsf{M}_{t-1}$. Next, apply unitary $U$ 
that maps
\[ (m, x_1, x_2, \ldots, x_{t-1}, y) \mapsto (x_1, x_2, \ldots, x_{t-1}, \bigoplus_{i \in [t-1]}x_i \oplus y \oplus m)\]
to registers 
$\mathsf{M}, \mathsf{M}_1, \ldots, \mathsf{M}_t$. 
    \item Send the $\mathsf{M}_{s^*}$ register to the external challenger, and obtain register $\mathsf{C}_{s^*}$.
    \item For $j \in [t]\setminus s^*$, apply the $\mathsf{Com}_j$ unitary to registers $\mathsf{M}_j, \mathsf{W}_j$ to obtain $(\mathsf{C}_j, \mathsf{D}_j)$.
    \item Send $\mathsf{C}_1, \ldots \mathsf{C}_t$ to $\cA$ and return the output $b'$ of $\cA$.
    \end{enumerate}
    By construction, the size of $\cB$ is only polynomially larger than that of $\cA$, and 
    \[ \Big|
    \Pr[\cB \text{ outputs }1|b = 0] - \Pr[\cB \text{ outputs }1|b = 1]
    \Big| \geq \frac{1}{2q(n)} \]
    This implies that
    \[ \Big|
    \Pr[\cB (\Com_{s^*}(1^n,0)) = 1] - \Pr[\cB (\Com_{s^*}(1^n,1)) = 1]
    \Big| \geq \frac{1}{2q(n)} \]
    which is a contradiction to the hiding of $\mathsf{Com}_{s^*}$, as desired, and thus the lemma must be true.
\end{proof}

\begin{lemma}
$\widetilde{\Com}$ satisfies statistical (resp., computational) collapse binding according to Definition \ref{def:binding-bitcom} as long as all of $\Com_1, \ldots, \Com_t$ satisfy statistical (resp., computational) collapse binding according to Definition \ref{def:binding-bitcom}. 
\end{lemma} 
\begin{proof}
    Suppose, towards a contradiction, that the statement of the theorem is false. 
    Then, there exists a (malicious) committer $\cA$ and a polynomial $q(\cdot)$ such that in {$\mathsf{Expmt}\text{-}\mathsf{Binding}_{\cA,n}^{\widetilde{\Com}}$},
\begin{enumerate}
\item $\mathsf{\cA}$ (arbitrarily) prepares and outputs registers $(\widetilde{\mathsf{C}}, \widetilde{\mathsf{D}})$ to a challenger.
\item {\bf Validity Check.} The challenger applies $(\widetilde{\mathsf{Com}})^{\dagger}$ to $(\widetilde{\mathsf{C}}, \widetilde{\mathsf{D}})$ to obtain $(\widetilde{\mathsf{M}}, \widetilde{\mathsf{W}})$, then projects $\widetilde{\mathsf{W}}$ onto $\ket{0}^{t(n) \cdot (1+\ell(n))}$.
If the projection rejects, the challenger returns uniform $b'$, and ends the experiment (this corresponds to the case where decommitment fails to verify). 

\item 
Otherwise, the challenger samples $b \leftarrow \{0,1\}$ and:
\begin{enumerate}
    \item If $b = 0$, the challenger does nothing.
    \item If $b = 1$, the challenger measures $\widetilde{\mathsf{M}}$ in the standard basis. 
\end{enumerate}
\item 
Finally, the challenger applies $\widetilde{\mathsf{Com}}$ to $(\widetilde{\mathsf{M}}, \widetilde{\mathsf{W}})$
to obtain $(\widetilde{\mathsf{C}}, \widetilde{\mathsf{D}})$. 

This entails applying the $H$ gate to each of the registers $\mathsf{M}_1, \ldots \mathsf{M}_{t-1}$. Next, applying unitary $U$ 
that maps
\[ (m, x_1, x_2, \ldots, x_{t-1}, y) \mapsto (m, x_1, x_2, \ldots, x_{t-1}, \bigoplus_{i \in [t-1]}x_i \oplus y \oplus m)\]
to registers 
$\widetilde{\mathsf{M}}, \mathsf{M}_1, \ldots, \mathsf{M}_t$.
Finally for $i \in [t]$, applying the $\Com_i$ unitary to registers $\mathsf{M}_i, \mathsf{W}_i$ to obtain $(\mathsf{C}_i, \mathsf{D}_i)$. 
    
\item The challenger returns $\widetilde{\mathsf{D}}$ to $\cA_s$. 
\item Denote the output of $\cA_s$ by $b'$, then it holds that (for infinitely many $n$),
$$\Pr[b' = b] \geq \frac{1}{2} + \frac{1}{q(n)}$$
\end{enumerate}

We will now consider a sequence of $t(n)$ hybrids, where $\mathsf{Hyb}_0$ is identical to the one outlined above, and for each $j \in [t(n)]$, $\mathsf{Hyb}_j$ is identical to $\mathsf{Hyb}_{j-1}$, except that the $\mathsf{M}_j$ register is measured in the standard basis before applying the $\mathsf{Com}_i$ unitary at the end of Step 4. We write the description of $\mathsf{Hyb}_j$ for completeness below, with the difference from the binding experiment underlined. 
\newline

\noindent \underline{$\mathsf{Hyb}_j$:}
\begin{enumerate}
\item $\mathsf{\cA}$ (arbitrarily) prepares and outputs registers $(\widetilde{\mathsf{C}}, \widetilde{\mathsf{D}})$ to a challenger.
\item {\bf Validity Check.} The challenger applies $(\widetilde{\mathsf{Com}})^{\dagger}$ to $(\widetilde{\mathsf{C}}, \widetilde{\mathsf{D}})$ to obtain $(\widetilde{\mathsf{M}}, \widetilde{\mathsf{W}})$, then projects $\widetilde{\mathsf{W}}$ onto $\ket{0}^{t(n) \cdot (1+\ell(n))}$.
If the projection rejects, the challenger returns uniform $b'$, and ends the experiment (this corresponds to the case where decommitment fails to verify). 

\item 
Otherwise, the challenger samples $b \leftarrow \{0,1\}$ and:
\begin{enumerate}
    \item If $b = 0$, the challenger does nothing.
    \item If $b = 1$, the challenger measures $\widetilde{\mathsf{M}}$ in the standard basis. 
\end{enumerate}
\item 
The challenger 
applies the $H$ gate to each of the registers $\mathsf{M}_1, \ldots \mathsf{M}_{t-1}$. Next, it applies unitary $U$ 
that maps
\[ (m, x_1, x_2, \ldots, x_{t-1}, y) \mapsto (m, x_1, x_2, \ldots, x_{t-1}, \bigoplus_{i \in [t-1]}x_i \oplus y \oplus m)\]
to registers 
$\widetilde{\mathsf{M}}, \mathsf{M}_1, \ldots, \mathsf{M}_t$.
\underline{Next, it measures registers $\mathsf{M}_1, \ldots \mathsf{M}_j$ in the standard basis.}

Finally for $i \in [t]$, it applies the $\Com_i$ unitary to registers $\mathsf{M}_i, \mathsf{W}_i$ to obtain $(\mathsf{C}_i, \mathsf{D}_i)$. 
    
\item The challenger returns $\widetilde{\mathsf{D}}$ to $\cA$. 
\item Denote the output of $\cA$ by $b'$.
\end{enumerate}

Note that in $\mathsf{Hyb}_t$, $\Pr[b' = b] = \frac{1}{2}$ for any $\cA$, because measuring all registers $\mathsf{M}_1, \ldots, \mathsf{M}_t$ in the standard basis implies a measurement of $\mathsf{M}$ in the standard basis.
Moreover by our assumption above, in $\mathsf{Hyb}_0$, $\Pr[b' = b] \geq \frac{1}{2} + \frac{1}{q(n)}$ for adversary $\cA$ and some polynomial $q(\cdot)$.
This implies that for every $n$, there exists $j^* = j^*(n) \in [t(n)]$ such that 
\begin{equation}
\label{eq:comb}
\Big| \Pr[b' = b|\mathsf{Hyb}_{j^*}] - \Pr[b' = b|\mathsf{Hyb}_{j^*-1}] \Big| \geq \frac{1}{q(n)t(n)}
\end{equation}

To complete the contradiction, we build a reduction $\cB$ that breaks statistical (resp., computational) binding of $\mathsf{Com}_{j^*}$, by doing the following.

\begin{enumerate}
\item Obtain registers $(\widetilde{\mathsf{C}}, \widetilde{\mathsf{D}})$ from $\cA_s$.
\item {\bf Validity Check.} $\cB$ applies $(\widetilde{\mathsf{Com}})^{\dagger}$ to $(\widetilde{\mathsf{C}}, \widetilde{\mathsf{D}})$ to obtain $(\widetilde{\mathsf{M}}, \widetilde{\mathsf{W}})$, then projects $\widetilde{\mathsf{W}}$ onto $\ket{0}^{t(n) \cdot (1+\ell(n))}$.
If the projection rejects, it returns uniform $b'$, and ends the experiment (this corresponds to the case where decommitment fails to verify). 

\item 
Otherwise, $\cB$ samples $b \leftarrow \{0,1\}$ and:
\begin{enumerate}
    \item If $b = 0$, it does nothing.
    \item If $b = 1$, it measures $\widetilde{\mathsf{M}}$ in the standard basis. 
\end{enumerate}
\item 
$\cB$ applies 
the $H$ gate to each of the registers $\mathsf{M}_1, \ldots \mathsf{M}_{t-1}$. Next, it applies unitary $U$ 
that maps
\[ (m, x_1, x_2, \ldots, x_{t-1}, y) \mapsto (m, x_1, x_2, \ldots, x_{t-1}, \bigoplus_{i \in [t-1]}x_i \oplus y \oplus m)\]
to registers 
$\widetilde{\mathsf{M}}, \mathsf{M}_1, \ldots, \mathsf{M}_t$.
Next, it measures registers $\mathsf{M}_1, \ldots \mathsf{M}_{j^*-1}$ in the standard basis.

Finally for $i \in [t]$, it applies the $\Com_i$ unitary to registers $\mathsf{M}_i, \mathsf{W}_i$ to obtain $(\mathsf{C}_i, \mathsf{D}_i)$. 

\item 
$\cB_s$ then sends $\mathsf{C}_{j^*}, \mathsf{D}_{j^*}$ to the binding challenger for $\mathsf{Com}_{j^*}$, and then obtains $\mathsf{D}_{j^*}$ from the challenger.
    
\item $\cB_s$ returns $\widetilde{\mathsf{D}} = (\mathsf{D}_1, \ldots, \mathsf{D}_t)$ to $\cA_s$. 
\item Let $b'$ denote the output of $\cA$. Output $1$ if $b' = b$, otherwise output $0$.
\end{enumerate}
Note that when the external challenger for $\mathsf{Com}_{j^*}$ samples its challenge $c=0$, the interaction of $\cB$ with $\cA$ corresponds to $\mathsf{Hyb}_{j^*-1}$, and otherwise to $\mathsf{Hyb}_{j^*}$.
Thus by equation~(\ref{eq:comb}), we have that:
\[ \Big| \Pr[\cB \text{ outputs }1|c = 1] - \Pr[\cB \text{ outputs }1|c = 0] \Big| \geq \frac{1}{q(n)t(n)}.
\]
Since the size of $\cB$ is only polynomially larger than that of $\cA$, this contradicts\footnote{Note that any $\cB$ satisfying the equation above can be converted (with polynomial overhead) to an adversary that has advantage at least $\frac{1}{2q(n)t(n)}$ in the binding game.}
the statistical (resp., computational) binding of $\mathsf{Com}_{j^*}$, as desired.
\end{proof}
This concludes the proof of Theorem $\ref{thm:nhcomToCom}$.
\end{proof}
\begin{theorem}
There exists a constant $c>0$ such that $cn$ copy secure one-way state generators with pure state outputs (\defref{owsg}) imply quantum bit commitments satisfying computational collapse binding according to Definition \ref{def:binding-bitcom} and computational hiding according to Definition \ref{def:hiding-com}.
\end{theorem}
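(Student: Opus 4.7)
The plan is to simply chain together all the implications established earlier in the paper.

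Starting from a $cn$-copy secure pure OWSG (for the constant $c$ produced by Theorem~\ref{thm:five-two}), the first step is to invoke Theorem~\ref{thm:five-two} to obtain a one-way puzzle with (possibly inefficient) verification. Applying Theorem~\ref{thm:OWPtoQWPEG} to this puzzle yields a quantum weak pseudoentropy generator, and Theorem~\ref{thm:QWPEGtoQPEG} upgrades it to a full quantum pseudoentropy generator via parallel repetition. Theorem~\ref{thm:QPEGtoEFI} then converts the QPEG into an $s^*$-imbalanced EFI, where crucially the advice parameter $s^*(n)$ satisfies $|s^*(n)| \leq \log p(n)$ for some polynomial $p(\cdot)$.

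From the imbalanced EFI, Lemma~\ref{lem:yan} produces an imbalanced commitment that is right-$s^*$-statistically collapse binding and left-$s^*$-computationally hiding. Applying the flavor-swap of Lemma~\ref{lem:hmy} to the same object gives the complementary imbalanced commitment: left-$s^*$-computationally collapse binding and right-$s^*$-statistically hiding. Feeding these two complementary commitments into Theorem~\ref{thm:nhcom} yields an $s^*$-non-uniform hiding commitment, which is always computationally collapse binding (for every choice of advice) and computationally hiding when the advice is exactly $s^*(n)$.

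Finally, since $|s^*(n)| \leq \log p(n)$ implies that there are at most $p(n)$ possible values of $s^*(n)$, we are in position to apply Theorem~\ref{thm:nhcomToCom}: secret-share the committed bit across a polynomial-sized family of non-uniform-hiding commitments indexed by all possible values of $s^*(n)$. The resulting scheme is uniform, computationally collapse binding (inheriting always-binding from each component), and computationally hiding (since at least one component in the family is hiding on the true value of $s^*(n)$). The whole chain is a straightforward composition, so no new technical obstacle arises at this step; all the effort was in the individual theorems cited. This completes the proof.
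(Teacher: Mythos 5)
Your proposal is correct and follows essentially the same chain of implications as the paper. The paper's own proof simply cites Corollary~\ref{cor:nuefi} (which packages Theorems~\ref{thm:five-two}, \ref{thm:OWPtoQWPEG}, \ref{thm:QWPEGtoQPEG}, and \ref{thm:QPEGtoEFI}), then Theorem~\ref{thm:nhcom} and Theorem~\ref{thm:nhcomToCom}; you have spelled out the intermediate steps, including the imbalanced-commitment and flavor-swap steps via Lemma~\ref{lem:yan} and Lemma~\ref{lem:hmy}, in exactly the order the paper uses, so there is no substantive difference.
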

\begin{proof}
    The theorem follows from Corollary \ref{cor:nuefi}, Theorem \ref{thm:nhcom} and Theorem \ref{thm:nhcomToCom}.
\end{proof}

Computationally hiding and computationally collapse binding quantum bit commitments are known~\cite{BCQ22} to imply EFI pairs, which in turn are known to imply secure computation for all classical and quantum functionalities~\cite{C:BCKM21b,EC:GLSV21,C:AnaQiaYue22,BCQ22}. We therefore also have the following corollary.
\begin{corollary}
\label{cor:owsgsc2}
    There exists a constant $c>0$ such that $cn$ copy secure one-way state generators with pure state outputs imply secure computation for all quantum functionalities.
\end{corollary}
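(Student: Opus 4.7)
The plan is to derive this corollary as an essentially immediate consequence of the preceding theorem combined with a chain of well-known reductions from the literature. The preceding theorem already provides quantum bit commitments with computational collapse binding and computational hiding from $cn$-copy secure pure-state OWSG; the only remaining task is to boot this up to secure multi-party computation for arbitrary quantum functionalities.

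First, I would invoke the preceding theorem to obtain a non-interactive quantum bit commitment $\widetilde{\Com}$ satisfying the binding notion of \defref{binding-bitcom} and the hiding notion of \defref{hiding-com}, under the assumption of $cn$-copy pure OWSG. Next, I would recall that by~\cite{BCQ22}, any computationally hiding and computationally (collapse) binding quantum bit commitment yields an EFI pair satisfying \defref{efi}: the two EFI states $\rho_0, \rho_1$ are simply the reduced commit-register states obtained by running $\widetilde{\Com}$ on input $0$ and $1$ respectively, traced out over the decommit register, and both of the required properties of an EFI pair (computational indistinguishability and statistical trace-distance separation) transfer directly from hiding and binding of the commitment.

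Finally, I would combine the resulting EFI (equivalently, commitments) with the prior work~\cite{C:BCKM21b,EC:GLSV21,C:AnaQiaYue22,BCQ22} which constructs maliciously secure multi-party computation protocols for all quantum (and therefore all classical) functionalities, in the standard quantum communication setting, from exactly this flavor of commitment/EFI. Chaining these implications gives the corollary.

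The only mildly non-trivial step is verifying that the specific flavor of commitment produced by the preceding theorem (computationally collapse binding in the sense of \defref{binding-bitcom}, computationally hiding in the sense of \defref{hiding-com}) is in fact the flavor consumed by the MPC compilers of~\cite{C:BCKM21b,EC:GLSV21,C:AnaQiaYue22,BCQ22}. I do not anticipate a real obstacle here: collapse binding already implies the weaker sum-binding notion used by those compilers (as noted in the proof of \lemref{yan}), and the BCQ22 equivalence between commitments and EFI is stated precisely at this level of binding/hiding. Hence the corollary follows with no new technical content beyond quoting these prior reductions.
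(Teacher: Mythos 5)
Your proposal matches the paper's argument exactly: the paper proves this corollary in one sentence by quoting the preceding theorem, the BCQ22 reduction from computationally hiding and collapse-binding commitments to EFI pairs, and the known constructions of secure quantum computation from EFI~\cite{C:BCKM21b,EC:GLSV21,C:AnaQiaYue22,BCQ22}. Your extra sanity check about flavor compatibility is reasonable but not belabored in the paper.
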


\section{Acknowledgments}
We thank James Bartusek, Yanyi Liu, Amit Sahai, and Taiga Hiroka for illuminating discussions and useful comments.
Both authors were supported in part by AFOSR, NSF 2112890 and NSF CNS-2247727. 
This material is based upon work supported by the Air Force Office of Scientific Research under award
number FA9550-23-1-0543.

\bibliographystyle{alpha}
\bibliography{abbrev0, crypto,bib,custom}

\appendix
\section{One-Way Puzzles are Necessary for QCCC Cryptography}
\label{app:implications}

In this section, we show that a variety of protocols, including encryption, commitments and digital signatures in the hybrid or local or quantum computation classical communication (QCCC) model imply one-way puzzles.
\subsection{Public Key Encryption}
First, we define public-key encryption in the QCCC model. Here, public keys and ciphertexts are classical while secret keys may be quantum. We provide a formal definition and then prove that this type of PKE implies one-way puzzles.
Later we will also show that PKE with classical public and secret keys, but quantum ciphertexts, implies one-way puzzles.
    
\begin{definition}[Public Key Encryption]
        A public key encryption scheme consists of a set of QPT algorithms $(\KeyGen, \Enc, \Dec)$ where
        \begin{itemize}
            \item $\KeyGen(1^n)$ on input the security parameter outputs a classical public key $pk$ and a quantum secret key $\mathsf{\rho}_{sk}$.
            \item $\Enc(pk, m)$ takes a public key and a classical message as input and outputs a classical ciphertext.
            \item $\Dec(\rho, c)$ takes a secret key $\rho$ and a classical ciphertext $c$ as input and outputs a message.
        \end{itemize}
        It has the following properties.
        \begin{itemize}
            \item \textbf{Correctness. }For all $m \in \bin^n$, $$\Prr_{(pk, \rho_{sk}) \leftarrow \KeyGen(1^n)}[\Dec(\rho_{sk},\Enc(pk,m)) = m] \geq 1-\negl(n)$$
            \item \textbf{Hiding. } For all QPT adversaries $\cA$\footnote{We note that this is a weaker property than CPA-security, we use this because it will give us a stronger result.}, $$\Prr_{\substack{(pk,\rho_{sk})\leftarrow\KeyGen(1^n)\\m\sample\bin^n}}[\cA(pk,(\Enc(pk,m)) = m] \leq \negl(n)$$
        \end{itemize}
    \end{definition}
    \begin{theorem}
        Public key encryption in the QCCC model implies one-way puzzles.
    \end{theorem}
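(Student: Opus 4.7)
The plan is to implement the template already sketched in Section 1. The sampler $\mathsf{Puzz.Samp}(1^n)$ will run $(pk, \rho_{sk}) \leftarrow \KeyGen(1^n)$, pick $m \leftarrow \{0,1\}^n$, compute $c \leftarrow \Enc(pk,m)$, discard the (quantum) secret key, and output puzzle $s := (pk, c)$ with solution $k := m$. Since $pk$ and $c$ are classical and $\KeyGen, \Enc$ are QPT, this satisfies the syntactic requirements of Definition \ref{def:owp}. The nontrivial design choice is the (unbounded) verifier, which I would define as $\mathsf{Puzz.Ver}(k, (pk, c)) = \top$ iff $\Pr[\Dec(\rho', c) = k] \geq 1/2$, where $\rho'$ is drawn from the posterior distribution of secret keys produced by $\KeyGen(1^n)$ conditioned on the public key being $pk$ (and the probability also accounts for any randomness in $\Dec$).

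The point of the $1/2$ threshold is that it immediately enforces uniqueness: for any fixed $(pk, c)$ the induced distribution of $\Dec(\rho', c)$ is a probability measure on messages, so at most one $k$ can meet the threshold. Correctness then follows directly from PKE correctness, since $\Dec(\rho_{sk}, \Enc(pk, m)) = m$ holds with probability $1 - \negl(n)$ over the joint sampling of $\mathsf{Puzz.Samp}$. A Markov argument over the marginal distribution of $pk$ lifts this to the conditional bound used by the verifier, making $\mathsf{Puzz.Ver}(m, (pk, c))$ accept on sampler outputs with overwhelming probability.

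For security, I would argue by contradiction. Any QPT adversary $\cA$ that outputs an accepting $k$ from $(pk, c)$ with non-negligible probability $\varepsilon$ must, by the uniqueness property combined with correctness, be outputting the planted plaintext $m$ with probability $\varepsilon - \negl(n)$. A reduction that simply runs $\cA$ on $(pk, \Enc(pk, m))$ for uniform $m$ then violates the hiding property. I do not anticipate any substantive obstacle: the only mildly delicate point is checking that the posterior over $\rho'$ given a classical $pk$ is a well-defined object and that the Markov step turns average-case correctness into the conditional statement needed to verify the sampler's output; both are routine.
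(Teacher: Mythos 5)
Your proposal matches the paper's construction essentially exactly: same sampler (puzzle $= (pk, \Enc(pk,m))$, solution $= m$), and a verifier that checks whether the posterior secret key decrypts the ciphertext to the claimed solution. The only difference is that you make the verifier deterministic by thresholding the decryption probability at $1/2$ (which buys you explicit uniqueness of the accepting key), whereas the paper simply runs $\Dec$ on a sample from the posterior and accepts if the output equals $k$; both are valid and lead to the same correctness (via Markov) and security (via the PKE hiding game) arguments.
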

\begin{proof}
    Let $(\KeyGen, \Enc, \Dec)$ be a PKE scheme. We define  algorithms $(\Samp, \Ver)$ as follows:
    \begin{itemize}
        \item $\Samp(1^n)$: Sample $(pk,\cdot) \leftarrow \KeyGen(1^n)$, $m \leftarrow \{0,1\}^n$
        and return $(k:= m, s: =(pk,\mathsf{Enc}(pk,m))$.
        \item $\Ver(k,s)$:
        Parse $k:=m$ and $s:=(pk,ct)$.
        Let $\rho_{sk}$ denote the mixed secret key corresponding to public key $s$. Accept if and only if 
        $m = \Dec(\rho_{sk}, ct)$.
    \end{itemize}
    By the correctness of the PKE scheme, with overwhelming probability over the sampling of keys, for any message $m$, $\Dec(\rho_{sk}, \Enc(pk, m)) = m$. Correctness of the one-way puzzle follows.
    Security of the one-way puzzle also follows immediately by hiding of the PKE scheme. 
\end{proof}

We also show that public-key encryption with classical public and secret keys implies one-way puzzles, even when the ciphertexts are quantum.
    \begin{theorem}
        Public key encryption with classical keys implies one-way puzzles.
    \end{theorem}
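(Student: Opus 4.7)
The key observation is that since the PKE's secret keys are classical, they can be placed directly inside the classical puzzle, which leads to a much simpler construction than the shadow tomography route used elsewhere in the paper. I would define the one-way puzzle $(\Samp, \Ver)$ as follows. The sampler $\Samp(1^n)$ runs $(pk, sk) \leftarrow \KeyGen(1^n)$ and outputs puzzle $s := pk$ with key $k := sk$. The (inefficient) verifier $\Ver(k, s)$ accepts iff $\Pr[\Dec(k, \Enc(s, m)) = m] \geq 1/2$, where the probability is over a uniform $m \leftarrow \bin^n$ and the internal randomness of $\Enc$ and $\Dec$. Since $\Ver$ is a fixed deterministic function of its classical inputs, and \defref{owp} explicitly permits $\Ver$ to be unbounded, this construction is syntactically a one-way puzzle.

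Correctness follows from PKE correctness by a Markov argument: since $\Pr_{(pk, sk), m}[\Dec(sk, \Enc(pk, m)) = m] \geq 1 - \negl(n)$, the fraction of $(pk, sk)$ pairs for which the inner probability (over $m$ only) drops below $1/2$ is itself negligible, so with overwhelming probability over $\Samp$ the verifier $\Ver(sk, pk)$ accepts.

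For security, I would reduce an OWP adversary $\cA$ to an adversary $\cB$ against hiding of the PKE. On input a hiding challenge $(pk, \Enc(pk, m^*))$ with $m^* \leftarrow \bin^n$, the reduction $\cB$ runs $\cA(pk)$ to obtain a candidate $sk'$, computes $m' := \Dec(sk', \Enc(pk, m^*))$, and outputs $m'$. Whenever $\cA$ succeeds, that is, $\Ver(sk', pk) = 1$, the definition of $\Ver$ guarantees that $sk'$ correctly decrypts $\Enc(pk, m)$ to $m$ for at least half of all $m \in \bin^n$; since $m^*$ is drawn from the same uniform distribution, $\cB$ outputs $m^*$ with probability at least $1/2$ conditional on $\cA$'s success, yielding a non-negligible advantage against hiding and contradicting the assumption.

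The striking feature is that the quantum ciphertext poses no obstacle at all: the reduction uses only the single copy of $\Enc(pk, m^*)$ provided by the hiding challenger and applies the extracted classical $sk'$ to decrypt it, with no need for multi-copy hiding, shadow tomography, or any quantum estimation. This stands in sharp contrast with the prior theorem for quantum secret keys, where the inability to embed a quantum $sk$ inside a classical puzzle forces the construction to place the (classical) ciphertext in the puzzle instead. The only real technical care needed in the proof is the Markov step above and the choice of threshold in $\Ver$, both of which are routine; so the ``main obstacle'' is really just recognizing that the natural construction already suffices.
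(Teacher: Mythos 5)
Your proof is correct and takes essentially the same approach as the paper: both set the key $k := sk$ and the puzzle $s := pk$, and verify by checking that $sk$ decrypts a fresh encryption of a uniform message under $pk$. The only difference is that the paper's $\Ver$ is randomized (sample one $m$ and accept iff $\Dec_k(\Enc_s(m)) = m$), which makes the security reduction a direct rewrite of the success probability, whereas your deterministic threshold-based $\Ver$ costs an extra factor of $1/2$ and a Markov step but is equally valid.
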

    \begin{proof}
        Let $(\KeyGen, \Enc, \Dec)$ be a PKE scheme. We define  algorithms $(\Samp, \Ver)$ as follows:
    \begin{itemize}
        \item $\Samp(1^n)$: Sample $(pk,sk) \leftarrow \KeyGen(1^n)$ and return $(sk, pk)$.
        \item $\Ver(k,s)$: Sample $m\leftarrow \bin^n$ and accept if and only if $m = \Dec_k(\Enc_s(m))$
    \end{itemize}
    \begin{claim}
        $(\Samp, \Ver)$ satisfies \defref{owp}.
    \end{claim}
    \begin{proof}
        By the correctness of the PKE scheme, with overwhelming probability over the sampling of keys $(pk, sk)$, for any message $m$, $\Dec_{sk}(\Enc_{pk}(m)) = m$. Correctness of the one-way puzzle follows.

        Suppose there exists an adversary that breaks the security of the one-way puzzle. That is there exists a non-negligible function $p(\cdot)$ such that for all $n\in \bbN$
        \[
            \Prr_{(k,s) \leftarrow \Samp(1^n)}[\Ver(\cA(s), s) = \top] \geq p(n)
        \]
        Rewriting using the definitions of $\Samp$ and $\Ver$ we get
        \[
            \Prr_{\substack{(pk,sk)\leftarrow\KeyGen(1^n)\\m\sample\bin^n}}[sk^*\leftarrow \cA(pk) \wedge m = \Dec_{sk^*}(\Enc_{pk}(m))] 
        \]
        Let $\cB$ be an algorithm that takes $(pk, z)$ as input, runs $\cA(pk)$ to get $sk^*$ and outputs $\Dec_{sk^*}(z)$. Then
        \[
            \Prr_{\substack{(pk,sk)\leftarrow\KeyGen(1^n)\\m\sample\bin^n}}[\cB(pk,(\Enc_{pk}(m)) = m] \geq p(n)
        \]
        which contradicts the hiding of the PKE scheme.
    \end{proof}
    The theorem follows trivially from the claim.
    \end{proof}
\subsection{Digital Signatures}
We define signatures in the QCCC model. Here, verification keys and signatures are classical, while signing keys may be quantum. We provide a formal definition and then prove that such signatures imply one-way puzzles. Later, we will also show that signature schemes with classical (signing and verification) keys, but quantum signatures, also imply one-way puzzles.

\begin{definition}[Signature Scheme]
        A digital signature scheme consists of a set of QPT algorithms $(\KeyGen, \sS, \sV)$ where
        \begin{itemize}
            \item $\KeyGen(1^n)$ takes the security parameter as input and outputs a signing key $\rho_{sk}$ and a verification key $vk$.
            \item $\sS(\rho_{sk}, m)$ takes a signing key and a message as input and outputs a signature $\sigma$.
            \item $\sV(vk, \sigma, m)$ takes a verification key, a signature, and a message as input and outputs a bit $b$.
        \end{itemize}
        A signature scheme has the following properties.
        \begin{itemize}
            \item \textbf{Correctness. }For all $m \in \bin^n$, $$\Prr_{(\rho_{sk}, vk) \leftarrow \KeyGen(1^n)}[\sV(vk, \sS(\rho_{sk},m),m) = 1] \geq 1-\negl(n)$$
            \item \textbf{Unforgeability. } For all QPT adversaries $\cA$, $$\Prr_{(\rho_{sk}, vk)\leftarrow\KeyGen(1^n)}[(t, m) \leftarrow\cA(vk) \wedge \sV(vk,t,m) = 1] \leq \negl(n)$$
        \end{itemize}
    \end{definition}
    \begin{theorem}
        QCCC signature schemes imply one-way puzzles.
    \end{theorem}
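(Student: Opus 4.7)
(Proposal)
The plan is to mimic the template already used for the PKE-based implication, with the verification key playing the role of the ``public'' side of the puzzle and a valid message/signature pair playing the role of the secret. Concretely, I would define $\Samp(1^n)$ to first run $(\rho_{sk}, vk) \leftarrow \KeyGen(1^n)$, then sample a uniformly random message $m \leftarrow \bin^n$, then compute $\sigma \leftarrow \sS(\rho_{sk}, m)$, and finally output the classical pair $(k, s) := ((m,\sigma), vk)$. The puzzle verifier $\Ver((m,\sigma), vk)$ simply runs the signature verifier $\sV(vk, \sigma, m)$ and accepts iff it outputs $1$. Note that although the definition of one-way puzzles only requires $\Ver$ to be unbounded, here we in fact get an efficient puzzle verifier ``for free,'' and the puzzle sampler is QPT because $\KeyGen$ and $\sS$ are QPT; so only correctness and one-wayness need to be checked.

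Correctness is immediate from signature correctness: the definition guarantees that for every $m \in \bin^n$, the probability over $(\rho_{sk}, vk)\leftarrow\KeyGen(1^n)$ that $\sV(vk, \sS(\rho_{sk},m), m) = 1$ is $1-\negl(n)$. Since this holds for every $m$, it also holds when $m$ is sampled uniformly, so $\Pr_{(k,s)\leftarrow\Samp(1^n)}[\Ver(k,s) = \top] \geq 1 - \negl(n)$.

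For one-wayness, I would argue by contradiction and essentially reuse the reduction from the PKE proof. Suppose a QPT adversary $\cA$ given $s = vk$ outputs $k^* = (m^*, \sigma^*)$ such that $\Ver(k^*, s) = \top$ with non-negligible probability. By definition of $\Ver$, this means $\sV(vk, \sigma^*, m^*) = 1$ with non-negligible probability over $(\rho_{sk}, vk) \leftarrow \KeyGen(1^n)$ and the internal randomness of $\cA$. Define a forger $\cB$ that on input $vk$ simply runs $\cA(vk)$ and outputs the pair $(\sigma^*, m^*)$ it obtains. Then $\cB$ wins the unforgeability experiment with the same non-negligible probability, contradicting the unforgeability of the signature scheme. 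Note that we do not even need access to a signing oracle here, because the unforgeability definition in this section is the ``no-message attack'' flavor, which matches the puzzle security game exactly.

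The entire argument is essentially syntactic, so I do not expect any genuine obstacle; the only points that require a brief explicit sanity check are that the sampler stays QPT despite $\rho_{sk}$ being potentially quantum (it is, since $\sS$ is QPT and we can discard $\rho_{sk}$ after signing once), that the key and puzzle are both classical as required by Definition~\ref{def:owp} (they are, since $vk$, $m$, and $\sigma$ are all classical by assumption), and that randomizing $m$ inside $\Samp$ does not weaken unforgeability (it does not, since we are reducing to the strongest ``existential forgery under no-message attack'' flavor already stated).
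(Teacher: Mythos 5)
Your proof is correct and follows essentially the same approach as the paper; the only difference is cosmetic. The paper fixes the message to $0^n$ (so the puzzle key is just the signature $t$, and $\Ver$ hardcodes $m = 0^n$), whereas you sample a uniformly random $m$ and place it into the key alongside $\sigma$; both reductions map a puzzle inverter directly to a no-message forger, and neither variant buys anything substantive over the other here.
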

    \begin{proof}
         Let $(\KeyGen, \sS, \sV)$ be a PKE scheme. We define the algorithms $(\Samp, \Ver)$ as follows:
    \begin{itemize}
        \item $\Samp(1^n)$: Sample $(\rho_{sk}, vk) \leftarrow \KeyGen(1^n)$. Use $\rho_{sk}$ to sign $m = 0^n$, obtaining $t = \sS(\rho_{sk},0^n)$. Return $(k:=t, s:=vk)$.
        \item $\Ver(k,s)$: Accept if and only if $\sV(s,k,0^n) = 1$.
    \end{itemize}
    By the correctness of the signature scheme, with overwhelming probability over the sampling of keys $(sk, vk)$, for any message $m$ (and in particular for $m = 0^n$), $\sV(vk,(\sS(sk, m), m) = 1$. Correctness of the one-way puzzle follows.
    Furthermore, any adversary that breaks security of the one-way puzzle outputs a valid signature on $0^n$ with respect to a randomly sampled verification key $vk$, breaking unforgeability as defined above.
    \end{proof}

    \begin{theorem}
        Signature schemes with classical verification and signing keys imply one-way puzzles.
    \end{theorem}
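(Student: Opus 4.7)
(Proposal.)
The plan is to mirror the template used for the PKE result with classical keys but quantum ciphertexts, this time applied to the analogous setting where the signing and verification keys are classical but the signatures themselves may be quantum. The key observation is that, just as one-way puzzles allow inefficient verification, we can embed the (possibly quantum) signing procedure inside the puzzle's verification algorithm. This avoids ever having to ``pull out'' a quantum signature into the puzzle key, sidestepping the main obstacle that forced us to change templates in the classical-ciphertext case.

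Concretely, let $(\KeyGen,\sS,\sV)$ be a signature scheme with classical signing and verification keys. I would define the one-way puzzle $(\Samp,\Ver)$ as follows.
\begin{itemize}
    \item $\Samp(1^n)$: sample $(sk,vk)\leftarrow\KeyGen(1^n)$ and output $(k:=sk,\,s:=vk)$.
    \item $\Ver(k,s)$: sample $m\leftarrow\bin^n$, run the (possibly quantum) signing procedure to obtain $\sigma\leftarrow\sS(k,m)$, and accept iff $\sV(s,\sigma,m)=1$.
\end{itemize}
Note that both $k$ and $s$ are classical strings, as required by \defref{owp}, and $\Ver$ is permitted to be inefficient.

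Correctness is immediate from the correctness of the signature scheme: for honestly sampled $(sk,vk)$ and uniformly sampled $m$, the probability that $\sV(vk,\sS(sk,m),m)=1$ is overwhelming. For security, I would proceed by contradiction. Suppose a QPT adversary $\cA$ outputs, given $vk$, a classical string $k^*$ such that $\Ver(k^*,vk)=\top$ with non-negligible probability $p(n)$. Unpacking the definition of $\Ver$, this says
\[
\Prr_{\substack{(sk,vk)\leftarrow\KeyGen(1^n)\\ m\leftarrow\bin^n}}\bigl[\sV(vk,\sS(\cA(vk),m),m)=1\bigr]\geq p(n).
\]
I then build a forger $\cB(vk)$ that runs $k^*\leftarrow\cA(vk)$, samples $m\leftarrow\bin^n$, computes $\sigma\leftarrow\sS(k^*,m)$ (which is a QPT operation since $\sS$ is QPT and $\cA$ is QPT), and outputs $(\sigma,m)$. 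By the above inequality, $\cB$ produces a pair that verifies with probability at least $p(n)$, contradicting unforgeability.

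The main conceptual point to get right is that the one-way puzzle verification does not need to be efficient, which is what lets us hide the quantum signing step inside $\Ver$ rather than inside $\Samp$; this is precisely the degree of freedom that fails for the ``pull randomness out'' template and forces us to use this alternative construction. A minor technical point worth being careful about is that $\Ver$'s internal randomness (the message $m$ and any randomness used by $\sS$) must be freshly sampled per invocation so that the reduction $\cB$ can faithfully simulate the puzzle's accepting event while producing a forgery on a message that is distributed independently of $vk$; this is automatic from the definition but should be stated explicitly.
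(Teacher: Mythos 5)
Your construction and reduction are exactly those used in the paper: set the puzzle to $vk$ and the key to $sk$, let $\Ver$ internally sign a fresh random message under the candidate key and check it, and turn any puzzle inverter into a signature forger. (The paper's line ``return $(vk,sk)$'' is a typo for $(sk,vk)$ given its own conventions elsewhere, so your explicit $k:=sk$, $s:=vk$ matches the intended reading.)
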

    \begin{proof}
        Let $(\KeyGen, \sS, \sV)$ be a PKE scheme. We define the algorithms $\Samp, \Ver$ as follows:
    \begin{itemize}
        \item $\Samp(1^n)$: Sample $(sk, vk) \leftarrow \KeyGen(1^n)$ and return $(vk, sk)$.
        \item $\Ver(k,s)$: Sample $m\leftarrow \bin^n$ and accept if and only if $\sV(vk,(\sS(sk, m), m) = 1$
    \end{itemize}
    \begin{claim}
        $(\Samp, \Ver)$ satisfies \defref{owp}.
    \end{claim}
    \begin{proof}
        By the correctness of the signature scheme, with overwhelming probability over the sampling of keys $(sk, vk)$, for any message $m$, $\sV(vk,(\sS(sk, m), m) = 1$. Correctness of the one-way puzzle follows.

        Suppose there exists an adversary that breaks the security of the one-way puzzle. That is there exists a non-negligible function $p(\cdot)$ such that for all $n\in \bbN$
        \[
            \Prr_{(k,s) \leftarrow \Samp(1^n)}[\Ver(\cA(s), s) = \top] \geq p(n)
        \]
        Rewriting using the definitions of $\Samp$ and $\Ver$ we get
        \[
            \Prr_{\substack{(sk,vk)\leftarrow\KeyGen(1^n)\\m\sample\bin^n}}[sk^*\leftarrow \cA(vk) \wedge \sV(vk,(\sS(sk^*, m), m) = 1] 
        \]
        Let $\cB$ be an algorithm that takes $vk$ as input, runs $\cA(vk)$ to get $sk^*$, samples $m\sample\bin^n$ and outputs $\sS(sk^*, m), m$. Then
        \[
            \Prr_{(sk, vk)\leftarrow\KeyGen(1^n)}[(\sigma, m) \leftarrow\cB(vk) \wedge \sV(vk,\sigma,m) = 1] \geq p(n)
        \]
        which contradicts the unforgeability of the signature scheme.
    \end{proof}
    The theorem follows trivially from the claim.
    \end{proof}

\subsection{Bit Commitments}
We define commitments in the QCCC model below. The definition here differs from the one in Section \ref{sec:com-def} since unlike the commitment scheme we build from pure OWSGs, general QCCC commitments may be interactive.
\begin{definition}[Bit Commitment Scheme, Syntax]
    A (QCCC) bit commitment scheme is an efficient two-party protocol $\Com = \langle \sC, \sR \rangle$ between a committer $\sC$ and a receiver $\sR$ consisting of a commit stage and an opening stage.
   \begin{itemize} 
    \item \noindent\textbf{Commit Stage.} Both parties receive the security parameter $1^n$ and the committer $\sC$ receives a private input $b \in \bin$. It interacts with the receiver $\sR$ to produce a classical transcript $z$. At the end of the stage each party outputs a (private) quantum state, denoted by $q_\sC$ and $q_\sR$ respectively.
    \item 
    \noindent\textbf{Opening Stage. } Both parties receive the transcript $z$ produced in the first stage and their respective output states. They then interact with classical communication and at the end of this stage the receiver outputs a bit or the reject symbol $\bot$.
    \end{itemize}
    A bit-commitment scheme satisfies correctness if there exists a negligible function $\mu(\cdot)$ such that for all $n\in \bbN$ and all $b\in\bin$, when $\sC$ and $\sR$ are honest, $\sR$ outputs $b$ at the end of the opening stage with probability at least $1-\mu(n)$.
\end{definition}

\begin{definition}[Computational Hiding]
    A bit-commitment scheme is computationally hiding if there exists a negligible function $\mu(\cdot)$ such that for an honest committer $\sC$ that receives a bit $b$ as input, no QPT adversarial receiver $\cR$ can distinguish interactions where $\sC$ receives $b=0$ and interactions where $\sC$ receives $b=1$ with advantage greater than $\mu(n).$
\end{definition}

Below, we define a weak notion of binding for commitments, that we label computational weak honest binding. This is implied by other standard notions such as honest binding~\cite{AC:Yan22}.

\begin{definition}[Computational Weak Honest Binding]
    A quantum bit commitment scheme is weakly honest binding if there exists a negligible function $\mu(\cdot)$ such that for $b\in \bin$, no QPT adversary $\cA$ wins the following game with probability greater than $\mu(n)$.
    \begin{itemize}
        \item Run the commit stage of the commitment with an honest committer $\sC$ that receives input $b$ and an honest receiver $\sR$. Let $z$ be the transcript and let $q_\sR$ be the receiver state at the end of the commit stage.
        \item Run the opening stage between $\cA$ and $\sR$, where $\cA$ receives $z$ (but not the state output by $\sC$) and $\sR$ receives $\q_\sR$. The adversary wins if $\sR$ outputs $1-b$.
    \end{itemize}
\end{definition}

We say that a bit-commitment scheme has non-interactive opening if the opening stage consists of a single message sent from the committer to the receiver.

\begin{theorem}
    QCCC Bit commitments that are computationally hiding, computationally weak honest binding, and have non-interactive openings imply one-way puzzles.
\end{theorem}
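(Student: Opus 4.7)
The plan is to have the puzzle sampler honestly execute the commit and (non-interactive) opening stages to the bit $0$, and then output the classical transcript as the puzzle together with the classical opening as the key. Formally, define $\Samp(1^n)$ to honestly run $\langle \sC(0), \sR \rangle$ producing a commit transcript $z$, followed by the (non-interactive) opening, in which the honest committer sends a single classical message $d$. Output $(k, s) := (d, z)$. Define the (unbounded) verification as $\Ver(k, s) = \top$ iff, conditioned on the commit transcript being $s$, the honest receiver on input the opening message $k$ outputs the bit $0$ with probability at least $1/2$ over its internal quantum randomness. Since $\Ver$ is allowed to be inefficient, this quantum acceptance probability can be computed exactly from the protocol description.

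\textbf{Correctness.} By completeness of the commitment, the honestly produced pair $(k, s)$ causes the honest receiver to output $0$ with probability $1 - \mu(n)$ for a negligible $\mu$. By a Markov argument over the choice of $s$, except with negligible probability $s$ is ``typical,'' meaning that conditioned on the commit transcript being $s$, the probability that the honest receiver accepts the honest opening $k$ as $0$ is at least (say) $2/3 \geq 1/2$. Hence $\Ver(k,s) = \top$ with overwhelming probability over $(k,s) \leftarrow \Samp(1^n)$.

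\textbf{Security.} The plan is a short hybrid argument chaining hiding and weak honest binding. Assume for contradiction that some QPT $\cA$ inverts the puzzle with non-negligible probability $1/q(n)$. By definition of $\Ver$ and Markov, the probability that the honest receiver outputs $0$ on opening $\cA(s)$, where $s$ is a transcript of an honest commit to $b=0$, is at least $1/(2q(n))$. Now construct a QPT distinguisher $\cD$ for hiding that plays the commit stage as the honest receiver, obtains the public transcript $s$ and its private state $q_\sR$, runs $\cA(s)$ to obtain a candidate opening $k$, runs the honest receiver's opening verification on $(q_\sR, k)$, and outputs $1$ iff that verification yields the bit $0$. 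Computational hiding against $\cD$ implies that $\cD$'s output probabilities in the $b=0$ and $b=1$ experiments differ by at most a negligible function. Hence in the $b=1$ experiment, the honest receiver still outputs $0$ on $\cA(s)$ with probability $\geq 1/(2q(n)) - \negl(n)$, which is non-negligible. But $\cA$ is given only the classical transcript $s$ (not the committer's state $q_\sC$), and it causes the honest receiver to open to $1 - b = 0$; this is precisely the winning condition in the weak honest binding game, a contradiction.

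\textbf{Main obstacle.} The only conceptual point is specifying $\Ver$ on purely classical inputs despite the receiver's acceptance being a quantum event; this is handled cleanly by letting $\Ver$ be unbounded and thresholding the quantum acceptance probability at $1/2$. Once this is in place, the fact that the QCCC opening is a single classical message, together with the fact that the weak-honest-binding adversary is given exactly $z$ (and no quantum committer state), makes the hybrid reduction immediate; the rest is a routine Markov / averaging argument.
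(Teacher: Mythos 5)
Your proof is correct and follows essentially the same strategy as the paper: sample $(k,s)$ as the honest opening message and commit-stage transcript for a commitment to $0$, let $\Ver$ (inefficiently) re-create a receiver state consistent with $s$ and check that it opens $k$ to $0$, then chain hiding and weak honest binding for security. The only differences are cosmetic: the paper's $\Ver$ is randomized (it samples $q^*_\sR$ and runs the receiver once, so its acceptance probability equals the honest acceptance probability directly), whereas you threshold that probability at $1/2$, which then requires the Markov step in the correctness argument; also, you derive the binding violation from a hiding reduction, while the paper first invokes binding to get a negligible acceptance bound in the $b=1$ world and then builds a hiding distinguisher from the gap — these are the same chain read in opposite order. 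One small imprecision: your Markov argument for correctness should be taken over the joint choice of $(k,s)$, not just $s$, since $\Ver$ thresholds the acceptance probability for the fixed pair $(k,s)$; with that fix the argument is exactly the paper's.
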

\begin{proof}
    Let $\Com=(\sC,\sR)$ be a computationally hiding and computationally weak honest binding bit-commitment scheme. We define the algorithms $\Samp, \Ver$ as follows:
    \begin{itemize}
        \item $\Samp(1^n)$:
        \begin{enumerate}
            \item Run the commit stage with an honest committer $\sC$ that receives input $0$ and an honest receiver $\sR$. Let $z$ be the transcript and let $q_\sC$ and $q_\sR$ be the committer and receiver outputs respectively.
            \item Run the opening stage with an honest committer $\sC$ that receives input $(z, q_\sC)$ and an honest receiver $\sR$ that receives input $(z, q_\sR)$. Let the opening message be $d$.
            \item Output $(d, z)$
        \end{enumerate}
        \item $\Ver(k,s)$:
        \begin{enumerate}
            \item Consider an honest execution of the commit stage between $\sC$ with input $0$ and $\sR$. Sample a random output state $q^*_\sR$ for $\sR$ conditioned on the transcript being $s$. This may be achieved in exponential time by rejection sampling. 
            Output $\bot$ if no such $q^*_\sR$ exists.
            \item Return $\top$ if the output of $\sR$ on input $q^*_\sR$ and message $k$ is $0$. Return $\bot$ otherwise.
        \end{enumerate}
    \end{itemize}
    \begin{claim}
        $(\Samp, \Ver)$ satisfies \defref{owp}.
    \end{claim}
    \begin{proof} 
    By construction, for a transcript $z$ generated by an honest execution of commitment to $0$, the state $q^*_\sR$ generated by $\Ver(d, z)$ for arbitrary $d$ is identically distributed to the state $q_\sR$ output by the receiver during the honest execution. $\Ver(d,z)$ accepts whenever an honest receiver accepts message $d$ as an opening to $0$ given transcript $z$ and state $\q^*_\sR$. Therefore, for any message $d$, the probability that $\Ver(d,z)$ accepts is the probability that an honest receiver accepts message $d$ as an opening to $0$ given transcript $z$ and state $\q_\sR$.
    
    \noindent\textbf{Correctness. } $\Samp$ generates the message $d$ as an honest opening for $z$, an honest commitment to $0$. Therefore by the correctness of the commitment scheme the probability that $\Ver(d,z)$ accepts is atleast $1-\negl(n)$. Therefore
    \[
        \Prr_{(k,s)\leftarrow \Samp(1^n)}[\Ver(k,s)=\top] = 1-\negl(n)
    \]
    which proves correctness of the oneway puzzle.
    
    \noindent\textbf{Security. } We prove by contradiction. Suppose there exists a QPT adversary $\cA$ that breaks one-way puzzle security. That is there exists a non-negligible function $\epsilon(\cdot)$ such that for all $n \in \bbN$:
    \[
        \Prr_{(k,s)\leftarrow \Samp(1^n)}[\Ver(\cA(s),s)=\top] \geq \epsilon(n)
    \]
    Since $s$ is sampled as the transcript of an honest commitment to zero, this means that an honest receiver accepts the output of $\cA(s)$ as an opening to zero given transcript $s$ and the receiver output from the commitment execution. This may be written as
    \[
    \Prr\left[0 \leftarrow \sR(z, q_\sR, \cA(z))\ \middle|\ z, q_\sC, q_\sR \leftarrow \langle \sC(0, 1^n), \sR(1^n)\rangle\right] \geq \epsilon(n)
    \]
     By the computational weak honest binding property of the commitment, the probability that $\cA(z)$ is accepted as an opening to zero when the value committed to is one must be negligible. That is,
    \[
        \Prr\left[0 \leftarrow \sR(z, q_\sR, \cA(z))\ \middle|\ z, q_\sC, q_\sR \leftarrow \langle \sC(1, 1^n), \sR(1^n)\rangle\right] = \negl(n)
    \]
    We now build an algorithm $\cB$ that breaks the hiding of the commitment scheme. $\cB$ interacts with an honest committer $\sC$ as follows:
    \begin{itemize}
        \item During the commit stage, simulate an honest receiver in interactions with $\sC$ to obtain a transcript $z$ and a receiver state $q_\sR$.
        \item Run $\cA$ on input $z$ to obtain opening transcript $d^*$.
        \item Return $1$ if $R(z, q_\sR, d^*)$ outputs $0$. Else output $0$.
    \end{itemize}
    By the two previous inequalities,
    the probability that $\cB$ outputs $1$ when the commitment is to $0$ is atleast $\epsilon(n)$. Likewise, the probability that $\cB$ outputs $1$ when the commitment is to $1$ is atmost $\negl(n)$. $\cB$ therefore distinguishes both interactions with advantage atleast $\epsilon(n) - \negl(n)$, which contradicts the hiding of the commitment scheme.
    \end{proof}
    The theorem follows trivially from the previous claim.
    \end{proof}

    \subsection{Symmetric Encryption with Classical Keys and Ciphertexts}
    We prove that symmetric encryption schemes for classical messages, with classical keys and ciphertexts imply one-way puzzles.
\begin{definition}[Symmetric Encryption]
        A symmetric encryption scheme consists of a set of QPT algorithms $(\KeyGen, \Enc, \Dec)$ where
        \begin{itemize}
            \item $\KeyGen(1^n)$ takes the security parameter as input and outputs a secret key $k$.
            \item $\Enc_{sk}(m)$ takes a key and a message as input and outputs a ciphertext $c$.
            \item $\Dec_{k}(c)$ takes a key and a ciphertext as input and outputs a message.
        \end{itemize}
        A symmetric encryption scheme has the following properties.
        \begin{itemize}
            \item \textbf{Correctness. }For all $m \in \bin^n$, $$\Prr_{k \leftarrow \KeyGen(1^n)}[\Dec_{k}(\Enc_{k}(m)) = m] \geq 1-\negl(n)$$
            \item \textbf{Hiding. } All QPT adversaries $\cA$ win the following game with probability less than $1/2 + \negl(n)$. 
            \begin{itemize}
                \item $m_0, m_1 \leftarrow \cA(1^n)$
                \item $b\sample \bin$
                \item $k \leftarrow \KeyGen(1^n)$
                \item $b' \leftarrow \cA(\Enc_k(m_b))$
            \end{itemize}
            $\cA$ wins the game if $b'=b$.
        \end{itemize}
        We additionally assume that the size of the keyspace is a negligible fraction of the message space, i.e. $\frac{|\Supp(\KeyGen(1^n))|}{2^n} = \negl(n)$. This is without loss of generality for symmetric encryption schemes that support multi-message encryption.
    \end{definition}
    \begin{theorem}
        Symmetric encryption in the QCCC model implies one-way puzzles.
    \end{theorem}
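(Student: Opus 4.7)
The plan is to mirror the public-key encryption construction: the puzzle sampler will produce an encryption alongside its plaintext, and the solution will be the symmetric key used. Concretely, $\Samp(1^n)$ samples $k \leftarrow \KeyGen(1^n)$ and a uniform message $m \leftarrow \{0,1\}^n$, computes $c = \Enc_k(m)$, and outputs key $k$ and puzzle $s = (m,c)$. The (inefficient) verifier $\Ver(k, s)$ parses $s = (m,c)$ and accepts iff $\Dec_k(c) = m$. Correctness is immediate from correctness of the encryption scheme.

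For one-wayness, I will reduce to the hiding property. Suppose towards contradiction that some QPT adversary $\cA$, on input $s = (m,c)$, outputs a valid key with inverse polynomial probability $1/p(n)$. The distinguisher $\cB$ against hiding samples two independent uniform messages $m_0, m_1 \leftarrow \{0,1\}^n$, submits them to the hiding challenger, receives $c = \Enc_k(m_b)$, runs $\cA(m_0, c)$ to obtain a candidate key $k^*$, and outputs $0$ if $\Dec_{k^*}(c) = m_0$ and $1$ otherwise. When $b=0$, the pair $(m_0, c)$ is identically distributed to an honest puzzle, so $\cB$ outputs $0$ with probability at least $1/p(n)$. When $b = 1$, the message $m_0$ is independent of $c$, so $\Dec_{k^*}(c) = m_0$ only if there exists some $k' \in \Supp(\KeyGen(1^n))$ with $\Dec_{k'}(c) = m_0$; by a union bound over the at most $|\Supp(\KeyGen(1^n))|$ possible decryptions of $c$ and the uniformity of $m_0$, this event has probability at most $|\Supp(\KeyGen(1^n))|/2^n = \negl(n)$. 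Thus $\cB$'s distinguishing advantage is at least $1/p(n) - \negl(n)$, contradicting hiding.

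The main subtlety this reduction must handle is that $\cA$ may return some valid key without returning the ``intended'' key $k$, so the reduction cannot simply re-encrypt and compare ciphertexts; what saves it is that verification is defined purely by the plaintext-ciphertext consistency relation $\Dec_{k^*}(c) = m$, which allows the reduction to check the adversary's output directly. The sparseness assumption $|\Supp(\KeyGen(1^n))|/2^n = \negl(n)$ is precisely what ensures that, on a random independent target message $m_0$, a ciphertext generated from an unrelated message essentially cannot be opened to $m_0$ by any key, giving the $b=0$ and $b=1$ cases of the hiding game sharply different success probabilities and thereby yielding the contradiction.
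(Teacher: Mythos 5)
Your construction and reduction match the paper's proof: you use the same $(\Samp,\Ver)$ with the plaintext included in the puzzle, build the same distinguisher that feeds $(m_0,c)$ to the puzzle inverter and checks $\Dec_{k^*}(c)=m_0$, and invoke the same sparseness condition $|\Supp(\KeyGen(1^n))|/2^n = \negl(n)$ to bound the $b=1$ case. The only cosmetic difference is that your $\cB$ outputs $1$ outright when the check fails whereas the paper outputs a uniform bit; both yield the same conclusion.
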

    \begin{proof}
        Let $(\KeyGen, \Enc, \Dec)$ be a symmetric encryption scheme. We define the algorithms $\Samp, \Ver$ as follows:
    \begin{itemize}
        \item $\Samp(1^n)$: Sample $k \leftarrow \KeyGen(1^n)$ and $m\leftarrow\bin^n$ and return $k, (m,\Enc_k(m))$.
        \item $\Ver(k,s)$: Parse $s$ as $(m, c)$ and accept if $m = \Dec_k(c)$. Reject otherwise.
    \end{itemize}
    \begin{claim}
        $(\Samp, \Ver)$ satisfies \defref{owp}.
    \end{claim}
    \begin{proof}
        By the correctness of the symmetric encryption scheme, with overwhelming probability over the sampling of key $k$, for any message $m$, $\Dec_{k}(\Enc_{k}(m)) = m$. Correctness of the one-way puzzle follows.

        Suppose there exists an adversary that breaks the security of the one-way puzzle. That is there exists a non-negligible function $\epsilon(\cdot)$ such that for all $n\in \bbN$
        \[
            \Prr_{(k,s) \leftarrow \Samp(1^n)}[\Ver(\cA(s), s) = \top] \geq \epsilon(n)
        \]
        Rewriting using the definitions of $\Samp$ and $\Ver$ we get
        \[
            \Prr_{\substack{k\leftarrow\KeyGen(1^n)\\m\sample\bin^n}}[k^*\leftarrow \cA(m, \Enc_k(m)) \wedge m = \Dec_{k^*}(\Enc_{k}(m))] \geq p(n)
        \]
        We now build an algorithm $\cB$ that breaks the hiding of the encryption scheme. $\cB$ interacts with an external CPA challenger as follows.
        \begin{enumerate}
            \item Sample $m_0, m_1 \sample \bin^n$.
            \item Send $m_0, m_1$ to the external challenger.
            \item Receive $c = \Enc_k(m_b)$ for some freshly sampled $k$ and uniform bit $b$.
            \item Compute $k^* \leftarrow \cA(m_0, c)$
            \item If $m_0 = \Dec_{k^*}(c)$ then return $0$, else return a uniform bit.
        \end{enumerate}
        We calculate the probability that $\cB$ guess the uniform bit $b$.
        \begin{align*}
            \Prr[\cB \text{ returns } b]
            &= \Prr[b = 0 \wedge m_0 = \Dec_{k^*}(c)] + \frac{1}{2}\cdot \Prr[b = 0 \wedge m_0 \neq \Dec_{k^*}(c)] \\
            &+ \frac{1}{2}\cdot\Prr[b = 1 \wedge m_0 \neq \Dec_{k^*}(c)]\\
            =& \frac{1}{2}\cdot \Prr[m_0 = \Dec_{k^*}(\Enc_k(m_0))] + \frac{1}{4}\cdot \Prr[m_0 \neq \Dec_{k^*}(\Enc_k(m_0))] \\
            &+\frac{1}{4}\cdot\Prr[m_0 \neq \Dec_{k^*}(\Enc_k(m_1))]\\
            =& \frac{1}{4} + \frac{1}{4}\cdot \Prr[m_0 = \Dec_{k^*}(\Enc_k(m_0))] +\frac{1}{4}\cdot\Prr[m_0 \neq \Dec_{k^*}(\Enc_k(m_1))]\\
            \geq& \frac{1}{4} + \frac{\epsilon(n)}{4} +\frac{1}{4}\cdot\Prr[m_0 \neq \Dec_{k^*}(\Enc_k(m_1))]
        \end{align*}
        
        For arbitrary $z$, define $\bbM_z$ as follows:
        \[
        \bbM_z := \{m\in \bin^n \colon \exists k^* \in \Supp(\KeyGen(1^n)) \text{ s.t } \Dec_{k^*}(z) = m\}
        \]
        Since $\Dec$ is deterministic, for all $z$
        \[
            |\bbM_z| \leq |\Supp(\KeyGen(1^n))|
        \]
        Then 
        \[
            \Prr_{\substack{m_0, m_1 \leftarrow \bin^n\\k\leftarrow\KeyGen(1^n)}}[m_0 \in \bbM_{\Enc_k(m_1)}] \leq \frac{|\bbM_{\Enc_k(m_1)}|}{2^n} \leq \frac{|\Supp(\KeyGen(1^n))|}{2^n} = \negl(n)
        \]
        which by definition of $\bbM$ means that
        \[
            \Prr_{\substack{m_0, m_1 \leftarrow \bin^n\\k\leftarrow\KeyGen(1^n)}}[\exists k^*\text{ s.t } m_0 = \Dec_{k^*}(\Enc_k(m_1))] = \negl(n)
        \]
        which means that in the hiding game
        \[
            \Prr[m_0 \neq \Dec_{k^*}(\Enc_k(m_1))] = 1-\negl(n)
        \]
        Substituting this back in the expression for $\cB$ guessing $b$
        \begin{align*}
            \Prr[\cB \text{ returns } b] &\geq \frac{1}{4} + \frac{\epsilon(n)}{4} +\frac{1}{4}\cdot\Prr[m_0 \neq \Dec_{k^*}(\Enc_k(m_1))]\\
             &\geq \frac{1}{4} + \frac{\epsilon(n)}{4} +\frac{1}{4}\cdot(1-\negl(n))\\
             &\geq \frac{1}{2} + \frac{\epsilon(n)}{4} -{\negl(n)}
        \end{align*}
        which contradicts the hiding of the symmetric encryption scheme.
    \end{proof}
    The theorem follows immediately from the claim.
    \end{proof}
\section{Proving Lemma \ref{lem:entropyGapCore}}
\label{app:fullproof}

In this section, we provide a complete proof of Lemma \ref{lem:entropyGapCore}.
First, we list some preliminary theorems/claims that will be useful in proving the lemma.

\subsection{Preliminary Claims}
The following well-known theorem states that the boolean inner product function is a good extractor. This can be derived as a simple consequence of the leftover hash lemma.
\begin{theorem}[Inner Product is a Good Extractor]
\label{thm:LHLforIP} Let $X$ be a random variable distributed over $\bin^n$ where $\Hs_{\min}(X) \geq k$, for some $k > 0$. Then
\[
    \SD \Big( (U_n, \langle X, U_n \rangle ), (U_n, U_1) \Big) \leq 2^{(1-k)/2}
\]    
\end{theorem}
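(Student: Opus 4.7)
The plan is to derive this as a direct instantiation of the Leftover Hash Lemma (Lemma~\ref{thm:LHLforSmoothEntropy}) applied to the inner-product hash family. The key observation is that, for each fixed $r \in \bin^n$, the map $h_r : \bin^n \to \bin$ defined by $h_r(x) = \langle x, r \rangle$ forms a universal hash family with one-bit output when indexed by a uniform seed $r \sim U_n$.

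First I would verify the universality: for any distinct $x, x' \in \bin^n$, linearity of the inner product over $\bbF_2$ gives
\[
\Pr_{r \sim U_n}[h_r(x) = h_r(x')] \;=\; \Pr_{r \sim U_n}[\langle x \oplus x', r\rangle = 0] \;=\; \tfrac{1}{2},
\]
where the last equality follows because $x \oplus x' \neq 0$ and so $\langle x \oplus x', \cdot \rangle$ is a nonzero $\bbF_2$-linear functional, which is balanced over $\bin^n$. Thus $\{h_r\}_{r \in \bin^n}$ is a universal family with output length $1$ and seed $r = U_n$.

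Next I would invoke Lemma~\ref{thm:LHLforSmoothEntropy} with smoothing parameter $\epsilon = 0$ (i.e., using the hypothesized min-entropy $\Hs_{\min}(X) \geq k$ directly as the smooth min-entropy). To match the ``output length'' condition of the Leftover Hash Lemma, set $\epsilon' = 2^{(1-k)/2}$, so that
\[
k - 2\log(1/\epsilon') \;=\; k - (k-1) \;=\; 1,
\]
which equals the one-bit output length of $h_r$. The Leftover Hash Lemma then yields
\[
\SD\Big( (U_n, \langle X, U_n \rangle),\; (U_n, U_1) \Big) \;\leq\; 0 + \epsilon' \;=\; 2^{(1-k)/2},
\]
which is precisely the claimed bound.

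There is no major obstacle: the only routine check is that the inner-product family satisfies the universality hypothesis of Lemma~\ref{thm:LHLforSmoothEntropy}, after which the conclusion follows by plugging in the correct value of $\epsilon'$ so that the Leftover Hash Lemma's output length matches the single-bit output of $\langle X, U_n \rangle$.
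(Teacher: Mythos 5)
Your proof is correct and a bit more direct than the paper's. You apply the Leftover Hash Lemma straight to the family $h_r(x) = \langle x, r\rangle$ after verifying it is universal (the collision probability is exactly $1/2$ because $\langle x\oplus x', \cdot\rangle$ is a nonzero linear functional, hence balanced). The paper instead augments the seed with one extra bit, using the pairwise-independent family $h_{(y,z)}(x) = \langle x, y\rangle \oplus z$ over $\{0,1\}^{n+1}$, applies LHL with output length $1$, and then strips the dummy bit off by XOR-ing the last two coordinates (a data-processing step that does not increase statistical distance). Both are sound and give the same bound: the paper's detour is there because $\langle \cdot, r\rangle$ alone is universal but \emph{not} pairwise independent (e.g., $h_r(0) = 0$ for every $r$), and some phrasings of LHL are stated for pairwise-independent families; since the paper's own Lemma~\ref{thm:LHLforSmoothEntropy} only asks for universality, your shortcut is legitimate. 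One minor technical remark: you take $\epsilon = 0$ in a lemma stated for $\epsilon > 0$; this is fine since $\Hs^0_{\min}(X) = \Hs_{\min}(X)$, but it is worth saying explicitly (or taking a limit) rather than treating it as automatic.
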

\begin{proof} 
    The hash function that takes a uniform seed $h \leftarrow \{0,1\}^{n+1}$ and $x \in \{0,1\}^{n}$ and outputs $\langle x, y \rangle \xor z$ is known to be universal, where $y$ indicates the first $n$ bits of $h$ and $z$ is the last bit of $h$. 
    Then by Theorem \ref{thm:LHLforSmoothEntropy}
    \[
        \SD \Big( (U_n, U_1, \langle X, U_n \rangle \xor U_1 ), (U_n, U_2) \Big) \leq \epsilon
    \] 
    for $\epsilon \leq 2^{(1-k)/2}$. We now consider the distribution obtained by  replacing the last bit with the XOR of the last two bits in both distributions.
    Since performing the same operation on two distributions can only reduce the statistical distance,
    \[
        \SD \Big( (U_n, U_1, \langle X, U_n \rangle), (U_n, U_2) \Big) \leq \epsilon
    \]
    which implies that
    $
    \SD \Big( (U_n, \langle X, U_n \rangle), (U_n, U_1) \Big) \leq \epsilon$, as desired.
\end{proof}

We will also use the following claim that relates the bias in an arbitrary single-bit distribution to its Shannon entropy.
        \begin{claim}
        \label{clm:biasedCoin}
            Let $X$ be an arbitrary $d$-biased distribution on a single bit, i.e., $$\Prr[X = 1] = \frac{1+d}{2} \text{ and } \Prr[X = 0] = \frac{1-d}{2}.$$
            Then,
            \[
            \Hs(X)\leq 1-\frac{d^2}{2}
            \]
         Additionally, whenever $d\leq 1/2$, we have 
         \[
            \Hs(X)\geq 1-d^2
            \]
        \end{claim}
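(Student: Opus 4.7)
\textbf{Proof proposal for Claim \ref{clm:biasedCoin}.} The plan is to recognize $\Hs(X)$ as the binary entropy function evaluated at $p = (1+d)/2$, and then obtain both bounds via a second-order Taylor expansion of $H$ around $p = 1/2$. Write $H(p) := -p \log_2 p - (1-p)\log_2(1-p)$, so that $\Hs(X) = H((1+d)/2)$. We have $H(1/2) = 1$ and $H'(1/2) = \log_2\tfrac{1-p}{p}\big|_{p=1/2} = 0$. The second derivative is $H''(p) = -\tfrac{1}{p(1-p)\ln 2}$, which is strictly negative on $(0,1)$.

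Setting $p = (1+d)/2$, so that $p - 1/2 = d/2$, Taylor's theorem with Lagrange remainder gives
\[
    H(p) \;=\; 1 \;+\; \tfrac{1}{2}\bigl(\tfrac{d}{2}\bigr)^2 H''(\xi) \;=\; 1 \;-\; \frac{d^2}{8\,\xi(1-\xi)\,\ln 2}
\]
for some $\xi$ between $1/2$ and $p$. For the upper bound, I would use $\xi(1-\xi) \leq 1/4$, which yields $H(p) \leq 1 - \tfrac{d^2}{2\ln 2}$, and since $\ln 2 < 1$ this is at most $1 - d^2/2$, giving the first inequality. For the lower bound, when $|d| \leq 1/2$ one has $p \in [1/4, 3/4]$, so $\xi \in [1/4, 3/4]$ and thus $\xi(1-\xi) \geq 3/16$. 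Substituting gives $H(p) \geq 1 - \tfrac{2 d^2}{3 \ln 2}$, and since $\tfrac{2}{3\ln 2} < 1$ this is at least $1 - d^2$, giving the second inequality.

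There is no real obstacle here; the only minor care needed is to verify the numerical constants $\tfrac{1}{2\ln 2} \geq \tfrac{1}{2}$ and $\tfrac{2}{3\ln 2} \leq 1$ (using $\ln 2 \approx 0.693$), and to check that on the range $d \in [-1,1]$ the Taylor expansion with remainder is valid (which it is, since $H$ is smooth on $(0,1)$ and $p$ stays in $(0,1)$). If one prefers to avoid Taylor's theorem, an alternative route for the upper bound is to note that $1 - H(p) = D(\mathrm{Ber}(p)\,\|\,\mathrm{Ber}(1/2))$ (KL divergence in bits) and apply Pinsker's inequality with $\mathrm{TV} = d/2$, giving the same $d^2/(2\ln 2)$ bound; the lower bound, however, requires a reverse-Pinsker-type argument for which the direct Taylor approach above is cleaner.
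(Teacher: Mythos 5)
Your proposal is correct, and it takes a genuinely different (and arguably cleaner) route than the paper. The paper works directly with the expression $1 - \Hs(X) = \tfrac{1}{2}[(1+d)\log(1+d) + (1-d)\log(1-d)]$, plugs in the Maclaurin series of $\ln(1+y)$ for $y \in \{d, -d, -d^2\}$, and after cancellations obtains the explicit power series $1-\Hs(X) = \tfrac{d^2}{2\ln 2}\big(1 + \tfrac{d^2}{6} + \tfrac{d^4}{15} + \cdots\big)$; the upper bound comes from dropping all but the leading term (all terms are nonnegative), and the lower bound comes from dominating the bracket by the geometric series $\sum d^{2k} = (1-d^2)^{-1} \leq 4/3$ when $d \leq 1/2$. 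You instead apply Taylor's theorem with Lagrange remainder directly to the binary entropy $H(p)$ around $p = 1/2$, using $H''(p) = -1/(p(1-p)\ln 2)$, which gives $H(p) = 1 - \tfrac{d^2}{8\xi(1-\xi)\ln 2}$; bounding $\xi(1-\xi) \leq 1/4$ universally and $\xi(1-\xi) \geq 3/16$ when $d\leq 1/2$ yields precisely the same intermediate constants $1 - \tfrac{d^2}{2\ln 2}$ and $1 - \tfrac{2d^2}{3\ln 2}$ that the paper obtains, after which both arguments finish with $\ln 2 < 1$ and $\tfrac{2}{3\ln 2} < 1$. Your version is more compact and avoids hand-manipulating three power series, at the small cost of needing $H \in C^2$ on the relevant interval; as you note, this requires $p \in (0,1)$, and the endpoint case $|d| = 1$ (where $\Hs(X) = 0 \leq 1/2$) should be checked separately, which is immediate. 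Both proofs are valid; yours is the one I would use.
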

        \begin{proof}
            Note that
            \begin{align*}
                \Hs(X) &= \left(\frac{1+d}{2}\right)\cdot \log\left(\frac{2}{1+d}\right) + \left(\frac{1-d}{2}\right)\cdot \log\left(\frac{2}{1-d}\right)\\
                &= 1 - \left(1/2\right)\cdot\left[(1+d)\log(1+d) + (1-d)\log(1-d)\right]
            \end{align*}
            Rearranging
            \begin{align*}
                1-\Hs(X) &=\left(1/2\right)\cdot\left[\log(1-d^2) + d\cdot(\log(1+d) - \log(1-d))\right]
            \end{align*}
            Since $\log y = \ln y/\ln 2$, we can rewrite as
            \begin{align*}
                1-\Hs(X) &=\left(1/2\ln 2\right)\cdot\left[\ln(1-d^2) + d\cdot(\ln(1+d) - \ln(1-d))\right]
            \end{align*}
            By the Taylor series expansion, for $y > -1$
            \[\ln(1+y) = y - y^2/2 + y^3/3 - y^4/4 + \ldots\]
            Noting that $0<d<1$ implies that $d, -d$ and $-d^2$ are all greater than $-1$,
 we can apply this expansion to the $\ln$ terms in the previous equation.
 This gives
            \begin{align*}
                1-\Hs(X) &=\left(1/2\ln2\right)\cdot
                \Big[(-d^2 - d^4/2 - d^6/3 - d^8/4\ldots)  \Big.\\ 
                &\Big. + d \cdot \big( (d - d^2/2 + d^3/3 - d^4/4\ldots) - (-d -d^2/2 - d^3/3 - d^4/4\ldots) \big) \Big]\\
                & = \left(1/2\ln2\right)\cdot
                \Big[(-d^2 - d^4/2 - d^6/3 - d^8/4 \ldots)  + d \cdot  (2d + 2d^3/3 + 2d^5/5 \ldots) \Big]\\
                & =\left(1/2\ln2\right)\cdot
                \Big[(d^2 + d^4/6 + d^6/15 + \ldots + d^{2n}/(n \cdot (2n - 1)) + \ldots) \Big]
            \end{align*}
            We may pull out the common factor $d^2$ to obtain
            \begin{equation}
                \label{eq:series}
                1-\Hs(X)=\left(d^2/2\ln2\right)\cdot
                \Big[( 1+ d^2/6 + d^4/15 + \ldots + d^{2n-2}/(n \cdot (2n - 1)) + \ldots) \Big]
            \end{equation}
            which implies
            \begin{align*}
                1 - \Hs(X) &\geq\left(d^2/2\ln2\right)
            \end{align*}
            or equivalently,
            \begin{align*}
                \Hs(X) &\leq1-\frac{d^2}{2\ln2}\\
                    &\leq1-\frac{d^2}{2}
            \end{align*}
           Additionally, equation \eqref{eq:series} implies
           \begin{align*}
               1 - \Hs(X) 
&\leq\left(d^2/2\ln2\right)\cdot
                \Big[( 1+ d^2 + d^4+ \ldots + d^{2n} + \ldots) \Big]\\
                &= \frac{d^2}{2\ln2\cdot (1-d^2)}
            \end{align*}
            If $d\leq 1/2$, then
            \begin{align*}
                1 - \Hs(X) &\leq\frac{d^2}{2\ln2\cdot (1-\frac{1}{4})}\\
                &=\frac{2d^2}{3\ln2}
            \end{align*}
            which implies
            \begin{align*}
                \Hs(X) &\geq1-\frac{2d^2}{3\ln2} \geq1-d^2
            \end{align*}
            which concludes the proof.
        \end{proof}

\subsection{Proof of the Lemma}
In this section, we proceed to prove the main lemma.\\
     
\noindent {\bf Lemma \ref{lem:entropyGapCore} Restated.}
            Let $X = \{X_n\}_{n \in \bbN}$ be an ensemble of distributions over $\bin^n$ s.t. for sufficiently large $n \in \bbN$, there exists $x^*_n\in \bin^n$ satisfying:
            \begin{enumerate}
                \item $\Pr_{x\leftarrow X_n}[x=x^*_n] \geq 1/6n$
                \item $\forall x' \neq x^*_n$, $\Pr_{x \leftarrow X_n}[x=x']\leq 2/n^{600}$ 
            \end{enumerate}
            Let $R_n$ be uniformly distributed on $\bin^n$. Define $\alpha_{0,n}$ and $\alpha_{1,n}$ as follows.
            \begin{gather*}
                \alpha_{0,n}(x,r) := \langle x, r \rangle\\
                \alpha_{1,n}(x,r) := \left\{
                \begin{array}{cl}
                    U_1 &\text{ if $x = x^*_n$}\\
                    \langle x, r \rangle &\text{ otherwise}
                \end{array}
                \right.
            \end{gather*}
            For $b\in \bin$, define distribution $A_{b,n} := R_n, \alpha_{b,n}(X,R)$. Then for sufficiently large $n \in \bbN$,
            \[
            \Hs(A_{1,n}) - \Hs(A_{0,n}) \geq 1/100n^2
            \]
        To prove this lemma, we will first define random variables that indicate biases in the two distributions.
        Fix any sufficiently large $n$, and any $r\in \bin^n, b\in \bin$.
        Define 
        \begin{align*}
            B^b_{r,n} &:= | \Prr_{x\leftarrow X_n}[\alpha_{b,n}(x,r) =1] - \Prr_{x\leftarrow X_n}[\alpha_{b,n}(x,r) =0] |\\
            B^*_{r,n} &:= | \Prr_{x\leftarrow X_n}[\langle x,r \rangle =1 | x\neq x^*_n] - \Prr_{x\leftarrow X_n}[\langle x,r \rangle =0| x\neq x^*_n] |
        \end{align*}
        Intuitively, the variable $B^b_{r,n}$ represents the bias of $\alpha_{b,n}(X_n,r)$ away from uniform. $B^*_{r,n}$ represents the bias conditioned on $X_n \neq x^*_n$ (which is identical for both distributions).
        Then,
        \begin{align}
        B^0_{r,n} &= \left|\Prr_{X_n}[x^*_n] + \Prr_{x\leftarrow X_n}[x\neq x^*_n]\left(\Prr_{x\leftarrow X_n}[\langle x,r \rangle = \langle x^*_n,r \rangle | x \neq x^*_n] - \Prr_{x\leftarrow X_n}[\langle x,r \rangle \neq \langle x^*_n,r \rangle | x \neq x^*_n]\right)\right| \nonumber \\
         &\geq \Prr_{X_n}[x^*_n] - \Prr_{x\leftarrow X_n}[x\neq x^*_n]\left|\left(\Prr_{x\leftarrow X_n}[\langle x,r \rangle = \langle x^*_n,r \rangle | x \neq x^*_n] - \Prr_{x\leftarrow X_n}[\langle x,r \rangle \neq \langle x^*_n,r \rangle | x \neq x^*_n]\right)\right| \nonumber \\
        &\geq \Prr_{X_n}[x^*_n] - \Prr_{x\leftarrow X_n}[x\neq x^*_n]\cdot B^*_{r,n}
        \label{eq:b1}
        \end{align}
        where the second inequality follows due to a triangle inequality.
        Furthermore, 
        \begin{align}
        B^1_{r,n} &= \left|\Prr_{x\leftarrow X_n}[x\neq x^*_n]\left(\Prr_{x\leftarrow X_n}[\langle x,r \rangle = \langle x^*_n,r \rangle | x \neq x^*_n] - \Prr_{x\leftarrow X_n}[\langle x,r \rangle \neq \langle x^*_n,r \rangle | x \neq x^*_n]\right)\right| \nonumber \\
        &= \Prr_{x\leftarrow X_n}[x\neq x^*_n] \cdot B^*_{r,n}
        \label{eq:b2}
        \end{align}

The proof of the lemma follows immediately from the next two claims, which analyse the distribution $X$ in different ways depending on whether or not it is heavily biased towards $x^*_n$. 

 \begin{claim} 
 Consider any $n \in \bbN$ for which $X_n$ is heavily biased towards $x^*_n$, that is, 
 $$\Pr_{x\leftarrow X_n}[x=x^*_n] > 1- 1/n.$$
 Then, if $n$ is large enough, it holds that for all $r\in \bin^n$, $$\Hs(\alpha_1(X,r)) - \Hs(\alpha_0(X,r)) \geq 1/100n^2.$$
 \end{claim}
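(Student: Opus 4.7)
The plan is to exploit the fact that both $\alpha_0(X_n,r)$ and $\alpha_1(X_n,r)$ are single-bit distributions, so their Shannon entropies can be directly controlled via Claim \ref{clm:biasedCoin} as functions of their biases $B^0_{r,n}$ and $B^1_{r,n}$. I will first show that in the heavily biased regime $\Pr[X_n=x^*_n] > 1-1/n$, the two biases lie at opposite extremes: $B^0_{r,n}$ is close to $1$ while $B^1_{r,n}$ is close to $0$. Then I will convert these into an additive entropy gap strictly larger than $1/(100n^2)$.

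\textbf{Step 1 (bounding the biases).} From the inequality \eqref{eq:b1}, for any $r$,
\[
B^0_{r,n} \;\geq\; \Prr_{X_n}[x^*_n] - \Prr_{X_n}[x\neq x^*_n]\cdot B^*_{r,n} \;\geq\; (1-1/n) - (1/n)\cdot 1 \;=\; 1 - 2/n,
\]
using $B^*_{r,n}\leq 1$. From the equality \eqref{eq:b2}, for any $r$,
\[
B^1_{r,n} \;=\; \Prr_{X_n}[x\neq x^*_n]\cdot B^*_{r,n} \;\leq\; 1/n.
\]

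\textbf{Step 2 (converting biases to entropies).} For sufficiently large $n$, we have $B^1_{r,n}\leq 1/n \leq 1/2$, so the second part of Claim \ref{clm:biasedCoin} applies and yields
\[
\Hs(\alpha_{1,n}(X_n,r)) \;\geq\; 1 - (B^1_{r,n})^2 \;\geq\; 1 - 1/n^2.
\]
On the other hand, the first part of Claim \ref{clm:biasedCoin} combined with $B^0_{r,n}\geq 1-2/n$ gives
\[
\Hs(\alpha_{0,n}(X_n,r)) \;\leq\; 1 - (B^0_{r,n})^2/2 \;\leq\; 1 - (1-2/n)^2/2 \;\leq\; 3/4
\]
for all sufficiently large $n$.

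\textbf{Step 3 (taking the difference).} Subtracting,
\[
\Hs(\alpha_{1,n}(X_n,r)) - \Hs(\alpha_{0,n}(X_n,r)) \;\geq\; (1 - 1/n^2) - 3/4 \;=\; 1/4 - 1/n^2 \;\geq\; 1/(100n^2)
\]
for all sufficiently large $n$ and every $r\in\bin^n$, as required.

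There is no real obstacle here; the main subtlety is just verifying that the bound $1-2/n$ on $B^0_{r,n}$ is large enough to push the entropy of $\alpha_{0,n}(X_n,r)$ comfortably below the near-$1$ entropy of $\alpha_{1,n}(X_n,r)$, which it does with large slack in the heavily biased regime. In fact the true gap here is $\Omega(1)$, far stronger than the $1/(100n^2)$ we need; the looser quantitative claim is used so that it can be matched in the complementary (not heavily biased) regime treated in the next claim.
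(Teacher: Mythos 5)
Your proof is correct and follows essentially the same route as the paper: bound $B^0_{r,n}\geq 1-2/n$ and $B^1_{r,n}\leq 1/n$ from equations \eqref{eq:b1}, \eqref{eq:b2}, then feed these into Claim \ref{clm:biasedCoin} to get a constant-size entropy gap. The only cosmetic difference is that you bound $\Hs(\alpha_{0,n}(X_n,r))\leq 3/4$ where the paper uses $\leq 2/3$; both hold for large enough $n$ and give far more than the required $1/(100n^2)$.
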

\begin{proof}
     Since $B^*_{r,n} \leq 1$ for every $n, r \in \bbN$, equations (\ref{eq:b1}) and (\ref{eq:b2}) imply
    $$B_{r,n}^0 \geq \Prr_{X_n}[x^*_n] - \Prr_{x\leftarrow X_n}[x\neq x^*_n]\cdot B^*_{r,n} \geq 1 - 2/n$$ 
    and 
    $$B_{r,n}^1 = \Prr_{x\leftarrow X_n}[x\neq x^*_n] \cdot B^*_{r,n} \leq 1/n.$$
    Therefore, by Claim \ref{clm:biasedCoin} applied to $\alpha_0(X,r)$ and  $\alpha_0(X,r)$, for large enough $n$
        \begin{gather*}
            \Hs(\alpha_0(X,r)) \leq 1 - \frac{(1-2/n)^2}{2} \leq 2/3 \\
            \Hs(\alpha_1(X,r)) \geq 1 - 1/n^2
        \end{gather*}
        Thus for all $r\in \bin^n$, 
        $$\Hs(\alpha_1(X,r)) - \Hs(\alpha_0(X,r)) \geq 1/3 - 1/n^2 \geq 1/100n^2$$
        as desired.
    \end{proof}
    \begin{claim}
    Consider any $n \in \bbN$ for which 
    $$\Pr_{x\leftarrow X_n}[x=x^*_n] \leq 1- 1/n.$$
    Then, if $n$ is large enough, it holds that for all $r\in \bin^n$, $$\Hs(A_{1,n}) - \Hs(A_{0,n}) \geq 1/100n^2.$$
    \end{claim}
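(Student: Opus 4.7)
The plan is to exploit the fact that in this regime the conditional distribution of $X_n$ given $X_n \neq x^*_n$ has high min-entropy. Since $\Prr[X_n \neq x^*_n] \geq 1/n$, for every $x' \neq x^*_n$ we have $\Prr[X_n = x' \mid X_n \neq x^*_n] \leq (2/n^{600})/(1/n) = 2/n^{599}$, so $\Hs_{\min}(X_n \mid X_n \neq x^*_n) \geq 599\log n - 1$. By Theorem~\ref{thm:LHLforIP}, the joint distribution of $R_n$ together with $\langle X_n, R_n \rangle$ conditioned on $X_n \neq x^*_n$ is therefore within statistical distance $2/n^{299}$ of $(R_n, U_1)$, which is the key extractor fact driving the proof.

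Next I would reduce the entropy comparison to a squared-bias computation. Write $p := \Prr[X_n = x^*_n]$, and for each fixed $r \in \bin^n$ set $\epsilon_r := [\langle x^*_n, r \rangle = 1] - 1/2 \in \{\pm 1/2\}$ and $q_r := \Prr_{x}[\langle x, r\rangle = 1 \mid x \neq x^*_n] - 1/2$. A direct calculation shows that the biases from uniform of $\alpha_{0,n}(X_n, r)$ and $\alpha_{1,n}(X_n, r)$ are respectively $d_0(r) = |2p\epsilon_r + 2(1-p)q_r|$ and $d_1(r) = 2(1-p)|q_r|$, giving the identity
\begin{align*}
d_0(r)^2 - d_1(r)^2 \;=\; p^2 + 8p(1-p)\,\epsilon_r q_r.
\end{align*}
The crucial observation is that $\bbE_{r \leftarrow R_n}[\epsilon_r q_r] = 0$: for each $x \neq x^*_n$ contributing to $q_r$, we get a term $\bbE_r[(\eta_r - 1/2)([\langle x, r \rangle = 1] - 1/2)]$ where $\eta_r := [\langle x^*_n, r\rangle = 1]$. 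When $x \neq 0$, $x^*_n \neq 0$ and $x \neq x^*_n$, linear independence over $\bbF_2$ makes $\langle x, r \rangle$ and $\langle x^*_n, r \rangle$ independent uniform bits so the expectation vanishes; the degenerate cases $x = 0$ or $x^*_n = 0$ force one factor to be a zero-mean constant. Hence $\bbE_r[d_0^2 - d_1^2] = p^2$.

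Finally I would convert squared-bias gaps into Shannon-entropy gaps via Claim~\ref{clm:biasedCoin}: whenever $d_1(r) \leq 1/2$, $\Hs(\alpha_{1,n}(X_n, r)) - \Hs(\alpha_{0,n}(X_n, r)) \geq d_0(r)^2/2 - d_1(r)^2$. The statistical-distance bound yields $\bbE_r[|q_r|] \leq 2/n^{299}$, whence $\bbE_r[d_1^2] \leq 2/n^{299}$ and $\Prr_r[d_1 > 1/2] = O(1/n^{299})$ by Markov; the rare event $d_1 > 1/2$ contributes at most $-O(1/n^{299})$ to the expected entropy difference. Combining, and using $\bbE_r[d_0^2] = p^2 + \bbE_r[d_1^2]$,
\begin{align*}
\bbE_r\!\left[\Hs(\alpha_{1,n}(X_n, r)) - \Hs(\alpha_{0,n}(X_n, r))\right] \;\geq\; p^2/2 - O(1/n^{299}) \;\geq\; \tfrac{1}{72 n^2} - o(1/n^2),
\end{align*}
which exceeds $1/(100n^2)$ for large enough $n$ since $p \geq 1/6n$. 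Since $R_n$ has the same (uniform) marginal in $A_{0,n}$ and $A_{1,n}$, the chain rule gives $\Hs(A_{1,n}) - \Hs(A_{0,n}) = \bbE_{r \leftarrow R_n}[\Hs(\alpha_{1,n}(X_n, r)) - \Hs(\alpha_{0,n}(X_n, r))]$, completing the argument. The main obstacle is the cross-term cancellation $\bbE_r[\epsilon_r q_r] = 0$: without it the cross term in $d_0^2 - d_1^2$ would be of order $p$ rather than $p^2$ and could swamp the signal; exploiting the $\bbF_2$-linearity of inner products to kill this term in expectation is what makes the $\Omega(p^2)$ lower bound on the entropy gap go through.
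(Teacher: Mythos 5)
Your proposal is correct, and it takes a genuinely different route from the paper's proof. The paper first applies the leftover-hash/inner-product extractor bound to conclude that $\bbE_r[B^*_{r,n}]$ is tiny, then uses a Markov argument to isolate a ``good'' set $\mathbb{B}_n$ of seeds $r$ on which $B^*_{r,n}\leq n^{-100}$, and on that set lower-bounds $B^0_{r,n}$ and upper-bounds $B^1_{r,n}$ via the triangle-inequality relations in the paper's equations~(\ref{eq:b1}) and~(\ref{eq:b2}) before invoking Claim~\ref{clm:biasedCoin} pointwise in $r$. You instead compute the \emph{expected} squared-bias difference in closed form, showing $\bbE_r\big[d_0(r)^2-d_1(r)^2\big]=p^2$ exactly via the cancellation $\bbE_r[\epsilon_r q_r]=0$, which follows from the pairwise independence of $\langle x,r\rangle$ and $\langle x^*_n,r\rangle$ for distinct nonzero $x,x^*_n$ over $\bbF_2$ (and the easy degenerate cases). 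That cancellation is nowhere in the paper; it lets you bypass the triangle inequality and the good-seed set entirely, and makes the $\Omega(p^2)$ source of the gap transparent. You still need a small Markov step to dispose of the rare seeds with $d_1(r)>1/2$ so that the lower-bound half of Claim~\ref{clm:biasedCoin} applies, and you still rely on the extractor bound to control $\bbE_r[d_1^2]$ and $\Pr_r[d_1>1/2]$ --- so the two arguments use the same two ingredients (the extractor theorem and the biased-coin entropy estimates) --- but you combine them by an exact Fourier computation where the paper combines them by worst-case bounds on a dense set. Both approaches give $\Hs(A_{1,n})-\Hs(A_{0,n})\gtrsim p^2/2\geq 1/(72n^2)$, comfortably above $1/(100n^2)$ for large $n$. (A couple of your stated constants, e.g.\ $\bbE_r[d_1^2]\leq 2/n^{299}$, are off by small factors, but this has no bearing on the conclusion given the huge slack in the exponent.)
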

    \begin{proof}
    The premise of the claim implies
    $$\Pr_{x\leftarrow X_n}[x\neq x^*_n] \geq 1/n.$$
        Let $X'_n$ denote the distribution $(X_n|X_n\neq x^*_n)$.
        For all $x'\in \Supp(X'_n)$, 
        \begin{align*}
        \Pr_{x\leftarrow X'_n}[x=x'] &= \frac{\Prr_{x\leftarrow X_n}[x = x']}{\Prr_{x\leftarrow X_n}[x \neq x^*_n]}\\
        &\leq n\cdot \Prr_{x\leftarrow X_n}[x = x']\\
        &\leq 2{n^{-599}}
        \end{align*}
        which implies that for all $x'\in \Supp(X'_n)$,
        \begin{align*}
            \Hs_{X'_n}(x') \geq 599\log n - \log2
            \geq 598\log n \text{~~~~(for $n \geq 2$)}
        \end{align*}
        which implies that  $\Hs_{\min}(X'_n) \geq 598\log n$.
        Then, Theorem \ref{thm:LHLforIP} applied to $X'_n$ implies that:
        \[
            \mathsf{SD} \Big((R_n, \langle X'_n, R_n\rangle), (R_n, U_1) \Big) \leq \sqrt{2{n^{-598}}} < {n^{-200}}
        \] 
         The statistical distance may be rewritten as
         \[
         \mathsf{SD} \Big((R_n, \langle X'_n, R_n\rangle), (R_n, U_1) \Big) =
         \sum_{r} \frac{1}{2}\cdot\Prr_{R_n}[r]\cdot\left|\Prr_{x\leftarrow X'_n}[\langle x,r\rangle = 1] - 1/2\right| \leq {n^{-200}}
         \]
         Since $\Pr_{x\leftarrow X'_n}[\langle x,r\rangle = 1] = 1 - \Pr_{x\leftarrow X'_n}[\langle x,r\rangle = 0]$, we can rewrite the above equation as
         \[
         \sum_{r} \frac{1}{4}\cdot\Prr_{R_n}[r]\cdot\left|\Prr_{x\leftarrow X'_n}[\langle x,r\rangle = 1] - \Prr_{x\leftarrow X'_n}[\langle x,r\rangle = 0] \right| \leq n^{-200}
         \]
         Substituting $B^*_{r,n} = \left|\Prr_{x\leftarrow X'_n}[\langle x,r\rangle = 1] - \Prr_{x\leftarrow X'_n}[\langle x,r\rangle = 0] \right|$, we obtain
         \[
         \sum_{r} \frac{1}{4}\cdot\Prr_{R_n}[r]\cdot B^*_{r,n} \leq \n^{-200}
        \]
        By a Markov argument on $r$,
        \[
        \Prr_{r\leftarrow R_n}[B^*_{r,n} \geq n^{-100}] \leq n^{-100}
        \]
        Let $\mathbb{B}_n:=\{r| B^*_{r,n} \leq n^{-100}\}$. Therefore, 
        $
        \Prr_{r\leftarrow R_n}[r \notin \mathbb{B}_n] \leq n^{-100}
        $.
        Moreover, equations (\ref{eq:b1}) and (\ref{eq:b2}) imply that for all $r \in \mathbb{B}$,
        $$B_{r,n}^0 \geq \big( \Prr_{X_n}[x^*_n] - \Prr_{x\leftarrow X_n}[x\neq x^*_n]\cdot B^*_{r,n} \big)  \geq 1/6n - n^{-100} \geq 1/7n \text{, and}
        $$
        \[
        B_{r,n}^1 = \Prr_{x\leftarrow X_n}[x\neq x^*_n] \cdot B^*_{r,n} \leq n^{-100}
        \]
        Finally, by applying the chain rule,
        \begin{align*}
            \Hs(A_{1,n}) - \Hs(A_{0,n}) &= \Hs(\alpha_{1,n}(X_n,R_n),R_n) - \Hs(\alpha_{0,n}(X_n,R_n),R_n)\\
            &= \Hs(\alpha_{1,n}(X_n,R_n)|R_n) + \Hs(R_n) - \Hs(\alpha_{0,n}(X_n,R_n)|R_n) - \Hs(R_n)\\
            &= \Hs(\alpha_{1,n}(X_n,R_n)|R_n) - \Hs(\alpha_{0,n}(X_n,R_n)|R_n)\\
            &=\sum_r \Prr_{R_n}[r]\cdot \left(\Hs(\alpha_{1,n}(X_n,r)- \Hs(\alpha_{0,n}(X_n,r)\right)\\
            &=\sum_{r\in \mathbb{B}_n} \Prr_{R_n}[r]\cdot \left(\Hs(\alpha_{1,n}(X_n,r)- \Hs(\alpha_{0,n}(X_n,r)\right) \\&+ \sum_{r\notin \mathbb{B}_n} \Prr_{R_n}[r]\cdot \left(\Hs(\alpha_{1,n}(X_n,r)- \Hs(\alpha_{0,n}(X_n,r)\right)\\
            &\geq \sum_{r\in \mathbb{B}_n} \Prr_{R_n}[r]\cdot \left(\Hs(\alpha_{1,n}(X_n,r)- \Hs(\alpha_{0,n}(X_n,r)\right) - \sum_{r\notin \mathbb{B}_n} \Prr_{R_n}[r]\\
            &\geq \sum_{r\in \mathbb{B}_n} \Prr_{R_n}[r]\cdot \left(\Hs(\alpha_{1,n}(X_n,r)- \Hs(\alpha_{0,n}(X_n,r)\right) - \Prr_{r\leftarrow R_n}[r \notin \mathbb{B}_n]
        \end{align*}
        Applying Claim \ref{clm:biasedCoin} to the distributions $\alpha_{0,n}(X_n,r)$ and $\alpha_{1,n}(X_n,r)$ with biases $B^0_{r,n}$ and $B^1_{r,n}$ respectively, and noting that $B^1_{r,n} < \frac{1}{2}$ for $n \geq 2$, 
        \begin{align*}
            \Hs(A_{1,n}) - \Hs(A_{0,n}) &\geq \sum_{r\in \mathbb{B}_n} \Prr_R[r]\cdot \left(\frac{\left(B^0_{r,n}\right)^2}{2}- \left(B^1_{r,n}\right)^2\right) - \Prr_{r\leftarrow R}[r \notin \mathbb{B}_n]\\
            &\geq \sum_{r\in \mathbb{B}_n} \Prr_R[r]\cdot \left(\frac{1}{98n^2}- \frac{1}{n^{100}}\right) - \Prr_{r\leftarrow R}[r \notin \mathbb{B}_n]\\
            &\geq \Prr_{r\leftarrow R}[r\in \mathbb{B}_n] \cdot \left(\frac{1}{99n^2}\right) - \Prr_{r\leftarrow R}[r \notin \mathbb{B}_n]\\
            &\geq (1-1/n^{100}) \cdot \left(\frac{1}{99n^2}\right) - (1/n^{100})\\
            &\geq 1/100n^2 \text{~~~~~~(for large enough $n$)}
        \end{align*}
        This completes the proof of the claim.
        \end{proof}

\end{document}